\newcommand\tk[1]{\todo[inline,size=\scriptsize,backgroundcolor=yellow]{#1 - \textbf{Tuukka}}}
\newcommand\wn[1]{\todo[inline,size=\scriptsize,backgroundcolor=teal!5!white]{#1 - \textbf{Wojtek}}}
\renewcommand\mp[1]{\todo[inline,size=\scriptsize,backgroundcolor=cyan]{#1 - \textbf{Micha\l}}}
\newcommand\ms[1]{\todo[inline,size=\scriptsize,backgroundcolor=green]{#1 - \textbf{Marek}}}
\newtheorem{theorem}{Theorem}[section]
\newtheorem{lemma}[theorem]{Lemma}
\newtheorem{corollary}[theorem]{Corollary}
\newtheorem{fact}[theorem]{Fact}
\newtheorem{definition}{Definition}
\crefname{claim}{Claim}{Claims}
\newtheorem{claim}[theorem]{Claim}
\newcommand{\cqed}{\ensuremath{\lhd}}
\newenvironment{claimproof}{\par
	\pushQED{\cqed}%
	\normalfont \topsep6\p@\@plus6\p@\relax
	\trivlist
	\item\relax
	{\itshape
		Proof of the claim\@addpunct{.}}\hspace\labelsep\ignorespaces
}{%
	\hfill\popQED\endtrivlist\@endpefalse
}
\newcommand{\ceil}[1]{\left\lceil #1 \right\rceil}
\newcommand{\bag}{\ensuremath{\mathtt{bag}}}
\newcommand{\bags}{\ensuremath{\mathtt{bags}}}
\newcommand{\edges}{\ensuremath{\mathtt{edges}}}
\newcommand{\Oh}[2][]{{\cal O}_{#1}(#2)}
\newcommand{\adhesion}[2][]{\mathtt{adh}_{#1}(#2)}
\newcommand{\component}[2][]{\mathtt{cmp}_{#1}(#2)}
\newcommand{\parent}[2][]{\mathtt{parent}_{#1}(#2)}
\newcommand{\lca}[2]{\mathrm{lca}(#1, #2)}
\newcommand{\tw}[1]{\mathrm{tw}(#1)}
\newcommand{\neighopen}[2][]{N_{#1}(#2)}
\newcommand{\neighclosed}[2][]{N_{#1}[#2]}
\newcommand{\torso}[2][]{\mathtt{torso}_{#1}(#2)}
\newcommand{\cc}[1]{\mathrm{cc}(#1)}
\newcommand{\blockages}[2]{\mathsf{Blockages}(#1, #2)}
\newcommand{\exploration}[2]{\mathsf{Exploration}(#1, #2)}
\newcommand{\explorationgraph}[2]{\mathsf{ExplorationGraph}(#1, #2)}
\newcommand{\collectedcomps}[2]{\mathsf{Coll}(#1, #2)}
\newcommand{\collectedsym}{\ensuremath{\mathfrak{C}}\xspace}
\newcommand{\collectedmap}{\mathsf{collmap}}
\newcommand{\noclosuremsg}{\textsf{No closure}\xspace}
\newcommand{\Tpref}{\ensuremath{T_{\mathrm{pref}}}\xspace}
\newcommand{\App}{\ensuremath{\mathtt{App}}}
\newcommand{\height}{\ensuremath{\mathsf{height}}}
\newcommand{\weight}{\ensuremath{\mathsf{weight}}}
\newcommand{\size}{\ensuremath{\mathsf{size}}}
\newcommand{\cmpsize}{\ensuremath{\mathsf{cmpsize}}}
\newcommand{\Top}{\ensuremath{\mathsf{top}}}
\newcommand{\lheight}{\mathsf{lheight}}
\newcommand{\leaflabel}{\mathsf{label}}
\newcommand{\funrestriction}[2]{#1\vert_{#2}}
\newcommand{\D}{\mathbb{D}}
\newcommand{\F}{\mathbb{F}}
\newcommand{\Lb}{\mathbb{L}}
\newcommand{\M}{\mathbb{M}}
\newcommand{\N}{\mathbb{N}}
\newcommand{\Z}{\mathbb{Z}}
\newcommand{\Aa}{\mathcal{A}}
\newcommand{\Cc}{\mathcal{C}}
\newcommand{\Fc}{\mathcal{F}}
\newcommand{\Hc}{\mathcal{H}}
\newcommand{\Hh}{\Hc}
\newcommand{\Ic}{\mathcal{I}}
\newcommand{\Tc}{\mathcal{T}}
\newcommand{\Wc}{\mathcal{W}}
\renewcommand{\leq}{\leqslant}
\renewcommand{\geq}{\geqslant}
\renewcommand{\le}{\leqslant}
\renewcommand{\ge}{\geqslant}
\newcommand{\CMSO}{\mathsf{CMSO}}
\newcommand{\MSO}{\mathsf{MSO}}
\newcommand{\bnd}{\partial}
\newcommand{\tup}[1]{\overline{#1}}
\newcommand{\Daux}{\D^{\mathrm{aux}}}
\newcommand{\Dexplore}{\D^{\mathrm{explore}}}
\newcommand{\invinterface}[1]{\mathsf{Interface}^{-1}(#1)}
\newcommand{\interface}[1]{\mathsf{Interface}(#1)}
\newcommand{\Cinterface}{\interface{\collectedsym}}
\newcommand{\myparagraph}[1]{\paragraph*{#1}\mbox{}\\ \noindent}
\newcommand{\myopt}[1]{\ensuremath{#1_{\mathrm{opt}}}\xspace}
\newcommand{\myshallow}[1]{\ensuremath{#1 _{\mathrm{shallow}}}\xspace}
\newcommand{\Tdnew}[1]{{\cal T}^{#1}\xspace}
\newcommand{\Tnew}[1]{T^{#1}\xspace}
\newcommand{\bagnew}[1]{\bag^{#1}\xspace}
\newcommand{\Topt}[1]{\myopt{T^{#1}}}
\newcommand{\Tshallow}[1]{\myshallow{T^{#1}}}
\newcommand{\Tnewpref}[1]{\Tnew{#1}_{\mathrm{pref}}}
\newcommand*\circled[1]{\tikz[baseline=(char.base)]{
            \node[shape=circle,draw,inner sep=2pt] (char) {#1};}}
\newcommand{\origin}{\mathsf{origin}}
\newcommand{\origininv}{\mathsf{origin}^{-1}}
\newcommand{\neighall}[2]{\Ic(#1, #2)}
\newcounter{wcounter}
\newcommand{\wformat}{(WIDTH)}
\newcommand{\wcnt}{\wformat}
\newcommand{\witem}{\refstepcounter{wcounter}\item[\textit{\wcnt\label{width:width}}]}
\crefname{wcounter}{}{}
\newcommand{\wref}{\cref{width:width}}
\newcounter{amcounter}
\newcommand{\amformat}[1]{(POT)}
\newcommand{\amcnt}{\amformat{\arabic{amcounter}}}
\newcommand{\amitem}[1]{\refstepcounter{amcounter}\item[\textit{\amcnt\label{am:#1}}]}
\crefname{amcounter}{}{}
\newcommand{\amref}[1]{\cref{am:#1}}
\newcounter{pcounter}
\newcommand{\pformat}[1]{(P#1)}
\newcommand{\pcnt}{\pformat{\arabic{pcounter}}}
\newcommand{\pitem}[1]{\refstepcounter{pcounter}\item[\textit{\pcnt\label{prop:#1}}]}
\crefname{pcounter}{}{}
\newcommand{\pref}[1]{\cref{prop:#1}}
\newcounter{rtcounter}
\newcommand{\rtformat}[1]{(RT#1)}
\newcommand{\rtcnt}{\rtformat{\arabic{rtcounter}}}
\newcommand{\rtitem}[1]{\refstepcounter{rtcounter}\item[\textit{\rtcnt\label{rt:#1}}]}
\crefname{rtcounter}{}{}
\newcommand{\rtref}[1]{\cref{rt:#1}}
\newcounter{stepcounter}
\newcommand{\stepformat}[1]{Step \circled{#1}}
\newcommand{\stepcnt}{\stepformat{\arabic{stepcounter}}}
\newcommand{\stepitem}[1]{\refstepcounter{stepcounter}\stepcnt\label{step:#1} \xspace}
\crefname{stepcounter}{}{}
\newcommand{\stepref}[1]{\cref{step:#1}}
\newcommand{\Fprim}{\widetilde{F}}
\newcommand{\pullNei}{\mathsf{pull}}
\begin{document}
\title{\huge{Dynamic treewidth}\thanks{The work of Konrad Majewski, Wojciech Nadara, Micha\l{} Pilipczuk and Marek Soko\l{}owski on this manuscript is a part of a project that has received funding from the European Research Council (ERC), grant agreement No 948057 --- BOBR. Tuukka Korhonen was supported by the Research Council of Norway via the project
BWCA (grant no. 314528).}}
\author{Tuukka Korhonen\thanks{Department of Informatics, University of Bergen, Norway (\texttt{tuukka.korhonen@uib.no})} \and
Konrad Majewski\thanks{Institute of Informatics, University of Warsaw, Poland (\texttt{k.majewski@mimuw.edu.pl})} \and
Wojciech Nadara\thanks{Institute of Informatics, University of Warsaw, Poland (\texttt{w.nadara@mimuw.edu.pl})} \and
Michał Pilipczuk\thanks{Institute of Informatics, University of Warsaw, Poland (\texttt{michal.pilipczuk@mimuw.edu.pl})} \and
Marek Sokołowski\thanks{Institute of Informatics, University of Warsaw, Poland (\texttt{marek.sokolowski@mimuw.edu.pl})}}
\date{}
\maketitle
\thispagestyle{empty}

 \begin{textblock}{20}(-1.9, 8.2)
  \includegraphics[width=40px]{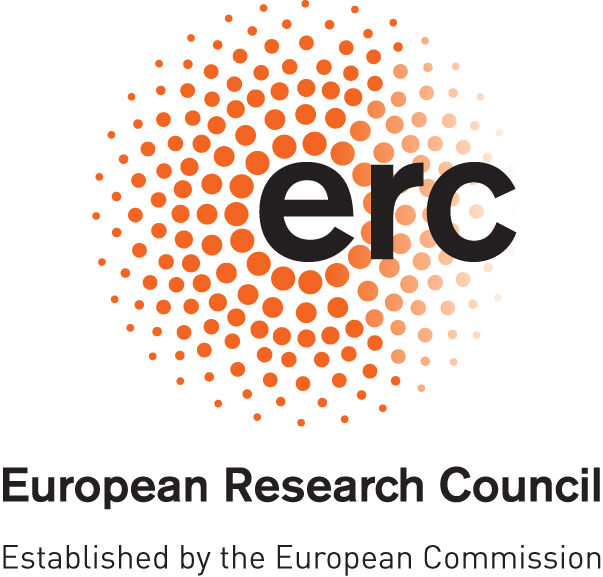}%
 \end{textblock}
 \begin{textblock}{20}(-2.15, 8.5)
  \includegraphics[width=60px]{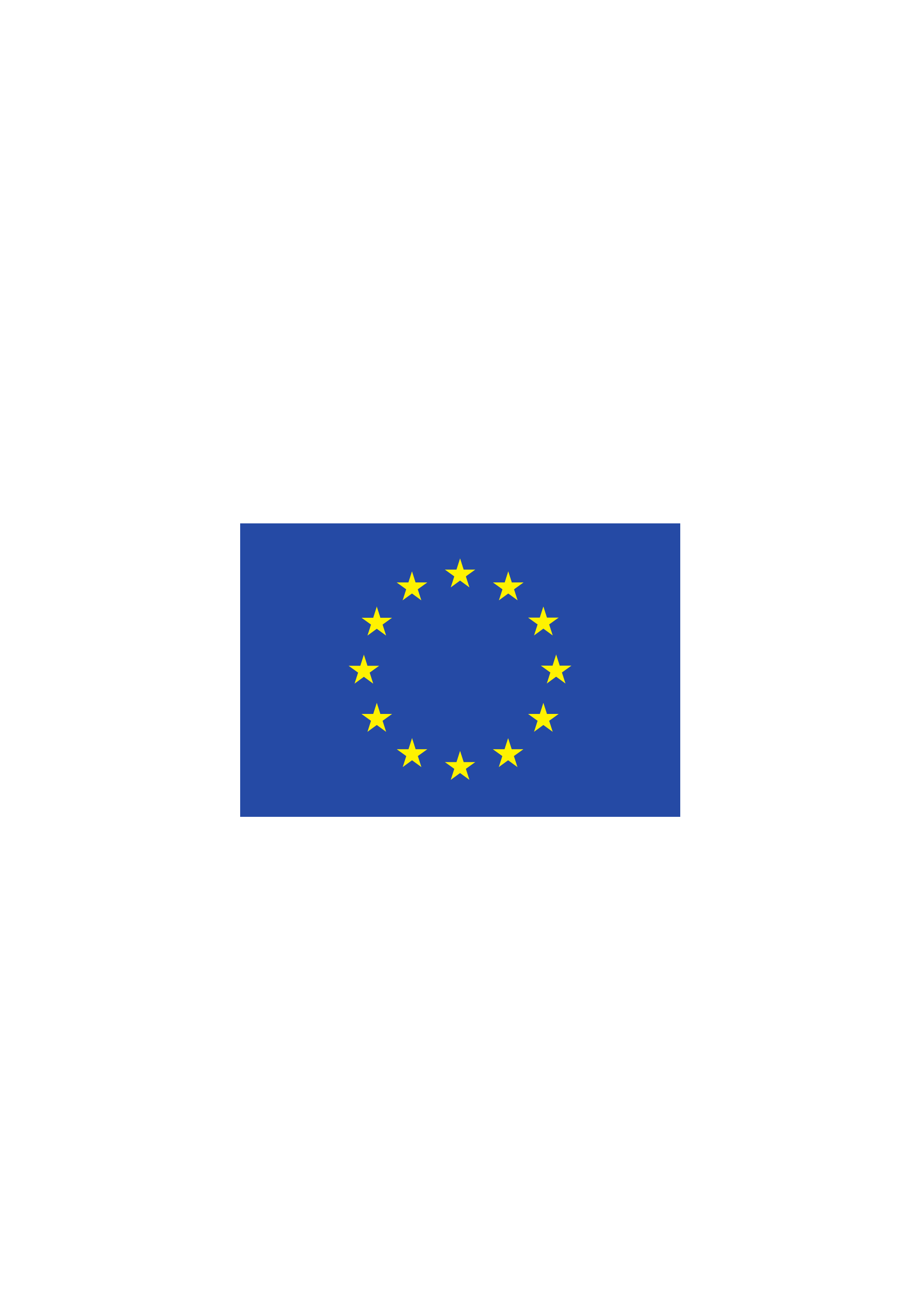}%
 \end{textblock}

\begin{abstract}
We present a data structure that for a dynamic graph $G$ that is updated by edge insertions and deletions, maintains a tree decomposition of $G$ of width at most $6k+5$ under the promise that the treewidth of $G$ never grows above $k$. The amortized update time is $\Oh[k]{2^{\sqrt{\log n}\log\log n}}$, where $n$ is the vertex count of $G$ and the $\Oh[k]{\cdot}$ notation hides factors depending on $k$. In addition, we also obtain the dynamic variant of Courcelle's Theorem: for any fixed property $\varphi$ expressible in the $\CMSO_2$ logic, the data structure can maintain whether $G$ satisfies $\varphi$ within the same time complexity bounds. To a large extent, this answers a question posed by Bodlaender~[WG~1993].

\end{abstract}

\newpage
\pagenumbering{arabic}

\newcommand{\ttl}{{\sf{``Treewidth too large''}}}

\section{Introduction}

\paragraph*{Treewidth.} {\em{Treewidth}} is a graph parameter that measures how much the structure of a graph resembles that of a tree. This resemblance is expressed through the notion of a {\em{tree decomposition}}, which is a tree $\Tc$ of {\em{bags}} --- subsets of vertices --- such that for every vertex $u$, the bags containing $u$ form a connected subtree of $\Tc$, and for every edge $uv$, there is a bag containing both $u$ and $v$. The measure of complexity of a tree decomposition is its {\em{width}} --- the maximum size of a bag minus~$1$ --- and the treewidth of a graph is the minimum possible width of its tree decomposition.

Treewidth and tree decompositions, in the form described above, were first introduced by Robertson and Seymour in their Graph Minors series, concretely in~\cite{RobertsonS86}, as a  key combinatorial tool for describing and analyzing graph structure. However, it quickly became clear that these notions are of great importance in algorithm design as well. This is because many problems which are hard on general graphs, like {\sc{Independent Set}}, {\sc{3-Coloring}}, or {\sc{Hamiltonicity}}, can be solved efficiently on graphs of bounded treewidth using dynamic programming. The typical approach here is to process the given tree decomposition in a bottom-up manner while keeping, for every node $x$ of the decomposition, a table expressing the existence of partial solutions with different ``fingerprints'' on the bag of $x$. The number of fingerprints is usually an (at least) exponential function of the bag size, hence obtaining a tree decomposition of small width is essential for designing efficient algorithms. We refer to the textbook of Cygan et al.~\cite[Section~7.3]{platypus} for a broad exposition of this framework.


Dynamic programming on tree decompositions is one of fundamental techniques of parameterized complexity, used ubiquitously in all branches of this field of research. Typically, tree-decomposition-based procedures serve as important subroutines in larger algorithms, where they are used to solve the problem on instances that already got simplified to having bounded treewidth; see~\cite[Section~7.7 and~7.8]{platypus} for several examples of this principle. The combinatorial aspects of treewidth, expressed for instance through the Grid Minor Theorem~\cite{RobertsonS86a}, can be used here to analyze what useful structures can be found in a graph when the treewidth is large. 

Outside of parameterized algorithms, dynamic programming on tree decompositions is used in approximation algorithms, especially in approximation schemes for planar and geometric problems. It is the cornerstone of the fundamental Baker's technique~\cite{Baker94}, and of countless other approximation schemes following the principle: simplify the input graph to a bounded-treewidth graph while losing only a tiny fraction of the optimum, and apply dynamic programming on a tree~decomposition.

%

The design of dynamic programming procedures on tree decompositions can be done in a semi-automatic way. In the late 80s, Courcelle observed that whenever a graph problem can be expressed in $\CMSO_2$ --- monadic second-order logic with edge quantification and counting modular predicates --- then the problem can be solved in time $f(k)\cdot n$ on graphs of treewidth at most $k$, for a computable function $f$~\cite{Courcelle90}. Here, $\CMSO_2$ is a powerful logic that extends first-order logic by the possibility of quantifying over vertex and edge subsets; consequently, it can concisely express difficult problems such as the aforementioned {\sc{3-Coloring}} or {\sc{Hamiltonicity}}. See~\cite[Section~7.4]{platypus},~\cite[Chapter~13]{DowneyF13}, and~\cite[Chapter~11]{FlumG06} for a broader discussion of Courcelle's Theorem and~$\CMSO_2$. 

In the proof of Courcelle's Theorem, one transforms the given sentence $\varphi$ of $\CMSO_2$ into an appropriate tree automaton $\Aa_\varphi$, and runs $\Aa_{\varphi}$ on a tree decomposition of the input graph. Therefore, apart from being a powerful meta-technique for an algorithm designer, Courcelle's Theorem also brings a variety of tools related to logic and tree automata to the setting of bounded treewidth graphs. This creates a fundamental link between combinatorics, algorithms, and finite model theory, and logical aspects of graphs of bounded treewidth are a vibrant research area to this day. We refer the reader to the book of Courcelle and Engelfriet~\cite{CEbook} for a broader introduction to this topic.

\paragraph*{Computing tree decompositions.} Before applying any dynamic programming procedure on a tree decomposition of a graph, one needs to compute such a decomposition first. Therefore, the problem of computing a tree decomposition of optimum or approximately optimum width has gathered significant attention over the years. While computing treewidth exactly is $\mathsf{NP}$-hard~\cite{ArnborgP89}, there is a vast literature on various approximation and fixed-parameter algorithms for computing tree decompositions. Let us name a few prominent results.
\begin{itemize}[nosep]
 \item The exact fixed-parameter algorithm of Bodlaender~\cite{bodlaender-tw-opt} finds a tree decomposition of optimum width of a treewidth-$k$ graph in time $2^{\Oh{k^3}}\cdot n$. Recently, Korhonen and Lokshtanov~\cite{DBLP:journals/corr/abs-2211-07154} improved the parametric factor to $2^{\Oh{k^2}}$ at the cost of increasing the dependency on $n$ to $n^4$.
\item The $4$-approximation algorithm of Robertson and Seymour~\cite{RobertsonS86} finds a tree decomposition of width at most $4k+3$ of a treewidth-$k$ graph in time $\Oh{8^k\cdot k^2\cdot n^2}$. This algorithm presents the most basic approach to approximating treewidth, is featured in textbooks on algorithm design~\cite{platypus,KleinbergTardos}, and provides a general framework for approximating other width parameters of graphs, such as rankwidth or branchwidth~\cite{OumS06,Oum08}. Recently, Korhonen~\cite{Korhonen21} gave a $2$-approximation algorithm for treewidth running in time $2^{\Oh{k}}\cdot n$.
 \item As far as polynomial-time approximation algorithms are concerned, the best known approximation ratio is $\Oh{\sqrt{\log k}}$~\cite{FeigeHL08}. It is known that treewidth cannot be approximated to a constant factor in polynomial time assuming the Small Set Expansion Conjecture~\cite{WuAPL14}.
\end{itemize}
For more literature pointers, see the introductory sections of the works listed above, particularly of~\cite{DBLP:journals/corr/abs-2211-07154}. A gentle introduction to classic approaches to computing tree decompositions can be also found in an overview article by Pilipczuk~\cite{Pilipczuk20a}.

Apart from the classic setting described above, the problem of computing a low-width tree decomposition of a given graph was also considered in other algorithmic paradigms. For instance, Elberfeld et al.~\cite{ElberfeldJT10} gave a {\em{slicewise logspace}} algorithm: an algorithm that computes a tree decomposition of optimum width in time $n^{f(k)}$ and {\em{space}} $f(k)\log n$, for some function $f$. Parallel algorithms for computing tree decompositions were studied by Bodlaender and Hagerup~\cite{bodlaender-hagerup}.

\paragraph*{Dynamic treewidth.} A classic algorithmic paradigm where the problem of computing tree decompositions has received relatively little attention is that of {\em{dynamic data structures}}. The basic setting would be as follows. We are given a dynamic graph $G$ that is updated over time by edge insertions and deletions. We are given a promise that the treewidth of $G$ is at all times upper bounded by $k$, and we would like to maintain a tree decomposition ${\cal T}$ of $G$ of width bounded by a function of~$k$. 
Ideally, together with ${\cal T}$ we would like to also maintain the run of any reasonable dynamic programming procedure on ${\cal T}$, so as to obtain a dynamic variant of Courcelle's Theorem: in addition, the data structure would be able maintain whether $G$ satisfies any fixed $\CMSO_2$ property $\varphi$.

This basic problem of {\em{dynamic treewidth}} is actually an old one: it was first asked by Bodlaender in 1993~\cite{Bodlaender93a}. Later, the question was raised again by Dvo\v{r}\'ak et al.~\cite{DvorakKT14}, and then repeated by Alman et al.~\cite{AlmanMW20}, Chen et al.~\cite{ChenCDFHNPPSWZ21}, and Majewski et al.~\cite{MajewskiPS23}. Despite efforts, the problem has so far remained almost unscratched, and to the best of our knowledge, no non-trivial --- with update time sublinear in $n$ --- data structure was known prior to this work.
Let us review the existing literature.
\begin{itemize}[nosep]
 \item None of the aforementioned static algorithms for computing tree decompositions is known to be liftable to the dynamic setting.
 \item There is a wide range of data structures for dynamic maintenance of forests and of various dynamic programming procedures working on them, see e.g.~\cite{BrodalF99, SleatorT83, AlstrupHLT05, Niewerth18}. Unfortunately, the simple setting of dynamic forests omits the main difficulty of the dynamic treewidth problem: the need of reconstructing the tree decomposition itself upon updates. Consequently, we do not see how any of these approaches could be lifted to graphs of treewidth higher than $1$.
 \item Bodlaender~\cite{Bodlaender93a} showed that on graphs of treewidth at most $2$, tree decompositions of width at most $11$ can be maintained with worst-case update time $\Oh{\log n}$. This result also comes with a dynamic variant of Courcelle's Theorem: the satisfaction of any $\CMSO_2$-expressible property $\varphi$ on graphs of treewidth at most $2$ can be maintained with worst-case update time $\Oh[\varphi]{\log n}$. The approach of Bodlaender relies on a specific structure theorem for graphs of treewidth at most $2$,
 which unfortunately does not carry over to larger values of the treewidth. In~\cite{Bodlaender93a}, Bodlaender also observed that for $k>2$, update time $\Oh[k]{\log n}$ can be achieved  in the decremental setting, when only edge deletions are allowed. But this again avoids the main difficulty of the problem, as in this setting no rebuilding of the tree decomposition is necessary.
 \item Independently of Bodlaender, Cohen et al.~\cite{CohenSTV93}\footnote{Unfortunately, the authors of this manuscript were not able to find access to reference~\cite{CohenSTV93}. Our description of the content of~\cite{CohenSTV93} is based on that of Bodlaender~\cite{Bodlaender93a}.}  tackled the case $k=2$ with worst-case update time $\Oh{\log^2 n}$, and the case $k=3$ in the incremental setting with worst-case update time~$\Oh{\log n}$. Frederickson~\cite{Frederickson98} studied dynamic maintenance of properties of graphs of treewidth at most~$k$, but the updates considered by him consist of direct manipulations of tree decompositions; this again avoids the main difficulty.
 \item Dvo\v{r}\'ak et al.~\cite{DvorakKT14} gave a data structure that in a dynamic graph of {\em{treedepth}} at most $d$, maintains an {\em{elimination forest}} (a decomposition suited for treedepth) with worst-case update time $f(d)$, for some computable $f$. Here, treedepth is a graph parameter that, intuitively, measures the depth of a tree decomposition, rather than its width. Treedepth of a graph is never smaller than its treewidth, and is upper bounded by the treewidth times $\log n$. The update time of the data structure of Dvo\v{r}\'ak et al. was later improved to $d^{\Oh{d}}$ by Chen et al.~\cite{ChenCDFHNPPSWZ21}. Along with their data structures, Dvo\v{r}\'ak et al. and Chen et al. also gave a dynamic variant of Courcelle's Theorem for treedepth: on a dynamic graph of treedepth at most $d$, the satisfaction of any fixed $\MSO_2$-expressible\footnote{$\MSO_2$ is a fragment of $\CMSO_2$ where modular counting predicates are not allowed.} property $\varphi$ can be maintained with worst-case update time~$\Oh[\varphi,d]{1}$.
 \item Building upon an earlier work of Alman et al.~\cite{AlmanMW20} on the dynamic feedback vertex set problem, Majewski et al.~\cite{MajewskiPS23} showed a dynamic variant of Courcelle's Theorem for the parameter feedback vertex number: the minimum size of a deletion set to a forest. This is another parameter that is lower bounded by the treewidth. That is, they showed that in a dynamic graph of feedback vertex number at most $\ell$, the satisfaction of any fixed $\CMSO_2$-expressible property $\varphi$ can be maintained with amortized update time $\Oh[\varphi,\ell]{\log n}$.
 \item Recently, Goranci et al.~\cite{GoranciRST21} gave the first non-trivial result on the general dynamic treewidth problem: using a dynamic algorithm for {\em{expander hierarchy}}, they can maintain a tree decomposition with $n^{o(1)}$-approximate width with amortized update time $n^{o(1)}$, under the assumption that the graph has bounded maximum degree. Note, however, that this result is rather unusable in the context of (dynamic) parameterized algorithms, because dynamic programming procedures on tree decompositions work typically in time exponential in the decomposition's width. Consequently, the result does not imply any dynamic variant of Courcelle's Theorem.
\end{itemize}
We remark that the dynamic treewidth problem seems to lie at the foundations of the emerging area of {\em{parameterized dynamic data structures}}. In this research direction, the goal is to design efficient data structures for parameterized problems where the update time is measured both in terms of relevant parameters and in terms of the total instance size. See~\cite{AlmanMW20,Bodlaender93a,ChenCDFHNPPSWZ21,CohenSTV93,DvorakKT14,DvorakT13,GrezMPPR22,IwataO14,MajewskiPS23,OlkowskiPRWZ23} for examples of this type of results.

\paragraph*{Our contribution.} In this work we give a resolution to the dynamic treewidth problem with subpolynomial amortized time complexity of the updates. That is, we present a data structure that for a fully dynamic graph $G$ of treewidth~$k$, maintains a constant-factor-approximate tree decomposition of $G$. The amortized update time is subpolynomial in $n$ for every fixed $k$. As a consequence, we prove the dynamic variant of Courcelle's Theorem for treewidth: the satisfaction of any fixed $\CMSO_2$ property $\varphi$ can be maintained within the same complexity bounds.
The statement below presents our main result in full formality.

 
\begin{theorem}\label{thm:main}
 There is a data structure that for an integer $k\in \N$, fixed upon initialization, and a dynamic graph $G$, updated by edge insertions and deletions, maintains a tree decomposition of $G$ of width at most $6k+5$ whenever $G$ has treewidth at most $k$. More precisely, at every point in time the data structure either contains a tree decomposition of $G$ of width at most $6k+5$, or a marker \ttl, in which case it is guaranteed that the treewidth of $G$ is larger than $k$. The data structure can be initialized on $k$ and an edgeless $n$-vertex graph $G$ in time $g(k)\cdot n$, and then every update takes amortized time $2^{f(k)\cdot \sqrt{\log n\log\log n}}$, where $g(k)\in 2^{k^{\Oh{1}}}$ and $f(k)\in k^{\Oh{1}}$ are computable functions.
 
 Moreover, upon initialization the data structure can be also provided a $\CMSO_2$ sentence $\varphi$, and it can maintain the information whether $\varphi$ is satisfied in $G$ whenever the marker \ttl{} is not present. In this case, the initialization time is $g(k,\varphi)\cdot n$ and the amortized update time is $h(k,\varphi)\cdot 2^{f(k)\cdot \sqrt{\log n\log\log n}}$, where $g$, $h$, and $f$ are computable functions.
\end{theorem}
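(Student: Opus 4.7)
The plan is to maintain a rooted tree decomposition $\Tc$ of $G$ of width at most $6k+5$, together with a Courcelle-style tree automaton state cached at every node of $\Tc$, and to serve each edge update by rebuilding only a carefully chosen \emph{prefix} of $\Tc$ (a connected subtree containing the root). The width bound is planned to decompose as $6k+5 = (4k+3)+(2k+2)$: the first term is what a Robertson--Seymour-style $4$-approximation subroutine produces on each rebuilt region, while the second is the maximum adhesion size (at most $2k+2$) that glues the rebuilt prefix to the suffix subtrees hanging off it. The cached automaton states at the roots of these suffix subtrees make it possible to reuse them as black boxes during refinement, which is essential both for efficiency and for the $\CMSO_2$ extension.

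First, I would isolate a core subroutine $\Refine$: given a set $A \subseteq V(G)$ together with a boundary $\bnd A$ of size at most $2k+2$, it computes in time $2^{\Oh{k}}\cdot |A|$ a tree decomposition of $G[A]$ of width at most $4k+3$ whose root bag contains $\bnd A$, together with the automaton state at the root. If no such decomposition exists, $\Refine$ certifies that $\tw{G}>k$ by returning a treewidth obstruction, in which case the data structure outputs \ttl. Second, I would set up the update procedure: on an edge insertion or deletion involving $u$ and $v$, the algorithm walks upward in $\Tc$ from $\bag(u)$ and $\bag(v)$, growing a prefix $\Tpref$ whose size is dictated by a potential function; once $\Tpref$ is large enough that the potential it releases covers the cost of rerunning $\Refine$ on the corresponding region, the prefix is rebuilt, the suffix subtrees are reattached along their adhesions, and the cached automaton states are recomputed bottom-up along the $\Oh{1}$-length path from the rebuilt region to the root.

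The third and most delicate ingredient is the balanced hierarchy that controls amortized time. I would organize $\Tc$ as an annotated tree of depth $\Oh[k]{\log n}$ layered into $L=\Theta(\sqrt{\log n/\log\log n})$ weight classes, setting the potential so that a refinement that reaches weight class $i$ costs roughly $2^{\Oh{k}\cdot i\cdot \sqrt{\log n\log\log n}/L}$ but is amortized against the number of prior updates charged to that class. Summing geometrically across all $L$ classes then yields the target $2^{f(k)\sqrt{\log n\log\log n}}$ bound, and gives $g(k)\cdot n$ initialization time by running $\Refine$ once on the whole (edgeless) graph. The main obstacle will be proving the amortization lemma: that every edge update can genuinely be absorbed by growing a prefix of bounded weight without cascading rebuilds near the root, even against adversarial update sequences. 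I expect this to rest on a careful invariant asserting that the current $\Tc$ is not merely of width $6k+5$, but also ``well-balanced'' with respect to the weight classes in a way that is preserved by $\Refine$. Once the amortization is in place, the $\CMSO_2$ extension is essentially free: $\Refine$ is modified to propagate the fixed automaton of $\varphi$ through the newly constructed prefix, reusing the cached root states of the suffix subtrees, so that the satisfaction of $\varphi$ in $G$ can be read off the root of $\Tc$ at no asymptotic cost beyond the $h(k,\varphi)$ factor hidden by the automaton.
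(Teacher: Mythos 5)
There are several genuine gaps in this plan, and the most serious one is the decision to run a refinement subroutine on the induced subgraph $G[A]$. As the paper's overview explicitly points out, replacing a prefix by a tree decomposition of the induced subgraph $G[\bags(\Tpref)]$ does not work, because that induced subgraph does not see the connectivity that $G$ provides \emph{through} vertices outside $\bags(\Tpref)$. The correct object is the torso $\torso[G]{X}$ of a suitably chosen set $X\supseteq \bags(\Tpref)$, and the heart of the argument is the machinery built around \emph{closures}: the Small Closure Lemma (that some $X$ with $|X\cap\component{t}|\le \Oh{k^4}$ per appendix has $\tw{\torso[G]{X}}\le 2k+1$, proved via the Dealternation Lemma), and the Closure Linkedness Lemma, which guarantees that for a weight-minimal closure the neighborhood of each component is linked into $\bags(\Tpref)$. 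Your proposal has no analog of this. In particular, you have no mechanism for why the reattached suffix subtrees remain of bounded width after the rebuild: in the paper this is precisely what $d$-linkedness buys (\cref{lem:overview:linkedstandard} and \cref{lem:exchange-argument}), via a Menger-type argument that each new bag $\bag^\collectedsym(t)$ is no larger than $\bag(t)$. Without the closure--torso--linkedness chain, the scheme ``rebuild prefix, reattach suffixes along adhesions'' cannot control the widths of the reattached parts, and your arithmetic $6k+5=(4k+3)+(2k+2)$ describes a different (and unproved) decomposition; the paper's $6k+5$ arises as $3\cdot(2k+1)+2$ from applying Bodlaender--Hagerup to the torso of the closure.

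A second, independent gap is the claim to maintain a tree of depth $\Oh[k]{\log n}$ organized into $L=\Theta(\sqrt{\log n/\log\log n})$ weight classes. If the maintained decomposition really had logarithmic depth, one would expect polylogarithmic update time, which the paper explicitly leaves as a conjecture; the height the paper actually achieves and maintains is $2^{\Oh[k]{\sqrt{\log n\log\log n}}}$, which is subpolynomial but far above $\log n$. The paper's height control works by a different mechanism: a potential $\Phi(\Tc)=\sum_t (\gamma k)^{|\bag(t)|}\cdot\height(t)$ together with a recursive ``unbalanced prefix'' extraction (with the nested size/height thresholds $(n_i),(h_i)$ of \cref{lem:wojtek-wallofmath}), arguing that if the height exceeds $2^{\Oh[k]{\sqrt{\log n\log\log n}}}$ one can find a prefix on which refinement strictly decreases $\Phi$ and pays for itself. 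Your weight-class amortization is a sketch of a different scheme and, as written, there is no proof that a refinement releases enough potential to cover its own cost, nor that the rebuilt region stays bounded. Finally, your detection of $\tw{G}>k$ via ``a treewidth obstruction returned by $\Refine$'' is not how the paper handles this case: it runs the Bodlaender--Kloks dynamic programming as a prefix-rebuilding automaton on the maintained decomposition, and lifts the resulting weak data structure to a full one via the Eppstein et al.\ technique of delaying invariant-breaking insertions. The $\CMSO_2$ part of your plan is in spirit the same as the paper's (\cref{cor:dynamic-cmso}) and would work once the rest is in place.
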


Observe that the bound of $2^{f(k)\cdot \sqrt{\log n\log\log n}}$ on the amortized update time can be also expressed differently, in order to avoid having a factor depending on $k$ multiplied by factors depending on $n$ in the exponent. For example, if $f(k)\leq \sqrt{\log \log n}$, then we have $2^{f(k)\cdot \sqrt{\log n\log\log n}}\leq 2^{\sqrt{\log n}\log\log n}$, and otherwise, when $f(k)>\sqrt{\log \log n}$, we have $2^{f(k)\cdot \sqrt{\log n\log\log n}}\leq 2^{f(k)^2\cdot 2^{f(k)^2/2}}$; this gives a unified bound of
$2^{f(k)\cdot \sqrt{\log n\log\log n}}\leq 2^{f(k)^2\cdot 2^{f(k)^2/2}}+2^{\sqrt{\log n}\log\log n}$.
Other tradeoffs are possible by comparing $f(k)$ with other functions of $n$.

Further, note that the statement of \cref{thm:main} appears stronger than the setting discussed before: the data structure persists even at times when the treewidth grows above $k$,  instead of working under the assumption that this never happens. In fact, throughout the paper we work in the latter weaker setting, as it can be lifted to the stronger setting discussed in \cref{thm:main} in a generic way using the  technique of delaying invariant-breaking updates, proposed by Eppstein et al.~\cite{EppsteinGIS96}; see also \cite[Section~11]{abs-2006-00571}. We discuss how this technique applies to our specific problem in \cref{sec:ultimate-proof}.

%

Within the data structure of \cref{thm:main} we modify the maintained tree decomposition ${\cal T}$ only in a restricted fashion: through {\em{prefix-rebuilding updates}}. These amount to rebuilding a prefix\footnote{All our tree decompositions are rooted, and a {\em{prefix}} of a tree decomposition is an ancestor-closed subset of bags.} ${\cal S}$ of ${\cal T}$ into a new prefix ${\cal S}'$, and reattaching all trees of ${\cal T}-{\cal S}$ to ${\cal S}'$ without modifying them. As a consequence, while in \cref{thm:main} we only discuss maintenance of $\CMSO_2$ properties, in fact we can maintain the run of any standard dynamic programming procedure on ${\cal T}$. In \cref{sec:dynamic-dynamic-programming} we present a general automata-based framework for dynamic programming on tree decompositions that can be combined with our data structure. Automata verifying $\CMSO_2$-expressible properties are just one instantiation of this framework. 

Thus, \cref{thm:main} answers the question posed by Bodlaender in~\cite{Bodlaender93a} up to optimizing the update time. We conjecture that the update time can be improved to polylogarithmic in $n$, or even close to the $\Oh{\log n}$ bound achieved by Bodlaender for graphs of treewidth $2$.

\paragraph*{Applications.} Dynamic programming on tree decompositions is a fundamental technique used in countless parameterized algorithms. Therefore, the possibility of maintaining a run of a dynamic programming procedure in a fully dynamic graph of bounded treewidth opens multiple new avenues in the design of parameterized dynamic data structures. In particular, one can attempt to lift a number of classic treewidth-based approaches to the dynamic setting. Here is an example of how using our data structure, one can give a dynamic variant of the classic win/win approach to minor containment testing based on the Grid Minor Theorem.

\begin{corollary}\label{cor:minor-testing}
 Let $H$ be a fixed planar graph. There exists a data structure that for an $n$-vertex graph~$G$, updated by edge insertions and deletions, maintains whether $H$ is a minor of $G$. The initialization time on an edgeless graph is $\Oh[H]{n}$, and the amortized update time is $2^{\Oh[H]{\sqrt{\log n\log \log n}}}$.
\end{corollary}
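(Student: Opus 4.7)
The plan is to combine \cref{thm:main} with the classic win/win argument based on the Grid Minor Theorem. Recall that since $H$ is planar, there is a constant $k = k(H) \in \N$ with the property that every graph of treewidth at least $k$ contains $H$ as a minor; this follows from the Grid Minor Theorem of Robertson and Seymour together with the folklore fact that every planar graph on $h$ vertices is a minor of the $h\times h$-grid. Moreover, the property ``$G$ contains $H$ as a minor'' is expressible by a fixed sentence $\varphi_H$ of $\MSO_2$ (hence $\CMSO_2$): one can quantify existentially over a partition of a subset of $V(G)$ into $|V(H)|$ connected branch sets, and over $|E(H)|$ edges witnessing each required adjacency, and verify the condition in first-order logic.

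With these two ingredients, the data structure is obtained by invoking \cref{thm:main} with the chosen constant $k = k(H)-1$ and the sentence $\varphi = \varphi_H$. Initialization takes time $g(k,\varphi_H)\cdot n = \Oh[H]{n}$, and every update takes amortized time $h(k,\varphi_H)\cdot 2^{f(k)\sqrt{\log n\log\log n}} = 2^{\Oh[H]{\sqrt{\log n\log\log n}}}$, as required. After each update we answer the query as follows: if the data structure currently holds the marker \ttl{}, then the treewidth of $G$ exceeds $k(H)-1$, i.e.\ is at least $k(H)$, so by the choice of $k(H)$ we report that $H$ is a minor of $G$; otherwise, the data structure provides the bit telling whether $\varphi_H$ is satisfied in $G$, which we return directly.

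There is essentially no obstacle here beyond invoking the right ingredients: the real work is hidden in \cref{thm:main}. The only point worth double-checking is that both branches of the case analysis give the correct answer. In the \ttl{} branch the answer is always YES and this is correct because $\mathrm{tw}(G)\ge k(H)$ implies $H \preceq_{\mathrm{minor}} G$. In the other branch the answer is exactly the truth value of $\varphi_H$ in $G$, which by construction of $\varphi_H$ equals the truth value of ``$H$ is a minor of $G$''. Hence the data structure is correct, and the claimed complexity bounds follow.
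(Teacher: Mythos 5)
Your proof is correct and follows essentially the same approach as the paper's: pick a treewidth threshold via the Grid Minor Theorem, express minor containment of $H$ in $\CMSO_2$, and invoke \cref{thm:main}, interpreting the \ttl{} marker as an affirmative answer. The paper's proof is just a more terse version of the same argument.
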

\begin{proof}
 Since $H$ is planar, there exists $\ell\in \N$ such that the $\ell\times \ell$ grid contains $H$ as a minor. Consequently, by the Grid Minor Theorem~\cite{RobertsonS86a}, there exists $k\in \N$ such that every graph of treewidth larger than $k$ contains $H$ as a minor. Further, it is easy to write a $\CMSO_2$ sentence $\varphi_H$ that holds in a graph $G$ if and only if $G$ contains $H$ as a minor. It now suffices to set up the data structure of \cref{thm:main} for the treewidth bound $k$ and sentence $\varphi_H$. Note there that if this data structure contains the marker \ttl, it is necessary the case that $G$ contains $H$ as a minor.
\end{proof}

\cref{cor:minor-testing} is only a simple example, but dynamic programming on tree decompositions is a basic building block of multiple other, more complicated techniques; examples include bidimensionality, shifting, irrelevant vertex rules, or meta-kernelization. Consequently, armed with \cref{thm:main} one can approach dynamic counterparts of those developments. We expand on this in \cref{sec:conclusions}.

\paragraph*{Organization.} 
In \cref{sec:overview} we give an overview of our algorithm and present the key ideas behind the result.
In \cref{sec:preliminaries} we present notation and preliminary results, including our framework of ``prefix-rebuilding data structures'' and the statement of the main lemma (\cref{lem:weak-treewidth-ds}) that will imply \cref{thm:main}.
In \cref{sec:closures} we prove combinatorial results on objects called ``closures'', which will then be leveraged in \cref{sec:refinement} to build our main algorithmic tool called the ``refinement operation''.
Then, in \cref{sec:height} we use the refinement operation to build a height reduction operation for dynamic tree decompositions, and in \cref{sec:wrap-up} we put these results together and finish the proof of \cref{lem:weak-treewidth-ds}.
We discuss conclusions and future research directions in \cref{sec:conclusions}.
In \cref{sec:dynamic-dynamic-programming} we present our framework for dynamic maintenance of dynamic programming on tree decompositions and in \cref{sec:ultimate-proof} we complete the proof of \cref{thm:main} from~\cref{lem:weak-treewidth-ds}.

\newcommand{\Tt}{\mathcal{T}}

\section{Overview}
\label{sec:overview}
In this section we give an overview of our algorithm.
We first give a high-level description of the whole algorithm in \Cref{subsec:overview_main}, and then in \Cref{subsec:overview_refin,subsec:overview_heightredu} we sketch the proofs of the most important technical ingredients.

\subsection{High-level description}
\label{subsec:overview_main}
Let $n$ be the vertex count and $k$ a given parameter that bounds the treewidth of the considered dynamic graph $G$.
Our goal is to maintain a rooted tree decomposition $\Tt$ of height $2^{\Oh[k]{\sqrt{\log n \log\log n}}}$ and width at most $6k+5$, and at the same time any dynamic programming scheme, or more formally, a tree decomposition automaton with $\Oh[k]{1}$ evaluation time, on $\Tt$.
We will also require $\Tt$ to be binary, i.e., that every node has at most two children.

The goal of maintaining such a tree decomposition is reasonable because of a well-known lemma of Bodlaender and Hagerup~\cite{bodlaender-hagerup}: For every graph of treewidth $k$, there exists a binary tree decomposition of height $\Oh{\log n}$ and width at most $3k+2$.
Then, assuming $\Tt$ is a binary tree decomposition of height $h$ and width $\Oh{k}$, the operations of adding an edge or deleting an edge can be implemented in time $\Oh[k]{h}$ as follows.
Let us assume that we store the existence of an edge $uv$ in the highest node whose bag contains both $u$ and $v$.
Now, when deleting the edge $uv$, it suffices to find the highest node of $\Tt$ whose bag contains both $u$ and $v$, update information about the existence of this edge stored in this node, and then update dynamic programming tables of the nodes on the path from this node to the root, taking $\Oh[k]{h}$ time.
In the edge addition operation between vertices $u$ and $v$, we let $P_u$ (resp. $P_v$) be the path in $\Tt$ from the highest node containing $u$ (resp. $v$) to the root, add $u$ and $v$ to all bags on $P_u \cup P_v$, add the information about the existence of the edge $uv$ to the root node, and update dynamic programming tables on $P_u \cup P_v$, again taking in total $\Oh[k]{h}$ time.
Let us emphasize that only the highest bag containing both $u$ and $v$ is ``aware'' of the existence of the edge $uv$, as opposed to the more intuitive alternative of all the bags containing both $u$ and $v$ being ``aware'' of $uv$.
This is crucial for the fact that the dynamic programming tables of only $\Oh{h}$ nodes have to be recomputed after an edge addition/deletion.

Now, the only issue is that the edge addition operation could cause the width of $\Tt$ to increase to more than $6k+5$.
By maintaining Bodlaender-Kloks dynamic programming~\cite{DBLP:journals/jal/BodlaenderK96} on $\Tt$, we can detect if the treewidth of the graph actually increased to more than $k$ and terminate the algorithm in that case\footnote{By the standard technique of delaying updates we actually do not need to terminate the algorithm, but in this overview let us assume for simplicity that we are allowed to just terminate the algorithm if the width becomes more than $k$.}.
The more interesting case is when the treewidth of the graph is still at most $k$, in which case we have to modify $\Tt$ in order to make its width smaller while still maintaining small height.
The main technical contribution of this paper is to show that such changes to tree decompositions can indeed be implemented efficiently.

Let us introduce some notation.
We denote a rooted tree decomposition by a pair $\Tt=(T,\bag)$, where $T$ is a rooted tree and $\bag \colon V(T) \rightarrow 2^{V(G)}$ is a function specifying the bag $\bag(t)$ of each node $t$.
For a set of nodes $W \subseteq V(T)$, we denote by $\bags(W) = \bigcup_{t \in W} \bag(t)$.
A \emph{prefix} of a rooted tree $T$ is a a set of nodes $\Tpref \subseteq V(T)$ that contains the root and induces a connected subtree, and a prefix of a rooted tree decomposition $\Tt=(T,\bag)$ is a prefix of $T$.
For a rooted tree $T$, we denote by $\height(T)$ the maximum number of nodes on a root-leaf path, and for a node $t \in V(T)$, $\height(t)$ is the height of the subtree rooted at $t$.

Recall that in the edge addition operation, we increased the sizes of bags in a subtree consisting of the union $P_u \cup P_v$ of two paths, each between a node and the root.
In particular, all of the nodes with too large bags are contained in the prefix $P_u \cup P_v$ of size at most $2h$, where $h$ is the height of our tree decomposition.
Now, a natural idea for improving the width would be to replace the prefix $P_u \cup P_v$ by a tree decomposition of $G[\bags(P_u \cup P_v)]$ with height $\Oh{\log n}$ and width $3k+2$ given by the Bodlaender-Hagerup lemma.
While this form of the idea is too naive, we show that surprisingly, something that is similar in the spirit can be achieved.

The main tool we develop for maintaining tree decompositions in the dynamic setting is the \emph{refinement operation}.
The definition and properties of the operation are technical and will be described in \Cref{subsec:overview_refin}, but let us give here an informal description of what is achieved by the operation.
The refinement operation takes as an input a prefix $\Tpref$ of the tree decomposition $\Tt$ that we are maintaining, and informally stated, replaces $\Tpref$ by a tree decomposition of width at most $6k+5$ and height at most $\Oh{\log n}$.
The operation also edits other parts of $\Tt$, but in a way that makes them only better in terms of sizes of bags.
In particular, if we use the refinement operation on $\Tpref = P_u \cup P_v$ after an edge addition operation that made the width exceed $6k+5$ in the nodes in $P_u \cup P_v$, the operation brings the width of $\Tt$ back to at most $6k+5$.
The amortized time complexity of the refinement operation is $\Oh[k]{|\Tpref|}$, and it can increase the height of $\Tt$ by at most $\Oh{\log n}$.

With the refinement operation, we have a tool for keeping the width of the maintained tree decomposition $\Tt$ bounded by $6k+5$.
However, each application of the refinement operation can increase the height of $\Tt$ by $\log n$, so we need a tool also for decreasing the height.
We develop such a tool by a combination of a carefully chosen potential function and a strategy to decrease the potential function ``for free'' by using the refinement operation if the height is too large.
In particular, the potential function we use is
\[\Phi(\Tt) = \sum_{t \in V(T)} (\gamma \cdot k)^{|\bag(t)|} \cdot \height(t),\]
where $\gamma < 1000$ is a fixed constant defined in \cref{sec:refine-potential}.
This function has the properties that it does not increase too much in the edge addition operation
(the increase is at most $\Oh[k]{\height(T)^2}$), it plays well together with the details of the amortized analysis of the refinement operation (the factor $(\gamma \cdot k)^{|\bag(t)|}$ comes from there), and because of the factor $\height(t)$, it naturally admits smaller values on trees of smaller height.
In \Cref{subsec:overview_heightredu} we outline a strategy that, provided the height of $\Tt$ exceeds $2^{\Oh[k]{\sqrt{\log n \log\log n}}}$, selects a prefix $\Tpref$ so that applying the refinement operation to $\Tpref$ decreases the value of $\Phi(\Tt)$, and moreover the running time of the refinement operation can be bounded by this decrease.
In particular, this means that as long as the height is more than $2^{\Oh[k]{\sqrt{\log n \log\log n}}}$, we can apply such a refinement operation ``for free'', in terms of amortized running time, and moreover decrease the value of the potential.
As the potential cannot keep decreasing forever, repeated applications of such an operation eventually lead to improving the height to at most $2^{\Oh[k]{\sqrt{\log n \log\log n}}}$.

\subsection{The refinement operation}
\label{subsec:overview_refin}
In this subsection we overview the refinement operation.
First, let us note that our refinement operation builds on the tree decomposition improvement operation recently introduced by Korhonen and Lokshtanov for improving static fixed-parameter algorithms for treewidth~\cite{DBLP:journals/corr/abs-2211-07154}, which in turn builds on the 2-approximation algorithm for treewidth of Korhonen~\cite{Korhonen21} and in particular on the use of the techniques of Thomas~\cite{Thomas90} and Bellenbaum and Diestel~\cite{BellenbaumD02} in computing treewidth.
Our refinement operation generalizes the improvement operation of~\cite{DBLP:journals/corr/abs-2211-07154} by allowing to refine a continuous subtree instead of only a single bag, which is crucial for controlling the height of the tree decomposition.
We also adapt the operation from being suitable to compute treewidth exactly into a more approximative version (the $6k+5$ width comes from the combination of this and the Bodlaender-Hagerup lemma~\cite{bodlaender-hagerup}) in order to attain structural properties that are needed for efficient running time.
For this, the Dealternation Lemma of Boja{ń}czyk and Pilipczuk~\cite{DBLP:journals/lmcs/BojanczykP22} is used.
In the rest of this section we do not assume knowledge of these previous results.

Recall the goal of the refinement operation: Given a prefix $\Tpref$ of the tree decomposition $\Tt=(T,\bag)$ that we are maintaining, we would like to, in some sense, recompute the tree decomposition on this prefix.
Observe that we cannot simply select an induced subgraph like $G[\bags(\Tpref)]$ and hope to replace $\Tpref$ by any tree decomposition of it, because the tree decomposition needs to take into account also the connectivity provided by vertices outside of $\bags(\Tpref)$.
For this, the correct notion will be the \emph{torso} of a set of vertices.
Let $G$ be a graph and $X \subseteq V(G)$ a set of vertices.
The graph $\torso[G]{X}$ has the vertex set $X$ and has an edge $uv$ if there is a $u$-$v$-path in $G$ whose internal vertices are outside of $X$.
In other words, $\torso[G]{X}$ is the supergraph of $G[X]$ obtained by making for each connected component $C$ of $G-X$ the neighborhood $N(C)$ into a clique.

\begin{figure}[htb]
\begin{center}
\includegraphics[width=0.99\textwidth]{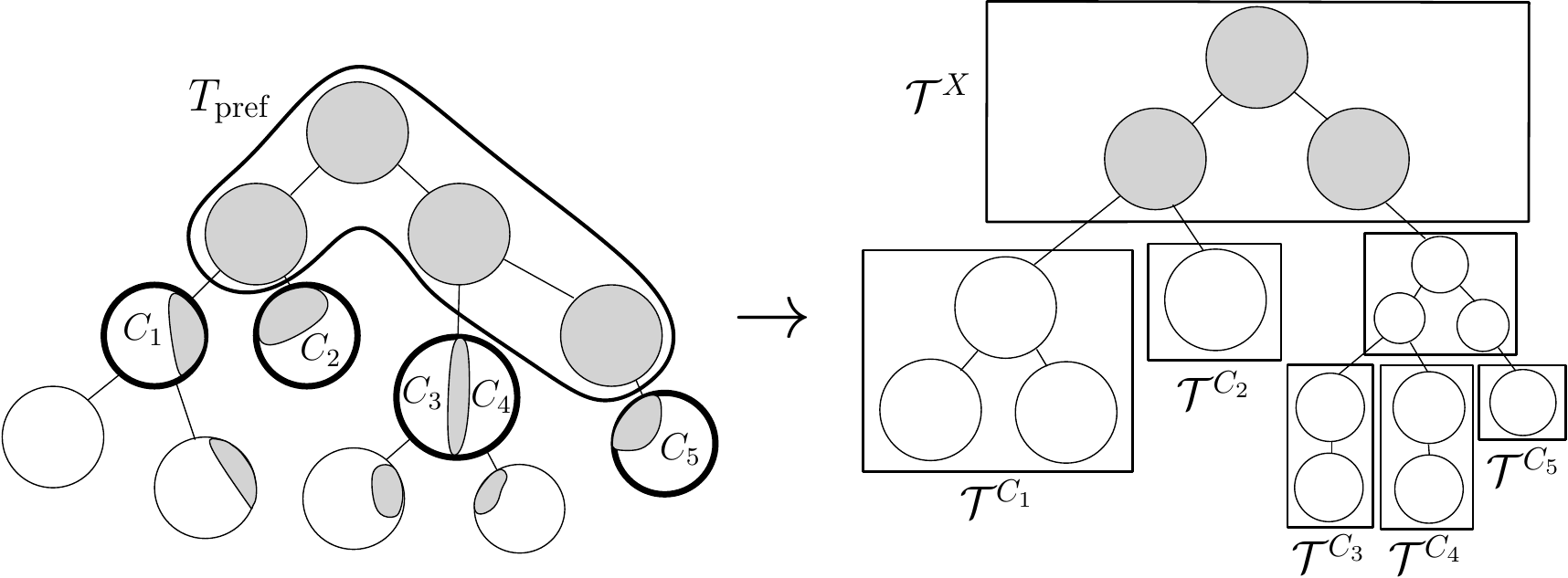}
\caption{The refinement operation. The left picture illustrates the tree decomposition~$\Tt$, with the prefix $\Tpref$ encircled and the vertices in $X \supseteq \bags(\Tpref)$ depicted in gray. The appendices of $\Tpref$ are circled by boldface, and the components of $G - X$ are denoted by $C_1,\ldots,C_5$.
The right picture illustrates the tree decomposition constructed from $\Tt$ by the refinement using $X$, in particular, by taking the tree decomposition $\Tt^X$ of $\torso[G]{X}$, and gluing the tree decompositions $\Tt^{C_i}$ for components $C_i$ of $G - X$ to it.
The subtree consisting of the three nodes above $\Tt^{C_3}$, $\Tt^{C_4}$, $\Tt^{C_5}$ is constructed in order to keep the tree binary after reattaching $\Tt^{C_3}$, $\Tt^{C_4}$, $\Tt^{C_5}$.
\label{fig:overview:refinement}}
\end{center}
\end{figure}

We then outline the refinement operation (see \cref{fig:overview:refinement} for an illustration of it).
Given $\Tpref$, we find a set of vertices $X \supseteq \bags(\Tpref)$ so that $\torso[G]{X}$ has treewidth at most $2k+1$ (here we have $2k+1$ instead of $k$ for a technical reason we will explain).
Then, we compute an optimum-width tree decomposition $\Tt^X$ of $\torso[G]{X}$ and use the Bodlaender-Hagerup lemma~\cite{bodlaender-hagerup} to make its height $\Oh{\log n}$, resulting in $\Tt^X$ having width at most $6k+5$.
We root $\Tt^X$ at an arbitrary node, and it will form a prefix of the new refined tree decomposition.
What remains, is to construct tree decompositions $\Tt^C$ for each connected component $C$ of $G - X$ and attach them into $\Tt^X$.

We say that a node $a \in V(T)$ of $\Tt$ is an {\em{appendix}} of $\Tpref$ if $a$ is not in $\Tpref$ but the parent of $a$ is.
For each appendix $a$ of $\Tpref$, denote by $\Tt_a = (T_a, \bag_a)$ the restriction of $\Tt$ to the subtree rooted at $a$.
Now, note that because $X \supseteq \bags(\Tpref)$, for each connected component $C$ of $G-X$ there exists a unique appendix $a$ of $\Tpref$ such that $C$ is contained in the bags of $\Tt_a$.
Moreover, the restriction $(T_a,\funrestriction{\bag_a}{N[C]})$ to the closed neighborhood $N[C]$ of $C$ is a tree decomposition of the induced subgraph $G[N[C]]$ minus the edges inside $N(C)$.
Then, observe that because $\Tt^X$ is a tree decomposition of $\torso[G]{X}$, it must have a bag that contains $N(C)$.
Now, our goal is to attach $(T_a,\funrestriction{\bag_a}{N[C]})$ into this bag.
In order to achieve this while satisfying the connectedness condition of tree decompositions, we need to have the set $N(C)$ in the root of $(T_a,\funrestriction{\bag_a}{N[C]})$.
We denote by $\Tt^C = (T^C,\bag^C)$ the tree decomposition obtained from $(T_a,\funrestriction{\bag_a}{N[C]})$ by ``forcing'' $N(C)$ to be in the bag $\bag^C(a)$ of the root node $a$, in particular, by inserting $N(C)$ to $\bag^C(a)$ and then fixing the connectedness condition by inserting each vertex $v \in N(C)$ to all bags on the unique path from the root to the subtree of the other bags containing $v$.
Then, $\Tt^C$ is a tree decomposition of $G[N[C]]$ whose root bag contains $N(C)$, and therefore it can be attached to the bag of $\Tt^X$ that contains $N(C)$.
These attachments may make the degree of the resulting tree decomposition higher than $2$, so finally these high-degree nodes need to be expanded into binary trees.
This concludes the informal description of the refinement operation.
The actual definition is a bit more involved, as it is necessary for obtaining efficient running time to (1) treat in some cases multiple different components $C$ in $\Tt_a$ as one component, and (2) prune out some unnecessary bags of $\Tt^C$.

From the description of the refinement operation sketched above, it should be clear that the resulting tree decomposition is indeed a tree decomposition of $G$.
It is also easy to see that the height of the refined tree decomposition is at most $\height(\Tt) + \Oh{\log n}$: This is because $\Tt^X$ has height at most $\Oh{\log n}$, and each of the attached decompositions $\Tt^C$ has height at most $\height(\Tt)$.
Recall that our goal is that if all of the bags of width more than $6k+5$ of $\Tt$ are contained in $\Tpref$, then the width of refined tree decomposition is at most $6k+5$.
The widths of the bags in $\Tt^X$ are clearly at most $6k+5$. 
However, because of the additional insertions of vertices in $N(C)$ to bags in $\Tt^C$, it is not clear why those bags would have width at most $6k+5$. In fact, we cannot guarantee this without additional properties of $X$ we outline next.

Let us call a set of vertices $X \subseteq V(G)$ a $k$-closure of $\Tpref$ if $X \supseteq \bags(\Tpref)$ and the treewidth of $\torso[G]{X}$ is at most $2k+1$.
In particular, the set $X$ in the refinement operation is a $k$-closure of~$\Tpref$.
We say that a $k$-closure $X$ is \emph{linked} into $\Tpref$ if for each component $C$ of $G - X$, the set $N(C)$ is linked into $\bags(\Tpref)$ in the sense that there are no separators of size $<|N(C)|$ separating $N(C)$ from $\bags(\Tpref)$.
The key property for controlling the width of $\Tt^C$ is that if $X$ is linked into $\Tpref$, then each bag of $\Tt^C$ has width at most the width of the corresponding bag in $\Tt$.
In particular, let $\Tt_a$ be a subtree of $\Tt$ hanging on an appendix $a$ of $\Tpref$, and $C$ be a component of $G-X$ contained in $\Tt_a$.
Recall that we construct the tree decomposition $\Tt^C = (T^C,\bag^C)$ by taking $T^C = T_a$, and then for each $t \in V(T^C)$ setting
\[\bag^C(t) = (\bag_a(t) \cap N[C]) \cup \{v \in N(C) \mid \text{the highest bag containing } v \text{ is below } t\}.\]

Then, we can bound the size of $\bag^C(t)$ as follows.
\begin{lemma}
\label{lem:overview:linkedstandard}
If $X$ is linked into $\Tpref$, then $|\bag^C(t)| \le |\bag_a(t)|$.
\end{lemma}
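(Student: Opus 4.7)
The plan is to decompose $\bag^C(t)$ into two disjoint pieces and bound them against corresponding pieces of $\bag_a(t)$. Write $A = \bag_a(t) \cap N[C]$ and $B = \{v \in N(C) \mid \text{the highest bag of } v \text{ in } \Tt_a \text{ lies strictly below } t\}$, so that $\bag^C(t) = A \cup B$. By definition any $v \in B$ avoids $\bag_a(t)$, so $A$ and $B$ are disjoint and $|\bag^C(t)| = |A| + |B|$. Since $A \subseteq \bag_a(t) \cap N[C]$, it is enough to prove $|B| \le |\bag_a(t) \setminus N[C]|$, and the rest of the argument concentrates on this inequality.

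The next step is to locate $C$ inside the decomposition. Let $D$ denote the subtree of $T_a$ rooted at $t$. Because $C$ is connected in $G$, disjoint from $X \supseteq \bag_a(t)$, and $\bag_a(t)$ is the standard separator between vertices appearing strictly below $t$ and those appearing outside $D$, the component $C$ is either entirely outside $\bags(D)$ or entirely contained in $\bags(D) \setminus \bag_a(t)$. In the first case I would argue that $B$ is empty: any hypothetical $v \in B$ has all of its bags in $\Tt_a$ strictly below $t$ and, by connectivity of the bag-subtree of $v$ through the appendix node $a$, no bags above $a$ either, so the edge $vc$ for $c \in C$ would be covered by a bag inside $D$, forcing $c \in \bags(D)$ against the case assumption.

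In the second case one has $\bag_a(t) \cap C = \emptyset$, and I would establish the partition $N(C) = (N(C) \cap \bag_a(t)) \sqcup B$. Indeed, every $v \in N(C)$ is forced into some bag of $D$ (the neighbour $c \in C$ lies strictly below $t$ and the edge $vc$ must be covered there), and depending on whether $v$ is in $\bag_a(t)$ or not, bag-subtree connectivity inside $\Tt_a$ places it in exactly one of the two sides. Thus $|B| = |N(C)| - |N(C) \cap \bag_a(t)|$, and the task reduces to proving $|N(C)| \le |\bag_a(t)|$. This is where linkedness comes in: I would show that $\bag_a(t)$ itself is a separator between $N(C)$ and $\bags(\Tpref)$ in $G$, using $N(C) \setminus \bag_a(t) \subseteq \bags(D) \setminus \bag_a(t)$ on one side and $\bags(\Tpref) \cap \bags(D) \subseteq \bag_a(t)$ on the other (the latter coming from tree-decomposition connectivity applied to the unique path in $T$ from $\Tpref$ down through $a$ to $t$). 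Linkedness then yields $|\bag_a(t)| \ge |N(C)|$, and combined with $\bag_a(t) \cap C = \emptyset$ this gives $|B| \le |\bag_a(t)| - |N(C) \cap \bag_a(t)| = |\bag_a(t) \setminus N[C]|$, as desired.

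The main obstacle I anticipate is the separator step involving $\bags(\Tpref)$: because $\Tpref$ lies entirely outside $\Tt_a$, one has to invoke the tree-decomposition connectivity across the attaching node $a$ carefully to conclude that $\bags(\Tpref) \setminus \bag_a(t)$ sits in $V(G) \setminus \bags(D)$. Everything else is bookkeeping of three pairwise disjoint subsets of $\bag_a(t)$, namely $\bag_a(t) \cap N(C)$, $\bag_a(t) \cap C$ (which vanishes in the relevant case), and the complement $\bag_a(t) \setminus N[C]$.
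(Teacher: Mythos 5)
Your proof is correct, and it takes a route that is related to but not identical to the paper's. The paper invokes Menger's theorem to pass from linkedness to $|N(C)|$ vertex-disjoint $N(C)$--$\bags(\Tpref)$ paths whose internal vertices avoid $N[C]$, and then observes that the $|N_t|$ paths originating in $N_t = B$ must each cross $\bag_a(t)$ at a distinct internal vertex, giving $|N_t| \le |\bag_a(t) \setminus N[C]|$ directly and without any case distinction on where $C$ lives. You instead stay with the separator formulation of linkedness: in the case where $C$ sits below $t$, you show $\bag_a(t)$ is itself an $(N(C), \bags(\Tpref))$-separator, deduce $|\bag_a(t)| \ge |N(C)|$, and then the identity $B = N(C) \setminus \bag_a(t)$ together with $\bag_a(t) \cap C = \emptyset$ gives the same bound by subtraction; the complementary case where $C$ lies outside the subtree at $t$ is handled separately (and trivially) by showing $B=\emptyset$. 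What your approach buys is avoiding Menger entirely, at the cost of the two-case split that the Menger argument handles uniformly. What the paper's approach buys is a uniform treatment, at the cost of a small unspoken step --- one has to justify that the Menger paths may be chosen to avoid $C$, e.g.\ by applying Menger in $G - C$; your separator argument sidesteps this. Both are fine; the bookkeeping of the three disjoint pieces $\bag_a(t) \cap N(C)$, $\bag_a(t) \cap C = \emptyset$, and $\bag_a(t) \setminus N[C]$ that you flag as the obstacle is exactly the point, and you handle it correctly.
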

\begin{proof}[Proof sketch]
Let $N_t = \{v \in N(C) \mid \text{the highest bag containing } v \text{ is below } t\}$ and note that it suffices to prove $|N_t| \le |\bag_a(t) \setminus N[C]|$.
By linkedness and Menger's theorem, there are $N(C)$ vertex-disjoint paths from $N(C)$ to $\bags(\Tpref)$, and because $\bags(\Tpref)$ is disjoint from $C$, all internal vertices of these paths are in $V(G) \setminus N[C]$.
Moreover, $|N_t|$ of these paths are from $N_t$ to $\bags(\Tpref)$, and because $\bag_a(t)$ separates $N_t$ from $\bags(\Tpref)$ and is disjoint from $N_t$, each such path contains an internal vertex in $\bag_a(t)$, implying $|\bag_a(t) \setminus N[C]| \ge |N_t|$.
\end{proof}

\Cref{lem:overview:linkedstandard} shows that if $\Tpref$ contains all bags of width more than $6k+5$ and $X$ is linked into $\Tpref$, the resulting tree decomposition will have width at most $6k+5$.
We note that the existence of a $k$-closure of $\Tpref$ that is linked into $\Tpref$ is non-trivial, but before going into that let us immediately generalize the notion of linkedness in order to obtain a stronger form of \Cref{lem:overview:linkedstandard} that will be useful for analyzing the potential function.
The {\em{depth}} $d(v)$ of a vertex $v \in V(G)$ in $\Tt = (T,\bag)$ is the depth of the highest node of $T$ whose bag contains $v$ (i.e., the distance from this node to root).
For a set of vertices $S$, we denote $d(S) = \sum_{v \in S} d(v)$.
Then, we say that a $k$-closure $X$ is {\em{$d$-linked}} into $\Tpref$ if it is linked into $\Tpref$, and additionally for each neighborhood $N(C)$, there are no separators $S$ with $|S| = |N(C)|$ and $d(S) < d(N(C))$ separating $N(C)$ from $\bags(\Tpref)$.

Recall our potential $\Phi(\Tt) = \sum_{t \in V(T)} (\gamma \cdot k)^{|\bag(t)|} \cdot \height(t)$.
Using $d$-linkedness we are able to prove that the actual definition of the refinement operation satisfies the following properties.

\begin{lemma}[Informal]
\label{lem:overview:linkedpotential}
Let $X$ be a $d$-linked $k$-closure of $\Tpref$, $a$ an appendix of $\Tpref$, and $C_1, \ldots, C_\ell$ the connected components of $G-X$ that are contained in $\Tt_a$.
It holds that $\sum_{i=1}^\ell \Phi(\Tt^{C_i}) \le \Phi(\Tt_a)$, and moreover, the tree decompositions $\Tt^{C_i}$ for all $i$, together with their updated dynamic programming tables, can be constructed in time $\Oh[k]{\Phi(\Tt_a) - \sum_{i=1}^\ell \Phi(\Tt^{C_i})}$.
\end{lemma}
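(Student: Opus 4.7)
The plan is to construct each $\Tt^{C_i} = (T^{C_i}, \bag^{C_i})$ explicitly: start from the subtree $T_a$, restrict every bag $\bag_a(t)$ to $N[C_i]$, ``push'' the boundary vertices $N(C_i)$ down along the paths to the subtrees that originally contained them, and finally prune away nodes whose resulting bag is subsumed by the parent's. I would then perform a per-node analysis that charges the potential contribution of each $t \in V(T_a)$ against the sum of potential contributions of its surviving images in the various $T^{C_i}$'s.

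The key technical steps are as follows. First, a strengthening of \cref{lem:overview:linkedstandard} that exploits $d$-linkedness yields $|\bag^{C_i}(t)| \le |\bag_a(t)|$ at every surviving node, with the Menger-style matching of pushed-down boundary vertices to vertices of $\bag_a(t) \setminus N[C_i]$ respecting the depth structure of $\Tt_a$; this is precisely where the $d$-variant is needed, rather than just plain linkedness. Second, since $T^{C_i}$ is a pruned subtree of $T_a$, we automatically have $\height_{T^{C_i}}(t) \le \height_{T_a}(t)$ at every corresponding node. Third, the vertex-disjointness of $C_1, \ldots, C_\ell$ combined with pruning bounds the number of components for which a given $t$ remains alive by $\Oh[k]{1}$: each surviving image of $t$ must be ``witnessed'' by some $C_i$-vertex in $\bag_a(t)$ or by a boundary push-down that the parent did not already carry, and the $|\bag_a(t)| \le 6k+5$ available slots together with the injectivity of $C_i$-containment cap the count. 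Because $\gamma$ is fixed as a sufficiently large constant (as in \cref{sec:refine-potential}), the exponential base $\gamma k$ absorbs this $\Oh[k]{1}$ overhead, yielding
\[
\sum_{i} (\gamma k)^{|\bag^{C_i}(t)|} \cdot \height_{T^{C_i}}(t) \;\le\; (\gamma k)^{|\bag_a(t)|} \cdot \height_{T_a}(t),
\]
and summing this over $t \in V(T_a)$ establishes the potential inequality.

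For the running time, I would organize the construction as a single top-down traversal of $T_a$, maintaining at each visited node the list of alive component-images and computing the updated bags and automaton transitions in $\Oh[k]{1}$ time per alive pair; pruned nodes are never visited. Each visit is charged against a constant fraction of the per-node slack $(\gamma k)^{|\bag_a(t)|} \cdot \height_{T_a}(t) - \sum_i (\gamma k)^{|\bag^{C_i}(t)|} \cdot \height_{T^{C_i}}(t)$, which gives the claimed $\Oh[k]{\Phi(\Tt_a) - \sum_i \Phi(\Tt^{C_i})}$ bound. The main obstacle I anticipate is the third step of the potential analysis: pinning down a pruning criterion precise enough to prevent shared boundary vertices from causing the per-node sum across components to overshoot the original weight. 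This is where $d$-linkedness (to forestall bag-size blow-up under push-down) and the pruning rule (to keep the per-node alive-component count bounded) must be executed most carefully; any slack in either of these will break the exponential accounting powered by $\gamma k$.
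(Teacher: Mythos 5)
Your high-level plan matches the paper's: build $\Tt^{C_i}$ by restricting bags of $T_a$ to $N[C_i]$, push $N(C_i)$ down, prune, and charge potential node by node using (a) height monotonicity, (b) bounded per-node multiplicity, and (c) bag-size control from $d$-linkedness. The per-node multiplicity bound of $\ell+1$ and the height bound are both correct and are what the paper uses.

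The gap is in step (c), and you correctly anticipate it as the hard part. You write that the $d$-linked version of \cref{lem:overview:linkedstandard} yields $|\bag^{C_i}(t)| \le |\bag_a(t)|$ --- a non-strict inequality. But a non-strict inequality is not enough to make the display
\[
\sum_{i} (\gamma k)^{|\bag^{C_i}(t)|} \cdot \height_{T^{C_i}}(t) \;\le\; (\gamma k)^{|\bag_a(t)|} \cdot \height_{T_a}(t)
\]
hold at a node $t$ that survives in $m > 1$ of the $\Tt^{C_i}$'s: with the non-strict bound the left side can be as large as $m \cdot (\gamma k)^{|\bag_a(t)|} \cdot \height_{T_a}(t)$, and no choice of $\gamma$ absorbs the factor $m$. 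What you actually need, and what the paper proves (\cref{lem:exchange-argument}, properties (P7)--(P8)), is that the inequality is \emph{strict} at every node $t$ that the refinement modifies (the ``explored'' nodes, i.e.\ those above the blockages): $|\bag^{C_i}(t)| < |\bag_a(t)|$. Only then does the multiplicity $m \le \ell+1$ get eaten by one level of the exponential, since $(\ell+1)\cdot (\gamma k)^{|\bag_a(t)|-1} \le (\gamma k)^{|\bag_a(t)|}$ once $\gamma$ is large enough. Below the blockages, where the bags are copied verbatim, the preimage is unique ($m=1$), so no strict decrease is needed there --- but your proposal does not make that case distinction either.

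Establishing the strict decrease is exactly where $d$-linkedness is spent, and it is not a ``Menger-style matching'' refinement of \cref{lem:overview:linkedstandard}. Plain linkedness already gives the non-strict bound; the strict version requires an explicit exchange argument: if $|\bag^{C_i}(t)| \ge |\bag_a(t)|$ at an explored node, then either $\bag_a(t) \subseteq N[C_i]$ (which would make $t$ a blockage, contradicting that $t$ is above all blockages), or the set $(N(C_i) \setminus \pullNei(t,C_i)) \cup (\bag_a(t) \setminus N[C_i])$ is an $(N(C_i), \bags(\Tpref))$-separator of the same size but strictly smaller total depth, contradicting $d$-linkedness. Your ``witness'' intuition (each alive image must consume a slot in $\bag_a(t)$) points in the right direction, but it is not enough to spell out the contradiction; you would also need to argue that pruning alone cannot be used to evade the strict decrease, because the nodes that the refinement keeps and modifies are precisely those where the strict decrease must be proved, not assumed via a tunable pruning rule. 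As written, the proposal leaves the crucial strict-inequality step unproved and flags it as an obstacle rather than resolving it.
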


Let us note that \Cref{lem:overview:linkedpotential} would hold also for a potential function without the $\height(t)$ factor; this factor is included in the potential only for the purposes of the height reduction scheme that will be outlined in \Cref{subsec:overview_heightredu}.
Also, in the actual refinement operation each $C_i$ in \Cref{lem:overview:linkedpotential} can actually be the union of multiple different components with the same neighborhood $N(C_i)$, and we can actually charge a bit extra from the potential for each of these ``connected components''; this extra potential will be used for constructing the binary trees for the high-degree attachment points.

After ignoring these numerous technical details, the main takeaway of \Cref{lem:overview:linkedpotential} is that constructing the decompositions $\Tt^C$ is ``free'' in terms of the potential.
The only place where we could use a lot of time or increase the potential a lot is finding the set $X$ and constructing the tree decomposition $\Tt^X$.
For bounding this, we give a lemma asserting that we can assume $X$ to have size at most $\Oh[k]{|\Tpref|}$, and in addition to have an even stronger structural property that will be useful in the height reduction scheme.
For a node $a$ of $\Tt$, denote by $\component{a} \subseteq V(G)$ the vertices that occur in the bags of the subtree rooted at $a$, but not in bag of the parent of $a$.
We prove the following statement using the Dealternation Lemma of Boja\'nczyk and Pilipczuk~\cite{DBLP:journals/lmcs/BojanczykP22}.
We note that the bound $2k+1$ in the definition of $k$-closure comes from this proof.

\begin{lemma}
\label{lem:overview:smallclosure}
Let $G$ be a graph of treewidth at most $k$, and $\Tt$ a tree decomposition of $G$ of width~$\Oh{k}$.
For any prefix $\Tpref$ of $\Tt$, there exists a $k$-closure $X$ of $\Tpref$ so that for each appendix $a$ of $\Tpref$ it holds that $|X \cap \component{a}| \le \Oh{k^4}$. 
\end{lemma}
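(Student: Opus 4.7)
The plan is to construct $X$ by augmenting $\bags(\Tpref)$ with carefully chosen ``adhesion vertices'' drawn from an optimum-width tree decomposition of $G$, where the Dealternation Lemma is used to control how many such adhesions end up inside each appendix-component $\component{a}$.

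First, I would take an optimum-width tree decomposition $\Tt^\star = (T^\star, \bag^\star)$ of $G$, which has width at most $k$ and whose adhesions therefore have size at most $k+1$. To couple $\Tt^\star$ with the fixed decomposition $\Tt$, I would color the vertices of $G$ according to where they ``live'' relative to $\Tpref$: one color for $\bags(\Tpref)$, and one color per appendix $a$ for the vertices of $\component{a}$. The Dealternation Lemma of Boja\'nczyk and Pilipczuk \cite{DBLP:journals/lmcs/BojanczykP22} can then be applied (after a standard rooting/reorientation of $\Tt^\star$) to produce a tree decomposition $\Tt^{\star\star}$ of $G$ whose width is still $\Oh{k}$, and such that for every appendix $a$, the set of nodes of $T^{\star\star}$ whose bags meet $\component{a}$ forms a union of at most $\Oh{k^2}$ subtrees of $T^{\star\star}$. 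This is the crucial step: dealternation says that bags cannot alternate arbitrarily between different ``colors'' given that each bag has only $\Oh{k}$ vertices available to witness such alternations.

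Next, I would set $X := \bags(\Tpref) \cup \bigcup_a Y_a$, where for each appendix $a$, $Y_a \subseteq \component{a}$ is the union of all adhesions of $\Tt^{\star\star}$ that bound one of the $\Oh{k^2}$ subtrees assigned to $a$. Each such adhesion has size at most $k+1$, hence $|Y_a| \le \Oh{k^3}$; absorbing the extra factor for separating the subtrees themselves gives the claimed $|X \cap \component{a}| \le \Oh{k^4}$. To see that $X$ is a $k$-closure, I would build an explicit tree decomposition of $\torso[G]{X}$ of width $2k+1$ from $\Tt^{\star\star}$: take the subforest of $T^{\star\star}$ restricted to the subtrees that touch $X$, intersect each bag with $X$, and for each adhesion that was separated off (and thus generates a clique in the torso), enlarge the nearest bag by the at most $k+1$ adhesion vertices on the other side. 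Since each bag of $\Tt^{\star\star}$ has $\le \Oh{k}$ vertices in $X$ already, and we add at most $k+1$ more, the resulting width can be arranged to be at most $2k+1$ by a direct counting argument that uses the fact that the adhesion vertices on both sides of a cut lie in overlapping bags of $\Tt^{\star\star}$. Connectivity of bags for each vertex of $X$ is inherited from $\Tt^{\star\star}$ and the torso edges are covered because every maximal ``component side'' corresponds to a subtree whose boundary adhesion ends up inside one bag.

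The main obstacle is the dealternation step and the accompanying bookkeeping: one must set up the coloring of $V(G)$ so that the hypotheses of the Dealternation Lemma apply, carefully root or normalize $\Tt^\star$ so that an appendix-component corresponds to a contiguous ``color region,'' and then quantify the number of monochromatic subtrees produced, showing it is $\Oh{k^2}$ rather than some larger polynomial. The width of $\torso[G]{X}$ must then be rederived from this refined decomposition, and verifying that the slack of $k$ (going from width $k$ to $2k+1$) suffices to absorb the new clique-adhesions in the torso is the delicate combinatorial point; this is also where the specific constant $2k+1$ in the definition of $k$-closure comes from.
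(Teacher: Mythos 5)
Your high-level architecture is recognizably the paper's (Dealternation Lemma to bound per-appendix contributions, plus a merged-bags argument for the $2k+1$ torso bound), but the execution has genuine gaps. First, the Dealternation Lemma is misquoted and applied to the wrong object. It does not say that a coloring of $V(G)$ can be realized by $\Oh{k^2}$ monochromatic \emph{subtrees} of a re-decomposition of an optimal decomposition. What it says is: given the decomposition $(T,\bag)$ of width $\ell=\Oh{k}$ that you already have, there is an \emph{elimination forest} $F$ of $G$ of width exactly $\tw{G}$ such that each set $\component{a}$ is a disjoint union of at most $\Oh{\ell^3}$ \emph{factors} of $F$ — and factors include context factors (a subtree minus a forest factor), not just subtrees. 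Your $\Oh{k^2}$ bound has no source, and the "subtrees whose bags meet $\component{a}$" formulation is not controlled by the lemma at all, since a single vertex occupies many bags.

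Second, and more seriously, your $X=\bags(\Tpref)\cup\bigcup_a Y_a$ omits the lca-closure step, and without it the closure property fails. The paper takes the lca-closure $W'$ of $\bags(\Tpref)$ inside the elimination forest and sets $X=\bigcup_{v\in W'}\bag'(v)$; lca-closedness is precisely what makes \cref{lem:tw-torso-for-sum-of-bags} go through — on the path between two occurrences of $u,v\in X$ in the contracted tree, every bag must contain $u$ or $v$, so each torso edge lands in a bag formed as the union of \emph{two adjacent bags of a width-$k$ decomposition}, giving size $\le 2k+2$. With your $X$, a component of $G-X$ can pass through an unselected branching node of $\Tt^{\star\star}$, so its torso-neighborhood need not be contained in any single bag of your proposed decomposition; the "enlarge the nearest bag" patch is not justified and can cascade. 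Relatedly, your width count cannot yield $2k+1$: a bag of $\Tt^{\star\star}$ only has width $\Oh{k}$ (not $k$), so "$\Oh{k}$ vertices of $X$ plus $k+1$ adhesion vertices" gives $\Oh{k}$, not $2k+2$. The slack that produces exactly $2k+1$ comes from the elimination forest having width exactly $\tw{G}\le k$, which your construction does not secure.
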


In particular, as $\Tt$ is a binary tree, $\Tpref$ has at most $|\Tpref|+1$ appendices. So by \Cref{lem:overview:smallclosure}, $\Tpref$ admits a $k$-closure with at most $|\Tpref| \cdot (k+1) + \Oh{|\Tpref| \cdot k^4} = \Oh{|\Tpref| \cdot k^4}$ vertices.
By using such a $k$-closure, we can bound the size of $\Tt^X$ by $\Oh[k]{|\Tpref|}$.
As each node in $\Tt^X$ has potential at most $\Oh[k]{\height(\Tt) + \log n}$ in the resulting decomposition, we get that the refinement operation increases the potential function by at most $\Oh[k]{|\Tpref| \cdot (\height(\Tt) + \log n)}$.
In particular, in the refinement operation applied directly after edge insertion, we have that $|\Tpref| \le 2 |\height(\Tt)|$, so the potential function increases by at most $\Oh[k]{\height(\Tt)^2}$ (note that $\height(\Tt) \ge \log n$).

Let us now turn to two issues that we have delayed for some time: How to guarantee that the $k$-closure $X$ is $d$-linked, and how to actually find such an $X$.
Let us say that a closure is \emph{$c$-small} if it satisfies the condition of \Cref{lem:overview:smallclosure} for some specific bound $c \in \Oh{k^4}$ that can be obtained from the proof of \Cref{lem:overview:smallclosure}.
The following lemma, which is proved similarly to proofs of Korhonen and Lokshtanov~\cite[Section 5]{DBLP:journals/corr/abs-2211-07154}, gives a simple condition that guarantees $d$-linkedness.

\begin{lemma}
\label{lem:overview:linkedoptimization}
Let $\Tpref$ be a prefix of a tree decomposition.
If $X$ is a $c$-small $k$-closure of $\Tpref$ that among all $c$-small $k$-closures of $\Tpref$ primarily minimizes $|X|$, and secondarily minimizes $d(X)$, then $X$ is $d$-linked into $\Tpref$.
\end{lemma}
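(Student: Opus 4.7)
My plan is to argue by contradiction, exploiting the lexicographic minimality of $(|X|, d(X))$ via a standard uncrossing construction. Suppose $X$ is a lex-minimum $c$-small $k$-closure but fails to be $d$-linked into $\Tpref$. Then some component $C$ of $G - X$ with $N := \neighopen[G]{C}$ admits a separator $S$ between $N$ and $\bags(\Tpref)$ such that either (a) $|S| < |N|$, or (b) $|S| = |N|$ and $d(S) < d(N)$. I would take $S$ to be a ``tightest'' such separator pushed towards $N$, meaning $S = \neighopen[G]{A}$ where $A$ is the union of the connected components of $G - S$ meeting $N \setminus S$; in particular $C \subseteq A$ and $A \cap S = \emptyset$. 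I then form the candidate
\[ X' := (X \setminus A) \cup S. \]

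Verifying this candidate proceeds in two steps. First, the numerics: because $S$ separates $N$ from $\bags(\Tpref)$, $A$ is disjoint from $\bags(\Tpref)$, so $\bags(\Tpref) \subseteq X \setminus A \subseteq X'$. Using $N \subseteq X$ together with $N \setminus S \subseteq A$ and $N \cap S \subseteq X$, and noting $A \cap S = \emptyset$, I obtain
\[ |X'| \;=\; |X| - |X \cap A| + |S| - |X \cap S| \;\le\; |X| - |N| + |S|, \]
and, by the identical computation with $d(\cdot)$ in place of $|\cdot|$, also $d(X') \le d(X) - d(N) + d(S)$. In case~(a) this yields $|X'| < |X|$; in case~(b) it yields $|X'| \le |X|$ and $d(X') < d(X)$. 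Either way, $X'$ beats $X$ in the lexicographic order, contradicting minimality of $X$ \emph{provided} $X'$ is itself a $c$-small $k$-closure of $\Tpref$.

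The main obstacle, and where the real work lies, is precisely this last verification. For the $k$-closure part, I would start from a width-$(2k+1)$ tree decomposition $\Tt^X$ of $\torso[G]{X}$ and perform local surgery: in the subtree of $\Tt^X$ covering $X \cap A$, I introduce a single new bag containing $S$ and prune the $A$-side below it, obtaining a tree decomposition of $\torso[G]{X'}$ of the same width. The soundness of this surgery uses that $S$ is a $G$-separator whose $A$-side is disjoint from $X \setminus A$, so the new bag correctly ``seals off'' the removed region. For the $c$-smallness part, I would show that the symmetric difference $X \triangle X'$ is localized to the unique appendix $a$ of $\Tpref$ with $C \subseteq \component{a}$: $A \subseteq \component{a}$ because $A$ is reachable from $C$ in $G - \bags(\Tpref)$ and $\component{a}$ is sealed off from the rest of $V(G) \setminus \bags(\Tpref)$ by the adhesion at $a$, and the ``pushed'' $S$ lies inside $\component{a}$ as well. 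Then $X' \cap \component{a'} = X \cap \component{a'}$ for every appendix $a' \ne a$, while inside $\component{a}$ the same arithmetic as above gives $|X' \cap \component{a}| \le |X \cap \component{a}| + |S| - |N| \le c$. Combining these three ingredients delivers the desired contradiction and completes the proof.
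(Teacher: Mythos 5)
Your overall strategy matches the paper exactly: this is the Closure Linkedness Lemma (\cref{lem:closure-linkedness-lemma}), proved by uncrossing a violating separator $S$ against the lex-minimal closure $X$ to build a better candidate $X' = (X\setminus A)\cup S$, deriving the contradiction from the cardinality/weight inequalities $|X'|\le |X|-|N|+|S|$ and $d(X')\le d(X)-d(N)+d(S)$. Your arithmetic for these inequalities and your $c$-smallness argument (localize the change to the unique appendix $a$ with $C\subseteq\component{a}$, and reuse the same counting inside $\component{a}$) is essentially the paper's \cref{clm:cll-constructed-is-small}, and it is correct up to the harmless imprecision that $S$ may also meet $\adhesion{a}$, not just $\component{a}$.

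The genuine gap is in the step you yourself flag as ``where the real work lies'': showing $\tw(\torso[G]{X'})\le 2k+1$. Your proposed surgery --- take a width-$(2k+1)$ decomposition $\Tt^X$ of $\torso[G]{X}$, introduce a single new bag containing $S$ and prune the $A$-side below it --- does not work as stated. Two concrete problems. First, there is no reason for $X\cap A$ to be covered by a connected subtree of $\Tt^X$: $\Tt^X$ is a decomposition of the torso, not of $G$, and the torso graph induced on $X\cap A$ need not be connected nor gathered in one part of $\Tt^X$, so there is no canonical ``prune point.'' Second, and more fundamentally, deleting $X\cap A$ from the closure creates new torso edges between vertices of $S$ and vertices of $X\setminus(A\cup S)$ (routed through the freshly exposed components inside $A$), and a single extra bag of size $|S|$ does not make these new cliques covered while keeping the other bags small; you would in general have to insert vertices of $S$ into many bags of $\Tt^X$, and without further structure this blows up the width. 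The paper resolves exactly this via the Pulling Lemma (\cref{lem:pulling-lemma}) of Korhonen--Lokshtanov, whose crucial hypothesis is that $S$ is linked (in the plain Menger sense) into a subset of some bag $r$ containing $N(C)$. That hypothesis is what licenses routing $S$ along $|S|$ vertex-disjoint paths into $\bag(r)$ without increasing any bag size. To obtain it, the paper chooses $S$ to be a \emph{minimum-size} $(N(C),\bags(\Tpref))$-separator (not merely ``pushed towards $N$''), whence $S$ is linked into $N(C)$ by a one-line minimality argument. Your choice of $S$ as ``tightest, pushed towards $N$'' does not yield this linkedness, and without it the surgery fails. The fix is small --- observe that any violating separator can be replaced by a minimum-size one that still violates (a) or (b), and then invoke the Pulling Lemma --- but as written, the $k$-closure verification is where the proof breaks.
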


With \Cref{lem:overview:linkedoptimization}, we can use dynamic programming for finding $c$-small $k$-closures that are $d$-linked.
In particular, we adapt the dynamic programming of Bodlaender and Kloks~\cite{DBLP:journals/jal/BodlaenderK96} for computing treewidth into computing $c$-small $k$-closures that optimize for the conditions in the lemma.
This adaptation uses quite standard techniques, but let us note that one complication is that even a small change to a tree decomposition can change the depths of all vertices, so we cannot just store the value $d(X)$ in the dynamic programming tables. Instead, we have to make use of the definition of the function $d$ and the fact that we are primarily minimizing $|X|$.
By maintaining these dynamic programming tables throughout the algorithm, we get that given $\Tpref$, we can in time $\Oh[k]{|\Tpref|}$ find an $\Oh{k^4}$-small $k$-closure $X$ of $\Tpref$ that is $d$-linked, and also the graph $\torso[G]{X}$.

\subsection{Height reduction}
\label{subsec:overview_heightredu}
In this subsection we sketch the height reduction scheme, in particular, the following lemma.

\begin{lemma}[Height reduction]
\label{lem:overview:heightredu}
Let $\Tt$ be the tree decomposition we are maintaining.
There is a function $f(n, k) \in 2^{\Oh[k]{\sqrt{\log n \log\log n}}}$ so that if $\height(\Tt) > f(n,k)$, then there exists a prefix $\Tpref$ of $\Tt$ so that the refinement operation on $\Tpref$ results in a tree decomposition $\Tt'$ with $\Phi(\Tt') < \Phi(\Tt)$ and runs in time $\Oh[k]{\Phi(\Tt) - \Phi(\Tt')}$.
\end{lemma}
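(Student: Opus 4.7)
The plan is to pick a prefix of the form $\Tpref := \{t \in V(T) : \height(t) \ge h_0\}$ for a carefully chosen height threshold $h_0$, and to argue that the refinement is ``free'' using the accounting provided by \Cref{lem:overview:linkedpotential}. With $\Delta_{\mathrm{sub}} := \sum_a \Phi(\Tt_a) - \sum_i \Phi(\Tt^{C_i}) \ge 0$, that lemma says the construction of the $\Tt^{C}$ parts runs in time $\Oh[k]{\Delta_{\mathrm{sub}}}$; adding the $\Oh[k]{|\Tpref|}$ cost of computing $\Tt^X$ and its dynamic programming tables, the refinement runs in total time $\Oh[k]{|\Tpref| + \Delta_{\mathrm{sub}}}$, whereas the potential change splits as
\[
\Phi(\Tt) - \Phi(\Tt') = \sum_{t \in \Tpref} (\gamma k)^{|\bag(t)|}\, \height_{\Tt}(t) \;-\; \sum_{t \in V(T^X)} (\gamma k)^{|\bag(t)|}\, \height_{\Tt'}(t) \;+\; \Delta_{\mathrm{sub}}.
\]
After choosing the constant $\gamma$ large enough that the hidden constant in front of $\Delta_{\mathrm{sub}}$ in the running time is at most $1$, the task reduces to exhibiting some $h_0$ for which the first two summands above together exceed $c(k)\cdot |\Tpref|$ for a suitable $c(k)$.

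The height-level choice of $\Tpref$ is well suited because every appendix $a$ satisfies $\height(a) < h_0$, so every node of $\Tt^X$ has height at most $\Oh{\log n} + h_0$ in $\Tt'$. Together with $|\Tt^X| \le \Oh[k]{|\Tpref|}$ and the worst-case bag size $6k+5$, this yields
\[
\sum_{t \in V(T^X)} (\gamma k)^{|\bag(t)|}\, \height_{\Tt'}(t) \;\le\; \Oh[k]{|\Tpref| \cdot (\log n + h_0)},
\]
whereas the $\Tpref$-contribution is at least $h_0 \cdot \sum_{t \in \Tpref} (\gamma k)^{|\bag(t)|}$. The inequality we need therefore says, informally, that for some scale $h_0$ the average weight $(\gamma k)^{|\bag(t)|}$ over $t \in \Tpref$ multiplied by $h_0$ dominates $(\log n + h_0)$ times the implicit constants.

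To extract such an $h_0$ under the assumption $\height(\Tt) > f(n,k)$ with $f(n,k) = 2^{\Oh[k]{\sqrt{\log n \log\log n}}}$, I would pigeonhole across $\Theta(\log h)$ geometric choices of the threshold $h_0$ together with the $6k+6$ possible bag sizes $s^* \in \{0,\ldots,6k+5\}$. If \emph{every} pair $(h_0, s^*)$ failed the desired inequality, telescoping those failures would upper-bound $\sum_t (\gamma k)^{|\bag(t)|}\height(t) = \Phi(\Tt)$ and, in turn, $\height(\Tt)$ itself by a quantity matching $f(n,k)$---contradicting the hypothesis. The $\sqrt{\log n \log\log n}$ exponent should emerge from an AM--GM balance between the number of scales (roughly $\log h$) and the additive $\log n$ overhead per refinement charged by Bodlaender--Hagerup. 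The central obstacle is the exponential gap between the worst-case weight $(\gamma k)^{6k+5}$ of bags inserted by $\Tt^X$ and the possibly much smaller weights of bags in $\Tpref$: reconciling this is where all three currencies---the prefix savings, the subtree savings $\Delta_{\mathrm{sub}}$, and the up-front $|\Tpref|$ cost---must be carefully dovetailed, with the constant $\gamma$ calibrated precisely so that the accounting closes in every case.
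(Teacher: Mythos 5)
Your approach differs substantially from the paper's, and I believe it cannot achieve the claimed bound. You propose the height-threshold prefix $\Tpref = \{t : \height(t) \ge h_0\}$ and a pigeonhole over $\Theta(\log h)$ geometric scales. The trouble is that all thresholds can simultaneously fail while $\height(\Tt)$ is far larger than $2^{\Oh[k]{\sqrt{\log n \log\log n}}}$. With $M := 2^{\Oh{k\log k}}\log N$ being the constant in the negative term of the refinement potential bound, the condition for a threshold $h_0$ to succeed is that the average height of the nodes in $\Tpref(h_0)$ exceeds roughly $Mh_0$. Writing $f(j)$ for the number of nodes of height $\ge j$, failure at a level $h_0$ and its predecessor $\alpha h_0$ forces only $f(h_0) > \frac{\alpha-1}{M-1}\,f(\alpha h_0)$; with $\alpha = 2M$ this is a factor $\approx 2$. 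Telescoping over $k = \log_{2M} h$ scales gives $N \ge f(1) > 2^{k}$, hence merely $\log h \lesssim \log(2M)\cdot\log N \approx \log N\log\log N$, i.e.\ $h \lesssim n^{\log\log n}$. That bound is super-polynomial in $n$, so your argument cannot deliver $2^{\Oh[k]{\sqrt{\log n\log\log n}}}$; with $h = 2^{\Theta(\sqrt{\log n\log\log n})}$ there are only $\sqrt{\log n / \log\log n}$ useful scales and the telescoped contradiction $2^{k}\le N$ is vacuous.

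The ingredient you are missing is that the prefix must be chosen \emph{adaptively} to track subtree \emph{sizes} as well as heights. The paper's $\getunbalancedop$ starts from the root-to-deepest-leaf path $P$ and classifies each appendix subtree as ``big'' (size above a size threshold $n_{j+1}$) or ``shallow'' (height below a height threshold $h_{j+1}$); the contribution of those subtrees to the negative term is charged against $\Phi(P) = \Theta(|P|^2)$, which is quadratic in the path length. The only remaining appendices are both ``small'' and ``deep'', i.e.\ locally even more unbalanced, and the recursion is applied to those. Because a subtree of size $\le n_j$ can have height at most $n_j$, choosing $h_j = c^{a+2-j}$ and $n_j = c^{(a-j+1)(a-j+2)/2}$ with $c \approx 2^{\Oh{k\log k}}\log n$ makes the recursion bottom out after $a = \Theta(\sqrt{\log_c N})$ levels, which is what produces the $\sqrt{\log n\log\log n}$ exponent. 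A height-only pigeonhole has no mechanism that forces this quadratic relationship between the recursion depth and the number of scales. Separately, your claimed identity for $\Phi(\Tt)-\Phi(\Tt')$ is not what the refinement operation guarantees: property \amref{potential-difference} only gives a lower bound, and the positive side of that bound is $\sum_{t\in\Tpref}\height_T(t)$ (weights dropped) rather than the full $\Phi_\Tc(\Tpref)$, while the negative side charges $(|\Tpref|+\sum_{a\in\App(\Tpref)}\height(a))\log N$ rather than just $|\Tpref|\cdot(\log n + h_0)$.
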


To sketch the proof of \Cref{lem:overview:heightredu}, let us build a certain model of accounting how the potential changes in a refinement operation with a prefix $\Tpref$.
First, recall that \Cref{lem:overview:linkedpotential} takes care of the potential in the tree decompositions $\Tt^C$ for components $C$ of $G-X$, and the only place we need to worry about increasing the potential are the nodes of $\Tt^X$.
In the previous section we bounded this potential increase by $\Oh[k]{|\Tpref| \cdot (\height(\Tt) + \log n)}$, where in particular the factor $\height(\Tt) + \log n$ comes from the fact that after attaching the tree decomposition $\Tt^C$, the height of a node in $\Tt^X$ could be as large as $\height(\Tt) + \log n$.
This upper bound was sufficient for the refinement operation performed after an edge addition, but to prove \Cref{lem:overview:heightredu} we need a more fine-grained view.

Consider the following model: We start with the tree decomposition of $\Tt^X$ of height $\Oh{\log n}$, and attach the tree decompositions of the components $\Tt^C$ to it one by one.
Each time we attach a tree decomposition $\Tt^C$, we increase the height of at most $\Oh{\log n}$ nodes in $\Tt^X$ (because $\height(\Tt^X) \le \Oh{\log n}$), and this height is increased by at most $\height(\Tt^C)$, which is at most $\height(a)$, where $a$ is the appendix of $\Tpref$ whose subtree contains $C$.
While there can be many components $C$ contained in $\Tt_a$, observe that \Cref{lem:overview:smallclosure} implies that in fact such components can have only at most $k^{\Oh{k}}$ different neighborhoods $N(C)$, and therefore at most $k^{\Oh{k}}$ different attachment points in $\Tt^X$.
In particular, after handling technical details about the binary trees used to flatten high-degree attachment points, we can assume that an appendix $a$ of $\Tpref$ is responsible for increasing the height of at most $\Oh[k]{\log n}$ nodes in $\Tt^X$.
Moreover, it increases the height of those nodes by at most $\height(a)$ each, so in total, it is responsible for increasing the potential by $\Oh[k]{\height(a) \cdot \log n}$.

Denote $\Phi(\Tpref) = \sum_{t \in \Tpref} (\gamma \cdot k)^{|\bag(t)|} \cdot \height(t)$, i.e., the value of the potential on the nodes in $\Tpref$.
The above discussion, combined with \Cref{lem:overview:linkedpotential}, leads to the following lemma.

\begin{lemma}
Let $\Tpref$ be a prefix of a tree decomposition $\Tt$, $A \subseteq V(T)$ the appendices of $\Tpref$, and $\Tt'$ the tree decomposition resulting from refining $\Tt$ with $\Tpref$.
It holds that \[\Phi(\Tt') \le \Phi(\Tt) - \Phi(\Tpref) + \Oh[k]{|\Tpref| \cdot \log n} + \sum_{a \in A} \Oh[k]{\height(a) \cdot \log n}.\]
\end{lemma}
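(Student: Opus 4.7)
The plan is to decompose $\Phi(\Tt')$ according to the two parts of the refined decomposition: the newly built prefix $\Tt^X$ (a tree decomposition of $\torso[G]{X}$ of height $\Oh{\log n}$ and width at most $6k+5$) and the attached subtrees $\Tt^C$ corresponding to the components $C$ of $G-X$. Since each $\Tt^C$ hangs as a subtree of $\Tt'$ and $\height$ is measured along subtrees, the contribution of the $\Tt^C$-nodes to $\Phi(\Tt')$ is exactly $\sum_C \Phi(\Tt^C)$. Grouping the components by the appendix $a \in A$ for which $C \subseteq \Tt_a$, \Cref{lem:overview:linkedpotential} gives $\sum_{C \subseteq \Tt_a} \Phi(\Tt^C) \le \Phi(\Tt_a)$, and since the subtrees $\Tt_a$ over $a \in A$ partition $V(T) \setminus \Tpref$, summing yields $\sum_C \Phi(\Tt^C) \le \Phi(\Tt) - \Phi(\Tpref)$.

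Next I would bound the contribution of the nodes in $\Tt^X$. Each such node $t$ has $|\bag(t)| \le 6k+5$, so $(\gamma k)^{|\bag(t)|} \in \Oh[k]{1}$. The height of $t$ inside $\Tt'$ satisfies $\height_{\Tt'}(t) \le \height_{\Tt^X}(t) + H_t$, where $H_t$ is the largest value of $\height(\Tt^C)$ over components $C$ whose attachment point in the (binary-expanded) $\Tt^X$ lies in the subtree rooted at $t$, with $H_t = 0$ if no $\Tt^C$ attaches below $t$. Summing $\Oh[k]{1} \cdot \height_{\Tt^X}(t)$ over all $t \in \Tt^X$ gives $\Oh[k]{|\Tt^X|} \cdot \height(\Tt^X)$, and using $|\Tt^X| \in \Oh[k]{|\Tpref|}$ from \Cref{lem:overview:smallclosure} together with $\height(\Tt^X) \in \Oh{\log n}$ yields the $\Oh[k]{|\Tpref| \cdot \log n}$ term. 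For the remaining sum $\sum_{t \in \Tt^X} \Oh[k]{1} \cdot H_t$ I would charge each appendix $a \in A$ separately: the small-closure property bounds the number of distinct neighborhoods $N(C)$ among components $C \subseteq \Tt_a$ by $k^{\Oh{k}}$, so only $\Oh[k]{1}$ attachment points of $\Tt^X$ receive a $\Tt^C$ originating from $\Tt_a$; after flattening high-degree attachment nodes into binary trees, the appendix is still associated with $\Oh[k]{1}$ leaves of $\Tt^X$, each with $\Oh{\log n}$ ancestors. Since every $\Tt^C$ with $C \subseteq \Tt_a$ has height at most $\height(a)$, appendix $a$ contributes at most $\Oh[k]{\height(a) \cdot \log n}$ to this sum, and summing over $a \in A$ produces the final error term.

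Adding the three bounds yields the claimed inequality. The main obstacle is the last counting step: one has to verify that the binary expansion does not push the number of ancestors in $\Tt^X$ above $\Oh{\log n}$ per leaf, nor inflate the per-appendix attachment count beyond $\Oh[k]{1}$. This relies simultaneously on the bounded number of distinct neighborhoods $N(C)$ supplied by the small-closure property and on the $\Oh{\log n}$ height of $\Tt^X$ guaranteed by the Bodlaender--Hagerup step; once those bookkeeping details are in place the three contributions combine cleanly.
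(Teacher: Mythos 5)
Your top-level decomposition of $V(\Tt')$ into the nodes of $\Tt^X$ and the nodes of the attached subtrees $\Tt^C$ omits a third piece, and that is where the argument breaks. When several collected components share an interface $B$, they all must attach to the single leaf of $\Tt^X$ whose bag equals $B$, so a binary \emph{connector tree} of fresh nodes (each with bag $B$) is inserted between that leaf and the roots of the attached component decompositions. You acknowledge this flattening, but you treat it only as a concern for the heights of $\Tt^X$-nodes; you never account for the potential of the connector nodes themselves. That potential cannot be folded into your $\Oh[k]{|\Tpref|\log n}$ term, because the number of connector nodes scales with the number of collected components, which is not controlled by $|\Tpref|$ (so $|\Tt^X|\in\Oh[k]{|\Tpref|}$ is false if you count them, and they simply go uncounted if you do not). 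Nor can it be folded into $\sum_{a\in A}\Oh[k]{\height(a)\log n}$: the combined node-heights of the connectors scale with the sum, over all collected components, of the heights of their home bags, which can be on the same order as $\Phi(\Tt)-\Phi(\Tpref)$ itself.

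The paper pays for these connectors out of a \emph{strict} gap hidden inside \Cref{lem:overview:linkedpotential}: $d$-linkedness forces every explored bag to shrink strictly when restricted to a collected component, and the base $\gamma k$ of the potential is chosen large enough that removing one vertex from a bag decreases its $(\gamma k)^{|\bag(\cdot)|}$-factor by more than a factor of $\ell$. This makes the slack $\Phi(\Tt_a)-\sum_{C\subseteq\Tt_a}\Phi(\Tt^C)$ large enough to absorb the connectors, \emph{provided} each connector tree is built Huffman-style so that the sum of its node heights is $\Oh{1}$ times the sum of heights of the component trees it joins. Your proof spends the entire slack already in the inequality $\sum_C\Phi(\Tt^C)\le\Phi(\Tt)-\Phi(\Tpref)$ and leaves nothing to pay for this third category of nodes. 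This is exactly the ``extra charge'' the paper flags immediately after \Cref{lem:overview:linkedpotential}, and without it the claimed inequality does not follow from the pieces you have assembled.
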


Now, in order to prove \Cref{lem:overview:heightredu}, it is sufficient to prove that if $\Tt$ has too large height, then there exists a prefix $\Tpref$ with appendices $A$ so that $\Phi(\Tpref) > c_k \log n \left(|\Tpref| + \sum_{a \in A} \height(a)\right)$, where $c_k$ is some large enough number depending on $k$. (Here, the required $c_k$ comes from the number of attachment points and the potential function, so some $c_k = k^{\Oh{k}}$ is sufficient.)
In our proof, the value $\Phi(\Tpref)$ in fact will be larger than $c_k \log n \left(|\Tpref| + \sum_{a \in A} \height(a)\right)$ by some arbitrary constant factor, which gives also the required property that the refinement operation on such a prefix $\Tpref$ will run in time $\Oh[k]{\Phi(\Tt) - \Phi(\Tt')}$.

\begin{figure}[htb]
\begin{center}
\includegraphics[width=0.6\textwidth]{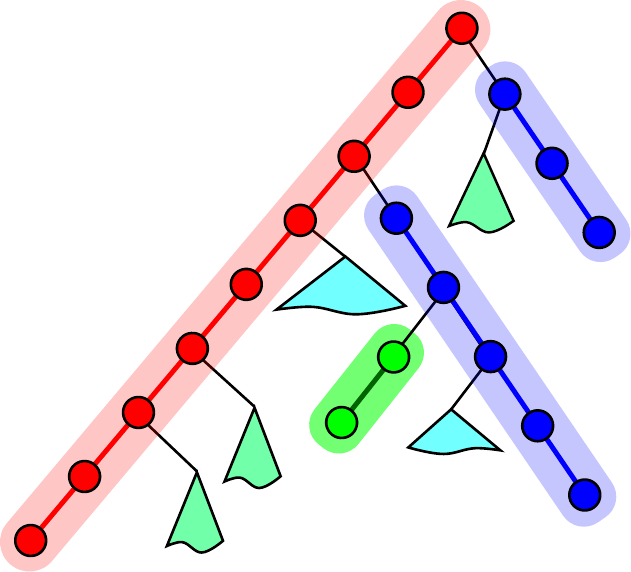}
\caption{Construction of $\Tpref$ in height reduction. The consecutive paths extracted by the construction procedure are depicted in red, blue, and green. Their union constitutes $\Tpref$. Big trees are depicted in sea-green, shallow trees are depicted in cyan.}\label{fig:overview:height}
\end{center}
\end{figure}

We first sketch how to select $\Tpref$ when $\height(T) > n^{\varepsilon}$ for some $\varepsilon > 0$ and $k$ is small compared to $n$; see \cref{fig:overview:height}.
Assume for simplicity that $|V(T)| = n$.
The natural strategy is to start by setting $\Tpref$ to be the path from the root to the deepest leaf in $T$.
Then we have $\Phi(\Tpref) \ge \Omega(n^{2 \varepsilon})$. 
However, $\Tpref$ may have $n^{\varepsilon}/2$ appendices that each have height $n^{\varepsilon}/2$, so it is possible that $\sum_{a \in A} \height(a) \cdot \log n \ge n^{2 \varepsilon} \cdot \log n/4$.
The key observation is that in this case, many subtrees of appendices $a \in A$ must be even more unbalanced than $T$ is, having height at least $n^{\varepsilon}/2$ while containing at most $4 \cdot n^{1-\varepsilon}$ nodes (simply by a counting argument).
In particular, let us say that a subtree rooted on an appendix $a$ is \emph{big} if it contains more than $c_k \cdot n^{1-\varepsilon} \cdot \log n$ nodes, and \emph{shallow} if $\height(a) \le n^{\varepsilon} / (c_k \cdot \log n)$.
Now, there can be at most $n/(c_k \cdot n^{1-\varepsilon} \cdot \log n) = n^{\varepsilon} / (c_k \cdot \log n)$
big subtrees, so they contribute at most $\Oh[k]{n^{2 \varepsilon} / c_k}$ to the sum $\sum_{a \in A} \Oh[k]{\height(a) \cdot \log n}$.
Similarly, the shallow subtrees also contribute at most $\Oh[k]{n^{2 \varepsilon} / c_k}$ to the sum, so by making the constant $c_k$ large enough the sum coming from subtrees that are either big or shallow (or both) is only a tiny fraction of $\Phi(\Tpref)$.


Then, there are subtrees of appendices that are neither big or shallow; these subtrees are both small and deep, hence they seem even more unbalanced than $\Tt$. We apply the same strategy to those trees recursively.
For each appendix $a$ whose subtree is small and deep, we insert to $\Tpref$ the path $P_a$ from $a$ to its deepest descendant.
As the subtree is deep, we have that $\Phi(\Tpref)$ increases by $\Phi(P_a) = \Omega(n^{2 \varepsilon} / (c_k \cdot \log n)^2)$.
Now, when analyzing the appendices of $P_a$, we again apply the strategy to handle subtrees that are big or shallow by charging them from $\Phi(P_a)$, and then handling subtrees that are both small and deep recursively.
This time, the right definition of big will be to have at least $n^{1-2 \varepsilon} (\log n \cdot c_k)^3$ nodes, and the right definition of shallow will be to have height at most $n^{\varepsilon} / (c_k \cdot \log n)^2$.
More generally, on the $i$-th level of such recursion we can call a subtree big if it contains more than $n^{1-i \varepsilon} (\log n \cdot c_k)^{i \cdot (i+1)/2}$ nodes, and shallow if its height is at most $n^{\varepsilon} / (c_k \cdot \log n)^{i}$.
When $\varepsilon$ is a constant, this recursion can continue only for a constant number of levels before no subtree can be both small and deep, simply because it would require the subtree to have larger height than the number of nodes.
Therefore, in the end we are able to find a prefix $\Tpref$ that satisfies the requirements of \Cref{lem:overview:heightredu}.

It is not surprising that selecting the height limit to be $n^{\varepsilon}$ is not optimal.
In particular, the same strategy as outlined above will work if we select the initial height to be of the form $2^{\Oh[k]{\sqrt{\log n \log\log n}}}$, resulting in showing that we can obtain the amortized running time of $2^{\Oh[k]{\sqrt{\log n \log\log n}}}$ per query.

\section{Preliminaries}
\label{sec:preliminaries}

For any positive integer $n$, we define $[n] \coloneqq \{1, 2, \dots, n\}$.
Given a~tuple of parameters $\bar{c}$, the notation $\Oh[\bar{c}]{\cdot}$ hides a multiplicative factor upper bounded by a~computable function of~$\bar{c}$.

\paragraph{Graphs.}
We use standard graph notation.
In this work, we only consider undirected simple graphs.
For a~graph $G$, by $V(G)$ and $E(G)$ we denote the vertex set and the edge set of $G$, respectively.
The set of connected components of $G$ is denoted $\cc{G}$.
For a~set $A \subseteq V(G)$, we denote the subgraph of $G$ induced by $A$ by $G[A]$, and the subgraph of $G$ induced by $V(G) \setminus A$ by $G - A$.

Given a~vertex $v \in V(G)$, we denote its (open) neighborhood by $\neighopen[G]{v}$ and its closed neighborhood by $\neighclosed[G]{v} \coloneqq \neighopen[G]{v} \cup \{v\}$.
This notation extends to the (open and closed) neighborhoods of subsets of vertices of $G$: given $A \subseteq V(G)$, we set $\neighclosed[G]{A} \coloneqq \bigcup_{v \in A} \neighclosed[G]{v}$ and $\neighopen[G]{A} \coloneqq \neighclosed[G]{A} \setminus A$.
When the graph $G$ is clear from the context, we may omit $G$ from this notation.

For a~set $A \subseteq V(G)$, we define the \emph{torso} of $A$ in $G$, denoted $\torso[G]{A}$, as the graph $H$ on the vertex set $A$ in which $uv \in E(H)$ if and only if $u, v \in A$ and there exists a~path connecting $u$ and $v$ that is internally disjoint with $A$.
Equivalently, $u, v \in A$ and we have that $uv \in E(G)$ or there exists a~connected component $C \in \cc{G - A}$ for which $\{u, v\} \subseteq \neighopen[G]{C}$.

Given two sets $A, B \subseteq V(G)$, we say that a~set $S \subseteq V(G)$ is an~$(A, B)$-\emph{separator} if every path connecting a~vertex of $A$ and a~vertex of $B$ intersects $S$.
If sets $A$, $S$, $B$ form a~partition of $V(G)$, then $(A, S, B)$ is a~\emph{separation} of $G$.
The \emph{order} of this separation is $|S|$.
Next, if every $(A, B)$-separator $S$ has size at least $|A|$, then we say that $A$ is \emph{linked into} $B$.
Equivalently (by Menger's theorem), there exist $|A|$ vertex-disjoint paths connecting $A$ and $B$.
Note that if $B \subseteq B'$ and $A$ is linked into $B$, then $A$ is also linked into $B'$.

\paragraph{Trees.}
We often call vertices of trees \emph{nodes} to distinguish them from vertices of graphs.
Unless explicitly stated otherwise, all trees in this work will be {\em{rooted}}: exactly one node of the tree is designated as the root of the tree. This naturally imposes parent/child and ancestor/descendant relations in the tree. 
Here, we assume that each node is both a~descendant and an~ancestor of itself.
We say that the tree is {\em{binary}} if each node has at most two children.
In this work, we will not distinguish between the left and the right child of a~node of a~binary tree.

The \emph{depth} of a~node $v$ is the distance from $v$ to the root, meanwhile the \emph{height} of a~node $v$ is the maximum distance from $v$ to a descendant of $v$ plus one.
In particular, the depth of the root is $0$, while the height of any leaf is $1$.

A~\emph{(rooted) forest} is a collection of disjoint (rooted) trees.
Two different nodes are~\emph{siblings} if they have a~common parent or are both roots. (The latter case may happen only in rooted forests.)

In a~tree $T$, we say that a~set $W \subseteq V(T)$ is a~\emph{prefix} of $T$ if for each non-root node $t \in W$, its parent $\parent{t}$ is also in $W$.
If a~node $t$ does not belong to $W$, but its parent does, we say that $t$ is an~\emph{appendix} of $W$.
Equivalently, $\neighopen[T]{W}$ is the set of appendices of $W$, which we will also denote by $\App(W)$.
The following fact is immediate:

\begin{fact}
  \label{fact:number-of-appendices}
  If $T$ is a~binary tree and $W$ is a~prefix of $T$, then there are at most $|W|+1$ appendices of $W$.
\end{fact}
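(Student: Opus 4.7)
My plan is to prove this by a simple double-counting argument on the parent-child edges of $T$ whose upper endpoint lies in $W$. The starting observation is that since $T$ is binary, each node of $W$ has at most two children, so the total number of such edges is at most $2|W|$. I would then partition these edges into two categories based on whether the child also lies in $W$ or not.

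The first category consists of edges $(p, c)$ with both $p, c \in W$; each is uniquely identified by its lower endpoint $c$, which is a non-root node of $W$. If $W$ is nonempty, chasing parents via the prefix condition forces the root to lie in $W$, so the number of non-root nodes of $W$ is exactly $|W| - 1$, giving $|W| - 1$ edges of the first type. The second category consists of edges $(p, c)$ with $p \in W$ and $c \notin W$; by definition of $\App(W)$, these edges are in bijection with the appendices, contributing exactly $|\App(W)|$ edges. Summing the two counts and comparing against the upper bound $2|W|$ yields
\[
(|W| - 1) + |\App(W)| \;\le\; 2|W|,
\]
which rearranges to $|\App(W)| \le |W| + 1$, as required. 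The degenerate case $W = \emptyset$ is handled separately: then there are no appendices at all, and $0 \le 1$ trivially.

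I do not foresee any real obstacle here; the fact is essentially a handshake-style counting lemma, and the only subtlety is the boundary case of the empty prefix and the minor observation that any nonempty prefix must contain the root of $T$.
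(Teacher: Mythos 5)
Your proof is correct, and since the paper states this fact as ``immediate'' without giving an argument, the counting you give is exactly the kind of routine justification that was left implicit. The edge-counting is sound: at most $2|W|$ parent-to-child edges emanate from $W$ because $T$ is binary; $|W|-1$ of them land back in $W$ (one per non-root node of $W$, using that a nonempty prefix contains the root); and the remaining ones are in bijection with $\App(W)$, yielding $|\App(W)| \le |W|+1$. You also correctly flag the empty-prefix edge case.
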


The notion of \emph{lowest common ancestor} of two nodes $s, t$, denoted $\lca{s}{t}$, is defined in the standard way.
We say that a~subset $X \subseteq V(T)$ of the nodes of $T$ is \emph{lca-closed} if it satisfies the following property: for any two nodes $s, t \in X$, the node $\lca{s}{t}$ also belongs to $X$.
Given a~set $A \subseteq V(T)$, we define the \emph{lca-closure} of $A$ as the unique inclusion-wise minimal lca-closed set $X \supseteq A$.
Equivalently, $X = \{\lca{s}{t}\,\colon\,s, t \in A\}$.
The following facts are standard:

\begin{fact}
  The lca-closure $X$ of $A \subseteq V(T)$ satisfies $|X| \leq 2|A|-1$.
\end{fact}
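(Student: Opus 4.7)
The plan is to prove this by a counting argument on a natural tree structure induced on $X$. First I would note that $A \subseteq X$ (because $\lca{a}{a} = a$ for every $a \in A$), and that the lca of all elements of $A$, call it $r$, belongs to $X$ since it is attained as $\lca{a}{b}$ for suitable $a, b \in A$; moreover, $r$ is an ancestor in $T$ of every node of $X$.

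Next I would endow $X$ with the induced tree structure $T_X$, defined as follows: for every non-root $x \in X$, let its parent in $T_X$ be the lowest strict ancestor of $x$ in $T$ that belongs to $X$; the root of $T_X$ is $r$. This is well-defined because $X$ is lca-closed, and $T_X$ is a rooted tree on vertex set $X$.

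The main content of the proof is the following claim: every node $x \in X \setminus A$ has at least two children in $T_X$. To see this, write $x = \lca{s}{t}$ for some $s, t \in A$; since $x \notin A$, we have $x \neq s$ and $x \neq t$, so $s$ and $t$ are proper descendants of $x$ in $T$. Because $x$ is their lowest common ancestor in $T$, the nodes $s$ and $t$ lie in the subtrees of two distinct children of $x$ in $T$. Each of these subtrees then contains an element of $A \subseteq X$, and hence contributes a distinct child of $x$ in $T_X$.

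Finally, the bound will follow from a simple edge count in $T_X$: the number of edges equals $|X| - 1$, and also equals the total number of children summed over all nodes of $T_X$. Since every node in $X \setminus A$ contributes at least $2$ to this sum and every node in $A$ contributes at least $0$, we obtain $|X| - 1 \geq 2(|X| - |A|)$, which rearranges to $|X| \leq 2|A| - 1$, as desired. I do not anticipate any real obstacle in carrying this out; the whole argument is an entirely combinatorial consequence of the definition of the lca-closure, with the child-counting claim being the only nontrivial ingredient.
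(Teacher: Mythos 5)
Your argument is correct and is the standard double-counting proof of this folklore fact (the paper states it without proof): every element of $X \setminus A$ is a branching node of the tree induced on $X$, so $|X|-1 \geq 2(|X|-|A|)$. The only point worth spelling out slightly more is why each of the two child subtrees of $x$ yields a \emph{child of $x$ in $T_X$} (take the highest element of $X$ strictly below $x$ on the path from $s$, resp.\ $t$, to $x$), but this is routine and your sketch clearly contains the idea.
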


\begin{fact}
  If $X$ is the lca-closure of $A \subseteq V(T)$, then each node of $X$ is an~ancestor of some node in $A$.
\end{fact}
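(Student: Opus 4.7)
The plan is to invoke directly the explicit characterization $X = \{\lca{s}{t} \colon s, t \in A\}$ that is stated in the preceding fact. Granted that identity, the proof is immediate: pick any $x \in X$, write $x = \lca{s}{t}$ for some $s,t \in A$, and observe that by the very definition of lowest common ancestor, $x$ is an ancestor of both $s$ and $t$; since $s \in A$, this exhibits a node of $A$ of which $x$ is an ancestor.

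If one wished to be self-contained and not appeal to that explicit characterization, then the approach would be to argue directly by contradiction from minimality. Suppose some $x \in X$ is not an ancestor of any node of $A$, and let $X' = X \setminus \{x\}$. I would verify that $X' \supseteq A$ (because every element of $A$ is an ancestor of itself, so in particular $x \notin A$) and that $X'$ is still lca-closed. For the latter, take any $s, t \in X'$ and consider $y = \lca{s}{t} \in X$; the only way $y$ could equal $x$ would require $x$ to be an ancestor of $s$ and of $t$, but since $s, t \in X \subseteq X$ are by induction/assumption ancestors of nodes of $A$ (or more carefully, one unpacks this using the chain of lca's), one derives a contradiction with $x$ having no descendant in $A$. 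This would contradict the inclusion-wise minimality of $X$.

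The main (and only) obstacle is purely a bookkeeping one: being precise about what ``lca-closure'' means in the recursive sense (iterated closure under pairwise lca) versus the non-recursive set $\{\lca{s}{t} \colon s, t \in A\}$. Since the preceding fact already packages this equivalence, the cleanest writeup is the one-line argument in the first paragraph. I do not expect any genuine difficulty beyond unwinding the definitions.
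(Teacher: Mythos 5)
Your one-line argument via the characterization $X = \{\lca{s}{t} \colon s, t \in A\}$ is correct and matches how the paper treats this: it labels the fact ``standard'' and gives no proof, having already stated that characterization as part of the definition of lca-closure. One small caution about your alternative minimality sketch: the step where you rule out $\lca{s}{t} = x$ for $s,t \in X' = X \setminus \{x\}$ by appealing to $s,t$ being ancestors of nodes of $A$ is circular, since that is exactly the claim being proved; a sound version of that route would instead note that the set $X_0$ of \emph{all} ancestors of nodes of $A$ is lca-closed and contains $A$, so by minimality $X \subseteq X_0$ — but you correctly identified the first argument as the one to write down.
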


\paragraph{Tree decompositions.}
A~tree decomposition of a~graph $G$ is a~pair $(T, \bag_T)$ comprised of a~tree~$T$ and a~function $\bag_T\,\colon\, V(T) \to 2^{V(G)}$, where:

\begin{itemize}
  \item for each vertex $v \in V(G)$, the subset of nodes $\{t \in V(T)\,\colon\,v \in \bag_T(t)\}$ induces a~nonempty connected subtree of $T$ (\emph{vertex condition}); and
  \item for each edge $uv \in E(G)$, there exists a~node $t \in V(T)$ such that $\{u, v\} \subseteq \bag_T(t)$ (\emph{edge condition}).
\end{itemize}
The \emph{width} of the decomposition $(T, \bag_T)$ is $\max_{t \in V(T)} |\bag_T(t)| - 1$.
The \emph{treewidth} of $G$, denoted $\tw{G}$, is the minimum possible width of any tree decomposition of $G$. 

Throughout this paper, all tree decompositions are rooted; that is, $T$ is always a rooted tree. If $T$ is a~(rooted) binary tree, then we say that $(T, \bag_T)$ is a~{\em{binary tree decomposition}}.

We also introduce the following syntactic sugar: given a~subset $W \subseteq V(T)$ of nodes of $T$, we set $\bags_T(W) \coloneqq \bigcup_{t \in W} \bag_T(t)$.
Usually the tree $T$ will be known from the context, so we will omit the subscript and write $\bag$ and $\bags$ instead of $\bag_T$ and $\bags_T$, respectively.
Also, by abusing the notation slightly, when a tree decomposition $(T,\bag)$ is clear from the context, we may simply write $\Tc$ as a shorthand for $(T, \bag)$.

The \emph{adhesion} of a~non-root node $t \in V(T)$ as $\adhesion{t} \coloneqq \bag(t) \cap \bag(\parent{t})$; if $t$ is the root of $T$, we set $\adhesion{t} \coloneqq \emptyset$.
Then, the \emph{component} of $t$ is defined as follows:
\[ \component{t} \coloneqq \left(\bigcup \{\bag(s)\,\colon\, s\text{ is a descendant of }t\text{ in }T\} \right) \setminus \adhesion{t}. \] 
We note that $(\component{t}, \adhesion{t}, V(G) \setminus (\component{t} \cup \adhesion{t}))$ is a~separation of $G$.


We remark the following well-known fact:

\begin{fact}
  \label{fact:tw-clique-bag}
  If $A$ is a~clique in $G$ and $\Tc$ is a~tree decomposition of $G$, then some bag of $\Tc$ must contain $A$.
\end{fact}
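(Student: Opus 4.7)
The plan is to reduce the statement to the well-known Helly property of subtrees of a tree: any finite family of pairwise-intersecting subtrees of a tree has a common node. Assuming this, the proof is short.

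First, for each vertex $v \in A$, let $T_v$ be the subgraph of $T$ induced by the set of nodes $\{t \in V(T) : v \in \bag(t)\}$. The vertex condition of the tree decomposition guarantees that each $T_v$ is a nonempty connected subtree of $T$. Next, I would show that the family $\{T_v : v \in A\}$ is pairwise intersecting. Indeed, fix any two distinct $u, v \in A$. Since $A$ is a clique in $G$, the pair $uv$ is an edge of $G$, so by the edge condition there exists a node $t \in V(T)$ with $\{u, v\} \subseteq \bag(t)$; this $t$ witnesses $t \in V(T_u) \cap V(T_v)$.

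By the Helly property applied to the finite family $\{T_v : v \in A\}$ of pairwise intersecting subtrees of $T$, there exists a node $t^\star \in V(T)$ that belongs to every $T_v$. Unpacking this, $v \in \bag(t^\star)$ for every $v \in A$, i.e., $A \subseteq \bag(t^\star)$, as required.

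The only nontrivial step is the Helly property, which I would establish by induction on $|A|$: the base case $|A| \le 2$ is immediate, and for the inductive step, given pairwise intersecting subtrees $T_{v_1}, \dots, T_{v_m}$, one picks an arbitrary leaf-direction in $T$ and uses the fact that in a tree, the ``closest to that direction'' node of each $T_{v_i}$ behaves like an interval endpoint, so one can replace two of the subtrees by their intersection (also a subtree) and invoke induction. This is textbook material, so I would cite it rather than reprove it. No obstacle is expected; the main care is just to make sure the family is finite (it is, since $A \subseteq V(G)$ and $A$ is finite, as it is a clique in a graph).
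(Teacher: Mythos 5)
Your proof is correct. The paper states this as a well-known fact and gives no proof of its own, so there is nothing to compare against; your reduction to the Helly property of subtrees of a tree is one of the standard textbook arguments (the other common one orients each edge of $T$ toward the side whose bags contain all of $A$ and finds a sink), and every step checks out: the vertex condition gives connectedness of each $T_v$, the edge condition together with $A$ being a clique gives pairwise intersection, and the Helly property finishes. Citing the Helly property rather than reproving it is entirely appropriate here.
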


Finally, for a~tree decomposition $\Tc = (T, \bag)$ of a~graph $G$ we define the \emph{depth function}
\[ d_\Tc\,\colon\, V(G) \to \Z_{\geq 0} \]
that maps each vertex $v \in V(G)$ to the depth of the shallowest node $t \in V(T)$ such that $v \in \bag(t)$.
If the tree decomposition $\Tc$ is known from the context, we will write $d$ instead of $d_\Tc$.

\paragraph{Dynamic tree decompositions.}
We now present a~general design of a~data structure that will operate on a~dynamically changing binary tree decomposition of a~dynamically changing graph.

Define an~\emph{annotated tree decomposition} of a~graph $G$ as a~triple $(T, \bag, \edges)$ where:
\begin{itemize}
  \item $T$ and $\bag$ are defined as in the~standard definition of the tree decomposition;
  \item $\edges\,\colon\,V(T) \to 2^{\binom{V(G)}{2}}$ is defined as follows: for a node $t \in V(T)$, $\edges(t)$ is a~subset of $\binom{\bag(t)}{2}$ consisting of all edges $uv\in E(G)\cap \binom{\bag(t)}{2}$ for which $t$ is the shallowest node containing both $u$ and $v$. Note that thus, every edge of $G$ belongs to exactly one set $\edges(t)$.
\end{itemize} 
Note that $\edges$ is uniquely determined from $G$ and $(T, \bag)$; and conversely, $G$ is uniquely determined from $(T, \bag, \edges)$.
Given a~set $A \subseteq V(T)$, the restriction of $(T, \bag, \edges)$ to $A$, denoted $\funrestriction{(T, \bag, \edges)}{A}$, is the tuple $(T[A], \funrestriction{\bag}{A}, \funrestriction{\edges}{A})$ where $\funrestriction{\bag}{A}$ and $\funrestriction{\edges}{A}$ are restrictions of the functions $\bag$, $\edges$ to $A$, respectively.

Next, consider an~update changing an~annotated binary tree decomposition $(T, \bag, \edges)$ to another annotated binary tree decomposition $(T', \bag', \edges')$.
This update can also change the underlying graph $G$, in particular, it changes $G$ to be the graph uniquely determined from $(T', \bag', \edges')$.
We say that the update is \emph{prefix-rebuilding} if $(T', \bag', \edges')$ is created from $(T, \bag, \edges)$ by replacing a~prefix \Tpref of $T$ with a~new rooted tree $\Tpref'$ and then ``reattaching'' some subtrees of $T$ rooted at the appendices of $\Tpref$ below the nodes of $\Tpref'$.
Formally, a prefix-rebuilding update is described by a~tuple $\overline{u} \coloneqq (\Tpref, \Tpref', T^\star, \bag^\star, \edges^\star, \pi)$ where:
\begin{itemize}
  \item $\Tpref \subseteq V(T)$ is a~prefix of $T$;
  \item $\Tpref' \subseteq V(T')$ is a~prefix of $T'$ satisfying $$\funrestriction{(T, \bag, \edges)}{V(T) \setminus \Tpref} = \funrestriction{(T', \bag', \edges')}{V(T') \setminus \Tpref'};$$
  \item $(T^\star, \bag^\star, \edges^\star) = \funrestriction{(T', \bag', \edges')}{\Tpref'}$;
  \item $\pi \colon \App(\Tpref) \rightharpoonup \Tpref'$ is the partial function that maps appendices of $\Tpref$ to nodes of $\Tpref'$ such that for each appendix $t$ of \Tpref for which $\pi(t)$ is defined, the parent of $t$ in $T'$ is $\pi(t)$.
\end{itemize}
It is straightforward that $(T', \bag', \edges')$ can be uniquely determined from $(T, \bag, \edges)$ and the tuple $\overline{u}$ as above.
The \emph{size} of $\overline{u}$, denoted $|\overline{u}|$, is defined as $|\Tpref| + |\Tpref'|$.
It is also straightforward that given $\overline{u}$, a representation of $(T,\bag,\edges)$ can be turned into a representation of $(T',\bag',\edges')$ in time $\ell^{\Oh{1}} \cdot |\overline{u}|$, where $\ell$ is the maximum of the widths of $(T, \bag, \edges)$ and $(T',\bag',\edges')$.

Finally, we say that a~dynamic data structure is \emph{$\ell$-prefix-rebuilding with overhead $\tau$} if it stores an~annotated binary tree decomposition $(T, \bag, \edges)$ of width at most $\ell$ and supports the following operations:
\begin{itemize}
  \item $\mathsf{init}(T, \bag, \edges)$: initializes the annotated binary tree decomposition with $(T, \bag, \edges)$. Runs in worst-case time $\Oh{\tau \cdot |V(T)| \cdot \ell^{\Oh{1}}}$;
  \item $\mathsf{update}(\overline{u})$: applies a prefix-rebuilding update $\overline{u}$ to the decomposition $(T, \bag, \edges)$. It can be assumed that the resulting tree decomposition is binary and has width at most $\ell$. Runs in worst-case time $\Oh{\tau \cdot |\overline{u}| \cdot \ell^{\Oh{1}}}$.
\end{itemize}

Usually, the overhead $\tau$ will correspond to the time necessary to recompute any auxiliary information associated with each node of the decomposition undergoing the update.
For example, the height of a~node $s$ in the tree decomposition can be inferred in $\Oh{1}$ time from the heights of the (at most two) children of $s$, so the overhead required to recompute the heights of the nodes after the update is $\tau = \Oh{1}$ per affected node.

Prefix-rebuilding data structures will usually implement an~additional operation allowing to efficiently query the current state of the data structure.
For example, next we state a~data structure that allows us to access various auxiliary information about the tree decomposition:
\begin{lemma}\label{lem:height-maintenance}
  For every $\ell \in \N$, there exists an~$\ell$-prefix-rebuilding data structure with overhead $\Oh{1}$ that additionally implements the following operations:
  \begin{itemize}
    \item $\mathsf{height}(s)$: given a~node $s \in V(T)$, returns the height of $s$ in $T$. Runs in worst-case time $\Oh{1}$.
    \item $\size(s)$: given a~node $s \in V(T)$, returns the number of nodes in the subtree of $T$ rooted at $s$. Runs in worst-case time $\Oh{1}$.
    \item $\cmpsize(s)$: given a~node $s \in V(T)$, returns the size $|\component{s}|$. Runs in worst-case time $\Oh{1}$.
    \item $\Top(v)$: given a~vertex $v \in V(G)$, returns the unique highest node $t$ of $T$ so that $v \in \bag(t)$. Runs in worst-case time $\Oh{1}$.
  \end{itemize}
\end{lemma}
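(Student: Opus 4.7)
The plan is to attach to every node $t \in V(T)$ three numeric fields storing $\height(t)$, $\size(t)$, $\cmpsize(t)$, and to maintain a dictionary mapping every $v \in V(G)$ to $\Top(v)$. All four query operations then reduce to a single constant-time lookup, so the whole content of the lemma lies in showing that these auxiliary structures can be (re)computed within the stated budget during $\mathsf{init}$ and $\mathsf{update}$.

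For the three numeric fields I will use the local recurrences
\begin{align*}
  \height(t)  &= 1 + \max_{c \text{ child of } t} \height(c), \\
  \size(t)    &= 1 + \sum_{c \text{ child of } t} \size(c), \\
  \cmpsize(t) &= |\bag(t) \setminus \adhesion{t}| + \sum_{c \text{ child of } t} \cmpsize(c),
\end{align*}
where the last identity uses the vertex condition of tree decompositions, which forces $\bag(t) \setminus \adhesion{t}$ and the components of the children to be pairwise disjoint. Each recurrence can be evaluated at a single node in $\Oh{\ell}$ time. For $\mathsf{init}$ I would do a single post-order sweep of $T$, filling in all three fields in $\Oh{|V(T)| \cdot \ell}$ time. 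For $\mathsf{update}(\overline{u})$, every node outside $\Tpref'$ either has its whole subtree unchanged or is a reattached appendix whose subtree is unchanged but whose adhesion to its new parent may differ from the old one. I would first walk over the image of $\pi$ and correct $\cmpsize(a)$ for every reattached appendix $a$ via the identity
\[
  \cmpsize(a)^{\text{new}} = \cmpsize(a)^{\text{old}} + |\adhesion{a}^{\text{old}}| - |\adhesion{a}^{\text{new}}|,
\]
while leaving $\height(a)$ and $\size(a)$ intact, and then perform a post-order sweep of $\Tpref'$ recomputing all three fields at every node from the now up-to-date values at its children. This costs $\Oh{(|\Tpref| + |\Tpref'|) \cdot \ell}$ time, fitting the advertised bound.

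The more delicate part is keeping $\Top$ current. Let $V_{\text{touched}} := \bags(\Tpref) \cup \bags(\Tpref')$; outside this set every vertex's top-node is clearly preserved, since the corresponding bag lies in a subtree that $\overline{u}$ does not touch at all. I plan to clear $\Top(v)$ for every $v \in V_{\text{touched}}$, then traverse $\Tpref'$ in BFS order from the root and, at each node $t'$, set $\Top(v) := t'$ for every $v \in \bag(t')$ whose top is still unset. Finally, for every appendix $a$ in the image of $\pi$ and every $v \in \bag(a)$ whose top is still unset, I would set $\Top(v) := a$. Correctness of this last step rests on an observation extracted from the vertex condition in both the old and the new decomposition: if $v \in V_{\text{touched}} \setminus \bags(\Tpref')$, then in the old tree the bags containing $v$ formed a connected subtree meeting $\Tpref$, so if $v$ occurs anywhere below an appendix $a$ of $\Tpref$ it must already be in $\bag(a)$; moreover $v$ can reach at most one such appendix, as otherwise the bags of $v$ in the new tree would fail to be connected (they cannot meet inside $\Tpref'$ because $v \notin \bags(\Tpref')$). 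The whole update of $\Top$ processes each element of $V_{\text{touched}}$ a constant number of times, for a total cost of $\Oh{(|\Tpref| + |\Tpref'|) \cdot \ell}$.

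I expect the main obstacle to be precisely that combinatorial argument for the $\Top$ update: one has to carefully identify all vertices whose top-node may move, handle both directions of migration (up into $\Tpref'$ or down into an appendix), and argue that the fallback to an appendix is always both necessary and sufficient. The three numeric fields are routine bottom-up recomputations; the only easy-to-miss point there is that re-rooting an appendix affects only its adhesion and thus only $\cmpsize$, which keeps the post-order sweep of $\Tpref'$ a purely local computation on the new prefix.
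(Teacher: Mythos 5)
Your proof is correct and takes essentially the same route as the paper's: the paper packages the three bottom-up quantities $\height$, $\size$, $\cmpsize$ into its tree-decomposition-automaton framework (\cref{lem:automaton-height} combined with \cref{lem:automaton-maintenance}) while you state the local recurrences directly, and for $\Top$ the paper uses the touched set $\bags_T(\Tpref \cup \App(\Tpref))$ while you use the equivalent $\bags_T(\Tpref) \cup \bags_{T'}(\Tpref')$ together with an explicit BFS-over-$\Tpref'$-then-fall-back-to-appendices procedure; the connectivity argument you give for the fallback is exactly the one needed. One small slip: in two places you write ``image of $\pi$'' where you mean the \emph{domain} of $\pi$, i.e., the appendices of $\Tpref$ being reattached rather than their new parents inside $\Tpref'$.
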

The proof of \cref{lem:height-maintenance} uses standard arguments on dynamic programming on tree decompositions.
It will be proved in \cref{sec:dynamic-dynamic-programming}.

More generally, any~typical dynamic programming scheme on tree decompositions can be turned into a~prefix-rebuilding data structure. Here is a statement that we present informally at the moment.

\begin{lemma}[informal]\label{lem:automaton-maintenance-informal}
  Fix $\ell \in \N$.
  Assume that there exists a~dynamic programming scheme operating on binary tree decompositions of width at most $\ell$, where the state of a~node $t$ of the tree decomposition depends only on $\bag(t)$, $\edges(t)$, and the states of the children of $t$ in the tree; and that this state can be computed in time $\tau$ from these information.
  Then, there exists an~$\ell$-prefix-rebuilding data structure with overhead $\tau$ that additionally implements the following operation:
  \begin{itemize}
    \item $\mathsf{state}(s)$: given a~node $s \in V(T)$, returns the state of the node $s$. Runs in worst-case time $\Oh{1}$.
  \end{itemize}
\end{lemma}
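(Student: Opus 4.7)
The plan is to augment the stored annotated binary tree decomposition $(T, \bag, \edges)$ with the dynamic programming state at every node. At initialization, we process the nodes of $T$ in post-order: for each node $t$, having already computed the states of its (at most two) children, we invoke the given dynamic programming scheme on $\bag(t)$, $\edges(t)$, and the children's states to obtain the state of $t$ in time $\tau$. Combined with the $\ell^{\Oh{1}}$ overhead for representing and scanning bags and edge sets, this yields total initialization time $\Oh{\tau \cdot |V(T)| \cdot \ell^{\Oh{1}}}$. The query $\mathsf{state}(s)$ then simply returns the stored value in $\Oh{1}$.

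For the update, fix a prefix-rebuilding tuple $\overline{u} = (\Tpref, \Tpref', T^\star, \bag^\star, \edges^\star, \pi)$. The key observation is the locality of the DP: the state at a node $t$ depends only on $\bag(t)$, $\edges(t)$, and the states of $t$'s children. Every node $s \in V(T) \setminus \Tpref$ lies inside a subtree that is reattached verbatim, so its bag, its edge set and the identities of its children are unchanged; by induction on the height of the subtree, the previously stored state at $s$ is still correct after the update. In particular, for every appendix $a \in \App(\Tpref)$ for which $\pi(a)$ is defined, the state at $a$ remains valid and can be reused. It therefore suffices to recompute the states only at the nodes of $\Tpref'$. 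We do so in a single post-order traversal of $\Tpref'$: when processing $t \in \Tpref'$ whose children in the new tree include an appendix $a$ of $\Tpref$, we use the stored (unchanged) state at $a$ as a base case; otherwise both children of $t$ lie in $\Tpref'$ and have already been processed. Each node of $\Tpref'$ incurs one DP evaluation of cost $\tau$ plus $\ell^{\Oh{1}}$ bookkeeping, for a total of $\Oh{\tau \cdot |\Tpref'| \cdot \ell^{\Oh{1}}} \leq \Oh{\tau \cdot |\overline{u}| \cdot \ell^{\Oh{1}}}$, as required.

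The representation is a doubly-linked rooted binary tree in which every node carries its bag, its edge set, its stored state, and pointers to its (up to two) children and its parent; this is enough to support all bag/edge accesses and child lookups used above in the claimed $\ell^{\Oh{1}}$ time. No subtle data-structural work is needed, because the work done on the reattached subtrees is exactly zero: the entire update cost is concentrated in the post-order sweep of $\Tpref'$.

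The only genuine obstacle is formalising the invariant that reattached subtrees retain their states across updates; this amounts to checking that the prefix-rebuilding convention, as defined through $\funrestriction{(T, \bag, \edges)}{V(T) \setminus \Tpref} = \funrestriction{(T', \bag', \edges')}{V(T') \setminus \Tpref'}$, indeed preserves, for each such node, the triple (bag, edge set, children's stored states). Once this is verified, correctness and the overhead bound follow immediately, and the full proof will be essentially the above argument spelled out with explicit reference to the formal definition of a prefix-rebuilding update and to the representation of $T$.
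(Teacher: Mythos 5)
Your overall plan — store states at nodes, initialize bottom-up, answer $\mathsf{state}(s)$ by lookup, and after an update recompute only the states in the rebuilt prefix — matches the paper's proof of the formal version of this lemma (\cref{lem:automaton-maintenance}) closely. There is, however, one subtle but real deviation worth flagging.

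You claim that for every appendix $a \in \App(\Tpref)$ the stored state remains valid and can be reused, and consequently recompute only at the nodes of $\Tpref'$. This follows from the \emph{literal} hypothesis of the informal statement, which says the state at $t$ depends only on $\bag(t)$, $\edges(t)$, and the children's states — all of which are indeed preserved on reattached subtrees by the identity $\funrestriction{(T, \bag, \edges)}{V(T) \setminus \Tpref} = \funrestriction{(T', \bag', \edges')}{V(T') \setminus \Tpref'}$. But the paper's formalization via tree decomposition automata (\cref{def:tda}) lets the transition depend additionally on adhesions $\adhesion{t} = \bag(t) \cap \bag(\parent{t})$, which do change for an appendix $a$ when it is reattached under a new parent $\pi(a)$. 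The paper's proof therefore recomputes states over $\Tpref' \cup \App(\Tpref')$, not just $\Tpref'$, precisely to account for this. Your proof is internally consistent with the statement as literally worded, but it would silently break on essentially every DP scheme of practical interest (including the paper's own height/size/$\CMSO_2$/Bodlaender--Kloks automata), because those do consult the adhesion. When you formalize, you should either include adhesion in the node's ``input data'' and observe that then the appendices' states must also be refreshed (as the paper does), or prove a weaker lemma tied to the adhesion-free hypothesis. The latter is a defensible reading of the informal statement, but you should be aware of the discrepancy so as not to apply your lemma to automata that violate its hypothesis.

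A minor secondary point: the paper bounds the recomputation cost as $\Oh{\tau \cdot |\Tpref' \cup \App(\Tpref')|} \leq \Oh{\tau \cdot |\overline u|}$, which still works once you enlarge the recomputation set, since by \cref{fact:number-of-appendices} the number of appendices is at most $|\Tpref'|+1$. Everything else — the pointer-based representation, initialization cost, query cost — is as in the paper.
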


In \cref{sec:dynamic-dynamic-programming} we formalize what we mean by a ``dynamic programming scheme'' on tree decompositions through a suitable automaton model. Then \cref{lem:automaton-maintenance-informal} is formalized by a statement (\cref{lem:automaton-maintenance}) saying that the run of an automaton on a tree decomposition can be maintained under prefix-rebuilding updates, while the first three bullet points of \cref{lem:height-maintenance} are formally proved by applying this statement to specific (very simple) automata. In several places in the sequel, we will need to maintain more complicated dynamic programming schemes on tree decompositions under prefix-rebuilding updates. In every case, we state a suitable lemma about the existence of a prefix-rebuilding data structure, and this lemma is then proved in \cref{sec:dynamic-dynamic-programming} using a suitable automaton construction.

Finally, we show that the assumption that the function $\edges^\star$ is given in the description of a prefix-rebuilding update can be lifted in prefix-rebuilding updates that do not change the underlying graph $G$.
Consider a prefix-rebuilding update that does not change the graph $G$, and let us say that a \emph{weak description} of the update is a tuple $\widehat{u} \coloneqq (\Tpref, \Tpref', T^\star, \bag^\star, \pi)$ that is required to satisfy the same properties as a description of a prefix-rebuilding update except for the $\edges$ function.
Because the graph $G$ is not changed, the new annotated binary tree decomposition $(T',\bag',\edges')$ can be determined uniquely from $(T,\bag,\edges)$ and $\widehat{u}$.
We again denote $|\widehat{u}| = |\Tpref| + |\Tpref'|$.

We show that a weak description $\widehat{u}$ of a prefix-rebuilding update can be turned into a description $\overline{u}$ of a prefix-rebuilding update such that $|\overline{u}| = \Oh{|\widehat{u}|}$ and the annotated binary tree decomposition $(T',\bag',\edges')$ resulting from applying $\overline{u}$ is the same as the one resulting from applying $\widehat{u}$.
We note that this operation can make the sets $\Tpref$ and $\Tpref'$ larger, but this is bounded by $\Oh{|\widehat{u}|}$.

\begin{lemma}
\label{lem:prds-strengthen}
For every $\ell \in \N$, there exists an~$\ell$-prefix-rebuilding data structure with overhead $\Oh{1}$ that additionally implements the following operations:
\begin{itemize}
\item $\mathsf{strengthen}(\widehat{u})$: Given a weak description $\widehat{u}$ of a prefix-rebuilding operation, returns a description $\overline{u}$ of a prefix-rebuilding operation such that $|\overline{u}| = \Oh{\widehat{u}}$ and applying $\widehat{u}$ and $\overline{u}$ result in the same annotated tree decomposition $(T',\bag',\edges')$. Runs in worst-case time $|\widehat{u}| \cdot \ell^{\Oh{1}}$.
\end{itemize}
\end{lemma}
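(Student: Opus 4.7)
The plan is to piggy-back on \cref{lem:height-maintenance} so that, on top of the standard $\ell$-prefix-rebuilding maintenance of $(T, \bag, \edges)$, the $\Top$ oracle is available in constant time. Given a weak description $\widehat{u} = (\Tpref, \Tpref', T^\star, \bag^\star, \pi)$, I first absorb all appendices of $\Tpref$ (which, as a subset of $V(T) \setminus \Tpref = V(T') \setminus \Tpref'$, coincide with the appendices of $\Tpref'$) into both prefixes: $\Tpref_{\mathrm{new}} \coloneqq \Tpref \cup \App(\Tpref)$ and $\Tpref'_{\mathrm{new}} \coloneqq \Tpref' \cup \App(\Tpref)$. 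By \cref{fact:number-of-appendices} and binariness, both have size $\Oh{|\widehat{u}|}$. The extended tree $T^\star_{\mathrm{new}}$ and the bags of absorbed appendices are copied verbatim from the current representation, and $\pi_{\mathrm{new}}$ sends each child-in-$T$ of an absorbed appendix to that appendix.

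The key structural claim driving the reconstruction of $\edges^\star_{\mathrm{new}}$ is the following invariance: for every edge $e = uv \in E(G)$, the shallowest bag of $T$ containing $e$ lies in $\Tpref_{\mathrm{new}}$ if and only if the shallowest bag of $T'$ containing $e$ lies in $\Tpref'_{\mathrm{new}}$. To see one direction, suppose $s_T(e)$ lies strictly below some appendix $a$, and let $v$ be the endpoint with $\Top_T(v) = s_T(e)$. Then $v$ appears neither in $\bag(a)$ nor anywhere in $\Tpref$; by the vertex-connectivity condition in $T'$, this forbids $v$ from being inserted into any bag of $\Tpref'$ (otherwise $v$'s region in $T'$ would be disconnected through $a$), forcing $\Top_{T'}(v) = \Top_T(v) = s_T(e)$, still strictly below $a$. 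Splitting cases on whether $\Top_T(u)$ lies in $\Tpref$, at $a$, or strictly below $a$ shows that $\Top_{T'}(u)$ is either strictly below $a$ or at/above $a$, so the deeper of the two $\Top_{T'}$ values is $\Top_{T'}(v)$, giving $s_{T'}(e) = s_T(e) \notin \Tpref'_{\mathrm{new}}$. The converse is symmetric. Consequently the only edges whose placement can change are those stored in $\bigcup_{t \in \Tpref_{\mathrm{new}}} \edges(t)$, a set of size $\Oh{|\Tpref_{\mathrm{new}}| \cdot \ell^2}$.

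To finish, I would precompute in a single pass over $\bag^\star$, in time $\Oh{|\Tpref'| \cdot \ell^{\Oh{1}}}$, a table sending each vertex occurring in $\bag^\star$ to the shallowest node of $\Tpref'$ containing it. For each candidate edge $uv$ and each endpoint I then determine $\Top_{T'}$ in $\Oh{1}$ time by the rule: use the precomputed table entry if one exists; otherwise, if the endpoint lies in $\bag(a)$ for some absorbed appendix $a$, return $a$; otherwise return $\Top_T$ from the oracle. Finally, I insert $uv$ into $\edges^\star_{\mathrm{new}}$ at the deeper of its two $\Top_{T'}$ values, which by the invariance lies in $\Tpref'_{\mathrm{new}}$. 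The most delicate point to verify is well-definedness of the middle branch: a vertex with $\Top_T \in \Tpref$ but no occurrence in $\bag^\star$ could \emph{a priori} fragment into several disjoint pieces below different appendices of $T'$, but the validity of $\widehat{u}$ as a tree decomposition rules out such a fragmentation, so the vertex necessarily lies in the bag of exactly one absorbed appendix. Summing the precomputation and per-edge work yields the required running time $|\widehat{u}| \cdot \ell^{\Oh{1}}$.
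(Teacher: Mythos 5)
Your proposal is correct and follows essentially the same route as the paper's proof: absorb the appendices of $\Tpref$ into both prefixes, observe that only edges stored in $\edges(\Tpref \cup \App(\Tpref))$ can change their assigned node (which in both arguments rests on the vertex condition of $(T,\bag)$ and $(T',\bag')$), and recompute $\edges^\star$ on the enlarged prefix. You additionally spell out the algorithmic details of recomputing the topmost bags via the $\Top$ oracle and a table over $\bag^\star$, which the paper leaves implicit.
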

\begin{proof}
Let $\widehat{u} = (\widehat{T}_{\mathrm{pref}}, \widehat{T'}_{\mathrm{pref}}, \widehat{T}^\star, \widehat{\bag}^\star, \widehat{\pi})$ and $(T', \bag', \edges')$ be the resulting annotated tree decomposition.
We observe that the topmost bag containing an edge $uv \in E(G)$ can change only if both $u,v \in \bags_{T'}(\widehat{T'}_{\mathrm{pref}})$.
However, if $u,v \in \bags_{T'}(\widehat{T'}_{\mathrm{pref}})$, then because of the vertex condition of $(T',\bag')$ it must hold that $u,v \in \bags_{T}(\widehat{T}_{\mathrm{pref}} \cup \App(\widehat{T}_{\mathrm{pref}}))$, and in particular, the vertex condition in $(T,\bag)$ implies that $uv$ must be stored in $\edges(\widehat{T}_{\mathrm{pref}} \cup \App(\widehat{T}_{\mathrm{pref}}))$.
Therefore, the changes to the $\edges$ function are limited to the subtree of $T$ consisting of $\widehat{T}_{\mathrm{pref}} \cup \App(\widehat{T}_{\mathrm{pref}})$, and therefore for constructing $\overline{u} = (\Tpref, \Tpref', T^\star, \bag^\star, \edges^\star, \pi)$ it suffices to take $\Tpref = \widehat{T}_{\mathrm{pref}} \cup \App(\widehat{T}_{\mathrm{pref}})$ and analogously construct $\Tpref'$, $T^\star$, $\bag^\star$, and $\pi$ from $\widehat{T'}_{\mathrm{pref}}$, $\widehat{T}^\star$, $\widehat{\bag}^\star$, and $\widehat{\pi}$.
Then, the $\edges^\star$ function can be determined from $(T^\star, \bag^\star)$ and the $\edges$ function restricted to $\widehat{T}_{\mathrm{pref}} \cup \App(\widehat{T}_{\mathrm{pref}})$.
The running time and the bound on $|\overline{u}|$ follow from the fact that the tree decompositions are binary.
\end{proof}

Now, by using the data structure from \Cref{lem:prds-strengthen}, we can assume when constructing prefix-rebuilding operations that do not change $G$ that it is sufficient to construct a weak description, but when implementing prefix-rebuilding data structures that the $\mathsf{update}$ method receives a (not weak) description.
In the rest of this paper, we assume that we are always maintaining the data structure from \cref{lem:prds-strengthen}, in particular, usually first using it to turn a weak description $\widehat{u}$ into a description $\overline{u}$, and then immediately applying $\mathsf{update}(\overline{u})$ to it.

%
%
%

\newcommand{\inc}{\mathsf{inc}}

\paragraph{Logic.} We use $\CMSO_2$ --- monadic second-order logic on graphs with quantification over edge subsets and modular counting predicates --- which is typically associated with graphs of bounded treewidth; see~\cite[Section~7.4]{platypus} for an introduction suited for an algorithm designer. Formulas of $\CMSO_2$ are evaluated in graphs and there are variables of four different sorts: for single vertices, for single edges, for vertex subsets, and for edge subsets. The latter two sorts are called {\em{monadic}}. The atomic formulas of $\CMSO_2$ are of the following forms:
\begin{itemize}[nosep]
 \item {\em{Equality:}} $x=y$, where $x,y$ are both either single vertex/edge variables.
 \item {\em{Membership:}} $x\in X$, where $x$ is a single vertex/edge variable and $X$ is a monadic vertex/edge variable.
 \item {\em{Incidence:}} $\inc(x,e)$, where $x$ is a single vertex variable and $e$ is a single edge variable.
 \item {\em{Modular counting:}} $|X|\equiv a\pmod m$, where $a$ and $m$ are integers, $m>0$. 
\end{itemize}
The semantics of the above is as expected. Then $\CMSO_2$ consists of all formulas that can be obtained from atomic formulas using the following constructs: standard boolean connectives, negation, and quantification over all sorts of variables, both existential and universal. Thus, a formula of $\CMSO_2$ may contain variables that are not bound by any quantifier; these are called {\em{free variables} }. A formula without free variables is a {\em{sentence}}. For a sentence $\varphi$ and a graph $G$, we write $G\models \varphi$ to signify that $\varphi$ is satisfied in $G$ (read {\em{$G$ is a model of $\varphi$}}).

Courcelle's Theorem states that given a graph $G$ of treewidth $k$ and a $\CMSO_2$ formula $\varphi$, it can be decided whether $G\models \varphi$ in time $f(k,\varphi)\cdot n$, where $n$ is the vertex count of $G$ and $f$ is a computable function. In the proof of Courcelle's Theorem, one typically first computes a tree decomposition of $G$ of width at most $k$, for instance using the algorithm of Bodlaender~\cite{bodlaender-tw-opt}, and then applies a dynamic programming procedure ({\em{aka}} automaton) suitably constructed from $\varphi$ to verify the satisfaction of~$\varphi$. We show that this dynamic programming procedure can be maintained under prefix-rebuilding updates. More formally, in \cref{sec:dynamic-dynamic-programming} we prove the following statement.

\begin{lemma} \label{cor:dynamic-cmso}
 Fix $\ell\in \N$ and a $\CMSO_2$ sentence $\varphi$.
 Then there exists an~$\ell$-prefix-rebuilding data structure with overhead $\Oh[\ell,\varphi]{1}$ that additionally implements the following operation:
  \begin{itemize}
    \item $\mathsf{query}()$: returns whether $G\models \varphi$. Runs in worst-case time $\Oh[\ell,\varphi]{1}$.
  \end{itemize}
\end{lemma}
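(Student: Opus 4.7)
The plan is to invoke the classical automaton-based proof of Courcelle's Theorem to turn the sentence $\varphi$ into a finite tree-decomposition automaton, and then to apply the formal version of \cref{lem:automaton-maintenance-informal} (i.e., \cref{lem:automaton-maintenance}, which we will prove in \cref{sec:dynamic-dynamic-programming}) to maintain the run of this automaton under prefix-rebuilding updates.

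In more detail, the first step is to compile $\varphi$ into a deterministic bottom-up automaton $\Aa_\varphi$ operating on binary tree decompositions of width at most $\ell$, following the standard construction (see e.g.\ \cite[Section~7.4]{platypus} or \cite[Chapter~11]{FlumG06}). The states of $\Aa_\varphi$ are equivalence classes of $\CMSO_2$-types of partial structures over boundary sets of size at most $\ell+1$; since both $\ell$ and $\varphi$ are fixed, the state set has size $\Oh[\ell,\varphi]{1}$. The transition function takes, at a node $t$, the local data $(\bag(t), \edges(t))$ together with the states computed at the children of $t$, and outputs a new state; as $|\bag(t)|\le \ell+1$ and $t$ has at most two children, each transition can be evaluated in time $\tau = \Oh[\ell,\varphi]{1}$. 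Moreover, $G\models\varphi$ is determined by the state of $\Aa_\varphi$ at the root of $T$.

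The second step is to plug this automaton into \cref{lem:automaton-maintenance}. This yields an $\ell$-prefix-rebuilding data structure with overhead $\tau=\Oh[\ell,\varphi]{1}$ that additionally implements $\mathsf{state}(s)$ in worst-case time $\Oh{1}$. The operation $\mathsf{query}()$ is then implemented by simply returning (a boolean derived from) $\mathsf{state}(r)$, where $r$ is the root of $T$; since we already maintain a pointer to the root, this takes worst-case time $\Oh[\ell,\varphi]{1}$.

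The only subtlety — and indeed the reason this statement must wait for \cref{sec:dynamic-dynamic-programming} — is verifying that the automaton $\Aa_\varphi$ fits the formal ``tree-decomposition automaton'' framework set up there, in particular that its transition genuinely depends only on $(\bag(t),\edges(t))$ and the children's states (and not, e.g., on global information such as the identities of vertices appearing elsewhere in the decomposition). This is a standard feature of the Courcelle construction — the state records exactly the $\CMSO_2$-type of the subgraph induced by $\component{t}\cup\adhesion{t}$, relative to the ordered boundary $\adhesion{t}$, so compatibility across the bag interface is all that is needed — but it is the one place where care has to be taken. Once that compatibility is established, the rest is a direct application of \cref{lem:automaton-maintenance}, and the claimed overhead and query time follow immediately.
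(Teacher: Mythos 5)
Your proposal matches the paper's own proof essentially exactly: the paper proves \cref{cor:dynamic-cmso} by first constructing, in \cref{lem:automaton-CMSO}, a tree decomposition automaton $\Aa_{\ell,\varphi}$ whose states are rank-$p$ $\CMSO_2$-types of the boundaried graphs $G_x$ (precisely the ``type of the subgraph below $t$ relative to the adhesion'' you describe), and then combining this with \cref{lem:automaton-maintenance} to maintain its run and answer queries from the root state. The only cosmetic difference is that the paper fixes the boundary by treating its vertices as constants in the logic rather than via an ordering, but this is the same construction, and your identification of where care is needed (that transitions depend only on local bag data and children's states) is exactly the content the paper verifies in its sketch of \cref{lem:automaton-CMSO}.
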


\paragraph{Dynamic tree decompositions under the promise of small treewidth.}
With all the definitions in place, we can finally state the core result that will be leveraged to prove \cref{thm:main}.

\begin{restatable}{lemma}{torestateWeakTreewidthDs}
  \label{lem:weak-treewidth-ds}
  There is a~data structure that for an~integer $k \in \N$, fixed upon initialization, and a~dynamic graph $G$, updated by edge insertions and deletions, maintains an~annotated tree decomposition $(T, \bag, \edges)$ of $G$ of width at most $6k + 5$ using prefix-rebuilding updates under the promise that $\tw{G}\leq k$ at all times.
  More precisely, at every point in time the graph is guaranteed to have treewidth at most $k$ and the data structure contains an~annotated tree decomposition of $G$ of width at most $6k + 5$.
  The data structure can be initialized on $k$ and an~edgeless $n$-vertex graph $G$ in time $2^{\Oh{k^8}} \cdot n$, and then every update:
  \begin{itemize}
    \item returns the sequence of prefix-rebuilding updates used to modify the tree decomposition; and
    \item takes amortized time $2^{\Oh{k^9 + k \log k \cdot \sqrt{\log n \log \log n}}}$.
  \end{itemize}
\end{restatable}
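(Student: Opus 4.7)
The plan is to glue together the refinement operation from \cref{sec:refinement} and the height-reduction scheme of \cref{sec:height} into a single dynamic data structure, and to pay for everything via amortized analysis against the potential $\Phi(\Tc) = \sum_{t \in V(T)} (\gamma k)^{|\bag(t)|}\cdot\height(t)$ from \cref{subsec:overview_main}. In addition to $(T,\bag,\edges)$ itself, I will maintain the auxiliary prefix-rebuilding data structures of \cref{lem:height-maintenance} for $\height$, $\size$, $\cmpsize$, and $\Top$, the Bodlaender-Kloks style dynamic programming required to compute a $c$-small $d$-linked $k$-closure of any given prefix (\Cref{lem:overview:smallclosure,lem:overview:linkedoptimization}) and its associated $\torso[G]{X}$, and whatever client automata are needed. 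All of these have overhead $2^{\Oh{k^{\Oh{1}}}}$ per affected node and ride passively on the prefix-rebuilding updates produced by the main algorithm via \cref{lem:prds-strengthen}.

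To handle a single edge update on $uv$, I would use $\Top$ to locate the two root-paths $P_u, P_v$, whose union forms a prefix of size $\Oh{\height(T)}$. For an insertion I add $u,v$ to every bag in $P_u \cup P_v$ and record the edge in $\edges$ at the root; for a deletion I simply remove $uv$ from $\edges$ at its unique owner. This is exported as a weak description of a prefix-rebuilding update, strengthened via \cref{lem:prds-strengthen}, and then, if some bag of $P_u \cup P_v$ has exceeded width $6k+5$ (which can only happen for insertions), followed by the refinement operation of \cref{sec:refinement} applied to $\Tpref = P_u \cup P_v$. By \Cref{lem:overview:linkedstandard,lem:overview:linkedpotential} the $d$-linked $k$-closure $X$ restores the width to $6k+5$, while the potential grows by at most $\Oh[k]{|\Tpref|\cdot(\height(T)+\log n)} = \Oh[k]{\height(T)^2}$, contributing $2^{\Oh[k]{\sqrt{\log n\log\log n}}}$ provided the height invariant below is preserved.

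To enforce the height invariant $\height(T) \leq f(n,k) = 2^{\Oh[k]{\sqrt{\log n\log\log n}}}$, after each edge update I would, while $\height(T) > f(n,k)$, repeatedly locate a prefix witnessing \Cref{lem:overview:heightredu} and apply the refinement operation to it. Each such invocation strictly decreases $\Phi$ and runs in time $\Oh[k]{\Phi(\Tc) - \Phi(\Tc')}$, so its cost is fully paid by the potential drop and is free in amortized terms; the loop must terminate because $\Phi \geq 0$. Edge deletions cannot increase $\Phi$, and the initial edgeless graph admits the trivial decomposition of potential $0$ (or of magnitude $\Oh[k]{n}$ if one chooses the path shape), so the standard telescoping argument yields the claimed amortized bound $2^{\Oh{k^9 + k\log k\cdot\sqrt{\log n\log\log n}}}$, with the explicit constants coming from $(\gamma k)^{|\bag(t)|} \leq (\gamma k)^{6k+6}$, from the $\Oh{k^4}$-smallness of closures, and from the per-node overhead of the auxiliary automata.

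The main obstacle is the one already isolated in \cref{subsec:overview_heightredu}: proving that whenever $\height(T) > f(n,k)$ a prefix satisfying \Cref{lem:overview:heightredu} really exists, since this is where the specific form of the bound $2^{\Oh[k]{\sqrt{\log n\log\log n}}}$ is forced by the recursion depth of the big/shallow/deep dichotomy. Everything else is careful bookkeeping: verifying that the prefix-rebuilding updates produced by refinements (both the width-restoring one and the height-reducing ones) can be processed through the auxiliary data structures within the claimed overhead, and that the sequence of updates is correctly reported to the caller as the lemma's interface demands.
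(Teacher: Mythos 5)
Your plan is essentially the same as the paper's: maintain $\Phi(\Tc)=\sum_t g_\ell(|\bag(t)|)\cdot\height(t)$, handle an edge insertion by expanding bags along the two root-paths $P_u\cup P_v$ and then calling $\mathsf{refine}(P_u\cup P_v)$ to restore width, and afterwards invoke the height-reduction loop driven by the unbalanced-prefix selection of \cref{sec:height}, amortizing everything against the potential drop. The deferral to \Cref{lem:overview:heightredu} / \Cref{lem:refinement-data-structure} / \Cref{lem:height-data-structure} is exactly how the paper's \cref{sec:wrap-up} is organized, so the overall architecture is right.

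There is one genuine gap: you never control $|V(T)|$. The refinement operation strictly adds new nodes (the rebuilt prefix $\Tdnew{X}$ of size $2^{\Oh{k}}|\Tpref|$ and the balancing trees $\Tnewpref{B}$), so across many updates the tree can grow without bound. This matters because the entire height-reduction machinery is parametrized by $N=|V(T)|$: the sequences $(h_i),(n_i)$ of \Cref{lem:wojtek-wallofmath} are built from $N$, so the achievable height bound is $2^{\Oh{\sqrt{\log N\log c}}}$, not $2^{\Oh[k]{\sqrt{\log n\log\log n}}}$, and \amref{potential-difference} carries a $\log|V(T)|$ factor that you need to keep at $\Oh{k\log k+\log n}$. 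The paper handles this inside $\heightop$ with a \emph{shrinking} step: whenever $|V(T)|\ge k^{\Oh{k}}\cdot n^3$, the whole decomposition is rebuilt from scratch via \cref{thm:tw-opt} (and balanced), the old refinement data structure is discarded, and the $\Theta(|V(T)|)$ cost is charged to the drop $\Phi(\Tc)-\Phi(\Tc^\star)\gtrsim|V(T)|$. Without this invariant the ``while $\height(T)>f(n,k)$ refine'' loop is not guaranteed to terminate at the claimed height, and the amortized analysis in your third paragraph does not go through. Two smaller slips: adding \emph{both} $u$ and $v$ to every bag of $P_u\cup P_v$ can grow a bag by $2$ rather than $1$, which breaks the precondition of $\mathsf{refine}$ that $\Tpref$ contains all bags of size exactly $\ell+2$ (the paper adds only $v$); and the initial potential is not $0$ and is $\Theta(kn^2)$ for a path --- the $\Oh[k]{n}$ bound you want comes precisely from starting with a balanced binary tree of singleton bags.
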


\ms{above: probably replace $\Oh{1}$ with specific constants later}

Note that as a~direct consequence of \cref{lem:weak-treewidth-ds}, the total size of all prefix-rebuilding updates returned by the data structure over first $q$ edge insertions/deletions is bounded by
\[ 2^{\Oh{k^8}} \cdot n\, +\, 2^{\Oh{k^9 + k \log k \cdot \sqrt{\log n \log \log n}}} \cdot q. \]
\wn{The above conclusion is quite weak as we get better bounds for the updates sizes than for running time, but who in their right mind would track that down...}

\cref{lem:weak-treewidth-ds} is proved in \cref{sec:wrap-up} using the results of \cref{sec:closures,sec:refinement,sec:height}.
We remark that the statement of the lemma is essentially a~weaker version of \cref{thm:main}: first, we assume that no update increasing $\tw{G}$ above $k$ may ever arrive to the data structure; next, we do not support dynamic $\CMSO_2$ model checking.
We fix these issues in \cref{sec:ultimate-proof} by means of, respectively: a~straightforward application of the technique of postponing invariant-breaking insertions of Eppstein et al.~\cite{EppsteinGIS96}, and \cref{cor:dynamic-cmso}.

\section{Closures}
\label{sec:closures}
\tk{A general todo for this section, if there is nothing better to do, would be to tidy up the formulas that are cut by line breaks...}
In this section, we introduce a~graph-theoretical notion of a~\emph{closure}, which will be used in the presentation of our algorithm later in the paper.
For the rest of the section, fix an~integer $k \in \N$ and let $G$ be a~graph of treewidth at most $k$.

Intuitively, when one tries to maintain a~tree decomposition of a~graph dynamically, one inevitably reaches a~situation where some of the bags of the maintained tree decomposition are too large.
Consider the following naive approach of improving such a~tree decomposition: let $W$ be the union of the bags that are deemed too large.
Construct a~(rooted) tree decomposition $T_W$ of $\torso[G]{W}$.
Then, for each connected component $C \in \cc{G - W}$, the set $\neighopen{C}$ is a~clique in $\torso[G]{W}$; hence, the entire set $\neighopen{C}$ resides in a~single bag $t_C$ of $T_W$.
Therefore, $C$ can be incorporated into $T_W$ by constructing a~tree decomposition $T_C$ of $G[\neighclosed{C}]$ whose root bag contains $\neighopen{C}$ entirely, and then attaching the root of $T_C$ to the bag $t_C \in V(T_W)$.
It can be straightforwardly verified that this is a~valid construction of a~tree decomposition of $G$.

However, it is not clear why this construction would improve the width of the maintained decomposition.
This owes to the fact that the treewidth of $\torso[G]{W}$ might be in principle much larger than $k$.
One might, however, hope that the set $W$ can be covered by an~only slightly larger set $X \supseteq W$ such that the treewidth of $\torso[G]{X}$ is small.
This is, indeed, the case, leading to the definition of a~\emph{closure} of $W$:

\begin{definition}
  Let $k \in \N$ and $G$ be such that $\tw{G} \leq k$.
  Let also $W \subseteq V(G)$.
  Then, the set $X \subseteq V(G)$ is called the \emph{$k$-closure} of $W$ in $G$ if
  \[ X \supseteq W \qquad\text{and}\qquad \tw{\torso[G]{X}} \leq 2k + 1. \]
\end{definition}

Note that each set $W$ admits a~trivial closure $X = V(G)$, as $\torso[G]{V(G)} = G$.
Obviously, such a~closure might be much larger than $W$.
Fortunately, the following lemma shows how to construct closures of more manageable size:

\begin{lemma}
  \label{lem:tw-torso-for-sum-of-bags}
  Let $(T, \bag)$ be a~tree decomposition of $G$ of width at most $k$.
  Let also $S$ be an~lca-closed set of nodes of $T$.
  Then,
  \[ \tw{\torso[G]{\bags(S)}} \leq 2k + 1. \]
  
  \begin{proof}
    We construct a~rooted tree $U$ in the following way: let $V(U) = S$ and let $t_1$ be a~parent of $t_2$ in $U$ if and only if $t_1$ is a~strict ancestor of $t_2$ and the simple path between $t_1$ and $t_2$ in $T$ does not contain any other vertices of $S$.
    Since $S$ is lca-closed, it can be easily verified that $U$ is indeed a~rooted tree.
    We also construct a~tree decomposition $(U, \bag')$ as follows:
    \[
      \bag'(t) = \begin{cases}
        \bag(t) & \text{if }t\text{ is the root of }U\text{,} \\
        \bag(t) \cup \bag(\parent[U]{t}) & \text{otherwise.}
      \end{cases}
    \]
    We claim that $(U, \bag')$ is a~tree decomposition of $\torso[G]{\bags(S)}$ of width at most $2k + 1$.
    The vertex condition is straightforward to verify.
    For the edge condition, consider an~edge $uv$ of $\torso[G]{\bags(S)}$.
    We have that $u, v \in \bags(S)$ and there exists a~simple path $P$ between $u$ and $v$ in $G$ that is internally disjoint with $\bags(S)$.
    Pick two nodes $t_u$, $t_v$ of $S$ such that $u \in \bag(t_u)$, $v \in \bag(t_v)$.
    For each node $t$ on the path between $t_u$ and $t_v$ in $U$, the set $\bag(t)$ must contain either $u$ or $v$; otherwise, $\bag(t)$ would be a~separator between $u$ and $v$ in $G$ disjoint with $\{u, v\}$, contradicting the existence of $P$.
    Hence, one of the following must hold:
    \begin{itemize}
      \item Both $u$ and $v$ belong to $\bag(t)$ for some $t \in S$. Then $u, v \in \bag'(t)$, so the edge condition is satisfied for the edge $uv$.
      \item We have $u \in \bag(t_1)$, $v \in \bag(t_2)$ for some nodes $t_1, t_2 \in S$ that are adjacent in $U$.
      Without loss of generality, assume that $t_1$ is the parent of $t_2$.
      Then, since $\bag'(t_2) = \bag(t_1) \cup \bag(t_2)$, we infer that $u, v \in \bag'(t_2)$.
    \end{itemize}
    We conclude that $(U, \bag')$ is indeed a~tree decomposition of $\torso[G]{\bags(S)}$.
    Since each bag of $(U, \bag')$ has size at most $2k + 2$, the proof is finished.
  \end{proof}
\end{lemma}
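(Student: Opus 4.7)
The plan is to construct an explicit tree decomposition of $\torso[G]{\bags(S)}$ of width at most $2k+1$ directly from the given decomposition $(T,\bag)$ and the lca-closed set $S$. Since $S$ is lca-closed, it naturally inherits a tree structure from $T$: define the tree $U$ with $V(U)=S$ by declaring $t_1$ to be the parent in $U$ of $t_2$ iff $t_1$ is the closest strict ancestor of $t_2$ (in $T$) that still belongs to $S$. Lca-closedness is what guarantees that this is a well-defined rooted tree rather than some pathological relation. For the bags, I would take $\bag'(t) = \bag(t)$ at the root of $U$ and $\bag'(t) = \bag(t) \cup \bag(\parent[U]{t})$ otherwise, which immediately gives $|\bag'(t)| \le 2k+2$ and hence width at most $2k+1$.

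I then need to verify that $(U,\bag')$ is a valid tree decomposition of $\torso[G]{\bags(S)}$. The vertex condition follows routinely: the set of nodes of $T$ whose bags contain a fixed $v$ is a connected subtree of $T$, and by construction the set of nodes of $U$ whose $\bag'$ contains $v$ is the corresponding ``contracted'' subtree in $U$ (each edge of $U$ corresponding to a path in $T$ inherits the presence of $v$ from some node on that path into the child's bag via the parent-union).

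The edge condition is the real work. Take an edge $uv$ of $\torso[G]{\bags(S)}$ and let $t_u,t_v\in S$ with $u\in\bag(t_u)$, $v\in\bag(t_v)$; by definition of the torso there is a $u$--$v$ path $P$ in $G$ whose internal vertices lie outside $\bags(S)$ (and a direct edge is the degenerate case). Consider the path from $t_u$ to $t_v$ in $U$. For any node $t$ on this path, $\bag(t)$ separates $u$ from $v$ in $G$ (standard property of tree decompositions applied to $T$, since the nodes on the $t_u$--$t_v$ path in $U$ are also ancestors/descendants along the $t_u$--$t_v$ path in $T$). The key point is that $\bag(t)\subseteq\bags(S)$ because $t\in S$, so $P$ cannot meet $\bag(t)$ at an internal vertex; hence $\bag(t)\ni u$ or $\bag(t)\ni v$. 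Walking along the $t_u$--$t_v$ path in $U$, this forces two consecutive nodes $t_1,t_2$ (with, say, $t_1$ the parent of $t_2$) such that $\{u,v\}\subseteq \bag(t_1)\cup\bag(t_2) = \bag'(t_2)$, which gives the edge condition.

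The main obstacle I anticipate is precisely this edge-condition argument: one has to be careful that the separator argument is applied to nodes of $S$ (where $\bag(t)\subseteq\bags(S)$), not arbitrary intermediate nodes of $T$ between $t_u$ and $t_v$, for which the bag may have vertices outside $\bags(S)$ and therefore cannot be ruled out as a separator by the existence of $P$. Lca-closedness of $S$ is what lets the walk in $U$ be analyzed node by node purely inside $S$, making the obstruction disappear.
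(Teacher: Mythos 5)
Your proposal is correct and follows essentially the same route as the paper: the same contracted tree $U$ on node set $S$, the same bag definition $\bag'(t)=\bag(t)\cup\bag(\parent[U]{t})$, and the same edge-condition argument via the separator property of bags on the $t_u$--$t_v$ path restricted to nodes of $S$. Your closing remark about why the separator argument must be applied only to nodes of $S$ is the right reading of where lca-closedness is used.
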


\cref{lem:tw-torso-for-sum-of-bags} already shows that each set $W \subseteq V(G)$ admits a~closure $X$ of cardinality at most $\Oh{k} \cdot |W|$. Indeed, consider a~tree decomposition $(T_{\mathrm{opt}}, \bag)$ of $G$ of minimum width.
For each vertex $v \in W$, select into $S$ a~node $t \in V(T_{\mathrm{opt}})$ such that $v \in \bag(t)$.
Then, take the lca-closure of $S$ (which increases $|S|$ by a~factor of at most 2) and apply \cref{lem:tw-torso-for-sum-of-bags}.

Unfortunately, this will not be sufficient in our setting.
In our algorithm, as we maintain a~tree decomposition $(T, \bag)$, the set $W$ will be chosen as the union of bags in a~prefix $\Tpref$ of $T$.
In this setup, another condition on the closure $X \supseteq W$ will be required: for every appendix $t$ of $\Tpref$, we require that the entire component $\component{t}$ contains only a~bounded number of vertices of $X$. Such closures will be called \emph{small}.
The existence of such a~closure will be proved as \cref{lem:small-closure-lemma} (Small Closure Lemma) in \cref{ssec:small-closure-lemma}.
This is followed in \cref{ssec:closure-linkedness-lemma} by proving a~structural result about small closures (\cref{lem:closure-linkedness-lemma}, Closure Linkedness Lemma).
Next, in \cref{ssec:closure-exploration}, we will define objects related to closures that will be central to the tree decomposition improvement algorithm --- \emph{blockages}, \emph{explorations} and \emph{collected components} --- as well as prove several structural properties of these notions.
Finally, \cref{ssec:computing-closures} sketches how to find small closures efficiently in a~dynamically changing tree decomposition.


\subsection{Small Closure Lemma}
\label{ssec:small-closure-lemma}

We now formally define the notion of small closures.
\begin{definition}
  Let $(T,\bag)$ be a~tree decomposition of $G$ and $\Tpref$ be a~prefix of $T$.
  Let also $c \in \N$ be an~integer.
  Then we say that a set $X \supseteq \bags(\Tpref)$ is \emph{$c$-small} with respect to $(T,\bag)$ if for every appendix $t$ of $\Tpref$, it holds that $|X \cap \component{t}| \leq c$.
\end{definition}

With this definition in place, we are ready to state the Small Closure Lemma, asserting the existence of $c$-small closures for $c$ large enough:

\begin{lemma}[Small Closure Lemma]
\label{lem:small-closure-lemma}
  There exists a~function $g(\ell) \in \Oh{\ell^4}$ such that the following holds.
  Let $k, \ell \in \N$ with $k \leq \ell$ and $G$ be a~graph with $\tw{G} \leq k$.
  Let $(T, \bag)$ be a~tree decomposition of $G$ of width at most $\ell$ and $\Tpref \subseteq V(T)$ be a~prefix of~$T$.
  Then there exists a~$g(\ell)$-small $k$-closure of $\bags(\Tpref)$ with respect to $(T,\bag)$.
\end{lemma}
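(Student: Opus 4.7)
The plan is to construct $X$ as the union of bags of an optimum-width tree decomposition $T_o = (T^o, \bag^o)$ of $G$ (so of width at most $k$) over a carefully chosen lca-closed subset $S \subseteq V(T^o)$ with $\bags^o(S) \supseteq \bags(\Tpref)$. Setting $X = \bags^o(S)$ and invoking \cref{lem:tw-torso-for-sum-of-bags} on the lca-closed set $S$ immediately yields $\tw{\torso[G]{X}} \leq 2k+1$, so $X$ is automatically a $k$-closure. All the work lies in choosing $S$ so that the smallness bound $|X \cap \component{t}| \leq \Oh{\ell^4}$ holds for every appendix $t$ of $\Tpref$.

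Before constructing $S$, I would record the following clean observation, which is what makes a per-appendix analysis possible: for every appendix $t$ of $\Tpref$ one has $\bags(\Tpref) \cap \component{t} = \emptyset$. Indeed, if $v$ lies simultaneously in $\bag(s)$ for some $s \in \Tpref$ and in $\bag(d)$ for some descendant $d$ of $t$, then since $\Tpref$ is closed under taking parents, the $s$-to-$d$ path in $T$ passes through $\parent{t}$ and $t$, so the vertex condition forces $v \in \bag(\parent{t}) \cap \bag(t) = \adhesion{t}$ and hence $v \notin \component{t}$. Consequently, every vertex of $X \cap \component{t}$ appears as a ``collateral'' vertex inside a bag of $T_o$ that was selected in $S$ to cover some vertex of $\bags(\Tpref) \setminus \component{t}$, or was inserted into $S$ by the lca-closure step.

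I would construct $S$ node-by-node over $\Tpref$: for every $s \in \Tpref$ I pick a subtree $S_s \subseteq V(T^o)$ whose bags cover $\bag(s)$, and then set $S$ to be the lca-closure of $\bigcup_{s \in \Tpref} S_s$ in $T^o$. The key point is to control the size of $S_s$ and, more importantly, how much of $S_s$ can land inside a single $\component{t}$. For this I would invoke the Dealternation Lemma of Bojańczyk and Pilipczuk, applied to $T_o$ with respect to $(T, \bag)$. After a mild refinement --- which slightly increases the bag size of $T_o$, consuming the slack between $k$ and $2k+1$ in the definition of $k$-closure --- the interaction between $T^o$ and $(T,\bag)$ becomes non-alternating. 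In the form I need it, the dealternation conclusion says that each $S_s$ can be chosen to be a subtree of $T^o$ of size $\Oh{\ell^2}$.

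Given such covers, the per-appendix count is a short calculation. Fix an appendix $t$. Only those $S_s$ with $s \in \Tpref$ whose bag shares a vertex with $\adhesion{t}$ can contribute nodes of $T_o$ whose bag meets $\component{t}$. Each of the $\leq \ell + 1$ vertices of $\adhesion{t}$ is traced, via the dealternation bound, to at most $\Oh{\ell^2}$ nodes of $S$ with bags meeting $\component{t}$; the lca-closure step multiplies the total count by at most two; and each node of $T^o$ carries a bag of size at most $k + 1 \leq \ell + 1$. Multiplying these factors gives $|X \cap \component{t}| \leq \Oh{\ell^4}$, which is the desired bound. The main obstacle is extracting the precise combinatorial version of the Dealternation Lemma used here, since the original statement is phrased in terms of automaton runs rather than in terms of covers by subtrees of bounded size; proving this combinatorial corollary, together with arguing that the subsequent lca-closure stays inside the expected region of $T^o$, is where the bulk of the technical work is concentrated.
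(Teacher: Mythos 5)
Your high-level setup --- take bags of an optimum decomposition over a carefully chosen lca-closed set, invoke \cref{lem:tw-torso-for-sum-of-bags} for the closure property, and invoke the Dealternation Lemma to control smallness --- matches the paper's strategy. But the core of the per-appendix counting argument has a genuine gap, and I do not think it can be repaired without essentially importing the factor structure that the paper uses explicitly.

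The critical unjustified step is the claim that only those $S_s$ with $\bag(s) \cap \adhesion{t} \neq \emptyset$ can contribute nodes of $T^o$ whose bag meets $\component{t}$. For an arbitrary optimum-width tree decomposition $T^o$ and arbitrary covering subtrees $S_s$, this is simply false. The separator $\adhesion{t}$ prevents \emph{edges} between $\component{t}$ and $\bags(\Tpref) \setminus \adhesion{t}$, but it does not prevent a single bag $\bag^o(u)$ of $T^o$ from containing one non-adjacent vertex from each side. Nothing in your construction forces the small covering subtree $S_s$ --- or, more seriously, the connecting paths inside $S_s$ and the lca-closure nodes --- to avoid bags that happen to also intersect $\component{t}$, even when $\bag(s)$ is very far from $\adhesion{t}$ in $T$. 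So the $\leq \ell + 1$ adhesion vertices you want to trace through do not account for all of $X \cap \component{t}$.

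The second gap is the claimed corollary of the Dealternation Lemma, namely ``each $S_s$ can be chosen to be a subtree of $T^o$ of size $\Oh{\ell^2}$.'' The Dealternation Lemma does not give a statement about covering a bag of $T$ by a small subtree of an arbitrary optimum $T^o$. It produces a very specific object, an \emph{elimination forest} $F$ with $V(F) = V(G)$ whose induced decomposition has width $\tw{G}$, such that every $\component{t}$ is a disjoint union of at most $\Oh{\ell^3}$ \emph{factors} of $F$ (forest factors and context factors). The factor structure is precisely what makes the counting work in the paper: for a forest factor $\Phi \subseteq \component{t}$ one gets $\Phi \cap X = \emptyset$ outright, and for a context factor $\Phi$ one shows $\Phi \cap X \subseteq \bag'(s)$ for a single node $s$ of $F$, hence $|\Phi \cap X| \leq \ell + 1$. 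Your reformulation discards exactly this structure, and the subsequent ``trace through adhesion'' accounting cannot recover it. Note also that in the paper's construction there is no need for covering subtrees at all: since $V(F) = V(G)$, each vertex $v \in \bags(\Tpref)$ is itself a node of $F$, and one simply takes the lca-closure of $\bags(\Tpref)$ inside $F$. Your extra layer of covering subtrees inside a generic $T^o$ introduces a degree of freedom that the argument then cannot control. To make your approach sound, you would have to take $T^o$ to be the induced decomposition of the dealternation forest $F$ and analyse the factors directly, at which point you are writing the paper's proof.
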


The proof of \cref{lem:small-closure-lemma} uses the machinery of Bojańczyk and Pilipczuk~\cite{DBLP:journals/lmcs/BojanczykP22} in the form of the Dealternation Lemma: intuitively, since $T$ is a~bounded-width decomposition of $G$, there exists a~well-structured tree decomposition $U$ of $G$ such that for each appendix $t$ of \Tpref, the component $\component{t}$ can be partitioned into a~bounded number of well-structured ``chunks'' of $U$.
The closure $X$ will be constructed so that each chunk of $U$ contains only a~bounded number of vertices from $X$. Thus, $X$ will include a~bounded number of vertices from each $\component{t}$.

We now present a~formal version of the Dealternation Lemma.
The description follows the exposition in \cite{DBLP:journals/lmcs/BojanczykP22}, with some details irrelevant to us omitted.

\paragraph{Elimination forests.}
The output of the Dealternation Lemma is a~tree decomposition of $G$ presented as the so-called \emph{elimination forest}:

\begin{definition}
  An~\emph{elimination forest} of $G$ is a~rooted forest $F$ on vertex set $V(G)$ with the following property: if $uv \in E(G)$, then $u$ and $v$ are in the ancestor-descendant relationship in~$F$.
\end{definition}

The following definition shows how to turn an~elimination forest of $G$ into a~tree decomposition:


\begin{definition}
  \label{def:td-of-elim-forest}
  Assume that $F$ is an~elimination forest of $G$ (so $V(F) = V(G)$).
  A~tree decomposition \emph{induced} by $F$ is the tree decomposition $(F, \bag)$, where for each $v \in V(G)$, we set $\bag(v)$ to contain $v$ and each ancestor of $v$ connected by an~edge of $G$ to any descendant of $v$.
\end{definition}

In the definition above, we slightly abuse the notation and allow the shape of a tree decomposition to be a rooted forest, rather than a rooted tree; all other conditions remain the same. Note that such a forest decomposition can be always turned into a tree decomposition of same width by selecting one root $r$ and making all other roots children of $r$.

That $(F, \bag)$ constructed as in \cref{def:td-of-elim-forest} is indeed a~tree decomposition of $G$ is argued in \cite[Section~3]{DBLP:journals/lmcs/BojanczykP22}.
It is now natural to define the \emph{width} of an~elimination forest $F$ as the width of the tree decomposition induced by $F$.
Clearly, each elimination forest has width lower-bounded by~$\tw{G}$.
On the other hand, every graph $G$ has an elimination forest of width exactly $\tw{G}$~\cite[Lemma~3.6]{DBLP:journals/lmcs/BojanczykP22}.

\paragraph{Factors.}
Intuitively, \emph{factors} are well-structured ``chunks'' of a~forest $F$.
Formally, a~factor is a~subset of $V(F)$ that is either:
\begin{itemize}
  \item a~\emph{forest factor}: a~union of a~nonempty set of rooted subtrees of $F$, whose roots are all siblings to each other; or
  \item a~\emph{context factor}: a~nonempty set of the form $\Phi_1 \setminus \Phi_2$, where $\Phi_1$ is a~rooted subtree and $\Phi_2 \subseteq \Phi_1$ is a~forest factor; the root of a~context factor is the root of $\Phi_1$, while the roots of the tree factors in $\Phi_2$ are called the \emph{appendices}.
\end{itemize}

\ms{picture}

\paragraph{Dealternation Lemma.}
We can now state the Dealternation Lemma.

\begin{lemma}[Dealternation Lemma, \cite{DBLP:journals/lmcs/BojanczykP22}]
\label{lem:dealternation-lemma}
  There exists a~function $f(\ell) \in \Oh{\ell^3}$ such that the following holds.
  Let $(T, \bag)$ be a~tree decomposition of $G$ of width at most $\ell$.
  Then there exists an~elimination forest $F$ of $G$ of width $\tw{G}$ such that for every node $t \in V(T)$, the set $\component{t}$ is a~disjoint union of at most $f(\ell)$ factors of $F$.
\end{lemma}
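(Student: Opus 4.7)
My plan is to follow the approach of Boja\'nczyk and Pilipczuk from~\cite{DBLP:journals/lmcs/BojanczykP22}. The starting point is an arbitrary optimal elimination forest $F$ of $G$, which exists with width exactly $\tw{G}$ by the remark preceding the lemma. Such an $F$ need not respect the structure of $T$, so the task is to modify $F$ while preserving its width until it becomes ``aligned'' with $T$ in the sense required by the conclusion.

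To formalize alignment, I would introduce a misalignment potential $\Phi(F)$. For each node $t \in V(T)$, let $\mu_t(F)$ be the minimum number of factors of $F$ whose disjoint union equals $\component{t}$, and set $\Phi(F) = \sum_{t \in V(T)} \mu_t(F)$. Among all elimination forests of $G$ of width $\tw{G}$, pick one minimizing $\Phi$; this minimum is attained since $V(G)$ admits only finitely many elimination forests. It then suffices to prove that every such minimizer satisfies $\mu_t(F) \leq f(\ell)$ for every $t$ and some $f(\ell) \in \Oh{\ell^3}$.

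The crux of the proof is a local exchange argument. If $\mu_t(F)$ is large for some $t$, then along some root-to-leaf chain of $F$, membership in $\component{t}$ alternates many times. Each such alternation must ``cross'' the separator $\adhesion{t} \cup \bag(t)$, which has size at most $\ell + 1$, so a pigeonhole argument applied to a sufficiently long alternation pattern produces two crossings that use the same boundary vertices. I would then rearrange $F$ locally around these crossings, re-rooting or reordering subtrees along the chain, so that the alternation count strictly decreases, contradicting the minimality of $\Phi(F)$. This ``uncrossing'' paradigm is the same one underlying the $2$-approximation of Korhonen~\cite{Korhonen21} and the refinement operation of Korhonen and Lokshtanov~\cite{DBLP:journals/corr/abs-2211-07154}.

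The main obstacle is verifying that the local modification preserves the width of $F$. The width is a global quantity --- the bag of any vertex $v$ in the decomposition induced by $F$ depends on all descendants of $v$ and their external edges --- so a single swap must be analyzed through its effect on every induced bag along the chain. The standard tool here is a Menger-type argument that uses the $\Oh{\ell}$-sized boundary $\adhesion{t} \cup \bag(t)$ to produce vertex-disjoint paths realizing the swap without enlarging any bag. Once the swap is under control, the bound $f(\ell) \in \Oh{\ell^3}$ should emerge by combining $\Oh{\ell}$-sized witnesses at each of the three natural ``scales'' of the construction: alternations along a fixed chain, chains meeting the same boundary, and independent branches of $\component{t}$ inside $F$.
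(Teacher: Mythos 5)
The paper does not prove this statement at all: it is imported verbatim as a black box from Boja\'nczyk and Pilipczuk~\cite{DBLP:journals/lmcs/BojanczykP22}, so there is no in-paper argument to compare yours against, and a citation would have sufficed. That said, since you offer a proof sketch, it should be judged on its own merits, and as it stands it has genuine gaps rather than being a complete alternative proof.

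The first gap is structural. You minimize the \emph{sum} $\Phi(F)=\sum_t \mu_t(F)$ and then want to conclude that \emph{each} $\mu_t(F)$ is at most $f(\ell)$. For this you must exhibit, whenever some single $\mu_t$ is too large, a modification of $F$ that decreases the total sum --- which requires controlling the effect of your local surgery on $\mu_{t'}$ for \emph{every other} node $t'$ of $T$ simultaneously. A swap that uncrosses an alternation relative to $\component{t}$ can easily create new alternations relative to $\component{t'}$ for ancestors or descendants $t'$ of $t$, so minimality of the sum gives you nothing about individual terms without a substantial additional argument. The second gap is the one you yourself flag as ``the main obstacle'' and then do not resolve: preserving the width of the elimination forest under the rearrangement. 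The bag of a vertex $v$ in the decomposition induced by $F$ is determined by which ancestors of $v$ see edges into the subtree of $v$, so re-rooting or reordering subtrees changes ancestor--descendant relations globally along the affected chain; asserting that ``a Menger-type argument'' handles this is not a proof, and in fact the known argument controls the new bags by showing they are contained in unions of boundedly many old bags rather than via disjoint paths. Finally, the claim that the bound $\Oh{\ell^3}$ ``should emerge'' from three scales is not a count; the exponent $3$ is exactly the kind of quantity that must be derived from an explicit case analysis of how factors of $F$ can meet $\component{t}$. In short, the outline correctly identifies the shape of the argument (extremal choice of $F$ plus an exchange step), but both load-bearing steps are placeholders, which is why the paper treats this lemma as an external result rather than reproving it.
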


\paragraph{Proof of the Small Closure Lemma.}
We now show how the Dealternation Lemma implies the Small Closure Lemma (\cref{lem:small-closure-lemma}).

\begin{proof}[Proof of \cref{lem:small-closure-lemma}]
  Recall that we are given: a~graph $G$ of treewidth at most $k$; a~tree decomposition $(T, \bag)$ of $G$ of width at most $\ell$, where $\ell \geq k$; and a~prefix $\Tpref$ of $T$. We are supposed to find a~small $k$-closure $X$ of $\bags(\Tpref)$.
  We stress that this requires that $\tw{\torso[G]{X}} \leq 2k + 1$.

  We begin by applying~\cref{lem:dealternation-lemma} to $(T, \bag)$ and getting an~elimination forest $F$ of $G$ of width $\tw{G}$, for which each $\component{t}$ for $t \in V(T)$ can be decomposed into at most $f(\ell) \in \Oh{\ell^3}$ factors of $F$.
  We remark that $V(F) = V(G)$.
  
  As argued in the paragraph following \cref{def:td-of-elim-forest}, there exists a~tree decomposition $(F, \bag')$ of $G$ of width $\tw{G}$, where $\bag'(v)$ for $v \in V(G)$ is defined as the set containing $v$ and every ancestor of $v$ in $F$ incident to an~edge whose other endpoint is a~descendant of $v$.
  Let $W \coloneqq \bags_T(\Tpref)$ and $W'$ be the lca-closure of $W$ in $F$.
  We claim that the set
    \[ X \coloneqq \bigcup_{v \in W'} \bag'(v) \]
  is an $((\ell+1) \cdot f(\ell))$-small $k$-closure of $W$ with respect to $T$.
  This will conclude the proof.

  \begin{claim}
    \label{clm:scl-closure-verification}
    $X$ is a~$k$-closure of $\bags_T(\Tpref)$.
    \begin{claimproof}
      Since $\bags_T(\Tpref) \subseteq W'$ and $v \in \bag'(v)$ for each $v \in W'$, we have that $\bags_T(\Tpref) \subseteq X$, as required.
      It remains to show that $\tw{\torso[G]{X}} \leq 2k + 1$.
      However, as $W'$ is lca-closed in $F$, \cref{lem:tw-torso-for-sum-of-bags} applies to $W'$ and (every component of) tree decomposition $(F, \bag')$, finishing the proof.
    \end{claimproof}
  \end{claim}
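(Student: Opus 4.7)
The plan is to verify directly the two defining conditions of a $k$-closure of $W = \bags_T(\Tpref)$: namely, that $X \supseteq W$ and that $\tw{\torso[G]{X}} \leq 2k+1$.

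First I would check the inclusion $X \supseteq W$. By construction $W'$ is the lca-closure of $W$ in $F$, so in particular $W \subseteq W'$. Moreover, by the definition of $\bag'$ from \cref{def:td-of-elim-forest}, every vertex $v \in V(G)$ satisfies $v \in \bag'(v)$. Combining these two observations gives $W \subseteq \bigcup_{v \in W'} \bag'(v) = X$, as desired.

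Second, I would bound the treewidth of $\torso[G]{X}$ by invoking \cref{lem:tw-torso-for-sum-of-bags}. The ingredients needed are a tree decomposition of $G$ of width at most $k$ together with an lca-closed subset of its nodes. For the first, I take $(F, \bag')$ induced by the elimination forest $F$ produced by the Dealternation Lemma: by our choice of $F$, its width equals $\tw{G} \le k$, and $X$ is exactly $\bags_F(W')$ by the defining formula for $X$. For the second, $W'$ is lca-closed in $F$ by construction. Applying \cref{lem:tw-torso-for-sum-of-bags} then yields $\tw{\torso[G]{X}} \leq 2\tw{G} + 1 \leq 2k + 1$.

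The only subtlety is that \cref{lem:tw-torso-for-sum-of-bags} is stated for a tree decomposition whose shape is a tree, whereas $F$ is in general a forest. The clean way around this (as parenthetically hinted in the proof) is to apply the lemma separately to each tree component $F_i$ of $F$ together with $W'_i \coloneqq W' \cap V(F_i)$, which is lca-closed in $F_i$. Since in an elimination forest every edge of $G$ joins two vertices in an ancestor--descendant relation, the connected components of $G$ are partitioned among the $V(F_i)$'s, so $\torso[G]{X}$ splits as the disjoint union of $\torso[G[V(F_i)]]{X \cap V(F_i)}$; each summand has treewidth at most $2k+1$ by the componentwise application, and the treewidth of a disjoint union equals the maximum of the summands. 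This is a routine patch, so I do not expect it to pose a real obstacle.
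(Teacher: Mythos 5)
Your proposal is correct and matches the paper's own argument: the inclusion $\bags_T(\Tpref) \subseteq X$ follows from $\bags_T(\Tpref) \subseteq W'$ and $v \in \bag'(v)$, and the treewidth bound comes from applying \cref{lem:tw-torso-for-sum-of-bags} to the lca-closed set $W'$ in the decomposition $(F, \bag')$, componentwise over the trees of the forest $F$ (exactly the parenthetical "(every component of)" in the paper). Your explicit justification of the componentwise splitting of the torso is a routine elaboration of what the paper leaves implicit.
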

  
  \begin{claim}
    \label{clm:scl-factor-small-intersection}
    Let $t$ be an~appendix of $\Tpref$ and let $\Phi$ be a~factor of $F$ with $\Phi \subseteq \component{t}$.
    Then
    \[ |\Phi \cap X| \leq \ell+1. \]
    \begin{claimproof}
      Since $t$ is an~appendix of $\Tpref$, it follows from the definition of $\component{t}$ that $\component{t}$ is disjoint with $\bags_T(\Tpref)$.
      Thus, $\Phi$ is also disjoint with $\bags_T(\Tpref)$.
      
      First, assume that $\Phi$ is a~forest factor.
      Then $\Phi$ is downwards closed: if a~vertex belongs to $\Phi$, then all its descendants also belong to $\Phi$.
      Since $W'$ consists of $\bags_T(\Tpref)$ and a~subset of ancestors of $\bags_T(\Tpref)$, it follows that $\Phi$ is disjoint with $W'$.
      Now, for each $v \in W'$, the set $\bag'(v)$ comprises $v$ and some ancestors of $v$.
      So again, $\Phi$ is disjoint with each set $\bag'(v)$ for $v \in W'$ and thus disjoint with $X$.
      
      Now assume that $\Phi$ is a~context factor.
      Recall that $\Phi = \Phi_1 \setminus \Phi_2$, where $\Phi_1$ is a~subtree of $F$ rooted at some vertex $r$, and $\Phi_2 \subseteq \Phi_1$ is a~forest factor.
      The appendices of $\Phi$ have a~common parent, which we call $s$.
      By the disjointness of $\bags_T(\Tpref)$ with $\Phi$, we see that each vertex of $\bags_T(\Tpref)$ is either outside of $\Phi_1$ or inside some subtree of $\Phi_2$.

      Since $W'$ is the~lca-closure of $\bags_T(\Tpref)$, we have that $W' = \{\lca{u}{v}\,\colon\,u, v \in \bags_T(\Tpref)\}$.
      Consider $u, v \in \bags_T(\Tpref)$ such that $\lca{u}{v} \in \Phi_1$.
      If either $u$ or $v$ is outside of $\Phi_1$, then $\lca{u}{v}$ is also outside of $\Phi_1$.
      Therefore, both $u$ and $v$ belong to $\Phi_2$.
      In this case, it can be easily seen that $\lca{u}{v}$ either belongs to $\Phi_2$ (if both $u$ and $v$ are from the same rooted tree of $\Phi_2$) or is equal to $s$ (otherwise).
      Thus,
      \[ \Phi_1 \cap W' \subseteq \Phi_2 \cup \{s\}. \]
      Note that $\Phi_2 \cup \{s\}$ is a~connected subgraph of $F$ containing $s$ and some rooted subtrees attached to~$s$.

      Again, for each $v \in W'$, the set $\bag'(v)$ comprises $v$ and some ancestors of $v$.
      Hence, for $v \in W' \setminus \Phi_1$, the set $\bag'(v)$ is disjoint with $\Phi$.
      Therefore,
      \[ \Phi \cap X \subseteq \bigcup \{ \Phi \cap \bag'(v) \,\colon\, v \in \Phi_2 \cup \{s\} \}. \]

      Now let $v \in \Phi_2 \cup \{s\}$ and $x \in \Phi \cap \bag'(v)$.
      By the definition of $\bag'$, $x$ is either:
      \begin{itemize}
        \item equal to $v$. Then we have that $x \in \Phi_2 \cup \{s\}$ and $x \in \Phi$, so necessarily $x = s$; or
        \item an~ancestor of $v$ that is connected by an~edge to a~descendant $y$ of $v$.
          But since also $x \in \Phi$, we get that $x$ is also an~ancestor of $s$.
          Also, $y$ is a~descendant of $s$ (as $y$ is a~descendant of $v$ and $v$ is a~descendant of $s$).
          We conclude that $x \in \bag'(s)$.
      \end{itemize}
      
      In both cases we have $x \in \bag'(s)$ and therefore
      \[ \Phi \cap X \subseteq \bag'(s). \]
      \ms{consider a picture somewhere in the proof}
      As $(F, \bag')$ is a~decomposition of width $\tw{G} \leq k \leq \ell$, the statement of the claim follows immediately.
    \end{claimproof}
  \end{claim}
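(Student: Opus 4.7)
The plan is to split on whether $\Phi$ is a forest factor or a context factor of $F$. Common to both cases is the initial observation that since $t$ is an appendix of $\Tpref$, the set $\component{t}$ is disjoint from $W \coloneqq \bags_T(\Tpref)$, and hence $\Phi \cap W = \emptyset$. Because $X = \bigcup_{w \in W'} \bag'(w)$ and each $\bag'(w)$ consists only of $w$ together with certain ancestors of $w$ in $F$, bounding $|\Phi \cap X|$ amounts to identifying which $w \in W'$ can have themselves or their ancestors in $\Phi$.

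If $\Phi$ is a forest factor, it is downward-closed in $F$. Every $w \in W'$ is the lca of two elements of $W$, so it has a descendant in $W$; since $W \cap \Phi = \emptyset$ and $\Phi$ is downward-closed, $w \notin \Phi$. The same argument excludes any ancestor of $w$ from $\Phi$, because any such ancestor would drag $w$ itself into $\Phi$ by downward-closure. Thus $\bag'(w) \cap \Phi = \emptyset$ for every $w \in W'$, and in fact $\Phi \cap X = \emptyset$, which is far stronger than needed.

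If $\Phi$ is a context factor $\Phi_1 \setminus \Phi_2$, with $\Phi_1$ a subtree rooted at some $r$ and $\Phi_2$ a union of sibling subtrees of $F$ whose common parent is the unique vertex $s \in \Phi$, the first step is to establish that $W' \cap \Phi_1 \subseteq \Phi_2 \cup \{s\}$. Indeed, if $w = \lca{u_1}{u_2}$ lies in $\Phi_1$ then both $u_1, u_2$ do, so $u_1, u_2 \in W \cap \Phi_1 \subseteq \Phi_2$, and the lca of two vertices in the sibling-subtree forest $\Phi_2$ is either inside one of those subtrees (hence in $\Phi_2$) or equals the common parent $s$. The second step is to show that for every such $w$, $\bag'(w) \cap \Phi \subseteq \bag'(s)$. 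Any $x \in \bag'(w)$ is either $w$ itself, which forces $w = s$ since $\Phi_2 \cap \Phi = \emptyset$, or a strict ancestor of $w$ adjacent in $G$ to a descendant $y$ of $w$; in the latter case $y$ is also a descendant of $s$, and if $x \in \Phi$ then $x$ cannot lie strictly between $w$ and $s$, because every such intermediate vertex sits in $\Phi_2$ and thus cannot belong to $\Phi$. Hence $x$ is $s$ or an ancestor of $s$ adjacent to a descendant of $s$, which is precisely the definition of membership in $\bag'(s)$. Combining, $\Phi \cap X \subseteq \bag'(s)$, and since $(F, \bag')$ has width $\tw{G} \leq k \leq \ell$, we conclude $|\Phi \cap X| \leq \ell+1$.

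The main obstacle is the context-factor case, in particular the bookkeeping that funnels every contribution from the potentially many $w \in W' \cap \Phi_1$ into the single bag $\bag'(s)$. It is this concentration that ultimately produces a bound depending only on $\ell$ rather than on $|\Phi_2|$ or the number of elements of $W'$ lying in $\Phi_1$; losing track of it would force the use of different bounds on $|W'|$ and defeat the overall application toward the Small Closure Lemma.
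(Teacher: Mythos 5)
Your proof is correct and follows essentially the same route as the paper's: empty intersection in the forest-factor case via downward-closure, and in the context-factor case funnelling $W'\cap\Phi_1$ into $\Phi_2\cup\{s\}$ and then all of $\Phi\cap X$ into the single bag $\bag'(s)$. The only elided step is the one-line observation that for $w\in W'\setminus\Phi_1$ the whole of $\bag'(w)$ lies outside $\Phi_1\supseteq\Phi$ (since $\Phi_1$ is a rooted subtree, ancestors of a vertex outside it stay outside), which is immediate.
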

  
  \begin{claim}
    \label{clm:scl-closure-intersection}
    Let $t$ be an~appendix of $\Tpref$. Then
    \[ |\component{t} \cap X| \leq (\ell + 1)f(\ell). \]
    \begin{claimproof}
      By the Dealternation Lemma (\cref{lem:dealternation-lemma}), $\component{t}$ can be partitioned into at most $f(\ell)$ disjoint factors of $F$.
      By \cref{clm:scl-factor-small-intersection}, each such factor intersects $X$ in at most $\ell + 1$ elements.
    \end{claimproof}
  \end{claim}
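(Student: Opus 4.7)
The plan is to derive the bound by combining the Dealternation Lemma with the factor-level estimate just established in \cref{clm:scl-factor-small-intersection}. The key observation is that $\component{t}$ decomposes into a bounded number of well-structured ``chunks'' of the elimination forest $F$, and each such chunk already contributes only $\Oh{\ell}$ vertices to the closure $X$.

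First, I would apply the Dealternation Lemma (\cref{lem:dealternation-lemma}) to the node $t$ of the decomposition $(T, \bag)$. This gives a partition of $\component{t}$ into at most $f(\ell) \in \Oh{\ell^3}$ pairwise disjoint factors $\Phi_1, \ldots, \Phi_m$ of $F$. This is exactly the structural guarantee we need: for every node of the input decomposition $(T,\bag)$, the component below it splits into at most $f(\ell)$ factors of $F$.

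Next, I would invoke \cref{clm:scl-factor-small-intersection} on each $\Phi_i \subseteq \component{t}$ separately, obtaining $|\Phi_i \cap X| \leq \ell + 1$. Summing over $i = 1, \ldots, m$ and using disjointness of the factors gives
\[
  |\component{t} \cap X| \;=\; \sum_{i=1}^{m} |\Phi_i \cap X| \;\leq\; m \cdot (\ell + 1) \;\leq\; (\ell + 1)\, f(\ell),
\]
which is the desired bound.

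I do not expect any obstacle at this step: the final assembly is a plain counting argument. The combinatorial substance has already been handled inside \cref{clm:scl-factor-small-intersection}, where one must treat forest factors and context factors separately and, for context factors, show that all contributions of $X$ to the factor collapse into a single bag $\bag'(s)$ at the common parent of the appendices. Given that preparatory claim and the Dealternation Lemma, the statement follows immediately, and the resulting function $g(\ell) = (\ell+1) f(\ell) \in \Oh{\ell^4}$ matches the bound promised in the Small Closure Lemma.
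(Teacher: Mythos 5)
Your proposal is correct and follows exactly the same route as the paper's proof: apply the Dealternation Lemma to split $\component{t}$ into at most $f(\ell)$ disjoint factors, then invoke \cref{clm:scl-factor-small-intersection} on each and sum. No further comments needed.
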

  
  The proof of the Small Closure Lemma follows immediately from \cref{clm:scl-closure-verification,clm:scl-closure-intersection}.
\end{proof}

\subsection{Minimum-weight closures and Closure Linkedness Lemma}
\label{ssec:closure-linkedness-lemma}

Having established the existence of $c$-small closures for sufficiently large $c$, we now show a~structural result about such closures: the Closure Linkedness Lemma.
Intuitively, we prove that if $X \supseteq W$ is a~$c$-small closure of $W$ optimal with respect to some measure, then each connected component $C$ of $G - X$ is well-connected to $W$.
This property can be thought of as an~analog of a~similar result in a~work of Korhonen and Lokshtanov \cite[Lemma~5.1]{DBLP:journals/corr/abs-2211-07154}: the difference is that we work with optimal $c$-small closures, compared to just optimal closures in~\cite{DBLP:journals/corr/abs-2211-07154}.

From now on, let $\omega \colon V(G) \to \Z$ be an~arbitrary weight function.
For any subset $A \subseteq V(G)$, let $\omega(A) \coloneqq \sum_{v \in A} \omega(v)$.
First, let us define a~few notions involving weight functions:

\begin{definition}
  \label{def:d-minimal-closure}
  Fix $k, c \in \N$ and a~weight function $\omega\colon V(G) \to \Z$.
  Let $G$ be a~graph of treewidth at most $k$, $W \subseteq V(G)$, and $X$ be a~$c$-small $k$-closure of $W$.
  We say that $X$ is \emph{$\omega$-minimal} if for every $c$-small $k$-closure $X'$ of $W$, one of the following conditions holds:
  \begin{itemize}
    \item $|X'| > |X|$, or
    \item $|X'| = |X|$ and $\omega(X') \geq \omega(X)$.
  \end{itemize}
\end{definition}

\begin{definition}[\cite{DBLP:journals/corr/abs-2211-07154}]
  \label{def:d-linked}
  Let $G$ be a~graph, $A, B \subseteq V(G)$, and $\omega\,\colon\,V(G) \to \Z$ be a~weight function.
  We say that a~set $A$ is \emph{$\omega$-linked} into $B$ if each $(A,B)$-separator $S$ satisfies either of the following conditions:
  \begin{itemize}
    \item $|S| > |A|$,
    \item $|S| = |A|$ and $\omega(S) \geq \omega(A)$.
  \end{itemize}
\end{definition}

By definition, if $A$ is $\omega$-linked into $B$, then $A$ is also linked into $B$.

We can now state and prove the Closure Linkedness Lemma:

\begin{lemma}[Closure Linkedness Lemma]
  \label{lem:closure-linkedness-lemma}
  Fix $k, c \in \N$ and a~weight function $\omega\,\colon\,V(G) \to \Z$.
  Let $G$ be a~graph of treewidth at most $k$, $(T, \bag)$ be a~tree decomposition of $G$ (of any width), $\Tpref$ be a~prefix of $T$, and $X$ be an~$\omega$-minimal $c$-small $k$-closure of $\bags(\Tpref)$.
  Then for each $C \in \cc{G - X}$, the set $\neighopen{C}$ is $\omega$-linked into $\bags(\Tpref)$.
\end{lemma}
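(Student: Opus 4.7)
The plan is to argue by contradiction: assume some $C \in \cc{G - X}$ has $\neighopen{C}$ not $\omega$-linked into $\bags(\Tpref)$, and construct a~$c$-small $k$-closure $X'$ of $\bags(\Tpref)$ whose pair $(|X'|, \omega(X'))$ is lexicographically strictly smaller than $(|X|, \omega(X))$, contradicting the $\omega$-minimality of $X$. The strategy follows the ``swap'' approach of Korhonen and Lokshtanov~\cite{DBLP:journals/corr/abs-2211-07154}: pick a $(\neighopen{C}, \bags(\Tpref))$-separator $S$ witnessing the failure of linkedness, chosen to minimize $|S|$, then $\omega(S)$, then the $C$-side of $S$; and set $X' := (X \setminus D) \cup S$, where $D$ is the vertex set of the connected component of $G - S$ containing $C$.

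Let $t_0$ be the unique appendix of $\Tpref$ with $C \subseteq \component{t_0}$. As a preliminary reduction, every simple path in $G$ from $\neighopen{C}$ to $\bags(\Tpref)$ has an initial segment lying entirely inside $G[\component{t_0} \cup \bags(\Tpref)]$, since it must exit $\component{t_0}$ through $\adhesion{t_0} \subseteq \bags(\Tpref)$. Consequently, the $(\neighopen{C}, \bags(\Tpref))$-separators of $G$ and of $G[\component{t_0} \cup \bags(\Tpref)]$ coincide, and one may assume $S \subseteq \component{t_0} \cup \bags(\Tpref)$. Standard arguments then give $D \subseteq \component{t_0}$, $D \cap S = \emptyset$, $\bags(\Tpref) \cap D = \emptyset$, and --- since $\bags(\Tpref) \subseteq X$ --- also $S \setminus X \subseteq \component{t_0}$.

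Four properties of $X'$ then need to be verified. Containment $\bags(\Tpref) \subseteq X'$ is immediate from $\bags(\Tpref) \cap D = \emptyset$. For $c$-smallness: when $t \neq t_0$, the set $X' \cap \component{t}$ coincides with $X \cap \component{t}$; and when $t = t_0$, the balance $|X' \cap \component{t_0}| = |X \cap \component{t_0}| - |X \cap D| + |S \setminus X|$, combined with the inclusions $\neighopen{C} \setminus S \subseteq X \cap D$ and $\neighopen{C} \cap S \subseteq X \cap S$ (both from $\neighopen{C} \subseteq X$) together with $|S| \leq |\neighopen{C}|$, yields $|S \setminus X| \leq |X \cap D|$, so the bound $c$ is preserved. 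The global version of the same calculation gives $|X'| - |X| \leq |S| - |\neighopen{C}| \leq 0$, strict when $|S| < |\neighopen{C}|$; in the tight case $|S| = |\neighopen{C}|$ and $\omega(S) < \omega(\neighopen{C})$, chasing equalities forces $X \cap D = \neighopen{C} \setminus S$ and $S \cap X = S \cap \neighopen{C}$, whence $\omega(X') - \omega(X) = \omega(S) - \omega(\neighopen{C}) < 0$.

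The main obstacle will be showing that $X'$ remains a $k$-closure, i.e., $\tw{\torso[G]{X'}} \leq 2k+1$. The plan is to build a tree decomposition of $\torso[G]{X'}$ of the required width directly from an optimum-width tree decomposition of $G$ (which exists because $\tw{G} \leq k$), by adapting \cref{lem:tw-torso-for-sum-of-bags}. The key structural observation is that $S$ separates $D$ from the rest of $G$; combined with $S \subseteq X'$ and $X' \cap D = \emptyset$, this implies that every path in $G$ contributing an edge to $\torso[G]{X'}$ either avoids $D$ entirely, or enters and exits $D$ only via endpoints in $\neighopen[G]{D} \subseteq S$. The paths avoiding $D$ are handled by an lca-closure construction on the optimum-width tree decomposition restricted to $V(G) \setminus D$, in the spirit of \cref{lem:tw-torso-for-sum-of-bags}; the paths passing through $D$ contribute the clique on $\neighopen[G]{D}$, of size at most $|S| \leq |\neighopen{C}| \leq 2k+2$ (using that $\neighopen{C}$ is a clique in $\torso[G]{X}$, whose treewidth is at most $2k+1$), which fits in a single bag of width $\leq 2k+1$ attached at the appropriate location.
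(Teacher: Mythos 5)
Your overall strategy — contradiction via a ``swap'' $X' = (X \setminus D) \cup S$, with the separator $S$ chosen to break the failure of $\omega$-linkedness, and a lexicographic bookkeeping argument for $|X'|$, $\omega(X')$, and $c$-smallness — matches the paper (\cref{clm:cll-constructed-is-small} and the surrounding computations), and your counting there is correct, indeed slightly more explicit. The divergence, and the genuine gap, is exactly where you flag the ``main obstacle'': the verification that $X'$ remains a $k$-closure, i.e., $\tw{\torso[G]{X'}} \leq 2k+1$.

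Your plan for this step does not work. You propose building a width-$(2k+1)$ tree decomposition of $\torso[G - D]{X'}$ by ``an lca-closure construction on the optimum-width tree decomposition [of $G$] restricted to $V(G) \setminus D$, in the spirit of \cref{lem:tw-torso-for-sum-of-bags}.'' But \cref{lem:tw-torso-for-sum-of-bags} bounds the treewidth of $\torso[G]{\bags(S)}$ only when $S$ is an lca-closed set of \emph{nodes} of a width-$k$ decomposition, so that the set being torsoed is literally a union of bags. Here $X$ is an arbitrary $\omega$-minimal $c$-small $k$-closure — it is not assumed to arise as a union of bags of any decomposition of $G$ — and the swap $X'$ certainly is not. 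Taking the lca-closure of nodes \emph{hitting} $X'$ gives a set $Y \supsetneq X'$ and controls $\tw{\torso[G]{Y}}$, but this does \emph{not} upper-bound $\tw{\torso[G]{X'}}$: torsoing is not monotone under taking subsets (e.g., in $K_{1,m}$ the torso of the leaves is $K_m$), so shrinking to $X'$ can blow up the treewidth. There is also a second, smaller gap: even granting a good decomposition of $\torso[G-D]{X'}$, the clique on $\neighopen[G]{D}$ can be attached only to a bag that already contains $\neighopen[G]{D}$, and $\neighopen[G]{D}$ need not be a clique of $\torso[G-D]{X'}$, so no such bag is guaranteed to exist.

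The paper resolves precisely this step by the Pulling Lemma (\cref{lem:pulling-lemma}, Korhonen--Lokshtanov~\cite{DBLP:journals/corr/abs-2211-07154}), which is applied to the width-$(2k+1)$ decomposition of $\torso[G]{X}$ (not to a decomposition of $G$). Its hypothesis is exactly the linkedness of $S$ into $\bag_X(r) \cap (S \cup B)$, and this is discharged by the minimality of $|S|$: $S$ is linked into $\neighopen{C}$, which, being a clique of $\torso[G]{X}$, is entirely contained in some bag $\bag_X(r)$. Your argument correctly establishes all the ingredients that feed into this (in particular the minimality of $|S|$ and the clique property of $\neighopen{C}$), but stops short of invoking, or reproving, the Pulling Lemma itself, and the replacement you sketch is not a valid substitute.
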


We remark that it follows from \cref{lem:closure-linkedness-lemma} and the Small Closure Lemma that if the width of the decomposition $(T, \bag)$ is $\ell \geq k$, then for some large enough constant $c_\ell \in \Oh{\ell^4}$ there exists a~$c_\ell$-small $k$-closure $X$ of $\bags(\Tpref)$ such that the neighborhood of each connected component of $G - X$ is $\omega$-linked into $\bags(\Tpref)$.
In fact, any such $\omega$-minimal closure will have this property.

The proof of the Closure Linkedness Lemma proceeds by assuming that some set $\neighopen{C}$ is not $\omega$-linked into $\bags(\Tpref)$; then, a~small $(\neighopen{C}, \bags(\Tpref))$-separator will exist.
This separator will be used to construct a~new $k$-closure $X'$ of $\bags(\Tpref)$ with smaller weight, thus contradicting the $\omega$-minimality of $X$.
However, in order to prove that $X'$ is a~$k$-closure of $\bags(\Tpref)$, one needs to show that $\torso[G]{X'}$ has sufficiently small treewidth.
In order to facilitate this argument, we will use a~useful technical tool from the work of Korhonen and Lokshtanov~\cite{DBLP:journals/corr/abs-2211-07154}:

\begin{lemma}[Pulling Lemma, {\cite[Lemma 4.8]{DBLP:journals/corr/abs-2211-07154}}]
  \label{lem:pulling-lemma}
  Let $G$ be a~graph, $X \subseteq V(G)$ and $(T, \bag)$ be a~tree decomposition of $\torso[G]{X}$.
  Let $(A, S, B)$ be a~separation of $G$ satisfying the following: there exists a~node $r \in V(T)$ such that $S$ is linked into $\bag(r) \cap (S \cup B)$.
  Let also $X' \coloneqq (X \cap A) \cup S$.
  Then, there exists a~tree decomposition $(T, \bag')$ of $\torso[G]{X'}$ of width not exceeding the width of $(T, \bag)$.
\end{lemma}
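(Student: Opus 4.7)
The plan is to build $\bag'$ by ``pulling back'' the vertices of $\bag(r)\cap B$ that are targeted by the linking paths into their corresponding vertices in $S$. By Menger's theorem applied to the hypothesis that $S$ is linked into $\bag(r)\cap(S\cup B)$, there exist $|S|$ vertex-disjoint paths $(P_s)_{s\in S}$ in $G$, each running from $s\in S$ to a distinct vertex $f(s)\in \bag(r)\cap(S\cup B)$. Because $(A,S,B)$ is a separation of $G$ and all path endpoints lie in $S\cup B$, we may take each $P_s$ to live entirely in $G[S\cup B]$. Normalize the choice of paths so that $f(s)=s$ via the trivial one-vertex path whenever $s\in S\cap\bag(r)$; consequently $f$ restricts to an injection of $S\setminus\bag(r)$ into $\bag(r)\cap B$, the only $S$-vertices appearing on any $P_s$ are its endpoints, and each vertex of $B$ belongs to at most one $P_s$.

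Define the substitution $\phi\colon X\to X'\cup\{\bot\}$ by
\[
  \phi(v) \;=\; \begin{cases}
    v & \text{if }v\in X\cap(A\cup S),\\
    s & \text{if }v\in X\cap B\text{ lies on }P_s\text{ for the unique such }s\in S,\\
    \bot & \text{otherwise},
  \end{cases}
\]
and set $\bag'(t)\coloneqq \phi(\bag(t))\setminus\{\bot\}$ for every $t\in V(T)$. The width bound $|\bag'(t)|\le|\bag(t)|$ is immediate, since $\phi$ can only collapse or delete entries of $\bag(t)$. A short calculation also shows $S\subseteq\bag'(r)$: elements of $S\cap\bag(r)$ survive by the first rule, while each $s\in S\setminus\bag(r)$ is produced as $\phi(f(s))$, because $f(s)\in\bag(r)\cap B$ lies on $P_s$.

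The bulk of the verification concerns the edge and vertex conditions of $(T,\bag')$. For the edge condition, components of $G-X'$ split into two types: components contained in $A\setminus X$ coincide with components of $G-X$ in $A$, so their neighborhoods lie in $(X\cap A)\cup(X\cap S)\subseteq X\cap X'$ and are housed in some bag of $(T,\bag)$ that $\phi$ fixes pointwise; components contained in $B$ have neighborhoods inside $S$, hence inside $\bag'(r)$. Direct edges $uv\in E(G)$ with both endpoints in $X'$ are handled analogously, with the only nontrivial case being $u\in X\cap A$ and $v\in S\setminus X$: the component $D$ of $G-X$ containing $v$ satisfies $u\in N(D)$, and the first $X$-vertex $w$ of $P_v$ also lies in $N(D)$; the clique $N(D)$ occupies some bag $\bag(t_D)$ of $(T,\bag)$, so $u\in\bag'(t_D)$ via $\phi(u)=u$ and $v\in\bag'(t_D)$ via $\phi(w)=v$.

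The main obstacle, which I expect to dominate the technical effort, is the vertex condition for $u\in S$. Here the preimage $\phi^{-1}(u)$ consists of $u$ itself (if $u\in X$) together with every $X$-vertex on $P_u$ that lies in $B$, so the new support set $\{t\colon u\in\bag'(t)\}$ is the union of old support subtrees of several vertices, which is not obviously connected. The resolution is to enumerate the $X$-vertices of $P_u$ in order along the path as $y_1,\ldots,y_k$ and observe that each consecutive pair $y_i,y_{i+1}$ is joined in $P_u$ by a subpath internally disjoint from $X$, hence forms an edge of $\torso[G]{X}$. Therefore the old support subtrees satisfy $T_{y_i}\cap T_{y_{i+1}}\ne\emptyset$, forcing their union $\bigcup_i T_{y_i}$ to be connected in $T$; nonemptiness follows from $f(u)\in\bag(r)$, which places $r$ in this union. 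Vertex-disjointness of the paths $(P_s)_{s\in S}$ is what lets $\phi$ assign each intermediate $y_i$ unambiguously to $u$ without interference from pull-backs for other $s'\ne u$.
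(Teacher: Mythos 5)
The paper does not prove this lemma itself --- it imports it from Korhonen--Lokshtanov~\cite[Lemma~4.8]{DBLP:journals/corr/abs-2211-07154} --- so there is no in-paper proof to compare against. Your strategy (Menger paths from $S$ to $\bag(r)\cap(S\cup B)$, a substitution $\phi$ that pulls $X\cap B$ vertices on $P_s$ back to $s$, and verification that the union of old supports of $\phi^{-1}(u)$ is connected because consecutive $X$-vertices of $P_u$ form torso edges) is sound and, I believe, is essentially the intended argument; the width bound and the entire vertex-condition analysis check out.

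There is, however, a concrete gap in the edge condition. You assert that a component $D$ of $G-X'$ contained in $A\setminus X$ ``coincides with'' a component of $G-X$ in $A$, and hence has $N(D)\subseteq(X\cap A)\cup(X\cap S)$. This is false in general: $D$ is just a component of $G[A\setminus X]$, and it may have neighbours in $S\setminus X$ (which lie in $X'$ but not in $X$). In that situation $N(D)$ is not contained in $X$, so it is not a clique of $\torso[G]{X}$ and is not housed in any single bag of $(T,\bag)$ --- the appeal to ``a bag that $\phi$ fixes pointwise'' breaks down. The fix is exactly the technique you already deploy for direct edges: let $D'\supseteq D$ be the component of $G-X$ containing $D$ (it may stretch into $S\setminus X$ and even into $B\setminus X$), and let $t_{D'}$ be a bag with $N(D')\subseteq\bag(t_{D'})$. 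Every $u\in N(D)\cap X$ is adjacent to $D\subseteq D'$, hence $u\in N(D')\subseteq\bag(t_{D'})$ with $\phi(u)=u$; every $u\in N(D)\cap(S\setminus X)$ lies in $D'$, and the first $X$-vertex $w_u$ of $P_u$ is in $N(D')$ (the sub-path of $P_u$ before $w_u$ stays inside $D'$), giving $\phi(w_u)=u$ with $w_u\in\bag(t_{D'})$. So all of $N(D)$ lands in $\bag'(t_{D'})$. With this correction your proof is complete; note also that for direct edges the case $u,v\in S\setminus X$ needs the same treatment, which you implicitly cover but do not list among the ``nontrivial'' cases.
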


We also need a~simple helper lemma:

\begin{lemma}
  \label{lem:neighborhood-clique-lemma}
  Let $G$ be a~graph, $X \subseteq V(G)$ and $C \in \cc{G - X}$. Then $\neighopen{C}$ is a~clique in $\torso[G]{X}$.
  \begin{proof}
    For every $u, v \in \neighopen{C}$, there exists a~path connecting $u$ and $v$ whose all internal vertices are contained in $C$.
  \end{proof}
\end{lemma}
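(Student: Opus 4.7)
The plan is to verify, for an arbitrary pair of distinct vertices $u, v \in \neighopen{C}$, that the pair $uv$ satisfies the definition of an edge of $\torso[G]{X}$. By the definition given earlier in the preliminaries, this requires two things: (i) $u, v \in X$, and (ii) the existence of a~path in $G$ from $u$ to $v$ that is internally disjoint from $X$.

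For (i), I would observe that since $C$ is a~connected component of $G - X$, the set $C$ is disjoint from $X$, and any vertex of $G$ that has a~neighbor in $C$ but does not itself lie in $C$ must belong to $X$ (otherwise $C$ would not be a~maximal connected subgraph of $G - X$). Using the identity $\neighopen{C} = \neighclosed{C} \setminus C$, this yields $\neighopen{C} \subseteq X$, hence $u, v \in X$.

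For (ii), the natural construction is to route a~path through $C$. Pick a~neighbor $u' \in C$ of $u$ and a~neighbor $v' \in C$ of $v$; these exist by the definition of $\neighopen{C}$. Since $G[C]$ is connected by the choice of $C$ as a~component, there is a~path $P$ from $u'$ to $v'$ entirely inside $C$ (taken to be a~single vertex if $u' = v'$). Concatenating the edge $uu'$, the path $P$, and the edge $v'v$ produces a~walk from $u$ to $v$ whose internal vertices all lie in $C$ and therefore all avoid~$X$; extracting a~simple subwalk yields the desired path.

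The argument is genuinely elementary; the only mild care needed is in the corner cases $u' = v'$ and avoiding counting $u$ or $v$ themselves among the internal vertices, both of which are handled by the construction above (note that $u, v \in X$ while the internal vertices lie in $C \subseteq V(G) \setminus X$, so none of $u, v$ can coincide with an~internal vertex). With (i) and (ii) established for every pair $u, v \in \neighopen{C}$, every such pair is an~edge of $\torso[G]{X}$, which is exactly the statement that $\neighopen{C}$ forms a~clique.
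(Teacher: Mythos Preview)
Your proof is correct and follows exactly the same approach as the paper's one-line argument: exhibit, for each pair $u,v\in\neighopen{C}$, a $u$--$v$ path whose internal vertices all lie in $C$ (and hence avoid $X$). You simply spell out the details the paper leaves implicit, including the observation that $\neighopen{C}\subseteq X$.
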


We are now ready to prove the Closure Linkedness Lemma.

\begin{proof}[Proof of \cref{lem:closure-linkedness-lemma}]
  Let $W \coloneqq \bags(\Tpref)$.
  For the sake of contradiction, assume we have a~component $C \in \cc{G - X}$ such that its neighborhood $\neighopen{C}$ is \emph{not} $\omega$-linked into $W$.
  By \cref{def:d-linked}, there exists an~$(\neighopen{C}, W)$-separator $S$ such that either $|S| < |\neighopen{C}|$, or $|S| = |\neighopen{C}|$ and $\omega(S) < \omega(\neighopen{C})$.
  Without loss of generality, assume that $S$ is such a~separator with minimum possible size.
  Naturally, $S$ induces a~separation $(A, S, B)$ such that $W \subseteq A \cup S$ and $\neighopen{C} \subseteq S \cup B$.
  
  Now, construct a~new set $X'$ from $X$ as follows:
  \[ X' \coloneqq (X \cap A) \cup S. \]
  We claim that $X'$ is also a~$c$-small $k$-closure of $W$.
  Note that since $\neighopen{C}$ is contained in $X$ and disjoint from $X \cap A$, we have that $X' \subseteq (X \setminus \neighopen{C}) \cup S$.
  Therefore, $|X'| \leq |X| - |\neighopen{C}| + |S|$ and $\omega(X') \leq \omega(X) - \omega(\neighopen{C}) + \omega(S)$. So if $|S|<|\neighopen{C}|$, we have $|X'|<|X|$, and if $|S|=|\neighopen{C}|$ and $\omega(S)<\omega(\neighopen{C})$, then $|X'|\leq |X|$ and $\omega(X')<\omega(X)$. So provided $X'$ is indeed a~$c$-small $k$-closure of $W$, $X$ cannot be $\omega$-minimal, contradicting our assumption.
  
  \begin{claim}
    \label{clm:cll-constructed-is-closure}
    $X'$ is a~$k$-closure of $W$.
    \begin{claimproof}
      Since $W$ is disjoint with $B$ and $X$ is a~$k$-closure of $W$, we get that $X' \supseteq W$.    
    
      Aiming to use the Pulling Lemma, we let $(T_X, \bag_X)$ to be a~tree decomposition of $\torso[G]{X}$ of width at most $2k + 1$.
      By \cref{lem:neighborhood-clique-lemma}, $\neighopen{C}$ is a~clique in $\torso[G]{X}$, so there exists a~node $r \in V(T_X)$ with $\neighopen{C} \subseteq \bag_X(r)$.
      It remains to verify that $S$ is linked into $\bag_X(r) \cap (S \cup B)$.

      Observe that $\neighopen{C} \subseteq \bag_X(r) \cap (S \cup B)$.
      Hence, it is enough to check that $S$ is linked into $\neighopen{C}$.
      However, if it was not the case, then there would be an~$(\neighopen{C}, S)$-separator $S'$ of size $|S'| < |S|$.
      But then $S'$ would be also an~$(\neighopen{C}, W)$-separator of size smaller than $|S|$, contradicting the minimality of $S$.
      Hence, \cref{lem:pulling-lemma} applies to the tree decomposition $(T_X, \bag_X)$ and the separation $(A, S, B)$, producing a~tree decomposition of $\torso[G]{X'}$ of width not exceeding $2k+1$.
    \end{claimproof}
  \end{claim}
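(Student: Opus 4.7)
The plan is to establish the two defining conditions of a $k$-closure separately. First, I will argue $X' \supseteq W$. Since $X$ is a $k$-closure of $W$ we have $W \subseteq X$, and since $(A,S,B)$ is a separation with $W \subseteq A \cup S$, every vertex of $W$ lies in either $X \cap A$ or $S$. Hence $W \subseteq (X \cap A) \cup S = X'$. This step is routine.

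The main task, and the main obstacle, is to show that $\tw{\torso[G]{X'}} \le 2k+1$. The tool for this is the Pulling Lemma (\cref{lem:pulling-lemma}), applied to the separation $(A,S,B)$ and a tree decomposition $(T_X, \bag_X)$ of $\torso[G]{X}$ of width at most $2k+1$ (which exists because $X$ is a $k$-closure). The Pulling Lemma produces a tree decomposition of $\torso[G]{(X \cap A) \cup S} = \torso[G]{X'}$ of the same width, provided we can exhibit a node $r \in V(T_X)$ such that $S$ is linked into $\bag_X(r) \cap (S \cup B)$.

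To obtain such an $r$, I will use the fact that $C \in \cc{G - X}$, so by \cref{lem:neighborhood-clique-lemma}, $\neighopen{C}$ is a clique in $\torso[G]{X}$. By \cref{fact:tw-clique-bag}, some bag $\bag_X(r)$ of $(T_X,\bag_X)$ contains $\neighopen{C}$ entirely. Since $\neighopen{C} \subseteq S \cup B$ by the choice of the separation, we get $\neighopen{C} \subseteq \bag_X(r) \cap (S \cup B)$. Hence it suffices to show that $S$ is linked into $\neighopen{C}$, as linkedness is monotone with respect to enlarging the target set.

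The key lemma-internal argument is then to exploit the minimality of $S$. Suppose toward contradiction that $S$ is not linked into $\neighopen{C}$; by Menger's theorem, there is an $(\neighopen{C}, S)$-separator $S'$ with $|S'| < |S|$. Since any path from $\neighopen{C}$ to $W$ passes through $S$ (by the separation $(A,S,B)$ and $W \subseteq A \cup S$), $S'$ is also an $(\neighopen{C}, W)$-separator. Recalling that the failure of $\omega$-linkedness gives $|S| \le |\neighopen{C}|$, we conclude $|S'| < |S| \le |\neighopen{C}|$, so $S'$ is itself a witness that $\neighopen{C}$ is not $\omega$-linked into $W$, of strictly smaller size than $S$ — contradicting the minimum-size choice of $S$. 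This closes the argument, the Pulling Lemma applies, and $X'$ is confirmed to be a $k$-closure of $W$.
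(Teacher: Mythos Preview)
Your proposal is correct and follows essentially the same approach as the paper's own proof: locate a bag $r$ containing the clique $\neighopen{C}$, reduce the linkedness hypothesis of the Pulling Lemma to ``$S$ is linked into $\neighopen{C}$'', and derive this from the minimum-size choice of $S$. Your write-up is in fact slightly more explicit than the paper's in two places --- you spell out why $W \subseteq X'$ via $W \subseteq X$ and $W \subseteq A \cup S$, and you note that $|S'| < |S| \le |\neighopen{C}|$ so that $S'$ is itself a witness of failure of $\omega$-linkedness --- but the argument is the same.
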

  
  \begin{claim}
    \label{clm:cll-constructed-is-small}
    $X'$ is $c$-small.
    \begin{claimproof}
      Recall that $X \supseteq W$, where $W = \bags(\Tpref)$.
      Therefore, as $C$ is a~connected component of $G - X$, the entire connected component $C$ must be contained within $\component{t}$ for some appendix $t$ of $\Tpref$.
      Hence, $\neighopen{C} \subseteq \component{t} \cup \adhesion{t}$.
      Also, $\adhesion{t} \subseteq W$.
      As $(\component{t}, \adhesion{t}, V(G) \setminus (\component{t} \cup \adhesion{t}))$ is a~separation of $G$, and $S$ is a~minimum-size $(\neighopen{C}, W)$-separator, it follows that $S \subseteq \component{t} \cup \adhesion{t}$; in other words, vertices outside of $\component{t} \cup \adhesion{t}$ are not useful towards the separation of $\neighopen{C}$ from $W$.
      
      Now, for an~appendix $t' \neq t$ of $\Tpref$, $\component{t'}$ is disjoint from $\component{t} \cup \adhesion{t}$: this is because $\component{t'}$ is disjoint from $W$ (containing $\adhesion{t}$ in its entirety) and from $\component{t}$ (as $t$, $t'$ do not remain in the ancestor-descendant relationship in $T$).
      Therefore, $S$ is disjoint with $\component{t'}$ and thus, by the definition of $X'$,
      \[ |X' \cap \component{t'}| = |(X \cap A) \cap \component{t'}| \leq |X \cap \component{t'}| \leq c. \]
      Hence, the smallness condition is not violated for the appendix $t'$.
      
      In order to prove that the same condition is satisfied for the appendix $t$, we observe that $\neighopen{C} \cap W \subseteq S$ (as $S$ separates $\neighopen{C}$ from $W$), so also $\neighopen{C} \cap \adhesion{t} \subseteq S$.
      Therefore, $|\neighopen{C} \cap \adhesion{t}| \leq |S \cap \adhesion{t}|$, and we get that
      \[ |\neighopen{C} \cap \component{t}| = |\neighopen{C}| - |\neighopen{C} \cap \adhesion{t}| \geq |S| - |S \cap \adhesion{t}| = |S \cap \component{t}|. \]
      
      Now, as $X' \subseteq (X \setminus \neighopen{C}) \cup S$ and $\neighopen{C} \subseteq X$, we infer that
      \[ |X' \cap \component{t}| \leq |X \cap \component{t}| - |\neighopen{C} \cap \component{t}| + |S \cap \component{t}| \leq |X \cap \component{t}| \leq c.\qedhere \]
    \end{claimproof}
  \end{claim}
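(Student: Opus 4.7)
The plan is to analyze $|X' \cap \component{t'}|$ separately for each appendix $t'$ of $\Tpref$ and show that it does not exceed $c$. First, I would pinpoint the unique appendix $t$ whose subtree contains $C$: since $X \supseteq W = \bags(\Tpref)$, the component $C$ of $G - X$ is disjoint from $W$, and being connected it must land inside a single $\component{t}$. Consequently $\neighopen{C} \subseteq \component{t} \cup \adhesion{t}$.

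Second, I would argue that we may take $S$ to lie entirely in $\component{t} \cup \adhesion{t}$: the tuple $(\component{t}, \adhesion{t}, V(G) \setminus (\component{t} \cup \adhesion{t}))$ is a separation of $G$ whose third part lies entirely on the $W$-side and so contributes nothing useful to a minimum $(\neighopen{C}, W)$-separator. With $S$ so placed, for any other appendix $t' \neq t$, the component $\component{t'}$ is disjoint from $\component{t} \cup \adhesion{t}$ --- distinct appendices yield vertex-disjoint subtree components, and moreover $\component{t'}$ avoids $W \supseteq \adhesion{t}$. Hence neither $S$ nor $\neighopen{C}$ meets $\component{t'}$, and therefore $X' \cap \component{t'} \subseteq X \cap \component{t'}$, which has at most $c$ elements by the $c$-smallness of~$X$.

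For the remaining appendix $t$, the plan is a cardinality comparison of $S$ and $\neighopen{C}$ against $\adhesion{t}$ and $\component{t}$. Since $S$ is an $(\neighopen{C}, W)$-separator we have $\neighopen{C} \cap W \subseteq S$, which in particular gives $|\neighopen{C} \cap \adhesion{t}| \leq |S \cap \adhesion{t}|$. Combining this with $|S| \leq |\neighopen{C}|$ and the fact that both sets sit in $\adhesion{t} \cup \component{t}$ yields $|S \cap \component{t}| \leq |\neighopen{C} \cap \component{t}|$. Since $X' \subseteq (X \setminus \neighopen{C}) \cup S$, partitioning by $\adhesion{t}$ and $\component{t}$ then gives $|X' \cap \component{t}| \leq |X \cap \component{t}| - |\neighopen{C} \cap \component{t}| + |S \cap \component{t}| \leq |X \cap \component{t}| \leq c$. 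I expect the only real subtlety to be making the ``$S$ lives in $\component{t} \cup \adhesion{t}$'' step fully rigorous via the minimality of $S$; once that is granted, everything reduces to clean separator bookkeeping.
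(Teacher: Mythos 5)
Your proposal is correct and follows exactly the same structure as the paper's proof: you identify the unique appendix $t$ containing $C$, use the minimality of $S$ to place $S$ inside $\component{t} \cup \adhesion{t}$, dispose of the other appendices $t'$ by disjointness, and close the argument for $t$ with the same cardinality bookkeeping comparing $S$ and $\neighopen{C}$ across $\adhesion{t}$ and $\component{t}$. There is no meaningful divergence from the paper's argument.
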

  By \cref{clm:cll-constructed-is-closure,clm:cll-constructed-is-small} we infer that $X'$ is a~$c$-small $k$-closure of $W$. This is a contradiction to the $\omega$-minimality of~$X$.
\end{proof}

\subsection{Closure exploration, blockages and collected components}
\label{ssec:closure-exploration}

In this section, we define several auxiliary objects that will be constructed in the tree decomposition improvement algorithm after a~suitable closure is found: explorations, blockages and collected components.
For the remainder of the section, let us fix a~tree decomposition $(T, \bag)$ of $G$ of width at most $\ell$, a~nonempty prefix \Tpref of $T$ and a~$\omega$-minimal $c$-small $k$-closure $X \supseteq \bags(\Tpref)$ for some weight function $\omega\colon \N \to \N$.

\paragraph{Blockages.}
We start with the~definition of a~blockage:

\begin{definition}
  \label{def:blockage}
  We say that a~node $t \in V(T) \setminus \Tpref$ is a~\emph{blockage} in $T$ with respect to $\Tpref$ and $X$ if one of the following cases holds:
  \begin{itemize}
    \item $\bag(t) \subseteq \neighclosed[G]{C}$ for some component $C \in \cc{G - X}$ that intersects $\bag(t)$ \emph{(component blockage)};
    \item $\bag(t) \subseteq X$ and $\bag(t)$ is a~clique in $\torso[G]{X}$ \emph{(clique blockage)};
  \end{itemize}
  and no strict ancestor of $t$ is a~blockage.
\end{definition}

Note that a~component blockage intersects with exactly one  component $C \in \cc{G - X}$.

Since $T$, $\Tpref$ and $X$ will usually be known from the context, we will usually just say that $t$ is a~blockage.
Then, we introduce the following notation: let $\blockages{\Tpref}{X}$ be the set of blockages.

\paragraph{Properties of blockages.}
We now prove a~few lemmas relating blockages to $X$.

\begin{lemma}
  \label{lem:blockage-all-internally-disjoint-paths}
  If $t \in \blockages{\Tpref}{X}$, then for every pair of vertices $u, v \in \bag(t)$ there exists a~$uv$-path in $G$ internally disjoint with $X$.
  \begin{proof}
    If $t$ is a~component blockage for component $C \in \cc{G - X}$, then the statement of the lemma immediately follows from $\bag(t) \subseteq \neighclosed[G]{C}$: in fact, for all $u, v \in \bag(t)$, there exists a~$uv$-path whose all internal vertices belong to $C$.
    On the other hand, if $t$ is a~clique blockage, then the statement is equivalent to the assertion that $\bag(t)$ is a~clique in $\torso[G]{X}$.
  \end{proof}
\end{lemma}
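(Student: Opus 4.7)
The plan is to dispose of the two cases in the definition of a blockage separately, both reducing to essentially a one-line argument once the right object is named.

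First, suppose $t$ is a component blockage, witnessed by a component $C \in \cc{G - X}$ with $\bag(t) \subseteq \neighclosed[G]{C} = C \cup \neighopen[G]{C}$. Pick any two vertices $u, v \in \bag(t)$; I may assume $u \neq v$. Each of $u, v$ either lies in $C$ or has a neighbor in $C$. Since $C \in \cc{G - X}$ is connected, I can pick any $u' \in C$ with $u' = u$ or $u'u \in E(G)$, and similarly $v' \in C$ with $v' = v$ or $v'v \in E(G)$, and then concatenate the edge $uu'$ (if needed), a path from $u'$ to $v'$ inside $C$, and the edge $v'v$ (if needed) to obtain a $uv$-walk whose internal vertices lie in $C$. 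Extracting a simple path from this walk gives a $uv$-path whose internal vertices lie in $C \subseteq V(G) \setminus X$, as required.

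Next, suppose $t$ is a clique blockage, so $\bag(t) \subseteq X$ and $\bag(t)$ is a clique in $\torso[G]{X}$. For any two distinct $u, v \in \bag(t)$, the edge $uv$ is present in $\torso[G]{X}$, so by the definition of the torso, there exists a path in $G$ from $u$ to $v$ whose internal vertices lie in $V(G) \setminus X$. (If $uv \in E(G)$, the direct edge already gives a path with no internal vertices, which is vacuously internally disjoint from $X$.) This is precisely the required path.

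There is no obstacle here: the lemma is essentially an unpacking of the two cases of \cref{def:blockage} together with the connectivity of $C$ in the first case and the definition of the torso in the second. The only care needed is to check that in the component case the internal vertices of the constructed path truly lie in $C$ (hence outside $X$, since $C \cap X = \emptyset$), and to handle the degenerate $u = v$ case by using the empty path.
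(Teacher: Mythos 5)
Your proof is correct and follows exactly the same two-case argument as the paper: routing through the component $C$ (using $\bag(t)\subseteq \neighclosed[G]{C}$ and connectivity of $C$) for a component blockage, and unfolding the definition of an edge of $\torso[G]{X}$ for a clique blockage. The extra care you take with the walk-to-path extraction and the degenerate $u=v$ case is fine but not needed beyond what the paper's one-line proof already conveys.
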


Note that it immediately follows from \cref{lem:blockage-all-internally-disjoint-paths} that if $t$ is a~blockage, then $\bag(t) \cap X$ is a~clique in $\torso[G]{X}$.

\begin{lemma}
  \label{lem:no-closure-under-blockage}
  If $t \in \blockages{\Tpref}{X}$, then $\component{t} \cap X = \emptyset$.
  
  \begin{proof}
    Assume otherwise.
    Then, we claim that the set $X' \coloneqq X \setminus \component{t}$ is also a~$c$-small $k$-closure of $\bags(\Tpref)$, contradicting the minimality of $X$.
    In fact, we will show that $\torso[G]{X'}$ is a~subgraph of $\torso[G]{X}$.
    Towards this goal, choose vertices $u, v \in X'$ and assume that $uv \in E(\torso[G]{X'})$, i.e., there exists a~path $P = v_1, v_2, \dots, v_p$ from $u=v_1$ to $v=v_p$ internally disjoint with $X'$.
    If $P$ is internally disjoint with $X$, then also $uv \in E(\torso[G]{X})$, finishing the proof.
    In the opposite case, $P$ must intersect $X \setminus X'$ and thus intersect $\component{t}$.
    Note that by the definition of $X'$, we necessarily have $u, v \notin \component{t}$.
    Let $v_a, v_b$ ($1 \leq a \leq b \leq p$) be the first and the last intersection of $P$ with $\adhesion{t}$, respectively; such vertices exist since $(\component{t}, \adhesion{t}, (V(G) \setminus (\component{t} \cup \adhesion{t}))$ is a~separation of~$G$.
    The subpaths $v_1, v_2, \dots, v_a$ and $v_b, v_{b+1}, \dots, v_p$ are disjoint with $\component{t}$ and thus are internally disjoint with $X$.
    Construct a~new path $P'$ from $u$ to $v$ by concatenating:
    \begin{itemize}
      \item the subpath $v_1, v_2, \dots, v_a$,
      \item a~path from $v_a$ to $v_b$ internally disjoint with $X$ (its existence is asserted by \cref{lem:blockage-all-internally-disjoint-paths}),
      \item the subpath $v_b, v_{b+1}, \dots, v_p$.
    \end{itemize}
    Naturally, $P'$ is again internally disjoint with $X'$.
    We claim that $P'$ is internally disjoint with $X$, thus witnessing that $uv \in E(\torso[G]{X})$.
    Indeed, each of the three segments of $P'$ is internally disjoint with $X$, so $P'$ can internally intersect $X$ only if $v_a \in X$ (or $v_b \in X$).
    However, in this case, we have that $v_a \in X'$ (respectively, $v_b \in X'$), as $v_a, v_b \notin \component{t}$.
    But $P'$ is internally disjoint with $X'$, hence $v_a$ (resp., $v_b$) must be the first (resp., the last) vertex of $P'$ and thus not an~internal vertex of $P'$.
    Hence, $P'$ is internally disjoint with $X$.
  \end{proof}
\end{lemma}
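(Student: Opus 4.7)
The plan is to argue by contradiction, leveraging the $\omega$-minimality of $X$. Assume $\component{t} \cap X \neq \emptyset$, and consider the candidate $X' \coloneqq X \setminus \component{t}$. Since $|X'| < |X|$, if I can show that $X'$ is itself a $c$-small $k$-closure of $\bags(\Tpref)$, then the $\omega$-minimality of $X$ is contradicted and the claim follows.

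Two of the three required properties of $X'$ are straightforward. For the inclusion $X' \supseteq \bags(\Tpref)$, I would argue that $\component{t}$ is disjoint from $\bags(\Tpref)$: if a vertex $v$ of $\bags(\Tpref)$ also appeared in some bag below $t$, then by the vertex condition the connected subtree of bags containing $v$ would cross $\parent{t}$, forcing $v \in \bag(t) \cap \bag(\parent{t}) = \adhesion{t}$, hence $v \notin \component{t}$. The $c$-smallness of $X'$ is immediate from monotonicity: for every appendix $a$ of $\Tpref$, $|X' \cap \component{a}| \le |X \cap \component{a}| \le c$.

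The real content is to show that $\tw{\torso[G]{X'}} \le 2k+1$. For this I would prove the stronger statement that $\torso[G]{X'}$ is an (edge-)subgraph of $\torso[G]{X}$ on the restricted vertex set $X' \subseteq X$: then any tree decomposition of $\torso[G]{X}$ of width at most $2k+1$ restricts to one of $\torso[G]{X'}$ of the same width. Suppose $uv \in E(\torso[G]{X'})$ is certified by a path $P$ in $G$ from $u$ to $v$ internally disjoint from $X'$. If $P$ is also internally disjoint from $X$ there is nothing to do. Otherwise $P$ meets $X \setminus X' \subseteq \component{t}$ internally, so, since $(\component{t}, \adhesion{t}, V(G)\setminus(\component{t}\cup\adhesion{t}))$ is a separation and $u,v \notin \component{t}$, the path $P$ must enter and leave $\component{t}$ through $\adhesion{t}$. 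Let $v_a$ and $v_b$ be the first and last vertices of $P$ lying in $\adhesion{t}$. Then the prefix of $P$ up to $v_a$ and the suffix from $v_b$ both avoid $\component{t}$ and are internally disjoint from $X'$, hence internally disjoint from $X$. I now splice in a $v_a$--$v_b$-path internally disjoint from $X$, which exists by \cref{lem:blockage-all-internally-disjoint-paths} applied to $t$ (note $v_a, v_b \in \adhesion{t} \subseteq \bag(t)$), obtaining a new walk $P'$ from $u$ to $v$.

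The one delicate point, which I expect to be the main obstacle, is handling the splice points $v_a$ and $v_b$ themselves: they can belong to $X$, so in principle they might spoil internal disjointness of $P'$ from $X$. The resolution is that $v_a, v_b \in \adhesion{t}$ are disjoint from $\component{t}$, so $v_a \in X$ would imply $v_a \in X \setminus \component{t} = X'$; then internal disjointness of $P$ from $X'$ forces $v_a$ to be an endpoint of $P$ rather than an interior vertex, which in turn makes $v_a$ an endpoint of $P'$ as well and so not an internal vertex. The same argument applies to $v_b$. Consequently every internal vertex of $P'$ lies outside $X$, witnessing $uv \in E(\torso[G]{X})$ and completing the proof that $\torso[G]{X'}$ embeds into $\torso[G]{X}$, hence contradicting $\omega$-minimality.
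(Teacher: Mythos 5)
Your proposal is correct and follows essentially the same argument as the paper: construct $X' = X \setminus \component{t}$, show that $\torso[G]{X'}$ is a subgraph of $\torso[G]{X}$ by splicing the piece of a certifying path inside $\component{t}$ with a path internally disjoint from $X$ (via \cref{lem:blockage-all-internally-disjoint-paths}), and handle the subtle point that the splice endpoints $v_a, v_b \in \adhesion{t}$ cannot be internal vertices of $P'$ if they lie in $X$. The only difference is that you explicitly verify the two easy conditions ($X' \supseteq \bags(\Tpref)$ and $c$-smallness) that the paper leaves implicit, which is a small improvement in completeness.
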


We remark that if $t$ is a~clique blockage, then from \cref{lem:no-closure-under-blockage} it follows that $\bag(t) = \adhesion{t}$, so in particular $\bag(t) \subseteq \bag(\parent{t})$.



\newcommand{\compression}{\xi}
\paragraph{Exploration and exploration graph.}
For a~given weight function $\omega\colon V(G) \to \Z$, prefix $\Tpref$ of $T$ and a~$\omega$-minimal $c$-small $k$-closure $X$ of $\bags(\Tpref)$, we define the \emph{exploration} as the prefix $\exploration{\Tpref}{X}$ of $T$ whose set of appendices is given by $\blockages{\Tpref}{X}$.
A~node $t \in V(T)$ is deemed \emph{explored} if $t \in \exploration{\Tpref}{X}$, otherwise it is \emph{unexplored}.
Then, a~vertex $v \in V(G)$ is explored if it belongs to $\bag(t)$ for some explored node $t$, and unexplored otherwise.
Observe that $v \in V(G)$ is unexplored if and only if it belongs to $\component{t}$ for some blockage $t$.
In particular, by \cref{lem:no-closure-under-blockage}, every vertex of $X$ is explored.

Clearly, we have $\Tpref \subseteq \exploration{\Tpref}{X}$.
Also, if $T$ is a~binary tree decomposition, then $|\blockages{\Tpref}{X}| \leq |\exploration{\Tpref}{X}| + 1$ (this follows immediately from \cref{fact:number-of-appendices}).

Next, we define the~\emph{exploration graph} $H \coloneqq \explorationgraph{\Tpref}{X}$ by compressing the components $\component{t}$ of blockages $t$ to single vertices.
The purpose of this definition will be to present the connected components of $G - X$ in a way that can be bounded by $|\exploration{\Tpref}{X}|$, even if the number of such components can be much larger.
Formally:
\begin{itemize}
  \item $V(H)$ comprises: \emph{explored vertices}, that is, the set of vertices $\bigcup_{t \in \exploration{\Tpref}{X}} \bag(t)$; and \emph{blockage vertices}, that is, the set $\blockages{\Tpref}{X}$ of nodes of $T$.
  \item The set of edges $E(H)$ is constructed by taking the~subgraph of $G$ induced by the explored vertices and adding edges $tv$ for each $t \in \blockages{\Tpref}{X}$ and $v \in \adhesion{t}$.
  \item The \emph{compression mapping} $\compression\colon V(G) \to V(H)$ is an~identity mapping on $V(G) \cap V(H)$; and for $t \in \blockages{\Tpref}{X}$, we set $\compression^{-1}(t) \coloneqq \component{t}$.
\end{itemize}

Note that \cref{lem:no-closure-under-blockage} implies that $X \subseteq V(H)$.
Also, if $T$ is a~binary tree decomposition, then $$|V(H)| \leq (\ell+1) \cdot |\exploration{\Tpref}{X}| + |\exploration{\Tpref}{X}| + 1.$$

We also observe the following fact:

\begin{lemma}
  \label{lem:compression-edge-mapping}
  If $uv \in E(G)$, then $\compression(u) = \compression(v)$ or $\compression(u)\compression(v) \in E(H)$.
  \begin{proof}
    If both $u$ and $v$ are explored, then $\compression(u) = u$, $\compression(v) = v$ and $\compression(u)\compression(v) \in E(H)$.
    Otherwise, assume without loss of generality that $\compression(v) = t$ is a~blockage vertex.
    Then $v \in \component{t}$, so $u \in \component{t} \cup \adhesion{t}$.
    If $u \in \component{t}$, then $\compression(u) = \compression(v) = t$.
    Otherwise $u \in \adhesion{t}$, so $u$ is an~explored vertex.
    Then $\compression(u) = u$ and $\compression(v) = t$, so $\compression(u)\compression(v) \in E(H)$ by the construction.
  \end{proof}
\end{lemma}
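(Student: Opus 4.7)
The plan is to do a short case analysis on whether $u$ and $v$ are explored or not, using the definition of the compression mapping $\compression$ directly.

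First I would dispose of the easy case in which both $u$ and $v$ are explored. Then $\compression(u)=u$ and $\compression(v)=v$, and since $E(H)$ is built by taking the subgraph of $G$ induced by the explored vertices (plus some extra edges incident to blockage vertices), the edge $uv$ survives into $H$, giving $\compression(u)\compression(v)\in E(H)$.

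Next I would handle the case where at least one endpoint is unexplored. Without loss of generality let $v$ be unexplored; then by the discussion just above \cref{lem:compression-edge-mapping}, $v\in\component{t}$ for some blockage $t\in\blockages{\Tpref}{X}$, so $\compression(v)=t$. Because $(\component{t},\adhesion{t},V(G)\setminus(\component{t}\cup\adhesion{t}))$ is a separation of $G$ and $v\in\component{t}$, the neighbor $u$ must lie in $\component{t}\cup\adhesion{t}$. If $u\in\component{t}$, then $\compression(u)=t=\compression(v)$ and we are done. Otherwise $u\in\adhesion{t}$; here I would invoke the fact that $\adhesion{t}\subseteq\bag(\parent{t})$ and that $\parent{t}\in\exploration{\Tpref}{X}$ (since $t$ is an appendix of the exploration), which makes $u$ explored, hence $\compression(u)=u$. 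Finally, by the second bullet in the construction of $E(H)$, the pair $tu$ with $t$ a blockage and $u\in\adhesion{t}$ is an edge of $H$, so $\compression(u)\compression(v)=ut\in E(H)$, as required.

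There is no real obstacle here: the argument is essentially a direct unfolding of the three clauses defining $\compression$ and $E(H)$. The only point that needs even a moment of thought is the subcase $u\in\adhesion{t}$, where one must observe that $\adhesion{t}$ consists of explored vertices so that $\compression$ acts as the identity on $u$; this follows immediately from the fact that blockages are defined as appendices (not members) of the exploration prefix.
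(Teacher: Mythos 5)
Your proof is correct and follows essentially the same case analysis as the paper: both-explored, both-in-$\component{t}$, and the mixed case $u\in\adhesion{t}$. The only difference is cosmetic — you spell out why $u\in\adhesion{t}$ forces $u$ to be explored (via $\adhesion{t}\subseteq\bag(\parent{t})$ and $\parent{t}\in\exploration{\Tpref}{X}$), whereas the paper states this implication without elaboration.
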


\paragraph{Collected components.}
Using the notation above, we partition $V(G) \setminus X$ into subsets called \emph{collected components} as follows. For each $C \in \cc{H - X}$, let $\collectedsym \coloneqq \compression^{-1}(C)$ (so $\collectedsym \subseteq V(G) \setminus X$) and if $\collectedsym$ is nonempty, then include \collectedsym in the set $\collectedcomps{\Tpref}{X}$ of collected components of $G$.
In other words, we uncompress each connected component of $H - X$ to form a~collected component --- a~subset of $V(G) \setminus X$.
It can be easily verified that the set $\collectedcomps{\Tpref}{X}$ is, indeed, a~partitioning of $V(G) \setminus X$.

Let $\collectedsym \in \collectedcomps{\Tpref}{X}$.
We say that \collectedsym is:
\begin{itemize}
  \item \emph{unblocked} if $\compression(\collectedsym)$ contains at least one~explored vertex of $V(H)$; or
  \item \emph{blocked} if $\compression(\collectedsym)$ only contains blockage vertices of $V(H)$.
\end{itemize}
Note that if $t$ is a~clique blockage, then $t$ is an~isolated vertex of $H - X$; hence, the collected component $\compression^{-1}(\{t\})$ (if it is nonempty) is necessarily a~blocked component.
However, the converse implication is not true in general: it could be that a~component blockage $t'$ satisfies $\adhesion{t'} \subseteq X$, again causing $t'$ to be an~isolated vertex of $H - X$.
Then, the collected component $\compression^{-1}(\{t'\})$ (if it is nonempty) will, too, be a~blocked component.

It turns out that $\collectedcomps{\Tpref}{X}$ is a~coarser partitioning of $V(G) \setminus X$ than $\cc{G - X}$:

\begin{lemma}
  \label{lem:collected-is-union-of-connected}
  Each collected component $\collectedsym \in \collectedcomps{\Tpref}{X}$ is the~union of a collection of connected components of $G - X$.
\end{lemma}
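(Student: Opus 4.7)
The plan is to prove the contrapositive-style statement: if $u, v \in V(G) \setminus X$ lie in the same connected component of $G - X$, then they belong to the same collected component $\collectedsym \in \collectedcomps{\Tpref}{X}$. Since $\collectedcomps{\Tpref}{X}$ is a partition of $V(G) \setminus X$, this immediately implies that no connected component of $G - X$ can be split between two collected components, which is exactly the claim.

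First, I would observe that the compression map $\compression$ sends $V(G) \setminus X$ into $V(H) \setminus X$. Indeed, an explored vertex $v \notin X$ satisfies $\compression(v) = v \notin X$, and an unexplored vertex $v$ lies in $\component{t}$ for some blockage $t$, so $\compression(v) = t$; blockage vertices are by construction not explored vertices of $G$, hence not elements of the vertex set $X \subseteq V(G)$.

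Next, take a path $v_1, v_2, \ldots, v_p$ in $G - X$ connecting $u = v_1$ to $v = v_p$. Applying \cref{lem:compression-edge-mapping} to each edge $v_i v_{i+1}$, we get that either $\compression(v_i) = \compression(v_{i+1})$ or $\compression(v_i)\compression(v_{i+1}) \in E(H)$. Combined with the previous paragraph, the sequence $\compression(v_1), \ldots, \compression(v_p)$ traces a walk in the subgraph $H - X$, hence $\compression(u)$ and $\compression(v)$ lie in the same connected component $C \in \cc{H - X}$. By definition, both $u$ and $v$ belong to $\compression^{-1}(C)$, which is exactly the same collected component.

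I do not expect any serious obstacle here — the result is essentially a bookkeeping consequence of the compression edge mapping lemma (\cref{lem:compression-edge-mapping}) together with the simple observation that $\compression$ preserves membership in $V(G) \setminus X$ versus $X$. The only minor care needed is in distinguishing explored vertices of $V(H)$ (which may or may not lie in $X$) from blockage vertices of $V(H)$ (which by construction are never in $X$).
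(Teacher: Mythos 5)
Your proposal is correct and follows essentially the same route as the paper: both reduce the claim to tracing a path in $G-X$ through the compression map $\compression$, invoking \cref{lem:compression-edge-mapping} to conclude that the images walk within a single connected component. In fact you are slightly more careful than the paper's one-line proof, which says the endpoints land in the same component of $H$ rather than $H-X$; your explicit observation that $\compression$ maps $V(G)\setminus X$ into $V(H)\setminus X$ (so the walk indeed stays in $H-X$) closes that small presentational gap.
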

\begin{proof}
    It is enough to prove the following: if $u,v \in V(G) \setminus X$ are adjacent in $G$, then $\compression(u)$ and $\compression(v)$ belong to the same connected component of $H$. But by \cref{lem:compression-edge-mapping}, $\compression(u)$ and $\compression(v)$ are either equal or adjacent in $H$; this proves the claim.
  \end{proof}
  
Next, we prove a~characterization of each type of collected component.

\begin{lemma}
  \label{lem:collected-types-description}
  If $\collectedsym \in \collectedcomps{\Tpref}{X}$ is:
  \begin{itemize}
    \item unblocked, then the set of explored vertices belonging to $\compression(\collectedsym)$ is contained in a~single connected component $C \in \cc{G - X}$.
    Moreover, $C \subseteq \collectedsym$ and $\neighopen{\collectedsym} = \neighopen{C}$;
    \item blocked, then $\compression(\collectedsym)$ consists of a~single blockage vertex $t$.
    Moreover, $\neighopen{\collectedsym} \subseteq \adhesion{t} \cap X$.
  \end{itemize}
  \begin{proof}
    We start by proving the following claim:
    \begin{claim}
      \label{clm:explored-vertices-in-same-component}
      If $u, v \in \collectedsym$ are explored, then $u$ and $v$ belong to the same connected component of $G - X$.
    \end{claim}      \begin{claimproof}
        Let $P = w_1w_2\dots{}w_p$ ($w_1 = \compression(u)$, $w_p = \compression(v)$) be a~simple path from $\compression(u)$ to $\compression(v)$ in $H - X$ (such a~path exists since $u$, $v$ are in the same collected component).
        We will prove inductively that each explored vertex of $P$ is in the same connected component of $G - X$ as $w_1$.
        This is trivially true for $w_1$.
        For an~inductive step, choose an~explored vertex $w_i$ ($i \geq 2$).
        If $w_{i-1}$ is also explored, then $w_{i-1}w_i$ is an~edge of $G - X$, so $w_i$ is in the same connected component of $G - X$ as $w_{i-1}$ (and thus in the same connected component of $G - X$ as $w_1$).

        If $w_{i-1}$ is, on the other hand, a~blockage vertex (i.e., $w_{i-1} \in V(T)$), then note that $i \geq 3$ (since $w_1$ is an~explored vertex).
        Moreover, by the definition of $E(H)$, $w_{i-1}$ is adjacent in $H$ only to explored vertices in $\adhesion{w_{i-1}}$.
        Therefore, $w_{i-2}$ is also an~explored vertex and $w_{i-2}, w_i \in \adhesion{w_{i-1}}$.
        Since $w_{i-2}, w_i \notin X$, it follows that $w_{i-1}$ is a~component blockage, i.e., $\bag(w_{i-1}) \subseteq \neighclosed[G]{C}$ for some component $C \in \cc{G - X}$ intersecting $\bag(t)$.
        Therefore, both $w_{i-2}$ and $w_i$ belong to $C$; and by the inductive assumption, so does $w_1$.
      \end{claimproof}

    Also, the following simple fact will be useful:
    
    \begin{claim}
      \label{clm:small-neighborhood-of-blocked-component}
      If $D$ is a~connected component of $G - X$ such that $D \subseteq \component{t}$ for a~blockage $t$, then $\neighopen{D} \subseteq \adhesion{t} \cap X$.
    \end{claim}
      \begin{claimproof}
        Obviously, $\neighopen{D} \subseteq X$.
        However, by \cref{lem:no-closure-under-blockage}, $\component{t} \cap X = \emptyset$.
        As $\neighclosed{D} \subseteq \component{t} \cup \adhesion{t}$, it follows that $\neighopen{D} \subseteq \adhesion{t}$.
      \end{claimproof}

    Now, assume that \collectedsym is unblocked.
    By \cref{clm:explored-vertices-in-same-component}, the set of explored vertices belonging to $\compression(\collectedsym)$ is contained in a single~connected component $C \in \cc{G - X}$.
    From \cref{lem:collected-is-union-of-connected}, we have that $C \subseteq \collectedsym$.
    It remains to show that $\neighopen{\collectedsym} = \neighopen{C}$.
    To this end, we will argue the following:
    \begin{claim}
      \label{clm:neighborhoods-contained-in-unblocked}
      Let $D \in \cc{G - X}$ be a~connected component with $D \subseteq \collectedsym$. Then $\neighopen{D} \subseteq \neighopen{C}$.
      \begin{claimproof}
        We can assume that $D \neq C$.
        Then the set $\compression(D)$ only contains blockage vertices; and since blockage vertices form an~independent set in $H$, it follows from \cref{lem:compression-edge-mapping} that $\compression(D)$ comprises just one blockage vertex, say $t$.
        Hence $D \subseteq \component{t}$, and it also follows from \cref{clm:small-neighborhood-of-blocked-component} that $\neighopen{D} \subseteq \adhesion{t} \cap X$.
        
        Let $u \in C$ be any explored vertex (so $\compression(u) = u$).
        Now, since $C, D \subseteq \collectedsym$ and $\compression(D) = \{t\}$, there must exist a~path $P$ from $t$ to $u$ in $H - X$.
        Let $v$ be the (explored) vertex immediately after $t$ in $P$.
        By \cref{clm:explored-vertices-in-same-component}, $v \in C$; and $v \in \adhesion{t}$ by construction.
        It follows that $t$ is a~component blockage intersecting $C \in \cc{G - X}$, so $\bag(t) \subseteq \neighclosed{C}$; in particular, $\adhesion{t} \cap X \subseteq \neighopen{C}$.
      \end{claimproof}
    \end{claim}

    As $\neighopen{\collectedsym} = \bigcup_{D \in \cc{G - X},\, D \subseteq \collectedsym} \neighopen{D}$, we get from \cref{clm:neighborhoods-contained-in-unblocked} that $\neighopen{\collectedsym} = \neighopen{C}$, as required.
    
    Finally, assume that \collectedsym is blocked.
    Since blockage vertices of $H$ form an~independent set, it follows from \cref{lem:compression-edge-mapping} that $|\compression(\collectedsym)| = 1$.
    Let $t$ be the blockage vertex that is the only element of $\compression(\collectedsym)$.
    Then \cref{clm:small-neighborhood-of-blocked-component} applies to every $D \in \cc{G - X}$ with $D \subseteq \collectedsym$ and implies that $\neighopen{D} \subseteq \adhesion{t} \cap X$.
    Therefore, $\neighopen{\collectedsym} \subseteq \adhesion{t} \cap X$.
  \end{proof}
\end{lemma}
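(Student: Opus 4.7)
I would first prove two auxiliary facts, then use them to handle the blocked and unblocked cases separately.

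\emph{Auxiliary 1:} Any two explored vertices $u, v \in \compression(\collectedsym)$ lie in the same connected component of $G - X$. Take a path $w_1, \ldots, w_p$ in $H - X$ from $u$ to $v$ and induct on the prefixes. Consecutive explored vertices are adjacent in $G$ by the construction of $E(H)$. If some $w_i$ is a blockage vertex, then $w_i$ cannot be a clique blockage (its bag lies in $X$, so it has no neighbor in $V(H) \setminus X$ and would be isolated in $H - X$), so $w_i$ is a component blockage with $\bag(w_i) \subseteq N[C']$ for some $C' \in \cc{G - X}$. Its neighbors in $H$ lie in $\adhesion{w_i}$, and those outside $X$ lie in $\adhesion{w_i} \setminus X \subseteq N[C'] \setminus X = C'$. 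So $w_{i-1}$ and $w_{i+1}$ (both forced to be explored since blockages are independent in $H$) belong to $C'$.

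\emph{Auxiliary 2:} For any $D \in \cc{G-X}$ with $D \subseteq \component{t}$ for some blockage $t$, we have $N(D) \subseteq \adhesion{t} \cap X$. Indeed, $N(D) \subseteq X$ by definition of $\cc{G-X}$, and $N(D) \subseteq \component{t} \cup \adhesion{t}$ by the separation at $t$; combined with \cref{lem:no-closure-under-blockage} ($\component{t} \cap X = \emptyset$), this gives $N(D) \subseteq \adhesion{t} \cap X$.

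\emph{Blocked case.} Since $\compression(\collectedsym)$ contains only blockage vertices and blockage vertices are independent in $H$, the connected component $\compression(\collectedsym)$ of $H - X$ must be a single vertex $\{t\}$. Then $\collectedsym = \compression^{-1}(t) = \component{t}$, which decomposes (by \cref{lem:collected-is-union-of-connected}) into a union of components $D \in \cc{G-X}$ with $D \subseteq \component{t}$; applying Auxiliary~2 to each and taking the union yields $N(\collectedsym) \subseteq \adhesion{t} \cap X$.

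\emph{Unblocked case.} By Auxiliary~1, let $C \in \cc{G-X}$ be the unique component containing all explored vertices of $\compression(\collectedsym)$. To see $C \subseteq \collectedsym$, pick any witness explored $u \in C \cap \compression(\collectedsym)$ and any $v \in C$; lift a $G-X$ path $u = x_1, \ldots, x_r = v$ via $\compression$, using \cref{lem:compression-edge-mapping} so that consecutive $\compression(x_i)$ are either equal or adjacent in $H$, and all avoid $X$; hence $\compression(v)$ lies in the same component of $H - X$ as $u$, i.e.\ in $\compression(\collectedsym)$, so $v \in \collectedsym$. To prove $N(\collectedsym) = N(C)$: one inclusion is immediate since $C \subseteq \collectedsym$ and $N(C) \subseteq X$ is disjoint from $\collectedsym$. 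For the reverse, decompose $\collectedsym$ into components of $G - X$ (\cref{lem:collected-is-union-of-connected}) and fix any such $D \neq C$; by Auxiliary~1, $D$ has no explored vertex (else $D = C$), so every vertex of $D$ lies in $\component{t_D}$ for some blockage $t_D$, and connectedness of $D$ forces a single $t_D$. By Auxiliary~2, $N(D) \subseteq \adhesion{t_D} \cap X$. Now $t_D \in \compression(\collectedsym)$, so there is a path in $H - X$ from $t_D$ to some explored vertex of $C$; its first step leaves $t_D$ to an explored neighbor in $\adhesion{t_D} \setminus X$, forcing $t_D$ to be a component blockage attached to some $C'_D$, and Auxiliary~1 identifies $C'_D = C$. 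Hence $\adhesion{t_D} \cap X \subseteq N[C] \cap X = N(C)$, so $N(D) \subseteq N(C)$, and summing over all such $D$ yields $N(\collectedsym) \subseteq N(C)$.

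\emph{Main obstacle.} The delicate step is identifying, in the unblocked case, that every blockage vertex $t_D$ appearing in $\compression(\collectedsym)$ is a component blockage whose associated component equals $C$. This requires combining three pieces of structure at once — the independence of blockages in $H$, the precise description of $N_H(t_D) \cap (V(H)\setminus X)$, and Auxiliary~1 applied along a path that reaches back to $C$ — and it is the reason the proof needs both auxiliaries.
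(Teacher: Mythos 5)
Your proposal is correct and follows essentially the same route as the paper: your Auxiliary~1 is the paper's Claim that all explored vertices of a collected component lie in one component of $G-X$, your Auxiliary~2 is the paper's observation that $N(D)\subseteq \adhesion{t}\cap X$ for $D\subseteq\component{t}$, and the blocked/unblocked case analysis, including the key step of showing $N(D)\subseteq N(C)$ via a path in $H-X$ from $t_D$ back to $C$, mirrors the paper's argument word for word. The only cosmetic differences are that you re-derive $C\subseteq\collectedsym$ from scratch instead of citing \cref{lem:collected-is-union-of-connected}, and that you rule out clique blockages on the path by noting they are isolated in $H-X$ rather than by examining $\adhesion{w_{i-1}}\setminus X$ directly; both are equivalent.
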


The next lemma follows easily from the previous lemma:

\begin{lemma}
  \label{lem:collected-neigh-clique}
  For each $\collectedsym \in \collectedcomps{\Tpref}{X}$, the set $\neighopen{\collectedsym}$ is a~clique in $\torso[G]{X}$.
  \begin{proof}
    If \collectedsym is unblocked, then $\neighopen{\collectedsym} = \neighopen{C}$ for some $C \in \cc{G - X}$.
    By \cref{lem:neighborhood-clique-lemma}, $\neighopen{C}$ is a~clique in $\torso[G]{X}$.
    On the other hand, if \collectedsym is blocked, then $\neighopen{\collectedsym} \subseteq \adhesion{t} \cap X$ for some blockage $t$.
    But it follows immediately from the definition of a~blockage that $\adhesion{t} \cap X$ is a~clique in $\torso[G]{X}$.
  \end{proof}
\end{lemma}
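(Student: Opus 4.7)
The plan is to split into the two cases of the classification provided by \cref{lem:collected-types-description}, and in each case reduce the claim to a statement that is already available in the section.

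If $\collectedsym$ is unblocked, \cref{lem:collected-types-description} gives us a connected component $C\in\cc{G-X}$ with $C\subseteq \collectedsym$ and $\neighopen{\collectedsym}=\neighopen{C}$. Applying \cref{lem:neighborhood-clique-lemma} to $C$ yields that $\neighopen{C}$ is a clique in $\torso[G]{X}$, which is exactly the desired conclusion.

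If $\collectedsym$ is blocked, \cref{lem:collected-types-description} tells us that $\compression(\collectedsym)=\{t\}$ for some (unique) blockage $t\in\blockages{\Tpref}{X}$, and furthermore $\neighopen{\collectedsym}\subseteq \adhesion{t}\cap X\subseteq \bag(t)\cap X$. So it suffices to argue that $\bag(t)\cap X$ is a clique in $\torso[G]{X}$. This is exactly the remark recorded just after \cref{lem:blockage-all-internally-disjoint-paths}: for any $u,v\in\bag(t)$, that lemma furnishes a $u$--$v$ path in $G$ internally disjoint from $X$, so if additionally $u,v\in X$, then by definition $uv\in E(\torso[G]{X})$. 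Hence $\neighopen{\collectedsym}$ is a clique in $\torso[G]{X}$.

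In both cases we conclude the statement, which finishes the proof. There is no real obstacle here: the lemma is essentially a bookkeeping consequence of the structural description of collected components (\cref{lem:collected-types-description}) together with the basic clique-in-torso facts (\cref{lem:neighborhood-clique-lemma} and \cref{lem:blockage-all-internally-disjoint-paths}), and the only thing one has to be careful about is to use the correct upper bound on $\neighopen{\collectedsym}$ in the blocked case, namely that it sits inside $\adhesion{t}\cap X$ rather than in some larger set where the clique property could fail.
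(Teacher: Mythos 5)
Your proof is correct and follows essentially the same route as the paper's: case on unblocked/blocked via \cref{lem:collected-types-description}, use \cref{lem:neighborhood-clique-lemma} in the unblocked case, and the remark after \cref{lem:blockage-all-internally-disjoint-paths} (equivalently, the definition of a blockage) in the blocked case. The only cosmetic difference is that you pass through $\bag(t)\cap X$ rather than directly noting $\adhesion{t}\cap X$ is a clique, but since a subset of a clique is a clique both paths give the same conclusion.
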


Finally, we define the \emph{home bag} of a~collected component \collectedsym to be a~node of $T$ selected as follows:
\begin{itemize}
  \item If \collectedsym is unblocked, then the home bag of \collectedsym is the shallowest node $t$ of $T$ such that $\bag(t)$ contains an~explored vertex in \collectedsym.
  Note that this bag is uniquely defined, as the explored vertices in \collectedsym induce a~connected subgraph of $G$ and hence their occurrences in $(T, \bag)$ induce a~connected subtree of $T$.
  \item If \collectedsym is blocked, then the home bag of \collectedsym is the unique blockage $t$ such that $\compression(\collectedsym) = \{t\}$.
\end{itemize}
We conclude by proving some properties of the home bags of collected components.
\begin{lemma}
  \label{lem:unblocked-home-bags}
  Let $\collectedsym \in \collectedcomps{\Tpref}{X}$ be unblocked and let $t$ be the home bag of \collectedsym.
  Then 
  $$t \in \exploration{\Tpref}{X} \setminus \Tpref\qquad\textrm{and}\qquad|\bag(t)| > |\neighopen{\collectedsym}|.$$
  \begin{proof}
    First note that $t$ is an~explored node: each explored vertex of \collectedsym is contained in some bag $\bag(u)$ for an~explored node $u$, and the set of explored nodes is ancestor-closed.
    Also, $t \notin \Tpref$ is immediate: we have $\bags(\Tpref) \subseteq X$ by the definition, but $\collectedsym$ is disjoint from $X$.
    We conclude that $t \in \exploration{\Tpref}{X} \setminus \Tpref$.
    
    Next we show that $|\bag(t)| > |\neighopen{\collectedsym}|$.
    Recall from \cref{lem:collected-types-description} that there is some $D \in \cc{G - X}$ that contains all explored vertices of \collectedsym and such that $\neighopen{\collectedsym} = \neighopen{D}$.
    Since $t$ is an~explored node of~$T$, $\bag(t) \setminus X$ comprises only of explored vertices.
    Hence, $t$ is the shallowest bag of $T$ intersecting $D$.
    Because $G[D]$ is connected, all bags of $T$ intersecting $D$ must be in the subtree of $T$ rooted at $t$.
    Additionally, by the Closure Linkedness Lemma (\cref{lem:closure-linkedness-lemma}), $\neighopen{D}$ is $\omega$-linked into $\bags(\Tpref)$.
    As $t \notin \Tpref$, we find that $\adhesion{t}$ is an~$(\neighopen{D}, \bags(\Tpref))$-separator, from which it follows that $|\adhesion{t}| \geq |\neighopen{D}|$.
    However, it cannot be that $\adhesion{t} = \bag(t)$ since $\adhesion{t} \cap D = \emptyset$ and $\bag(t) \cap D \neq \emptyset$.
    Thus, $\adhesion{t} \subsetneq \bag(t)$ and so $|\bag(t)| > |\neighopen{D}|$.
  \end{proof}
\end{lemma}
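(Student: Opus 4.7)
The plan is to break the lemma into its two assertions and handle them sequentially using the machinery already developed in the section.

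For the membership claim $t \in \exploration{\Tpref}{X} \setminus \Tpref$, the strategy is entirely definitional. By the definition of the home bag of an unblocked collected component, $t$ is the shallowest node whose bag contains an explored vertex $v$ of $\collectedsym$. Since the set of explored nodes $\exploration{\Tpref}{X}$ is ancestor-closed and any node containing an explored vertex must itself be explored, $t$ lies in $\exploration{\Tpref}{X}$. For the second part, since $\collectedsym \subseteq V(G) \setminus X$ and $\bags(\Tpref) \subseteq X$ by the closure property, $v \in \bag(t)$ prevents $t$ from being in $\Tpref$.

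For the strict inequality $|\bag(t)| > |\neighopen{\collectedsym}|$, I will first invoke \cref{lem:collected-types-description} to extract a unique connected component $C \in \cc{G-X}$ with $C \subseteq \collectedsym$ that contains all explored vertices of $\collectedsym$, and with $\neighopen{\collectedsym} = \neighopen{C}$. By the vertex condition of a tree decomposition applied to the connected induced subgraph $G[C]$, the set of bags of $T$ intersecting $C$ induces a connected subtree of $T$. Because every bag intersecting $C$ contains an explored vertex of $\collectedsym$, the root of this subtree must be exactly $t$ (by the minimality defining the home bag). Consequently, $\adhesion{t}$ forms a separator cutting $C$ (and hence $\neighopen{C}$) away from everything outside the subtree rooted at $t$, and in particular from $\bags(\Tpref)$, which is disjoint from $C$ since $C \cap X = \emptyset$.

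At this point I apply the Closure Linkedness Lemma (\cref{lem:closure-linkedness-lemma}) to conclude that $\neighopen{C}$ is $\omega$-linked into $\bags(\Tpref)$. In particular $|\adhesion{t}| \ge |\neighopen{C}|$. To upgrade this to a strict inequality for $\bag(t)$, I observe that $\adhesion{t} \subseteq \bag(\parent{t})$ is contained in the union of bags strictly above $t$, so $\adhesion{t}$ is disjoint from $C$; meanwhile $\bag(t)$ contains an explored vertex of $\collectedsym$, which (as argued via \cref{lem:collected-types-description}) lies in $C$. Thus $\bag(t) \setminus \adhesion{t} \neq \emptyset$, giving $|\bag(t)| > |\adhesion{t}| \ge |\neighopen{C}| = |\neighopen{\collectedsym}|$.

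The only subtle step is justifying that $\adhesion{t}$ genuinely separates $\neighopen{C}$ from $\bags(\Tpref)$; this requires combining the connectivity argument for $G[C]$ with the fact that $t \notin \Tpref$, so that the subtree rooted at $t$ lies entirely outside $\Tpref$ and the standard tree-decomposition separation property applies cleanly. Everything else is a short unpacking of the definitions of home bag, exploration, and closure.
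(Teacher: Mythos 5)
Your overall plan mirrors the paper's: handle the membership claim definitionally, invoke \cref{lem:collected-types-description} to pass to a component $C \in \cc{G-X}$, show $t$ is the shallowest bag meeting $C$, argue that $\adhesion{t}$ separates $\neighopen{C}$ from $\bags(\Tpref)$, apply the Closure Linkedness Lemma, and conclude via $\adhesion{t} \subsetneq \bag(t)$. However, the step where you identify the root of the subtree of bags meeting $C$ with $t$ rests on the assertion that ``every bag intersecting $C$ contains an explored vertex of $\collectedsym$,'' and that assertion is false in general. The component $C$ may well contain unexplored vertices (those appearing only in bags inside subtrees rooted at blockages), and a bag that is a strict descendant of a blockage can meet $C$ purely in such unexplored vertices while containing no explored vertex of $\collectedsym$ whatsoever. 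For instance, take a component blockage $b$ associated with $\collectedsym$; the vertices in $\bag(b)\setminus\adhesion{b}$ and in deeper bags below $b$ are unexplored, and some may lie in $C$.

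The repair — which is exactly what the paper does — is to apply the claim only to the shallowest bag $r$ meeting $C$, where you have extra leverage: since $t$ meets $C$, $r$ is an ancestor of $t$; since $t$ is explored and the set of explored nodes is ancestor-closed (as you yourself use in the first half), $r$ is explored; hence $\bag(r)$ consists entirely of explored vertices, and its nonempty intersection with $C$ yields an explored vertex of $\collectedsym$. Minimality of $t$ as the home bag then forces $r = t$. Everything downstream of this point in your argument — the separator, the application of $\omega$-linkedness, and the strict inequality via $\adhesion{t}\cap C=\emptyset$ versus $\bag(t)\cap C\neq\emptyset$ — is sound and matches the paper.
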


\begin{lemma}
  \label{lem:blocked-home-bags}
  Let $\collectedsym \in \collectedcomps{\Tpref}{X}$ be blocked and let $t$ be the home bag of \collectedsym.
  Assume that $\parent{t} \notin \Tpref$.
  Then $|\bag(t) \cap X| < |\bag(\parent{t})|$.
  \begin{proof}
    Since \collectedsym is blocked, $t$ is a~blockage.
    From \cref{lem:no-closure-under-blockage} we have $\component{t} \cap X = \emptyset$, so $\bag(t) \cap X = \adhesion{t} \cap X$.
    But it cannot be that $\adhesion{t} \cap X = \bag(\parent{t})$: otherwise, as $\adhesion{t} \cap X$ is a~clique in $\torso[G]{X}$ (\cref{lem:blockage-all-internally-disjoint-paths}), $\parent{t}$ would be a~clique blockage and then $t$ could not be a~blockage itself.
    Since $\adhesion{t} \subseteq \bag(\parent{t})$, we conclude that $|\bag(t) \cap X| < |\bag(\parent{t})|$.
  \end{proof}
\end{lemma}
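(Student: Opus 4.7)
The plan is to first translate the hypothesis ``$\collectedsym$ is blocked with home bag $t$'' into a structural statement about $t$. By the definition of home bags for blocked components, $t$ is the unique blockage with $\compression(\collectedsym)=\{t\}$, so in particular $t\in \blockages{\Tpref}{X}$. This gives access to the two tools built for blockages in the preceding subsection.

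Next, I would apply \cref{lem:no-closure-under-blockage} to obtain $\component{t}\cap X=\emptyset$. Since $\bag(t)\setminus\adhesion{t}\subseteq \component{t}$, this yields $\bag(t)\cap X=\adhesion{t}\cap X$. Because $\adhesion{t}\subseteq \bag(\parent{t})$, this already gives the weak inequality $|\bag(t)\cap X|\le |\bag(\parent{t})|$. The entire content of the lemma thus reduces to ruling out equality, which is where the hypothesis $\parent{t}\notin \Tpref$ must be used.

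For strict inequality, I would argue by contradiction. Suppose $|\bag(t)\cap X|=|\bag(\parent{t})|$. Combined with $\bag(t)\cap X=\adhesion{t}\cap X\subseteq \bag(\parent{t})$, this forces $\bag(\parent{t})=\adhesion{t}\cap X$; in particular $\bag(\parent{t})\subseteq X$. Furthermore, the remark following \cref{lem:blockage-all-internally-disjoint-paths} states that, for any blockage $t$, the set $\bag(t)\cap X$ is a clique in $\torso[G]{X}$. Thus $\bag(\parent{t})$ itself is a clique in $\torso[G]{X}$ and lies entirely in $X$, so $\parent{t}$ satisfies the bag-level condition of a clique blockage (\cref{def:blockage}). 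Combined with the assumption $\parent{t}\notin \Tpref$, this contradicts the maximality clause in the definition of a blockage: either $\parent{t}$ itself is a blockage, in which case $t$'s strict ancestor is a blockage and $t$ cannot be one; or some strict ancestor of $\parent{t}$ is already a blockage, which again is a strict ancestor of $t$ and rules out $t$ being a blockage. Either way, we contradict $t\in\blockages{\Tpref}{X}$.

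The only subtle point, and therefore the main thing to get right, is the last step: one must be careful that the definition of blockage forbids any strict ancestor from being a blockage, so the contradiction does not require $\parent{t}$ itself to actually be a blockage, only to be a candidate; the ancestor-closure under ``being a strict ancestor'' handles both sub-cases uniformly. Everything else is bookkeeping with definitions of $\bag$, $\adhesion$, and $\component{\cdot}$.
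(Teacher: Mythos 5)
Your proof is correct and follows essentially the same route as the paper: reduce $\bag(t)\cap X$ to $\adhesion{t}\cap X$ via \cref{lem:no-closure-under-blockage}, use the clique property from \cref{lem:blockage-all-internally-disjoint-paths}, and rule out equality because $\parent{t}$ would then satisfy the clique-blockage bag condition, contradicting that $t$ is a blockage. The extra case analysis in your final paragraph is sound but unnecessary here: since $t$ is a blockage, no strict ancestor of $t$ (hence no strict ancestor of $\parent{t}$) is a blockage, so $\parent{t}$ is not merely a ``candidate'' but genuinely a clique blockage, which is precisely how the paper closes the contradiction.
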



\subsection{Computing closures and auxiliary objects}
\label{ssec:computing-closures}


We finally show that the objects defined in the previous sections can be computed efficiently in a~tree decomposition that is changing dynamically under prefix-rebuilding updates.
Recall that such decompositions are modeled by the annotated tree decompositions, and data structures maintaining such dynamic decompositions are called prefix-rebuilding data structures.
In this section, we fix a~specific weight function: for an annotated tree decomposition $\Tc=(T,\bag,\edges)$ of a graph $G$, as the weight function we use the depth function $d_\Tc$ of $\Tc$.

We are now ready to state the promised result.

\begin{lemma}\label{lem:closure-maintenance}
  For every $c, \ell, k \in \N$ with $\ell \geq k$, there exists an~$\ell$-prefix-rebuilding data structure with overhead $2^{\Oh{(c + \ell)^2}}$, additionally supporting the following operation:
  \begin{itemize}
    \item $\mathsf{query}(\Tpref)$: given a~prefix $\Tpref$ of $T$, returns either \noclosuremsg if there is no $c$-small $k$-closure of $\bags(\Tpref)$, or
      \begin{itemize}[nosep]
        \item a~$d_\Tc$-minimal $c$-small $k$-closure $X$ of $\bags(\Tpref)$,
        \item the graph $\torso[G]{X}$, and
        \item the set $\blockages{\Tpref}{X} \subseteq V(T)$.
      \end{itemize}
      This operation runs in worst-case time $2^{\Oh{k(c + \ell)^2}} \cdot (|\Tpref| + |\exploration{\Tpref}{X}|)$.
  \end{itemize}
\end{lemma}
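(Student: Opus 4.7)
The strategy is to maintain, through the automaton-based dynamic-programming framework formalized in \cref{sec:dynamic-dynamic-programming} (cf.~\cref{lem:automaton-maintenance-informal}), a Bodlaender--Kloks-style run on $(T,\bag,\edges)$ whose table at each node $t$ enumerates ``signatures'' describing the restriction of a hypothetical $c$-small $k$-closure to $\bag(t)$, together with a certificate that $\torso[G]{X}$ has treewidth at most~$2k+1$. Each signature stores the lexicographically smallest pair $(|X \cap \component{t}|,\,\sum_{v \in X \cap \component{t}} d_\Tc(v))$ achieved by a partial closure compatible with it in the subtree rooted at~$t$. A call to $\mathsf{query}(\Tpref)$ then performs a top-down sweep over $\Tpref \cup \App(\Tpref)$ combining the cached subtree tables into a global choice realizing an $\omega$-minimal $c$-small $k$-closure~$X$ for $\omega = d_\Tc$, after which $\torso[G]{X}$ and $\blockages{\Tpref}{X}$ are extracted in a further localized pass.

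\textbf{Signatures and transitions.} A signature at $t$ is a triple $(Y,\sigma,j)$ where $Y \subseteq \bag(t)$ represents the intended $X \cap \bag(t)$; $\sigma$ is a Bodlaender--Kloks characteristic encoding an equivalence class of partial tree decompositions of the portion of $\torso[G]{X}$ processed so far, certifying extendability to width at most $2k+1$; and $j \in \{0,1,\dots,c,\,>\!c\}$ is the capped count $\min(|X \cap \component{t}|,\,c{+}1)$. The number of signatures per node is bounded by $2^{\ell+1} \cdot 2^{\Oh{k^2}} \cdot (c+2) \leq 2^{\Oh{(c+\ell)^2}}$, matching the stated overhead. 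Transitions at $t$ combine the signatures of its (at most two) children with $\bag(t)$ and $\edges(t)$ in time $2^{\Oh{(c+\ell)^2}}$; validity of $\sigma$ under the width bound $2k+1$ and the update of $j$ are standard BK manipulations, and the $c$-smallness constraint is applied only later, at query time, to the signatures sitting at appendices.

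\textbf{Handling $d_\Tc$.} The function $v \mapsto d_\Tc(v)$ equals the depth of $\Top(v)$, which is globally defined and invisible to a bottom-up subtree computation. The workaround is to charge $d_\Tc(v)$ at the moment $v$ is \emph{forgotten}: when transitioning from a child $s$ of $t$, every $v \in \bag(s) \setminus \bag(t)$ that belonged to the child's $Y$ contributes $\mathrm{depth}(s)$ to the accumulated depth-sum, because the vertex condition of tree decompositions then forces $\Top(v) = s$. The depths $\mathrm{depth}(s)$ are maintained locally by the auxiliary automaton of \cref{lem:height-maintenance}. An analogous contribution is added during the prefix-level DP when $Y_t$ forgets a vertex at some node of $\Tpref$. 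Since the primary objective $|X|$ is minimized lexicographically before the secondary objective $d_\Tc(X)$, this charging scheme exactly reproduces $\omega$-minimality in the sense of \cref{def:d-minimal-closure}.

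\textbf{Query, blockages, and main obstacle.} For $\mathsf{query}(\Tpref)$ we run a second, small DP over $\Tpref \cup \App(\Tpref)$ that at each appendix $a$ picks a cached signature with $j_a \leq c$, and at each node of $\Tpref$ forces $Y_t \supseteq \bag(t)$ so that $\bags(\Tpref) \subseteq X$, fusing the BK characteristics into a global one for $\torso[G]{X}$ of width $\leq 2k+1$. If no feasible combination exists, \noclosuremsg is returned; otherwise a standard reconstruction yields~$X$. The edges of $\torso[G]{X}$ are then recovered directly from the BK characteristics along the explored nodes, and blockages are identified by a top-down descent from $\App(\Tpref)$ that stops as soon as the two alternatives of \cref{def:blockage} are met, using an auxiliary bit of state (maintained by the same automaton) telling whether the current subtree contains any vertex of $X$ outside of $\bag(t)$. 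Summing over the at most $|\Tpref| + |\exploration{\Tpref}{X}|$ visited nodes and charging $2^{\Oh{k(c+\ell)^2}}$ work to each gives the claimed query bound. The principal difficulty is the reconciliation of $d_\Tc$-minimality with the local nature of tree-decomposition DP; the charging-at-$\Top$ trick resolves it, at the cost of carrying $j$ inside the state, and the rest of the argument is a routine (if notationally heavy) application of \cref{lem:automaton-maintenance-informal}.
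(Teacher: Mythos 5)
Your high-level architecture (cache a BK-flavored DP bottom-up, combine over $\Tpref\cup\App(\Tpref)$ at query time, then sweep for blockages) is the right shape, but the central technical step — reconciling $d_\Tc$-minimality with caching under prefix-rebuilding updates — is handled incorrectly, and this is exactly the ``main obstacle'' you flag.

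Your charging scheme accumulates $\mathrm{depth}(s)$, the \emph{absolute} depth of $s$ in $T$, into the cached state when $v$ is forgotten at $s$. This is broken for two compounding reasons. First, a tree decomposition automaton in the sense of \cref{def:tda} is a bottom-up object; the depth of a node is a function of its \emph{ancestors}, which the automaton never sees, and \cref{lem:height-maintenance} does not in fact maintain depths (it gives $\height$, $\size$, $\cmpsize$, $\Top$ — none of these is depth, and none can be: depth is not computable by a bottom-up recurrence). Second, and more fundamentally, absolute depths are not stable under prefix-rebuilding updates: when $\Tpref$ is replaced by $\Tpref'$ and an appendix $a$ is reattached at a different depth, \emph{every} node in $T_a$ changes depth, but a prefix-rebuilding data structure with bounded overhead touches only $\Oh{|\overline u|}$ nodes and in particular never re-evaluates cached states below appendices. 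Any accumulated depth-sum stored in the state of a node below $a$ would become stale, and there is no budget to fix it. The paper's proof sidesteps this by storing in $\reps^c(G_x,\ldots)$ the depth-sum measured \emph{relative to the subtree root $x$} (which is invariant under reattachment), together with the cardinality $s^\lambda(a)=|X\cap\component a|$, and then applies the additive correction $d_T(x,a)\cdot s^\lambda(a)$ only at query time inside the prefix-level DP; this works precisely because the primary objective $|X|$ has already been minimized, so the multiplier $s^\lambda(a)$ is a known constant rather than a free variable.

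Two secondary problems are worth noting. (i) You fold the Bodlaender--Kloks characteristic $\sigma$ into the cached state. The stated overhead for the prefix-rebuilding data structure is $2^{\Oh{(c+\ell)^2}}$, but BK on a width-$(c+\ell)$ decomposition with target $2k+1$ costs $2^{\Oh{k(c+\ell)^2}}$ per transition; your estimate $2^{\Oh{k^2}}$ for the BK states does not account for the width $c+\ell$. The paper avoids this by caching only the closure automaton $\Rr_{c,\ell}$ (overhead $2^{\Oh{(c+\ell)^2}}$) and running BK exclusively at query time over the $\Oh{|\Tpref|}$ nodes of $\Tpref\cup\App(\Tpref)$, which is where the factor $k$ in the query bound originates. (ii) The blockage sweep uses ``an auxiliary bit of state (maintained by the same automaton) telling whether the current subtree contains any vertex of $X$ outside of $\bag(t)$'' — but $X$ is determined at query time, so no bottom-up automaton can precompute this bit. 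The paper instead extracts $X_x=X\cap\component x$ at query time from the cached $\reps^c(T_x,\ldots)$ by picking the largest candidate whose non-adhesion vertices all lie in $X$ (\cref{cl:conn-down}), and then propagates connectivity ``profiles'' top-down. Finally, your cached states store a BK characteristic rather than an actual torso, so it is not clear how you would reconstruct $\torso[G]{X}$ or the closure $X$ itself at query time; the paper's representatives carry explicit torsos for this reason.
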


The proof of \cref{lem:closure-maintenance} is provided in \cref{ssec:closures-and-blockages}.


\section{Refinement data structure}
\label{sec:refinement}
In this section we define a data structure, called the \emph{refinement data structure}, which will be used in \Cref{sec:height,sec:wrap-up} to improve a tree decomposition of a given dynamic graph.


Let $G$ be a graph of treewidth at most $k$, which is changing over time by edge insertions and deletions.
The ultimate goal of this work is to maintain an annotated binary tree decomposition $\Tc = (T, \bag, \edges)$ of $G$ of width at most $6k + 5$ (we usually write $\Tc = (T,\bag)$ when we are not using the $\edges$ function). The aim of the refinement operation is twofold.
First, it is used to improve the width of the decomposition, that is, when given a prefix $\Tpref$ of $(T,\bag)$ that contains all the bags of size more than $6k + 6$, the refinement operation outputs a tree decomposition $(T', \bag')$ of width at most $6k + 5$.
Second, it is called when the tree of the decomposition becomes \emph{too unbalanced}.
This shall be clarified later in \cref{sec:height}.

Before proceeding with the description of the refinement data structure, let us introduce the potential function which plays a major role in both the analysis of the amortized complexity of the refinement and in the height-reduction scheme of \Cref{sec:height}.

\subsection{Potential function}\label{sec:refine-potential}
Let $k \in \N$ be the upper bound on the treewidth of the dynamic graph, and $\ell = 6k+5$, which is the desired width of the tree decomposition we are maintaining (in particular, the actual width can be at most $6k+6$ when the refinement operation is called).
For a node $t$ of a tree decomposition $\Tc = (T, \bag)$, we define its potential by the formula:
\begin{equation}
\Phi_{\ell,\Tc}(t) \coloneqq g_\ell(|\bag(t)|) \cdot \height_T(t),
\end{equation}
where $$g_\ell(x) \coloneqq (53 (\ell + 4))^x\qquad \textrm{for every }x\in \N.$$
Observe that since we maintain a decomposition of width~$\Oh{k}$, we have $$\Phi_{\ell, \Tc}(t) \leq 2^{\Oh{k \log k}} \cdot \height_T(t)\qquad\textrm{for every node }t.$$
Intuitively, the term $g_\ell(|\bag(t)|)$ in the potential function allows us to update the tree decomposition by replacing the node $t$ with $\Oh{\ell}$ copies of $t$, where each copy $t'$ has the same height as $t$ but the bag of $t'$ is strictly smaller than that of $t$.
Then, we can pay for such a transformation from the decrease in the potential function.
The second term $\height_T(t)$ will turn out to be essential in \cref{sec:height}: it will allow to argue that the potential function can be decreased significantly if the current tree decomposition is too unbalanced.

For a subset $W \subseteq V(T)$, we denote
\[
\Phi_{\ell, \Tc}(W) \coloneqq \sum_{t \in W} \Phi_{\ell, \Tc}(t),
\]
and similarly, for the whole tree decomposition $\Tc = (T, \bag)$, we set
\[
\Phi_\ell(\Tc) \coloneqq \sum_{t \in V(T)} \Phi_{\ell, \Tc}(t).
\]

\subsection{Data structure}
\label{sec:refine-operation} 
The next lemma describes the interface of our data structure.

\begin{lemma}[Refinement data structure]\label{lem:refinement-data-structure}
  Fix $k \in \N$ and let $\ell = 6k + 5$.
  There exists an~$(\ell+1)$-prefix-rebuilding data structure with overhead $2^{\Oh{k^8}}$, that maintains a tree decomposition $\Tc = (T,\bag)$ with $N \coloneqq |V(T)|$ nodes, and additionally supports the following operation:
  \begin{itemize}
    \item $\mathsf{refine}(\Tpref)$: given a~prefix $\Tpref$ of $T$ so that $\Tpref$ contains all nodes of $\Tc$ with bags of size $\ell+2$, returns a description $\overline{u}$ of a prefix-rebuilding update, so that the tree decomposition $\Tc'$ obtained by applying $\overline{u}$ has the following properties:
  \end{itemize}
    \begin{itemize}
        \witem $\Tc'$ has width at most $\ell$ and
        \amitem{potential-difference} the following inequality holds:
        \begin{align*}
    \Phi_\ell(\Tc) - \Phi_\ell(\Tc') \geq \sum_{t \in \Tpref} \height_T(t) & - 2^{\Oh{k \log k}} \cdot \left( |\Tpref| + \sum_{t \in \App(\Tpref)} \height_T(t) \right) \cdot \log N.
        \end{align*}
    \end{itemize}
    Moreover, it holds that
    \begin{itemize}
        \rtitem{running-time} the worst-case running time of $\mathsf{refine}(\Tpref)$, and therefore also $|\overline{u}|$, is upper-bounded by
        \[
        2^{\Oh{k^{9}}} \cdot \left( \left( |\Tpref| + \sum_{t \in \App(\Tpref)} \height_T(t)\right) \cdot \log N + \max(\Phi_\ell(\Tc) - \Phi_\ell(\Tc'), 0) \right).
        \]
    \end{itemize}
\end{lemma}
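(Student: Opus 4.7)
The plan is to implement $\mathsf{refine}(\Tpref)$ following the recipe of \Cref{subsec:overview_refin}. Invoking \Cref{lem:closure-maintenance} with $c \coloneqq g(\ell+1) \in \Oh{\ell^4}$ (the constant from the Small Closure Lemma \Cref{lem:small-closure-lemma}) and weight function $\omega \coloneqq d_\Tc$, I obtain a $d_\Tc$-minimal $c$-small $k$-closure $X$ of $\bags(\Tpref)$, together with $\torso[G]{X}$ and $\blockages{\Tpref}{X}$. By the Closure Linkedness Lemma \Cref{lem:closure-linkedness-lemma}, the neighborhood of every collected component $\collectedsym \in \collectedcomps{\Tpref}{X}$ is $d_\Tc$-linked into $\bags(\Tpref)$. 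I then compute an optimum-width tree decomposition of $\torso[G]{X}$ via a static FPT treewidth algorithm and balance it with Bodlaender--Hagerup, producing a binary decomposition $\Tc^X$ of width at most $3(2k+1)+2 = \ell$ and height $\Oh{\log |X|}$. For each $\collectedsym$, I construct $\Tc^\collectedsym$ by restricting the subtree under the home bag of $\collectedsym$ to $\neighclosed{\collectedsym}$ and ``pushing'' $\neighopen{\collectedsym}$ into the root bag (inserting each $v \in \neighopen{\collectedsym}$ into every bag on the path from the root to its topmost occurrence). Since $\neighopen{\collectedsym}$ is a clique in $\torso[G]{X}$ (\Cref{lem:collected-neigh-clique}), \Cref{fact:tw-clique-bag} gives a bag of $\Tc^X$ containing it; I attach the root of $\Tc^\collectedsym$ there. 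Finally, groups of $\Tc^\collectedsym$'s attached to the same bag are flattened into a binary gadget of logarithmic height, producing a weak description of the prefix-rebuilding update that is then strengthened via \Cref{lem:prds-strengthen}.

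Let $V^{\mathrm{new}}$ denote the set of newly created nodes (those of $\Tc^X$ together with the flattening gadgets). For the bound \wref, $\Tc^X$ has width at most $\ell$ by construction, and the $d_\Tc$-linkedness of each $\neighopen{\collectedsym}$ gives by a Menger argument (as in \Cref{lem:overview:linkedstandard}) that every bag of $\Tc^\collectedsym$ is no larger than its counterpart in $\Tc$; since all bags of size $\ell+2$ lie in $\Tpref$, this yields width at most $\ell$. For \amref{potential-difference}, a strengthened form of \Cref{lem:overview:linkedpotential}, exploiting $d_\Tc$-linkedness so that also the $\height_T(t)$-factor of $\Phi_\ell$ is controlled, ensures that $\sum_\collectedsym \Phi_\ell(\Tc^\collectedsym)$ is at most the total potential of the subtrees they replace. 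Hence the potential decrease is at least $\Phi_{\ell,\Tc}(\Tpref) - \Phi_{\ell,\Tc'}(V^{\mathrm{new}})$. Now $|V^{\mathrm{new}}| = \Oh[k]{|\Tpref|}$ by \Cref{lem:small-closure-lemma} (each appendix contributes $\Oh{k^4}$ vertices of $X$, hence at most $k^{\Oh{1}}$ distinct attachment points), and each appendix $a \in \App(\Tpref)$ causes the height of only $k^{\Oh{1}} \cdot \Oh{\log N}$ nodes of $V^{\mathrm{new}}$ to grow (the attachment points times the flattening-gadget depth), each by at most $\height_T(a)$. Consequently $\Phi_{\ell,\Tc'}(V^{\mathrm{new}}) \leq 2^{\Oh{k \log k}} \cdot (|\Tpref| + \sum_{a \in \App(\Tpref)} \height_T(a)) \cdot \log N$, while $\Phi_{\ell,\Tc}(\Tpref) \geq \sum_{t \in \Tpref} \height_T(t)$ since $g_\ell \geq 1$, yielding the claimed inequality.

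The running-time bound \rtref{running-time} then follows because the call to \Cref{lem:closure-maintenance}, the static treewidth routine on $\torso[G]{X}$ (of size $\Oh[k]{|\Tpref|}$), and the construction and attachment of the $\Tc^\collectedsym$'s all take time $2^{\Oh{k^9}}$ per node touched, and the touched nodes are exactly those in $V^{\mathrm{new}}$ (contributing the $(\cdot) \cdot \log N$ term) together with the bags actually rewritten inside the $\Tc^\collectedsym$'s (charged, by a standard amortized argument exploiting the $d_\Tc$-linked strengthening of \Cref{lem:overview:linkedpotential}, against $\max(\Phi_\ell(\Tc) - \Phi_\ell(\Tc'), 0)$). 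The main obstacle will be this strengthened linkedness-based potential lemma: $d_\Tc$-linkedness must be leveraged not only to bound the \emph{size} of each bag in $\Tc^\collectedsym$ after pushing, but also to ensure that the \emph{depth-weighted} contribution $g_\ell(|\bag(t)|) \cdot \height_{T'}(t)$ of each such bag does not exceed its counterpart in $\Tc$ along any root-to-leaf path. Once this fine accounting is in place, the binary-flattening bookkeeping --- ensuring $\Oh[k]{\log N}$ height-increased nodes per appendix, not $\Theta(|\App(\Tpref)| \cdot \log N)$ --- is the remaining combinatorial crux delivering the $\sum_{a} \height_T(a) \cdot \log N$ overhead rather than $\max_a \height_T(a) \cdot |\App(\Tpref)| \cdot \log N$.
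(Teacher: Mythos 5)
Your high-level plan matches the paper's: closure via \cref{lem:closure-maintenance}, balanced decomposition of the torso, components $\Tc^{\collectedsym}$ attached at their interface cliques, and a binary flattening gadget. But you correctly flag the two hard points — the ``strengthened linkedness-based potential lemma'' and the ``binary-flattening bookkeeping'' — as obstacles, and you do not actually resolve them. Those are precisely the substance of the paper's proof (\cref{lem:exchange-argument} and \cref{lem:green-trees}/\cref{lem:collected-mapping}), and as stated your accounting does not close.

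Concretely, your potential bookkeeping goes wrong in two places. First, you set $V^{\mathrm{new}}$ to contain the flattening gadgets and then claim $|V^{\mathrm{new}}| \in \Oh[k]{|\Tpref|}$ and $\Phi_{\ell,\Tc'}(V^{\mathrm{new}}) \leq 2^{\Oh{k\log k}}\big(|\Tpref| + \sum_{a\in\App(\Tpref)}\height_T(a)\big)\log N$. Neither is true: the gadgets have one leaf per collected component, and the number of collected components is $\Theta(\ell\cdot|F|)$ where $F = \exploration{\Tpref}{X}$ is the explored prefix, which can be as large as $V(T)$ and is not controlled by $|\Tpref|$ or $\sum_a\height_T(a)$. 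The paper never bounds the gadget potential by $\Oh[k]{|\Tpref|\log N}$; it charges $\sum_B \Phi(\Tnewpref{B})$ against the potential drop inside the explored subtrees via the mapping of \cref{lem:collected-mapping}, producing the term $52(\ell+4)\sum_{t\in\Fprim} g(|\bag(t)|-1)\height_T(t)$ in \cref{clm:phi-bound2}. Second, for this charging to work the non-gadget part needs a strict bag-size decrease: \cref{lem:exchange-argument} (the $d_\Tc$-linkedness exchange argument) gives $|\bag^{\collectedsym}(t)| < |\bag(\origin(t))|$ for explored $\origin(t)$ and multiplicity $|\origininv(\origin(t))|\leq\ell+1$, so that the factor $g(x)/g(x-1) = 53(\ell+4)$ simultaneously absorbs the $(\ell+1)$-fold overcounting of explored nodes \emph{and} the gadget potential. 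A non-strict version (``bag does not grow'') would leave no slack for either.

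The flattening gadget itself cannot be ``a binary gadget of logarithmic height'' in the generic sense: a balanced tree attached at one interface with $m$ components would incur gadget potential roughly $\Oh{m\cdot\max_i\height(\Tnew{\collectedsym_i})}$, which is not chargeable to $\sum_i\height(\Tnew{\collectedsym_i})$. The Huffman-like construction of \cref{lem:green-trees}, which depth-sorts components by height into dyadic buckets, yields the crucial bound $\sum_{t\in V(\Tnewpref{B})}\height_{\Tnew{B}}(t)\leq 26\sum_i(h_i)$; this is indispensable, not a bookkeeping detail. Finally, a smaller quantitative slip: the number of distinct interfaces per appendix is $k^{\Oh{k}}$ (via $\binom{\Oh{k^4}+\ell+1}{\leq 2k+2}$, \cref{lem:everywhere-not-many-neighborhoods}), not $k^{\Oh{1}}$; this does not affect the shape of the final bound because the statement allows $2^{\Oh{k\log k}}$, but it should be stated correctly.
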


In the remainder we present the data structure of \cref{lem:refinement-data-structure} and prove some key properties used later.
This section ends with a proof of the correctness of the refinement operation, including the proof of property \wref.
The amortized analysis of the data structure, in particular the proofs of properties \rtref{running-time} and \amref{potential-difference} will follow in \cref{sec:refine-analysis}.


Fix positive integers $k, \ell$ such that $\ell = 6k + 5$.
Let $G$ be a dynamic $n$-vertex graph of treewidth at most $k$.
In our data structure we store:
\begin{itemize}
\item a binary annotated tree decomposition $\Tc = (T, \bag, \edges)$ of $G$ of width at most $\ell + 1$;
\item an instance $\Daux$ of a data structure from \cref{lem:height-maintenance} for maintaining various auxiliary information about $\Tc$; and
\item an instance $\Dexplore$ of a data structure from \cref{lem:closure-maintenance} used to compute necessary objects for the refinement operation.
\end{itemize}

Implementation of the update operations on our data structure is simple. Upon receiving a prefix-rebuilding update $\tup u$ with a prefix $\Tpref$, we recompute the decomposition $(T,\bag,\edges)$ according to $\tup u$, and pass $\tup u$ to  the inner data structures $\Daux$ and~$\Dexplore$. The initialization of the data structure is similarly easy.

From now on, we focus on the refinement operation.
Let $\Tpref \subseteq V(T)$ be the given prefix of $T$ which includes all bags of $\Tc$ of size $\ell + 2$.

For the ease of presentation, instead of constructing a description $\overline{u}$ of a prefix-rebuilding update, we shall show how to construct a tree decomposition $(T', \bag')$ obtained from $(T, \bag)$ after applying $\overline{u}$.
The suitable description $\overline{u}$ can be then easily extracted from $(T',\bag')$ and from the objects computed along the way, in particular, \cref{lem:prds-strengthen} will be implicitly used here.



The refinement operation proceeds in five steps.


\myparagraph{\stepitem{aux} (Compute auxiliary objects)}
Given a prefix $\Tpref$ of the decomposition $\Tc=(T, \bag)$ we use the data structure $\Dexplore$ from \Cref{lem:closure-maintenance} to compute the following objects:

\begin{itemize}
\item a $d_\Tc$-minimal $c$-small $k$-closure $X$ of $\bags(\Tpref)$, where $d_\Tc$ is the depth function of $\Tc$ defined in \cref{sec:preliminaries} and $c \in \Oh{k^4}$ is the bound given \cref{lem:small-closure-lemma}, 
\item the graph $\torso[G]{X}$, and
\item the set $\blockages{\Tpref}{X} \subseteq V(T)$.
\end{itemize}
Then, we immediately apply the following \Cref{lem:explorationgraphcomputation}, provided below, to compute also
\begin{itemize}
\item the explored prefix $F \coloneqq \exploration{\Tpref}{X}$ and
\item the exploration graph $H \coloneqq \explorationgraph{\Tpref}{X}$.
\end{itemize}

\begin{lemma}
\label{lem:explorationgraphcomputation}
Let $\Tpref$ be a prefix of $T$ and $X$ a $d_\Tc$-minimal $c$-small $k$-closure of $\bags(\Tpref)$.
Given $\Tpref$, $X$, and $\blockages{\Tpref}{X}$, the explored prefix $F$ and the exploration graph $H$ can be computed in time $k^{\Oh{1}} \cdot |F|$.
\end{lemma}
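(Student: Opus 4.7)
The plan is a direct traversal of $T$ guided by the precomputed set $\blockages{\Tpref}{X}$, followed by reading off the vertices and edges of $H$ from the bags and edge-labels of the visited nodes. The key structural observation is that $F$ is the unique prefix of $T$ containing $\Tpref$ whose set of appendices is exactly $\blockages{\Tpref}{X}$, so $F$ can be computed by starting from $\Tpref$ and descending through children, halting on each branch at the first blockage encountered, which itself is not included.

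First I would place $\Tpref$ and $\blockages{\Tpref}{X}$ into hash tables for $\Oh{1}$ membership queries. Then I would compute $F$ as follows. Initialize $F := \Tpref$; for each appendix $a$ of $\Tpref$ (identified by scanning the children of each node of $\Tpref$), perform a DFS starting at $a$: when visiting a node $t$, if $t$ is a blockage, do not add it to $F$ and do not descend further; otherwise add $t$ to $F$ and recurse into each child of $t$. Since $T$ is binary, every node of $F$ has at most two children inspected, so the total work is $\Oh{|F|}$. Building $V(H)$ is then a matter of iterating over $F$, inserting all vertices of $\bag(t)$ for each $t \in F$ into a hash set (the set of explored vertices), and finally adding the nodes of $\blockages{\Tpref}{X}$ as blockage vertices; as every bag has size $\Oh{k}$, this takes $\Oh{k \cdot |F|}$ time.

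Finally, for $E(H)$ I would perform two passes. First, iterate over $F$ and insert every edge of $\edges(t)$ into $E(H)$; this produces exactly the $G$-edges with both endpoints explored, because for each such edge $uv$ some bag of $F$ contains both endpoints, so by the vertex condition the shallowest node containing both $u$ and $v$ --- which is precisely the node storing $uv$ in $\edges$ --- is an ancestor of a node of $F$ and therefore itself lies in $F$. Second, iterate over $\blockages{\Tpref}{X}$; for each blockage $t$, compute $\adhesion{t} = \bag(t) \cap \bag(\parent{t})$ in $\Oh{k^2}$ time and insert an edge from $t$ to every vertex of this intersection. Using $|\blockages{\Tpref}{X}| \le |F|+1$ from \cref{fact:number-of-appendices}, both passes run in $\Oh{k^2 \cdot |F|}$ time, yielding the claimed bound of $k^{\Oh{1}} \cdot |F|$. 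No significant obstacle arises; the one subtle point is verifying that $\edges$-entries at nodes of $F$ suffice to cover all edges among explored vertices, which follows from the vertex condition of the tree decomposition combined with the ancestor-closedness of $F$.
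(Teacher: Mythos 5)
Your proposal is correct and takes essentially the same approach as the paper: a DFS that halts at blockages to compute $F$, the union of $\bags(F)$ with the blockage nodes for $V(H)$, and reading the explored--explored edges off $\edges(F)$ together with the adhesions of the blockages for the remaining edges. The paper is equally terse about why $\edges(F)$ captures every edge among explored vertices; your justification of that point (the shallowest bag containing both endpoints lies in the ancestor-closed set $F$) is the right one.
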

\begin{proof}
First, $F$ can be computed in time $\Oh{|F|}$ by simply a depth-first search on $T$ that stops on blockages.
Then, the vertices $V(H)$ of the exploration graph can be computed in time $k^{\Oh{1}} \cdot |F|$ by taking the union of $\bags(F)$ and $\blockages{\Tpref}{X}$.
Having access to $(T,\bag,\edges)$, the induced subgraph $G[\bags(F)]$ can be computed in time $k^{\Oh{1}} \cdot |F|$, because all of its edges are stored in $\edges(F)$.
Also, the edges between blockage vertices and vertices in $\bags(F)$ can be directly computed from the definition.
\end{proof}

The running time of this step is clearly dominated by the call to $\Dexplore$, and by substituting $c = \Oh{k^4}$ in the bound on the running time of $\mathsf{query}$, we conclude the following.

\begin{itemize}
\rtitem{aux} The running time of \stepref{aux} is $2^{\Oh{k^9}} \cdot |F|$.
\end{itemize}

In further analysis of the refinement data structure, the following definitions will be useful.
\begin{definition}
\label{def:interface-weight-height}
For a collected component $\collectedsym \in \collectedcomps{\Tpref}{X}$, we define its \emph{interface}, \emph{weight}, and \emph{height}, denoted respectively by $\Cinterface$, $\weight(\collectedsym)$ and $\height(\collectedsym)$, as follows.
\begin{itemize}
  \item If $\collectedsym$ is unblocked, then set $\Cinterface \coloneqq \neighopen{\collectedsym}$.
  \item If $\collectedsym$ is blocked, then set $\Cinterface \coloneqq \bag(t) \cap X$, where $t \in \blockages{\Tpref}{X}$ is the only element of $\compression(\collectedsym)$ (i.e., $t$ is the home bag of $\collectedsym$).
\end{itemize}
In both cases, we set 
$$\weight(\collectedsym) \coloneqq |\Cinterface|\qquad\textrm{and}\qquad\height(\collectedsym) \coloneqq \height(t),$$ where $t$ is the home bag of $\collectedsym$.
\end{definition}

Also, we introduce the notation mapping each set $B \subseteq V(G)$ to the set of collected components with interface $B$:
$$ \invinterface{B} \coloneqq \{\collectedsym \in \collectedcomps{\Tpref}{X} \,\colon\, \Cinterface = B\}. $$

We remark that it follows from \cref{lem:collected-types-description} that $\neighopen{\collectedsym} \subseteq \Cinterface$.

\myparagraph{\stepitem{prefix} (Build a new prefix)}
At this step, we intend to construct a prefix of a new decomposition $\Tc' = (T', \bag')$.
As it turns out, the prefix will be a tree decomposition $\Tdnew{X}$ of $\torso[G]{X}$ with some additional properties.
The reason for using here $\torso[G]{X}$ is simple: for every connected component $C$ of $G - X$, the neighborhood $N(C) \subseteq X$ is a clique in $\torso[G]{X}$.
Therefore, by \cref{fact:tw-clique-bag}, for every tree decomposition $(\Tnew{X}, \bagnew{X})$ of $\torso[G]{X}$, there is a node $t \in \Tnew{X}$ whose bag $\bag(t)$ contains $N(C)$.
This will allow us to combine the tree decompositions of $\torso[G]{X}$ with tree decompositions of the components of $G - X$.


Let us give a detailed description of the constructed prefix $\Tdnew{X}$.
We start with stating the following two known results, due to Bodlaender, and Bodlaender and Hagerup, respectively.

\begin{theorem}[{\cite[Theorem 1]{bodlaender-tw-opt}}]
\label{thm:tw-opt}
Given a graph $G$, where $k \coloneqq \tw{G}$,
one can compute in time $2^{\Oh{k^3}} \cdot |V(G)|$ a~tree decomposition ${(T, \bag)}$ of~$G$ of width $k$ and with $|V(T)| = \Oh{|V(G)|}$.  
\end{theorem}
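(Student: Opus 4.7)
The plan is to follow the classical two-phase structure of Bodlaender's algorithm. First, I would compute a coarse tree decomposition of $G$ of width $\Oh{k}$ in time $2^{\Oh{k}} \cdot |V(G)|$. Then, I would feed this coarse decomposition into the dynamic programming procedure of Bodlaender and Kloks, which given any tree decomposition of width $w$ decides whether the underlying graph has treewidth at most $k$ in time $2^{\Oh{w^3}} \cdot |V(G)|$, and if so outputs an optimal-width tree decomposition. Substituting $w \in \Oh{k}$ coming from Phase~1 yields the final $2^{\Oh{k^3}} \cdot |V(G)|$ running time, which is the bottleneck in the whole procedure.

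For Phase~1 I would use a divide-and-conquer scheme exploiting the fact that a graph of treewidth at most $k$ with sufficiently many vertices must contain a linear-sized set $S$ of ``reducible'' vertices --- for instance, vertices of degree at most some $d_k$ whose neighborhoods can be simplified in a controlled way. Identifying such an $S$ in linear time gives a reduced graph $G'$ with $|V(G')| \le (1-\varepsilon_k)\,|V(G)|$ and still of treewidth at most $k$. I would recurse to obtain a coarse tree decomposition $\Tc'$ of $G'$, and then lift $\Tc'$ to a tree decomposition of $G$ of the same order of width by re-inserting the removed vertices into appropriately chosen bags. The recursion $T(n) = T((1-\varepsilon_k) n) + 2^{\Oh{k}} \cdot n$ solves to $T(n) = 2^{\Oh{k}} \cdot n$.

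For Phase~2 the Bodlaender--Kloks DP runs bottom-up on the coarse decomposition $(T, \bag)$. At every node $t$ it maintains the set of ``characteristics'' (compact encodings of all width-$k$ partial decompositions compatible with $\bag(t)$) of the subgraph induced by the subtree at $t$; the number of distinct characteristics is bounded by $2^{\Oh{w^3}}$. Merging characteristics at a join node takes time polynomial in this bound, yielding total time $2^{\Oh{w^3}} \cdot |V(T)|$. An affirmative answer at the root produces, by backtracking, an actual tree decomposition of $G$ of width exactly $k$; a routine postprocessing pass (removing redundant bags and collapsing identical adjacent bags) reduces the number of nodes to $\Oh{|V(G)|}$.

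The main obstacle is Phase~1: achieving truly linear time for an $\Oh{k}$-approximation of treewidth without already having a decomposition in hand. Bounding the density of reducible vertices and finding them in time $2^{\Oh{k}} \cdot n$ requires a delicate analysis through a chain of auxiliary notions (bounded-size reduction rules, matchings of low-degree vertices with simplicial-like neighborhoods, \emph{etc.}) before the reduction step can be implemented in linear time per level of the recursion. This is the technically most involved part of Bodlaender's original proof, whereas the rest of the argument reduces to a clean application of Bodlaender--Kloks DP.
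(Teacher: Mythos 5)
The paper does not prove \cref{thm:tw-opt}; it is cited verbatim from Bodlaender's 1996 linear-time algorithm~\cite{bodlaender-tw-opt}, so there is no in-paper proof to compare against. Your sketch captures the right ingredients --- local reduction rules, recursion on a shrunken graph, and the Bodlaender--Kloks dynamic programming --- but it misrepresents how they fit together in Bodlaender's actual argument, and the misrepresentation hides the hard part.

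Concretely, you present the algorithm as two cleanly separated phases: Phase~1 produces a standalone $\Oh{k}$-approximate decomposition in time $2^{\Oh{k}}\cdot n$, and only then Phase~2 runs the Bodlaender--Kloks DP once. In Bodlaender's original algorithm this separation does not exist. The recursion works by contracting a maximal matching (or deleting ``I-simplicial''/low-degree vertices), recursively computing an \emph{optimal-width} decomposition of the smaller graph, lifting it back (which roughly doubles the width, to about $2k{+}1$), and then calling Bodlaender--Kloks \emph{at that recursion level} to restore width $k$ before returning. The DP is inside the recursion, not after it, and the per-level cost is therefore $2^{\Oh{k^3}}\cdot n$, giving $T(n)=T((1-\varepsilon_k)n)+2^{\Oh{k^3}}\cdot n = 2^{\Oh{k^3}}\cdot n$. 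Your claimed $T(n)=T((1-\varepsilon_k)n)+2^{\Oh{k}}\cdot n$ for a standalone Phase~1 is where the gap is: the reduction rules by themselves do not let you lift an $\Oh{k}$-width decomposition of $G'$ to one of $G$ without the DP cleanup at each level, so no $2^{\Oh{k}}\cdot n$ constant-factor approximation falls out of this recursion. (The other claims --- Bodlaender--Kloks in $2^{\Oh{w^3}}n$, and pruning the decomposition to $\Oh{n}$ nodes --- are fine.)

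That said, the two-phase structure you describe \emph{is} realizable today via a different Phase~1: Korhonen's $2$-approximation in time $2^{\Oh{k}}\cdot n$~\cite{Korhonen21} (or the earlier $5$-approximation of Bodlaender et al.) gives exactly the standalone constant-factor step you want, after which a single Bodlaender--Kloks pass in $2^{\Oh{k^3}}\cdot n$ yields the theorem. So your proposed route is correct as a \emph{modern} proof of \cref{thm:tw-opt}, but it is not Bodlaender's proof, and attributing the standalone linear-time $\Oh{k}$-approximation to the reduction-rule recursion alone is where your argument as written would fail.
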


\begin{theorem}[{\cite[Lemma 2.2]{bodlaender-hagerup}}]
\label{thm:tw-shallow}
Given a graph $G$ and its tree decomposition $(T, \bag)$ of width $k'$,
one can compute in time $\Oh{k' \cdot |V(T)|}$ a binary tree decomposition $(T', \bag')$ of $G$ of height $\Oh{\log |V(T)|}$, width at most $3 k' + 2$, and with $|V(T')| = \Oh{|V(T)|}$.
\end{theorem}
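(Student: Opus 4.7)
The plan is to prove the theorem by divide-and-conquer based on centroid decomposition of the tree $T$, controlling the width through careful propagation of a ``boundary'' set of vertices through the recursion.

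First, I would normalize $(T, \bag)$ into a binary tree decomposition $(T_0, \bag_0)$ of $G$ with $|V(T_0)| \in \Oh{|V(T)|}$ and width $k'$, using standard transformations (splitting high-degree nodes and removing redundant bags); this takes $\Oh{k' \cdot |V(T)|}$ time. From now on I work with $(T_0, \bag_0)$.

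Next, I would define a recursive procedure $\mathsf{Build}(S, B)$, where $S$ is a connected subtree of $T_0$ and $B \subseteq V(G)$ is a boundary set of size at most $2(k'+1)$ intended to represent the vertices shared between $\bags_{T_0}(V(S))$ and the rest of the decomposition. The procedure picks a centroid node $c$ of $S$ — one whose removal splits $S$ into subtrees of size at most $|V(S)|/2$ each; since $T_0$ is binary, there are at most three such subtrees $S_1, S_2, S_3$. It emits a root bag equal to $B \cup \bag_0(c)$, of size at most $2(k'+1) + (k'+1) = 3(k'+1)$, so of width at most $3k'+2$. For each subtree $S_i$ with root $c_i$, $\mathsf{Build}$ recurses with boundary $B_i := (B \cap \bags_{T_0}(V(S_i))) \cup (\bag_0(c) \cap \bag_0(c_i))$. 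The crucial invariant $|B_i| \le 2(k'+1)$ is maintained because the adhesion between $c$ and $c_i$ has size at most $k'+1$, and the portion of $B$ that survives into $S_i$ also has size at most $k'+1$ (it is the adhesion between $S$ and the ``outside'' that persists on the $S_i$ side of $c$). The topmost call is $\mathsf{Build}(T_0, \emptyset)$.

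The height bound $\Oh{\log |V(T_0)|} = \Oh{\log |V(T)|}$ follows from the centroid property (each recursive call operates on a subtree at most half the size of the previous one), and the total cost can be brought to $\Oh{k' \cdot |V(T_0)|}$ by computing centroids with a standard linear-time scheme that maintains subtree sizes across calls; combined with normalization this gives the claimed $\Oh{k' \cdot |V(T)|}$ bound. The bound $|V(T')| = \Oh{|V(T)|}$ is immediate, as the recursion tree has $\Oh{|V(T_0)|}$ nodes. Correctness, i.e., the vertex and edge conditions for $(T', \bag')$, follows from the invariants: every vertex $v$ that appears in bags on both sides of $c$ either already lives in $B$ (hence in the emitted root bag, preserving connectedness) or is in $\bag_0(c)$ (again in the root bag); and for every edge $uv$ of $G$, the standard argument shows that at the first recursive level where $u$ and $v$ are separated between different $S_i$, they both lie in $B \cup \bag_0(c)$.

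The main obstacle is the bookkeeping that establishes $|B_i| \le 2(k'+1)$ and verifies the edge condition simultaneously. Concretely, one has to show that whenever an edge $uv$ of $G$ gets ``split'' by the centroid~$c$ — i.e., $u$ and $v$ end up in different subtrees of $S - c$ or one of them is in $\bag_0(c)$ — both endpoints are present in $B \cup \bag_0(c)$. This reduces, by induction on the recursion, to the fact that the adhesion on the edge of $T_0$ from $c$ towards $S_i$ separates $\bags_{T_0}(V(S_i))$ from the rest, combined with the assumption that $B$ already covers the interface of $S$ with the rest of $T_0$. Once this invariant is set up cleanly, the rest of the argument is routine.
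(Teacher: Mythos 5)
The paper does not contain its own proof of this lemma; it cites Bodlaender and Hagerup~\cite{bodlaender-hagerup}, so I am evaluating your argument on its own. There is a genuine gap: the invariant $|B_i|\le 2(k'+1)$ does not hold in general, and this breaks the width bound.

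The issue is your claim that the portion of $B$ surviving into $S_i$ has size at most $k'+1$. A subtree $S$ reached in the recursion can have \emph{two} boundary edges to the rest of $T_0$ (this already happens at depth two: $S_i$ inherits one boundary edge from $S$ and gains one to $c$). In that case $B$ is a union of two adhesion sets of total size up to $2(k'+1)$, and nothing forces the centroid $c$ of $S$ to separate the two boundary attachment points. If $S$ is, say, a path between the two attachment points with a heavy branch hanging off some internal node of that path, the centroid lies inside the branch rather than on the path, and then one of the pieces $S_i$ keeps \emph{both} old boundary edges plus the new boundary to $c$. In that case $B\cap\bags_{T_0}(V(S_i))$ can equal all of $B$, so $|B_i|$ can reach $3(k'+1)$, and one further level of recursion yields a bag of size $4(k'+1)$, i.e.\ width $4k'+3>3k'+2$; nothing stops this growth, so in the worst case the width drifts to $\Theta(k'\log n)$. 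To fix this you would have to restrict the split point to lie on the path between the two current boundary attachment points --- guaranteeing each piece keeps at most one inherited boundary --- and then give a separate argument that such a restricted split still shrinks the subtree geometrically. That is nontrivial and is precisely the bookkeeping you deferred at the end. (Bodlaender and Hagerup avoid this entirely by using a parallel tree-contraction --- rake and compress --- scheme in which each emitted bag combines a bounded number of original bags by construction.) Two smaller issues: the centroid node has up to three subtrees, so $T'$ still needs to be binarized; and recomputing centroids across the recursion costs $\Theta(|V(T_0)|\log|V(T_0)|)$ time in the standard scheme, which is not bounded by $\Oh{k'\cdot |V(T)|}$ when $k' = o(\log n)$.
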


Moreover, we will use another auxiliary operation modifying a binary tree decomposition.

\begin{lemma}
\label{lem:tw-all-subsets}
Given a graph $G$ and its binary tree decomposition $(T, \bag)$ of height $h$ and width $k'$,
one can compute in time $\Oh{2^{k'} \cdot |V(T)|}$ a binary tree decomposition $(T', \bag')$ of $G$ of height $\Oh{h + k'}$, width $k'$, and with $|V(T')| = \Oh{2^{k'} \cdot |V(T)|}$,
such that for every node $t \in V(T)$ and every subset $B \subseteq \bag(t)$, there is a leaf node $t_B$ of $T'$ such that $\bag'(t_B) = B$.
\end{lemma}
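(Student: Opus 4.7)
The plan is to replace each node $t$ of $T$ by a small ``gadget'' binary subtree $G_t$ whose $2^{|\bag(t)|}$ leaves realize all subsets of $\bag(t)$ as their bags, then splice these gadgets into $T$ while preserving binariness and only modestly inflating the height. The gadget $G_t$ will be a perfect binary tree of depth $|\bag(t)|$ (so $\Oh{2^{k'}}$ nodes and height $\Oh{k'}$), with one leaf per subset $B \subseteq \bag(t)$. I would set the bag at each leaf to be exactly $B$, and define every internal bag bottom-up as the union of its two children's bags; fixing an ordering $\bag(t) = \{v_1, \ldots, v_m\}$ with $m \leq k'+1$, a direct inductive check shows that a gadget node at depth $i$ with path $(b_1, \ldots, b_i) \in \{0,1\}^i$ has bag $\{v_j : j \leq i,\, b_j = 1\} \cup \{v_{i+1}, \ldots, v_m\}$. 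In particular the root of $G_t$ has bag $\bag(t)$ and every bag in $G_t$ is a subset of $\bag(t)$, so widths stay $\leq k'$.

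To splice $G_t$ into $T$, I would replace each original $t$ by a node $t^*$ with $\bag'(t^*) = \bag(t)$ and attach the root of $G_t$ as one child of $t^*$; the other side is handled by cases based on the number of children of $t$ in $T$, using an auxiliary node $m_t$ (also with bag $\bag(t)$) whose children are $c_1^*, c_2^*$ when $t$ has two children in $T$, so that $T'$ stays binary. The size and height bounds are then immediate: every edge of $T$ expands to at most two edges in $T'$ and every $t^*$ carries an $\Oh{k'}$-depth gadget tail, giving $|V(T')| = \Oh{2^{k'} \cdot |V(T)|}$ and $\height(T') \leq 2h + k' + \Oh{1} = \Oh{h + k'}$.

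The main obstacle is verifying the vertex connectivity condition of $T'$ — the edge condition follows because every original bag $\bag(t)$ reappears as $\bag'(t^*)$, and the leaf-realization property holds by construction. I would argue connectivity in two layers. Fix $v \in V(G)$ and let $S_v \subseteq V(T)$ be the connected subtree of $T$ whose bags contain $v$. First, the ``spine'' of $T'$ consisting of all $t^*$ and $m_t$ nodes for $t \in S_v$ is itself connected (it mirrors $S_v$ up to subdivisions by $m_t$'s), since each such node carries bag $\bag(t) \ni v$. Second, within any single $G_t$ with $t \in S_v$, writing $v = v_j$, the formula above shows that the set of gadget nodes containing $v_j$ equals all nodes at gadget depth $< j$ together with the subtrees rooted at the ``$b_j = 1$'' children at depth $j$; this is a connected subtree of $G_t$ containing its root, and the root of $G_t$ is a child of $t^*$. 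Assembling these pieces across all $t \in S_v$ yields one connected subtree of $T'$, as required. Finally, by representing each bag as a bitmask of length $\leq k'+1$, every union of two children's bags is a constant-time word-RAM operation, so constructing all gadgets, all $t^*/m_t$ nodes, and their bags takes total time $\Oh{2^{k'} \cdot |V(T)|}$.
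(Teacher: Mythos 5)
Your proposal is correct and follows essentially the same approach as the paper: both build, for each node $t$, a perfect binary ``include/exclude'' gadget tree on $\bag(t)$ whose nodes at depth $i$ with prefix $(b_1,\ldots,b_i)$ carry the bag $\{v_j : j\le i,\, b_j=1\}\cup\{v_{i+1},\ldots,v_m\}$, and both attach that gadget below $t$ (duplicating $t$ into an auxiliary node with the same bag when $t$ already has two children, to keep the tree binary). Your write-up is somewhat more explicit about the vertex-connectivity check for the spliced tree, but there is no material difference in the construction or the bounds.
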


\begin{proof}
Let us iterate through all the nodes $t \in V(T)$.
Let $\bag(t) = \{ v_1, v_2, \ldots, v_{p} \}$, where $p \leq k' + 1$.

We construct a binary tree decomposition $(T_t, \bag_t)$ of the induced subgraph $\bag(t)$ as follows.
The root of $T_t$ is a new node $p$ with $\bag_t(p) = \bag(t)$.
The height of $T_t$ is $p$.
Each vertex $u \in V(T_t)$ at depth $i$, where $i = 0, 1, \ldots, p-1$, has two children $u_{\text{yes}}$ and $u_{\text{no}}$
with $\bag_t(u_{\text{yes}}) = \bag_t(u)$ and $\bag_t(u_{\text{no}}) = \bag_t(u) \setminus \{ v_{i+1} \}$.
These properties define the whole decomposition $(T_t, \bag_t)$.
This is a valid tree decomposition of $\bag(t)$ since whenever we exclude a vertex $v$ from a bag of $u \in V(T_t)$, then $v$ does not appear in any of the bags of descendants of $u$ in $T_t$.
Moreover, observe that for every subset $B \subseteq \bag(t)$, there is a leaf node $t_B \in V(T)$ such that $\bag_t(t_B) = B$.

It remains to attach such trees $T_t$ to the original tree $T$ --- this will form the desired decomposition $(T', \bag')$, where $\bag'|_T = \bag$ and $\bag'|_{T_t} = \bag_t$ for every $t$.
If a vertex $t$ has at most one child, then we can attach $T_t$ to $T$ by simply making the root of $T_t$ a child of $t$.
If $t$ has two children, then we can subdivide one of its edges to its children with a vertex $t_{\text{copy}}$, where $\bag'(t_{\text{copy}}) = \bag(t)$, and apply the previous case to $t_{\text{copy}}$.

It is easy to see that the width of $(T', \bag')$ is still $k'$, and the height of $T'$ is at most $\Oh{h + k'}$ as every attached tree $T_t$ has height at most $k' + 1$.
\end{proof}

We are ready to define the desired decomposition $\Tdnew{X}$ of $\torso[G]{X}$.
We perform the following operations.
\begin{enumerate}
    \item Apply \cref{thm:tw-opt} to obtain a tree decomposition $\myopt{\Tdnew{X}}$ of $\torso[G]{X}$.
    \item Use the algorithm from \cref{thm:tw-shallow} on the decomposition $\myopt{\Tdnew{X}}$ and call the resulting decomposition $\myshallow{\Tdnew{X}}$.
    \item Run the algorithm from \cref{lem:tw-all-subsets} to transform the decomposition $\myshallow{\Tdnew{X}}$ into the desired tree decomposition $\Tdnew{X} = (\Tnew{X}, \bagnew{X})$.
\end{enumerate}





Let us list all the properties of this step that will be required in further parts of the algorithm.
The purpose of property \pref{prefix-subsets} might be unclear at this point, but it will play an important role in the analysis of the amortized running time.

\begin{lemma}
The following properties hold.
\label{lem:refine-step1}
    \begin{itemize}
    \pitem{prefix-tree} $\Tdnew{X}$ is a binary tree decomposition of $\torso[G]{X}$ of width at most $6k+5$ and height at most $\Oh{k + \log N}$.
    \pitem{prefix-size} $V(\Tnew{X}) \leq 2^{\Oh{k}} \cdot |\Tpref|$.
    \pitem{prefix-subsets} For every collected component $\collectedsym \in \collectedcomps{\Tpref}{X}$, there exists a leaf $t_{\collectedsym} \in V(\Tnew{X})$ such that $\bagnew{X}(t_{\collectedsym}) = \Cinterface$.
    \rtitem{prefix} The running time of \stepref{prefix} is $2^{\Oh{k^3}} \cdot |\Tpref|$.
\end{itemize}
\end{lemma}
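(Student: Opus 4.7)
The plan is to verify the four bullet points of \cref{lem:refine-step1} by threading together the three cited results applied during \stepref{prefix} and tracking how the width, height, size, and running time evolve; property \pref{prefix-subsets} will then follow from a clique argument combined with \cref{lem:tw-all-subsets}. There is no deep obstacle here—the main task is to make sure the composition of bounds yields the clean statements claimed, in particular the bound $2^{\Oh{k}} \cdot |\Tpref|$ on the node count.

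First I would bound $|X|$. Since $T$ is binary, \cref{fact:number-of-appendices} gives $|\App(\Tpref)| \leq |\Tpref|+1$; by the smallness from \cref{lem:small-closure-lemma} with $c \in \Oh{k^4}$ and since every bag of $\Tc$ has size at most $\ell+2$, we get
\[
|X| \leq |\bags(\Tpref)| + \sum_{t \in \App(\Tpref)} |X \cap \component{t}| \leq (\ell+2)|\Tpref| + c \cdot (|\Tpref|+1) = \Oh{k^4 \cdot |\Tpref|}.
\]
In particular, $\log |X| \leq \Oh{k + \log N}$, which I will use to verify the height bound.

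For \pref{prefix-tree} and \rtref{prefix}, I would use that $X$ is a $k$-closure, so $\tw{\torso[G]{X}} \leq 2k+1$. Applying \cref{thm:tw-opt} to $\torso[G]{X}$ produces $\myopt{\Tdnew{X}}$ of width $2k+1$ and size $\Oh{|X|}$ in time $2^{\Oh{k^3}} \cdot |X|$; next \cref{thm:tw-shallow} (with $k'=2k+1$) produces $\myshallow{\Tdnew{X}}$ of width at most $3(2k+1)+2 = 6k+5$, height $\Oh{\log |X|}$, and size $\Oh{|X|}$, in time $\Oh{k \cdot |X|}$; finally \cref{lem:tw-all-subsets} with $k' = 6k+5$ yields $\Tdnew{X}$ of the same width $6k+5$, height $\Oh{\log |X| + k} = \Oh{k + \log N}$, and size $\Oh{2^{k} \cdot |X|} = 2^{\Oh{k}} \cdot |\Tpref|$, in time $\Oh{2^{k} \cdot |X|}$. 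Binarity is preserved throughout, giving \pref{prefix-tree} and \pref{prefix-size}. Summing the three running times and plugging in $|X| = \Oh{k^4 \cdot |\Tpref|}$ yields \rtref{prefix} with the $2^{\Oh{k^3}} \cdot |\Tpref|$ bound (the $k^4$ factor is absorbed into $2^{\Oh{k^3}}$).

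For \pref{prefix-subsets}, I would observe that by \cref{lem:collected-neigh-clique}, for every $\collectedsym \in \collectedcomps{\Tpref}{X}$ the interface $\Cinterface$ is a clique in $\torso[G]{X}$ (this holds both in the unblocked case, where $\Cinterface = \neighopen{\collectedsym}$, and in the blocked case, where $\Cinterface = \bag(t) \cap X$ for the blockage~$t$ and $\bag(t) \cap X$ is a clique in $\torso[G]{X}$ by the blockage definition combined with \cref{lem:blockage-all-internally-disjoint-paths}). Then by \cref{fact:tw-clique-bag}, some node of $\myshallow{\Tdnew{X}}$ has a bag containing $\Cinterface$. Since \cref{lem:tw-all-subsets} guarantees that every subset of every bag of $\myshallow{\Tdnew{X}}$ appears as the bag of some leaf of $\Tdnew{X}$, the set $\Cinterface$ itself is the bag of some leaf $t_{\collectedsym}$ of $\Tnew{X}$, establishing \pref{prefix-subsets}.
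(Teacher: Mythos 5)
Your proof is correct and follows essentially the same route as the paper: bound $|X|$ via $c$-smallness and \cref{fact:number-of-appendices}, chain \cref{thm:tw-opt}, \cref{thm:tw-shallow}, and \cref{lem:tw-all-subsets} to obtain the width/height/size/time bounds, then deduce \pref{prefix-subsets} by observing $\Cinterface$ is a clique in $\torso[G]{X}$ (via \cref{lem:collected-neigh-clique} in the unblocked case and the blockage definition in the blocked case), applying \cref{fact:tw-clique-bag} to $\myshallow{\Tdnew{X}}$, and then the subset-leaf guarantee of \cref{lem:tw-all-subsets}. If anything, your explicit derivation $|X|\leq(\ell+2)|\Tpref|+c(|\Tpref|+1)$ and the passage through $|X|$ in the node-count bound is slightly cleaner than the paper's chain, which goes through $|\bags(\Tpref)|$.
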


\begin{proof} We prove the consecutive points of the lemma.

\begin{claim}
    $\Tdnew{X}$ is a binary tree decomposition of $\torso[G]{X}$ of width at most $6k+5$ and height $\Oh{k + \log N}$.
    Additionally, $V(\Tnew{X}) \leq 2^{\Oh{k}} \cdot |\Tpref|$.
\end{claim}

\begin{claimproof}
    Recall that since $X$ is a~$k$-closure, we have $\tw{\torso[G]{X}} \leq 2k + 1$.
    Hence, we can compute the decomposition $\myopt{\Tdnew{X}}$ of width $2k + 1$.
    Consequently, the width of the decomposition $\myshallow{\Tdnew{X}}$ is bounded by $3 \cdot (2k + 1) + 2 = 6k + 5$.
    This is also the bound on the width of $\Tdnew{X}$, as applying \cref{lem:tw-all-subsets} does not increase the width of a decomposition.

    Next, observe that $|X| \le \Oh{k^4 \cdot |\Tpref|}$ because $X$ is $\Oh{k^4}$-small, implying that $|V(\Topt{X})| \le \Oh{k^4 \cdot |\Tpref|}$.
    It follows that the height of $\myshallow{\Tdnew{X}}$ is bounded by $\Oh{\log |X|} = \Oh{k + \log N}$.
    Therefore, by \cref{lem:tw-all-subsets}, the height of $\Tdnew{X}$ is indeed bounded by $\Oh{k + \log N}$.

    Finally, for the size of $\Tdnew{X}$, we have that $ |V(\Tnew{X})| \leq 2^{\Oh{k}} \cdot |V(\Tshallow{X})| \leq 2^{\Oh{k}} \cdot |\myopt{\Tnew{X}}| \leq 2^{\Oh{k}} \cdot |\bags(\Tpref)|$.
\end{claimproof}

\begin{claim}
For every collected component $\collectedsym \in \collectedcomps{\Tpref}{X}$, there is a leaf $t_{\collectedsym} \in \Tnew{X}$ such that $\bagnew{X}(t_{\collectedsym}) = \Cinterface$.
\end{claim}

\begin{claimproof}
Consider a collected component $\collectedsym \in \collectedcomps{\Tpref}{X}$.
First, observe that $\Cinterface$ is a clique in $\torso[G]{X}$.
Indeed, if $\collectedsym$ is unblocked, then
by \cref{lem:collected-neigh-clique}, $\Cinterface = \neighopen{\collectedsym}$ is a clique in $\torso[G]{X}$.
If $\collectedsym$ is blocked, then $\Cinterface = \bag(t) \cap X$, where $t$ is the home bag of~$\collectedsym$.
However, $\bag(t) \cap X$ is a clique in $\torso[G]{X}$ as well, since $t$ is a blockage.

Hence, by \cref{fact:tw-clique-bag}, there is a node $t \in V(\Tshallow{X})$ such that $\Cinterface \subseteq \bagnew{X}(t)$.
Then, by \cref{lem:tw-all-subsets}, we know that there is a leaf $t_{\collectedsym} \in V(\Tnew{X})$ such that $\bagnew{X}(t_{\collectedsym}) = \Cinterface$.
\end{claimproof}

\begin{claim}
    The running time of \stepref{prefix} is $2^{\Oh{k^3}} \cdot |\Tpref|$.
\end{claim}

\begin{claimproof}
    According to previous observations, computing $\myopt{\Tdnew{X}}$ takes $2^{\Oh{k^3}} \cdot |X|$ time, and since $|X| \leq k^{\Oh{1}} |\Tpref|$,
    we can write this bound as $2^{\Oh{k^3}} \cdot |\Tpref|$.
    The running time of the other two steps can be upper-bounded by $2^{\Oh{k}} \cdot|X|$, and thus the claim follows.
\end{claimproof}

The claims above verify all the required properties, so the proof is complete.
\end{proof}

\myparagraph{\stepitem{split} (Split the old appendices)}
So far, we have constructed a tree decomposition $\Tdnew{X}$ of $\torso[G]{X}$.
Now, we are going to construct for each collected component $\collectedsym \in \collectedcomps{\Tpref}{X}$ a tree decomposition $\Tdnew{\collectedsym}$ that will be attached to the decomposition $\Tdnew{X}$.
In particular, $\Tdnew{\collectedsym}$ will be a tree decomposition of the graph $G[\collectedsym \cup \Cinterface]$ whose root bag contains $\Cinterface$.
We will first describe the construction, then prove its required properties, and then argue that the relevant objects for constructing $\Tdnew{\collectedsym}$ for all $\collectedsym \in \collectedcomps{\Tpref}{X}$ can be computed in time $k^{\Oh{1}} \cdot |F|$.

We now describe the construction of $\Tdnew{\collectedsym}$.
Let $\collectedsym$ be a collected component, and recall the definition of home bag from \Cref{ssec:closure-exploration}.
We say that a blockage $t \in \blockages{\Tpref}{X}$ is \emph{associated} to $\collectedsym$ if either (1) $\collectedsym$ is a blocked component whose home bag $t$ is, or (2) $\collectedsym$ is an unblocked component such that $\collectedsym \cap \bag(t)$ is non-empty.
Note that the definitions of blockage and collected component imply that each blockage is associated with at most one collected component.

Now, if $\collectedsym$ is a blocked component, we define $\Tdnew{\collectedsym} = (\Tnew{\collectedsym}, \bag^{\collectedsym})$ so that $\Tnew{\collectedsym}$ is a copy of the subtree of $T$ rooted at the home bag of $\collectedsym$, and $\bag^{\collectedsym}$ is similarly a copy of the $\bag$ function in this subtree.
If $\collectedsym$ is an unblocked component, $\Tdnew{\collectedsym} = (\Tnew{\collectedsym}, \bag^{\collectedsym})$ is constructed as follows.
For a blockage $b \in \blockages{\Tpref}{X}$ we denote by $T_b$ the subtree of $T$ rooted at $b$.
We denote by $\mathsf{AB}(\collectedsym)$ the blockages associated with $\collectedsym$.
Then, the tree $\Tnew{\collectedsym}$ is defined as
\[\Tnew{\collectedsym} = T[\{t \in F \mid \collectedsym \cap \bag(t) \neq \emptyset\} \cup \bigcup_{b \in \mathsf{AB}(\collectedsym)} V(T_b)].\]
That is, $\Tnew{\collectedsym}$ is the subtree of $T$ consisting of (1) the explored nodes whose bags contain vertices in $\collectedsym$ and (2) the subtrees rooted at blockages that are associated with $\collectedsym$.
We remark that $\Tnew{\collectedsym}$ is connected by the definition of a collected component, and also that the unique highest node in $\Tnew{\collectedsym}$ (in particular, the root of $\Tnew{\collectedsym}$ since $\Tnew{\collectedsym}$ is connected) corresponds to the home bag of $\collectedsym$.

We observe that the trees $\Tnew{\collectedsym}$ across all (blocked and unblocked) collected components $\collectedsym$ satisfy the following properties:
\begin{itemize}[nosep]
 \item 
 each node in $V(T) \setminus F$ is in at most one tree $\Tnew{\collectedsym}$,
 \item each node in $F \setminus \Tpref$ is in at most $\ell+1$ trees $\Tnew{\collectedsym}$ (because all nodes in $F \setminus \Tpref$ have bags of size at most $\ell+1$), and
 \item no node in $\Tpref$ is in any tree $\Tnew{\collectedsym}$.
\end{itemize} 
In the rest of this section, for a node $t \in V(\Tnew{\collectedsym})$, we denote by $\origin(t)$ the corresponding node in~$T$.
Note that $\origin$ is a mapping from the union $\bigcup_{\collectedsym \in \collectedcomps{\Tpref}{X}} V(\Tnew{\collectedsym})$ to $V(T)$.

The bags of $\Tdnew{\collectedsym} = (\Tnew{\collectedsym}, \bag^{\collectedsym})$ for unblocked components $\collectedsym$ are defined as follows.
For a node $t \in V(\Tnew{\collectedsym})$ that has $\origin(t) \notin F$ (i.e., $\origin(t) \in V(T_b)$ for some blockage $b$), we set $\bag^{\collectedsym}(t) = \bag(\origin(t))$.
In other words, the subtrees rooted at blockages associated with $\collectedsym$ are just copied from $\Tc$ to $\Tdnew{\collectedsym}$ without any change.
Then, for a node $t \in V(\Tnew{\collectedsym})$ with $\origin(t) \in F$, we define 
\[\pullNei(t,\collectedsym) = N(\collectedsym) \cap (\component{\origin(t)} \setminus \bag(\origin(t))).\]
That is, $\pullNei(t,\collectedsym)$ consists of the vertices in $N(\collectedsym)$ that occur in the bags of the subtree of $(T,\bag)$ below $\origin(t)$ but not in $\bag(\origin(t))$.
Finally, for nodes $t \in \Tnew{\collectedsym}$ with $\origin(t) \in F$ we define
\[\bag^{\collectedsym}(t) = (\bag(\origin(t)) \cap N[\collectedsym]) \cup \pullNei(t,\collectedsym).\]
The purpose of having $\pullNei(t,\collectedsym)$ in $\bag^{\collectedsym}(t)$ is to ensure that $N(\collectedsym)$ is in the root of $\Tdnew{\collectedsym}$: for every $v \in \neighopen{\collectedsym}$ not in the root bag of $\Tdnew{\collectedsym}$, we add $v$ to every bag on the path between the root and the shallowest node of $\Tdnew{\collectedsym}$ containing $v$.

This concludes the definition of $\Tdnew{\collectedsym}$.
Next, we prove some elementary properties of this construction.

\begin{lemma}
Let \collectedsym be a~collected component.
\begin{itemize}
\pitem{component-tree} $\Tdnew{\collectedsym}$ is a rooted binary tree decomposition of $G[\collectedsym \cup \Cinterface]$.
\pitem{component-root} The root bag of $\Tdnew{\collectedsym}$ contains $\Cinterface$ as a~subset.
\pitem{component-height} The height of $\Tnew{\collectedsym}$ is at most the height of the home bag of \collectedsym.
\end{itemize}
\end{lemma}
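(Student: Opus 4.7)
For a blocked collected component $\collectedsym$, the three properties follow directly from the construction: $\Tdnew{\collectedsym}$ is a copy of the subtree of $T$ rooted at the home bag $t$, so (P: component-height) is immediate. The root bag equals $\bag(t) \supseteq \bag(t) \cap X = \Cinterface$, giving (P: component-root). Since $\collectedsym \cup \Cinterface \subseteq \component{t} \cup \adhesion{t}$ and the subtree of $T$ rooted at $t$ is a standard tree decomposition of $G[\component{t} \cup \adhesion{t}]$, it is also a tree decomposition of $G[\collectedsym \cup \Cinterface]$, giving (P: component-tree).

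For the unblocked case, let $h$ be the home bag, $C$ the unique connected component of $G-X$ contained in $\collectedsym$ (so $\Cinterface = \neighopen{C}$ by \cref{lem:collected-types-description}), and set $T^{\star} \coloneqq \{t \in F : \bag(t) \cap \collectedsym \neq \emptyset\}$. I would first argue that $\Tnew{\collectedsym}$ is connected and rooted at $h$. The set $T^{\star}$ equals $O \cap F$, where $O$ is the union of the $T$-occurrence subtrees of the explored vertices of $\collectedsym$; by the very definition of the home bag $O$ is a connected subtree of $T$ rooted at $h$, and intersecting with the ancestor-closed prefix $F$ preserves connectedness, so $T^{\star}$ is a connected subtree rooted at $h$. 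Moreover, for every associated blockage $b$ of $\collectedsym$, its parent $\parent{b}$ lies in $T^{\star}$: if $\adhesion{b} \subseteq X$ then $b$ would be isolated in $H-X$, forcing $\collectedsym = \component{b}$ to be a blocked collected component, contradicting unblockedness; thus there exists $u \in \adhesion{b} \setminus X$, which is explored (being in $\bag(\parent{b})$ with $\parent{b} \in F$) and belongs to $\collectedsym$ (since $u$ is adjacent in $H - X$ to $b$, placing $u$ in the connected component of $H-X$ whose preimage is $\collectedsym$). Hence each blockage subtree $T_b$ attaches to $T^{\star}$ at $\parent{b} \in T^{\star}$, and $\Tnew{\collectedsym}$ is a connected subtree of $T$ rooted at $h$. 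Property (P: component-height) follows immediately.

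For (P: component-root) in the unblocked case, fix $v \in \Cinterface = \neighopen{\collectedsym}$ and any $u \in \collectedsym$ adjacent to $v$ in $G$; a brief case split on whether $u \in C$ or $u \in \component{b}$ for some associated blockage $b$ shows that every $T$-occurrence of $u$ lies in the subtree of $T$ rooted at $h$, so the bag witnessing $uv$ puts $v$ in the subtree of $h$ as well. Thus either $v \in \bag(h) \cap N[\collectedsym] \subseteq \bag^{\collectedsym}(h)$ or $v$'s topmost $T$-occurrence is strictly below $h$, placing $v \in \component{h} \cap N(\collectedsym) \subseteq \pullNei(h,\collectedsym) \subseteq \bag^{\collectedsym}(h)$. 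For the vertex condition of (P: component-tree), take $w \in \collectedsym \cup \Cinterface$. If $w \in \collectedsym$, the pull-up term is vacuous (since $w \notin N(\collectedsym)$), so the set of $t$ with $w \in \bag^{\collectedsym}(t)$ corresponds via $\origin$ to the intersection of two connected subtrees of $T$ --- the $T$-occurrence of $w$ and $\origin(\Tnew{\collectedsym})$ --- which is nonempty and connected. If $w \in \Cinterface$, the pull-up adds $w$ to $\bag^{\collectedsym}(t)$ exactly when $\origin(t) \in F$ is a strict ancestor of the $T$-topmost occurrence $t_w$ of $w$; combined with the genuine occurrences of $w$, the resulting set in $\origin(\Tnew{\collectedsym})$ is the ancestor-closure (within $\origin(\Tnew{\collectedsym})$) of $O_w \cap \origin(\Tnew{\collectedsym})$, which is connected because ancestor-closures in a rooted subtree are always connected. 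The edge condition is then immediate: each edge $uv \in E(G[\collectedsym \cup \Cinterface])$ is witnessed in some $\bag(t^{\star})$ of $T$, and a short case analysis (based on the types of $u, v$ and whether $t^{\star} \in F$) shows $t^{\star} \in \origin(\Tnew{\collectedsym})$ with $u, v \in \bag^{\collectedsym}(t^{\star})$.

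The main obstacle I expect is precisely the connectedness of $w$'s occurrence in $\Tnew{\collectedsym}$ when $w \in \Cinterface$, because the pull-up interacts nontrivially with the hybrid structure of $\Tnew{\collectedsym}$ (explored nodes of $T^{\star}$ glued to subtrees below associated blockages). The cleanest framing, as above, is to unwind the definitions and identify the set of $t \in \Tnew{\collectedsym}$ with $w \in \bag^{\collectedsym}(t)$ as the ancestor-closure (within $\origin(\Tnew{\collectedsym})$) of the $T$-occurrences of $w$ meeting $\origin(\Tnew{\collectedsym})$, at which point connectedness is automatic.
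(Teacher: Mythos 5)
Your proposal takes the same route as the paper's proof: a direct verification of the tree-decomposition axioms from the construction, split into the blocked and unblocked cases. Your handling of the vertex condition for interface vertices (identifying the occurrence set of $w\in\Cinterface$ as an ancestor-closure within $\origin(\Tnew{\collectedsym})$), of the root-bag property, and of why each associated blockage attaches below the home bag is correct and, if anything, more explicit than what the paper writes.

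One necessary check is missing, however. For $\Tdnew{\collectedsym}$ to be a tree decomposition of $G[\collectedsym\cup\Cinterface]$, every bag must be a \emph{subset} of $\collectedsym\cup\Cinterface$. For nodes $t$ with $\origin(t)\in F$ this is immediate from the formula $\bag^{\collectedsym}(t)=(\bag(\origin(t))\cap\neighclosed{\collectedsym})\cup\pullNei(t,\collectedsym)$, but the bags in the subtrees $T_b$ below associated blockages are copied verbatim, so containment requires an argument: since $b$ is associated with the unblocked component \collectedsym, the set $\bag(b)$ meets $\collectedsym\subseteq V(G)\setminus X$, hence $b$ cannot be a clique blockage and is a component blockage for some $D\in\cc{G-X}$ with $D\subseteq\collectedsym$; this gives $\bag(b)\subseteq\neighclosed{D}\subseteq\neighclosed{\collectedsym}$ and $\component{b}\subseteq\collectedsym$, whence $\bags(T_b)\subseteq\component{b}\cup\adhesion{b}\subseteq\collectedsym\cup\Cinterface$. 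Relatedly, in the blocked case your inference that a tree decomposition of $G[\component{t}\cup\adhesion{t}]$ ``is also'' one of $G[\collectedsym\cup\Cinterface]$ needs the two vertex sets to coincide (which they do: $\collectedsym=\component{t}$ and $\Cinterface=\adhesion{t}$), not merely the inclusion $\collectedsym\cup\Cinterface\subseteq\component{t}\cup\adhesion{t}$ that you state; under a strict inclusion the bags would contain extraneous vertices and the claim would fail.
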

\begin{proof}
First, when $\collectedsym$ is a blocked component, let $t \in V(T)$ be the home bag of $\collectedsym$.
All of the properties hold directly by the facts that $\collectedsym \subseteq \component{t}$, $\Cinterface \subseteq \bag(t)$, and $\bag(t) \subseteq \collectedsym \cup \Cinterface$.

Then, suppose that $\collectedsym$ is unblocked.
Recall that in this case, $\Cinterface = N(\collectedsym)$.
First, we have that $\Tnew{\collectedsym}$ with bags restricted to $\collectedsym$ is a tree decomposition of $G[\collectedsym]$ because all occurrences of the vertices in $\collectedsym$ in $(T,\bag)$ are in $T^{\collectedsym}$, and we never remove occurrences of the vertices in $\collectedsym$ when constructing the bags.
By the same argument, $\Tdnew{\collectedsym}$ also covers all edges between $\collectedsym$ and $N(\collectedsym)$.
Then, recall that by \Cref{lem:no-closure-under-blockage}, $\component{b} \cap X = \emptyset$ for each blockage $b$, implying that each $v \in N(\collectedsym)$ must be in some bag $\bag(t)$ of a node $t \in V(\Tnew{\collectedsym})$ with $\origin(t) \in F$.
This implies that $\Tnew{\collectedsym}$ satisfies the vertex condition also for vertices $v \in N(\collectedsym)$, even after the insertions of the sets $\pullNei(t,\collectedsym)$.
Note that these insertions ensure that $N(\collectedsym) = \Cinterface$ is contained in the root of $\Tdnew{\collectedsym}$, which shows that $\Tdnew{\collectedsym}$ with bags restricted to $\collectedsym \cup \Cinterface$ is a tree decomposition of $G[\collectedsym \cup \Cinterface]$.

To show that the bags of $\Tdnew{\collectedsym}$ do not contain vertices outside of $\collectedsym \cup \Cinterface$, observe that by the definition of an unblocked collected component, each blockage $b$ associated with $\collectedsym$ satisfies $\bag(b) \subseteq N[\collectedsym]$ and $\component{b} \subseteq \collectedsym$.
Hence, the bags below the blockages, which were copied verbatim, do not contain any elements of $V(G) \setminus \neighclosed{\collectedsym}$; and the explored bags of $\Tdnew{\collectedsym}$ have been explicitly truncated to $\neighclosed{\collectedsym}$.
This concludes the proof that $\Tdnew{\collectedsym}$ is a tree decomposition of $G[\collectedsym \cup \Cinterface]$.
The facts that $\Tdnew{\collectedsym}$ is a binary tree and that the height of $\Tnew{\collectedsym}$ is at most the height of the home bag of $\collectedsym$ follow directly from the construction.
\end{proof}

Then, we prove that $\Tdnew{\collectedsym}$ has width at most $\ell$. Moreover, an even stronger bound on the sizes of the bags holds, which will be crucial for bounding the potential function.
This is finally a proof where we use the assumption that $X$ is $d_\Tc$-minimal.

\begin{lemma}
\label{lem:exchange-argument}
Let $t \in V(\Tnew{\collectedsym})$.
\begin{itemize}
\pitem{collected-explored} If $\origin(t)$ is explored, then $|\bagnew{\collectedsym}(t)| < |\bag(\origin(t))|$ and $|\origininv(\origin(t))| \leq \ell + 1$.
\pitem{collected-unexplored} If $\origin(t)$ is unexplored, then $|\bagnew{\collectedsym}(t)| = |\bag(\origin(t))|$ and $|\origininv(\origin(t))| = 1$.
\pitem{collected-height} It holds that $\height_{\Tnew{\collectedsym}}(t) \le \height_T(\origin(t))$.
\end{itemize}
\end{lemma}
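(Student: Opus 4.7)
The plan is to verify the three items in increasing order of difficulty, with the strict inequality in \pref{collected-explored} being the main obstacle.

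Property \pref{collected-height} ($\height_{\Tnew{\collectedsym}}(t) \le \height_T(\origin(t))$) will follow from the fact that $\Tnew{\collectedsym}$ is realized as a subtree of $T$ via $\origin$: this is clear in the blocked case, where the whole subtree of $T$ rooted at the home bag is copied, and in the unblocked case, where $\Tnew{\collectedsym}$ consists of explored nodes of $F$ meeting $\collectedsym$ together with entire subtrees at associated blockages. In both cases, the descendants of $t$ in $\Tnew{\collectedsym}$ correspond under $\origin$ to a subset of the descendants of $\origin(t)$ in $T$. For \pref{collected-unexplored} (unexplored $\origin(t)$, lying in the subtree of some blockage), the bag equality is by construction, which copies $\bag(\origin(t))$ verbatim in both the blocked subtree and the unblocked branch with $\origin(t) \notin F$. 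The equality $|\origininv(\origin(t))| = 1$ will follow from the bullet list given in the construction: each blockage is associated with at most one collected component, so an unexplored node belongs to exactly one tree $\Tnew{\collectedsym}$.

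For \pref{collected-explored} I will first note that the assumption ``$\origin(t)$ explored'' forces $\origin(t) \in F \setminus \Tpref$ (since bags in $\Tpref$ are contained in $X$ and thus disjoint from $\collectedsym$), and also forces $\collectedsym$ to be unblocked (otherwise $\Tnew{\collectedsym}$ sits entirely in the subtree of a blockage, outside $F$). The bound $|\origininv(\origin(t))| \le \ell+1$ then follows from the $\mathsf{refine}$ hypothesis that $\Tpref$ contains every bag of size $\ell+2$, which gives $|\bag(\origin(t))| \le \ell+1$; since distinct unblocked collected components are vertex-disjoint and $\origin(t)$ joins $\Tnew{\collectedsym'}$ only when $\bag(\origin(t)) \cap \collectedsym' \neq \emptyset$, the number of such $\collectedsym'$ is at most $|\bag(\origin(t)) \setminus X| \le \ell+1$.

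The main obstacle is the strict inequality $|\bagnew{\collectedsym}(t)| < |\bag(\origin(t))|$. Let $C$ be the unique connected component of $G - X$ inside $\collectedsym$ carrying the explored vertices of $\collectedsym$, supplied by \cref{lem:collected-types-description}; then $\bag(\origin(t)) \cap \collectedsym \subseteq C$, $N(\collectedsym) = N(C)$, and the inequality rewrites as $|\pullNei(t,\collectedsym)| < |\bag(\origin(t)) \setminus N[C]|$. Since $\origin(t) \in F$ has no ancestor that is a blockage while $\bag(\origin(t)) \cap C \neq \emptyset$, the fact that $\origin(t)$ itself is not a blockage forces $\bag(\origin(t)) \not\subseteq N[C]$, so $|\bag(\origin(t)) \setminus N[C]| \geq 1$; this is the ``bonus'' vertex that will carry the strictness. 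The strict bound will be established by an exchange argument in the style of the Closure Linkedness Lemma: assuming $|\pullNei(t,\collectedsym)| \geq |\bag(\origin(t)) \setminus N[C]|$ for contradiction, I will combine the $\omega$-linkedness of $N(C)$ into $\bags(\Tpref)$ given by \cref{lem:closure-linkedness-lemma} with Menger's theorem to extract a minimum $(\pullNei(t,\collectedsym), \bags(\Tpref))$-separator $S \subseteq \bag(\origin(t))$, and apply the Pulling Lemma (\cref{lem:pulling-lemma}) to the width-$(2k+1)$ tree decomposition of $\torso[G]{X}$ along the separation $(A, S, B)$ of $G$ induced by $S$ to obtain a tree decomposition of $\torso[G]{X'}$ of the same width, where $X' \coloneqq (X \cap A) \cup S$. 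The verification that $X'$ is a $c$-small $k$-closure of $\bags(\Tpref)$ will mirror \cref{clm:cll-constructed-is-small}. The contradiction with the $\omega$-minimality of $X$ (for $\omega = d_\Tc$) will then come from depths: every vertex of $\pullNei(t,\collectedsym) \subseteq \component{\origin(t)} \setminus \bag(\origin(t))$ has depth at least $d_\Tc(\origin(t)) + 1$, whereas $S \subseteq \bag(\origin(t))$ has every vertex at depth at most $d_\Tc(\origin(t))$. The delicate part will be combining $|S| \le |\pullNei(t,\collectedsym)|$ with the bonus vertex in $\bag(\origin(t)) \setminus N[C]$ to obtain a genuine \emph{strict} improvement in the lex order $(|X|, d_\Tc(X))$, rather than the non-strict bound that bare linkedness would yield.
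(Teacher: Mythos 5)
Your treatment of properties \pref{collected-height}, \pref{collected-unexplored}, the bound $|\origininv(\origin(t))| \le \ell+1$, and the subcase $\pullNei(t,\collectedsym) = \emptyset$ all match the paper's reasoning and are correct. The gap is in the main case $\pullNei(t,\collectedsym) \neq \emptyset$, where your plan does not actually close.

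You propose to invoke $d_\Tc$-linkedness of $N(C)$ (from \cref{lem:closure-linkedness-lemma}), then use Menger's theorem to extract a \emph{minimum} $(\pullNei(t,\collectedsym), \bags(\Tpref))$-separator $S \subseteq \bag(\origin(t))$, feed it through \cref{lem:pulling-lemma} to build a new closure $X'$, and contradict $\omega$-minimality. But if $N(C)$ is already $d_\Tc$-linked into $\bags(\Tpref)$, then \emph{every} $(N(C),\bags(\Tpref))$-separator $S$ has $|S| > |N(C)|$, or $|S| = |N(C)|$ and $d_\Tc(S) \ge d_\Tc(N(C))$ — so a minimum separator cannot, by itself, yield a strict improvement in $(|X|, d_\Tc(X))$, and the route through a new closure dead-ends. (It also conflates $\pullNei$-separation with $N(C)$-separation: the separator that feeds the Pulling Lemma must block all of $N(C)$, which a set $S \subseteq \bag(\origin(t))$ does not do, since $N(C) \setminus \pullNei$ need not touch $\bag(\origin(t))$ at all.) The ``bonus vertex'' in $\bag(\origin(t)) \setminus N[C]$ is only relevant to the $\pullNei = \emptyset$ branch; it plays no role here.

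The missing idea is to skip the Pulling-Lemma detour entirely and directly exhibit a separator that violates $d_\Tc$-linkedness under the contradiction hypothesis $|\pullNei(t,\collectedsym)| \ge |\bag(\origin(t)) \setminus N[\collectedsym]|$. The right choice is the explicit set
\[ S \coloneqq \bigl(N(C) \setminus \pullNei(t,\collectedsym)\bigr) \cup \bigl(\bag(\origin(t)) \setminus N[\collectedsym]\bigr). \]
One then checks: $|S| \le |N(C)|$ because the two pieces are disjoint and $|\bag(\origin(t)) \setminus N[\collectedsym]| \le |\pullNei(t,\collectedsym)|$; $d_\Tc(S) < d_\Tc(N(C))$ because every $v \in \pullNei(t,\collectedsym)$ has its topmost bag a strict descendant of $\origin(t)$, hence $d_\Tc(v) > d_\Tc(u)$ for all $u \in \bag(\origin(t))$, and $\pullNei(t,\collectedsym) \neq \emptyset$ supplies the strictness; and $S$ separates $N(C)$ from $\bags(\Tpref)$ because any escaping path from $\pullNei(t,\collectedsym)$ leaves $N[\collectedsym]$ after its first vertex and must cross $\bag(\origin(t))$, necessarily through $\bag(\origin(t)) \setminus N[\collectedsym] \subseteq S$. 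This directly contradicts the $d_\Tc$-linkedness supplied by \cref{lem:closure-linkedness-lemma}, so no Menger, no Pulling Lemma, and no new closure $X'$ are needed in this lemma.
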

\begin{proof}
The height property \pref{collected-height} is obvious from the construction.
The property \pref{collected-unexplored} for unexplored nodes follows from the construction and the fact that each blockage is associated with at most one collected component.
The property that $|\origininv(t)| \le \ell+1$ for explored nodes $t \in F \setminus \Tpref$ follows from the fact that each bag in $F \setminus \Tpref$ has size at most $\ell+1$, and therefore can intersect at most $\ell+1$ different components $\collectedsym$.

It remains to prove that $|\bagnew{\collectedsym}(t)| < |\bag(\origin(t))|$ when $\origin(t) \in F$, for which we need to use the $d_\Tc$-minimality of $X$.
Note that since $\origin(t) \in F$, we necessarily have that \collectedsym is unblocked.
By the definition of $\bagnew{\collectedsym}(t)$, it suffices to prove that $|\pullNei(t,\collectedsym)| < |\bag(\origin(t)) \setminus N[\collectedsym]|$.

First, consider the case when $\pullNei(t,\collectedsym) = \emptyset$, in which case we need to prove that $\bag(\origin(t))$ is not a subset of $N[\collectedsym]$.
In this case, we observe that if $\bag(\origin(t))$ were be a subset of $N[\collectedsym]$, then by \Cref{lem:collected-types-description}, it would be a subset of $N[C]$ for $C \in \cc{G - X}$, in which case it would either be a component blockage if $C$ intersects $\bag(\origin(t))$, or a subset of $N(C)$ and therefore a clique blockage if $C$ does not intersect $\bag(\origin(t))$.

It remains to prove that if $\pullNei(t,\collectedsym)$ is non-empty, then $|\pullNei(t,\collectedsym)| < |\bag(\origin(t)) \setminus N[\collectedsym]|$.
For this proof, recall the definition of the function $d_\Tc$ in \cref{sec:preliminaries} as the depth function of $\Tc$.
For the sake of contradiction, assume that $\pullNei(t,\collectedsym)$ is non-empty and $|\pullNei(t,\collectedsym)| \ge |\bag(\origin(t)) \setminus N[\collectedsym]|$.
Let $C$ be a component of $G-X$ such that $N(\collectedsym) = N(C)$ (its existence follows from \Cref{lem:collected-types-description}).
We claim that now, \[S \coloneqq (N(C) \setminus \pullNei(t,\collectedsym)) \cup (\bag(\origin(t)) \setminus N[\collectedsym])\]
is an $(N(C), \bags(\Tpref))$-separator that contradicts the fact that $N(C)$ is $d_\Tc$-linked into $\bags(\Tpref)$, which by \Cref{lem:closure-linkedness-lemma} contradicts that $X$ is $d_\Tc$-minimal.
First, note that because $\pullNei(t,\collectedsym) \subseteq N(C)$, we indeed have that $|S| \le |N(C)|$.
Moreover, because for each $v \in \pullNei(t,\collectedsym)$ the highest bag containing $v$ is a strict descendant of $\origin(t)$, we have for all $v \in \pullNei(t,\collectedsym)$ and $u \in \bag(\origin(t))$ that $d_\Tc(v) > d_\Tc(u)$, implying that $d_\Tc(S) < d_\Tc(N(C))$.
It remains to prove that $S$ indeed separates $N(C)$ from $\bags(\Tpref)$.
For this, it suffices to prove that it separates $\pullNei(t,\collectedsym)$ from $\bags(\Tpref)$, because $N(C) \setminus S = \pullNei(t,\collectedsym)$.
Suppose $P$ is a shortest path from $\pullNei(t,\collectedsym)$ to $\bags(\Tpref)$ in $G - S$.
Because $\collectedsym$ is disjoint from $\bags(\Tpref)$ and $N(\collectedsym) \setminus S = \pullNei(t,\collectedsym)$, we have that $P$ intersects $N[\collectedsym]$ only on its first vertex.
However, observe that because the nodes of $\Tc$ whose bags contain vertices from $\pullNei(t,\collectedsym)$ are strict descendants of $\origin(t)$, and $\origin(t)$ is a descendant of an appendix of $\Tpref$, it holds that $\bag(\origin(t))$ separates $\pullNei(t,\collectedsym)$ from $\bags(\Tpref)$, and therefore $P$ must intersect $\bag(\origin(t))$.
However, as $\bag(\origin(t))$ is disjoint from $\pullNei(t,\collectedsym)$, the intersection of $P$ and $\bag(\origin(t))$ must be in $\bag(\origin(t)) \setminus N[\collectedsym]$, but $\bag(\origin(t)) \setminus N[\collectedsym] \subseteq S$, so no such path $P$ can exist, implying that $S$ is an $(N(C), \bags(\Tpref))$-separator.
\end{proof}

As the next step, we show that constructing the decompositions $\Tdnew{\collectedsym}$ can be done efficiently.
Note that this construction via a prefix-rebuilding update amounts to giving explicit constructions of $(\Tnew{\collectedsym}, \bag^{\collectedsym})$ for the subtrees consisting of nodes in $\origininv(F)$, and then pointers how the subtrees rooted at blockages should be attached to such nodes.
In particular, slightly abusing notation, let us denote by $\funrestriction{\Tdnew{\collectedsym}}{F}$ the restriction of $\Tdnew{\collectedsym}$ to nodes $t$ with $\origin(t) \in F$.

\begin{lemma}
\label{lem:step3:building_tc_decompositions}
There is an algorithm that given $\Tpref$, $X$, $F$, $H$, and $\blockages{\Tpref}{X}$, computes a list of length $|\collectedcomps{\Tpref}{X}|$ of tuples: for every collected component $\collectedsym \in \collectedcomps{\Tpref}{X}$, the list contains a tuple consisting of $\Cinterface$, $\height(\Tnew{\collectedsym})$, and in addition,
\begin{itemize}
\item if $\collectedsym$ is unblocked, the restriction $\funrestriction{\Tdnew{\collectedsym}}{F}$ of $\Tdnew{\collectedsym}$ to the nodes in $\origininv(F)$ and the mapping from blockages associated with $\collectedsym$ to the leaves of $\funrestriction{\Tdnew{\collectedsym}}{F}$ to which they should be attached, and
\item if $\collectedsym$ is blocked, a pointer to the home bag of $\collectedsym$.
\end{itemize}
\rtitem{comps} The running time is $k^{\Oh{1}} \cdot |F|$.
\end{lemma}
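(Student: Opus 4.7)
The plan is to process the exploration graph $H$ to extract collected components and then, for each one, assemble the required tree-decomposition data. First, I would run a single BFS or DFS on $H - X$ to compute its connected components; together with the compression map $\compression$, this identifies every collected component $\collectedsym \in \collectedcomps{\Tpref}{X}$. As noted in \cref{ssec:closure-exploration}, $|V(H)| + |E(H)| = k^{\Oh{1}}\cdot|F|$ (with $|E(H)|$ bounded by $\sum_{t \in F} |\edges(t)| + \sum_{b \in \blockages{\Tpref}{X}} |\adhesion{b}|$), so this step fits the budget. A component of $H-X$ is blocked precisely when it consists of a single blockage vertex, which is readable from the output of the search. For each $\collectedsym$, the interface $\Cinterface$ is then computed by scanning the $X$-neighborhood of its component in $H$: unblocked components give $N(\collectedsym)$, blocked ones give $\bag(b) \cap X$ for the unique blockage $b$ of $\collectedsym$.

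For blocked components, the home bag is that blockage itself and its height is provided by the auxiliary data structure of \cref{lem:height-maintenance}, so the output tuple can be emitted immediately. For unblocked components, more work is needed. I would first determine the set $F_{\collectedsym} \coloneqq \{t \in F : \bag(t) \cap \collectedsym \neq \emptyset\}$ by iterating over all bags of $F$ once: for every $t \in F$ and every $v \in \bag(t) \setminus X$, I look up the collected component containing $v$ and mark $t$ as belonging to the corresponding $F_{\collectedsym}$. Each $t$ contributes to at most $|\bag(t)\setminus X| \le \ell+2$ such sets, so the total work is $\Oh{\ell \cdot |F|}$. The tree structure of $\funrestriction{\Tdnew{\collectedsym}}{F}$ is inherited verbatim from $T$ on the node set $F_{\collectedsym}$, which is connected in $T$ since $F_{\collectedsym} = F \cap U$, where $U$ is the bag-subtree of the unique component $C$ of $G-X$ containing the explored vertices of $\collectedsym$ (\cref{lem:collected-types-description}), and the intersection of a prefix of $T$ with a connected subtree of $T$ is again connected.

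The main technical obstacle is computing the bags $\bagnew{\collectedsym}(t)$ within the allotted budget. Recall that $\bagnew{\collectedsym}(t) = (\bag(\origin(t)) \cap N[\collectedsym]) \cup \pullNei(t,\collectedsym)$; the first summand is produced directly by the scan above (keeping only vertices in $\collectedsym \cup \Cinterface$), while $\pullNei(t,\collectedsym)$ amounts to inserting each $v \in \Cinterface$ into the bag of every strict ancestor of $\Top(v)$ in $F_{\collectedsym}$. For each $v \in \Cinterface$, I obtain $\Top(v)$ in $\Oh{1}$ from \cref{lem:height-maintenance}, and, starting from the parent of $\Top(v)$, walk upward in $T$ adding $v$ to the bag of each node encountered, stopping once we leave the subtree rooted at the root $\rho$ of $F_{\collectedsym}$. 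The crucial structural observation justifying this walk is that whenever $\Top(v)$ is a strict descendant of $\rho$, any witnessing edge $uv$ with $u \in C$ lies in a bag of $B_u \cap B_v$; this bag lies in $U$ and is a descendant of $\Top(v)$, so by connectedness of $U$ the whole path from it up to $\rho$ lies in $U$, implying $\Top(v) \in F \cap U = F_{\collectedsym}$. Hence by connectedness of $F_{\collectedsym}$ the entire pulling walk stays inside $F_{\collectedsym}$. Since $|\bagnew{\collectedsym}(t)| \le \ell+1$ by \pref{collected-explored}, each node of $F_{\collectedsym}$ is touched at most $\Oh{\ell}$ times during pulling, so the total cost across all unblocked components is $\Oh{\ell \cdot \sum_{\collectedsym}|F_{\collectedsym}|} = \Oh{\ell^2 \cdot |F|} = k^{\Oh{1}} \cdot |F|$.

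Finally, the remaining pieces of the output fit within the same budget. The associated blockages $\mathsf{AB}(\collectedsym)$ are precisely the blockage vertices in the component of $H-X$ corresponding to $\collectedsym$; for an unblocked $\collectedsym$, each such $b$ is mapped to the node $\origininv(\parent{b})$ of $\funrestriction{\Tdnew{\collectedsym}}{F}$, which lies in $F_{\collectedsym}$ by the analysis preceding the present lemma showing that $\adhesion{b} \cap \collectedsym \neq \emptyset$. The height $\height(\Tnew{\collectedsym})$ is then computed by a single bottom-up pass on $F_{\collectedsym}$, using the heights of subtrees rooted at associated blockages (queried from $\Daux$ in $\Oh{1}$ time each) as leaf values. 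Summing over all collected components yields the overall running time $k^{\Oh{1}} \cdot |F|$ claimed in \rtref{comps}.
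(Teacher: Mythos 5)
Your proof is essentially correct and reaches the same conclusion, but it computes the sets $\pullNei(t,\collectedsym)$ by a genuinely different method. The paper first precomputes $X \cap \component{t}$ for every $t \in F\setminus \Tpref$ in a single bottom-up pass, relying on $c$-smallness (so $|X \cap \component{t}| \le \Oh{k^4}$) and on \cref{lem:no-closure-under-blockage} (so these sets vanish at blockages), and then reads off $\pullNei(t,\collectedsym) = N(\collectedsym) \cap (X \cap \component{\origin(t)}) \setminus \bag(\origin(t))$ directly. You instead, for each $v \in \Cinterface$, walk upward from $\Top(v)$ to the root of $F_\collectedsym$, justified by the convexity argument that $F_\collectedsym = F \cap U$ and $\Top(v) \in F_\collectedsym$; your amortization via $|\pullNei(t,\collectedsym)| \le |\bagnew{\collectedsym}(t)| \le \ell+1$ and $\sum_\collectedsym |F_\collectedsym| \le (\ell+1)|F|$ is sound. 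What your route buys is that the running-time analysis of this particular step does not invoke $c$-smallness; what it costs is a somewhat more delicate structural argument (that $\Top(v)$ lands inside $F_\collectedsym$ and the walk stays there).

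One small omission: when you identify blocked collected components as the singleton blockage vertices of $H-X$, you must also filter out those blockages $t$ with $\component{t} = \emptyset$, since $\compression^{-1}(\{t\})$ is then empty and does not contribute a collected component; the statement promises a list of length exactly $|\collectedcomps{\Tpref}{X}|$. The paper does this with the $\cmpsize$ query of \cref{lem:height-maintenance}, which you already have access to, so the fix is immediate, but the check should be stated.
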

\begin{proof}
First, the blocked components $\collectedsym$ correspond to isolated blockage vertices $t$ of $H$ that have $\component{t} \neq \emptyset$.
Such blockage vertices can be recognized by using the $\cmpsize$ method of $\Daux$.
We can use the $\cmpsize$ method of $\Daux$ to test if $\component{t} \neq \emptyset$, and therefore recognize the blockages associated to blocked components and output them.
Their height $\height(\Tnew{\collectedsym})$ can be obtained from $\Daux$, and for them $\Cinterface = \bag(t) \cap X$.

Then, for constructing the objects for unblocked components, we first do precomputation step that computes for every node $t \in F \setminus \Tpref$ the set $X \cap \component{t}$.
These sets can be computed in total time $k^{\Oh{1}} \cdot |F|$ in a bottom-up manner, because of the facts that (1) $|X \cap \component{t}| \le \Oh{k^4}$ by $c$-smallness of $X$, and (2) by \Cref{lem:no-closure-under-blockage}, for all $t \in \blockages{\Tpref}{X}$, it holds that $X \cap \component{t} = \emptyset$.
We also precompute for every vertex $v \in \bags(F \setminus \Tpref)$ a pointer to some bag in $F \setminus \Tpref$ that contains $v$.

We then iterate over the connected components $C \in \cc{H - X}$ of the exploration graph with $C \cap \bags(F) \neq \emptyset$, that is, corresponding to unblocked collected components.
Let us now fix a component $C \in \cc{H - X}$.
Note that $C$ uniquely identifies an unblocked collected component $\compression^{-1}(C) = \collectedsym$ and note that $\neighopen[H]{C} = \neighopen{\collectedsym}$.
Now, the nodes in $\funrestriction{\Tdnew{\collectedsym}}{F}$ can be identified as the set $\{t \in F \mid \bag(t) \cap C \neq \emptyset\}$, and they can be computed in time $k^{\Oh{1}} \cdot |\funrestriction{\Tdnew{\collectedsym}}{F}|$ by using the previously computed pointers, because they correspond to a connected subtree of $T$.
The set $N(\collectedsym)$ can be computed in time $k^{\Oh{1}} \cdot |C|$ because $|N(\collectedsym)| \le \Oh{k}$.
Then, the sets $\pullNei(t,\collectedsym)$ for all $t \in \funrestriction{\Tdnew{\collectedsym}}{F}$ can be computed in time $k^{\Oh{1}} \cdot |\funrestriction{\Tdnew{\collectedsym}}{F}|$ by using the precomputed sets $X \cap \component{t}$ as we have $\pullNei(t,\collectedsym) = N(\collectedsym) \cap (X \cap \component{\origin(t)}) \setminus \bag(\origin(t))$.
The sets $\bag(\origin(t)) \cap N[\collectedsym]$ can be trivially computed in time $k^{\Oh{1}} \cdot |\funrestriction{\Tdnew{\collectedsym}}{F}|$.
Hence, all bags $\bag^{\collectedsym}(t)$ can also be computed in time $k^{\Oh{1}} \cdot |\funrestriction{\Tdnew{\collectedsym}}{F}|$.
This concludes the construction of $\funrestriction{\Tdnew{\collectedsym}}{F}$.
Then, the pointers from the blockages associated with $\collectedsym$ to the leaves of $\funrestriction{\Tdnew{\collectedsym}}{F}$ can be computed in time $k^{\Oh{1}} \cdot |F|$, because these blockages are exactly $C \cap \blockages{\Tpref}{X}$, and their parents are in $\funrestriction{\Tdnew{\collectedsym}}{F}$.
The height $\height(\Tnew{\collectedsym})$ can be computed in time $\Oh{|\funrestriction{\Tdnew{\collectedsym}}{F}|}$ by using $\Daux$ for computing the heights of the blockages associated with $\collectedsym$ and induction.

Summing up, for each collected component $\collectedsym$ we spend time $k^{\Oh{1}} \cdot (|C| + |V(\funrestriction{\Tnew{\collectedsym}}{F})|)$ constructing $\funrestriction{\Tdnew{\collectedsym}}{F}$ and the pointers.
The sizes of $C$ sum up to at most $|V(H)| \le \Oh{k \cdot  |F|}$, and the sizes of $V(\funrestriction{\Tnew{\collectedsym}}{F})$ sum up to at most $\Oh{k \cdot |F|}$, so the total running time is $k^{\Oh{1}} \cdot |F|$.
\end{proof}

\myparagraph{\stepitem{green} (Append auxiliary subtrees)}
At this point, we have constructed a tree decomposition $\Tdnew{X}$ of $\torso[G]{X}$ and tree decompositions $\Tdnew{\collectedsym}$ of $G[\collectedsym \cup \Cinterface]$ for every collected component $\collectedsym$.
We know that the root of $\Tnew{\collectedsym}$ contains $\Cinterface$ (property \pref{component-root}) and that there is a leaf node $t_\collectedsym \in V(\Tnew{X})$ such that $\bagnew{X}(t_\collectedsym) = \Cinterface$ (property \pref{prefix-subsets}).
Hence, a natural idea is to create the final decomposition $\Tc'$ by connecting the root of each tree $\Tnew{\collectedsym}$ with the corresponding vertex $t_{\collectedsym} \in V(\Tnew{X})$ by an~edge.

Unfortunately, many collected components can have the same interface $B \subseteq X$, and connecting their roots with the same vertex of $\Tnew{X}$ would break the invariant that the data structure maintains a~binary tree decomposition.
It is tempting to work around this problem by building a sufficiently large complete binary tree $\Tnew{\mathrm{bin}}$ rooted at $t_\collectedsym$ and containing $\Cinterface$ in each of its bags.
Then, we could append each tree $\Tnew{\collectedsym}$ to a different leaf of $\Tnew{\mathrm{bin}}$.

However, this approach fails in a~subtle way: the construction of $\Tnew{\mathrm{bin}}$ increases the potential value of the resulting tree decomposition; perhaps quite significantly if a~lot of collected components share the same interface.
Hence, we must ensure that the amortized cost of the construction of $\Tnew{\mathrm{bin}}$ is upper-bounded by the decrease in the potential value resulting from the creation of the collected components.
Roughly speaking, this can only be achieved when the sum of the heights of the vertices of $\Tnew{\mathrm{bin}}$ in the final tree decomposition is at most proportional to the sum of heights of the trees $\Tnew{\collectedsym}$.
Complete binary trees unfortunately do not always satisfy this condition.

However, we will fix this using an idea resembling Huffman codes as follows.
Intuitively, the tree $\Tnew{\mathrm{bin}}$ should be skewed so as to satisfy the following property: given two tree decompositions $\Tnew{\collectedsym_1}, \Tnew{\collectedsym_2}$ of collected components $\collectedsym_1$ and $\collectedsym_2$ with the same interface, if $\Tnew{\collectedsym_1}$ has much larger height than $\Tnew{\collectedsym_2}$, then $\Tnew{\collectedsym_1}$ should be attached to a~leaf of $\Tnew{\mathrm{bin}}$ that is closer to the root of $\Tnew{\mathrm{bin}}$ than the leaf to which $\Tnew{\collectedsym_2}$ is attached.
The following lemma formalizes this intuition.


\newcommand{\Hb}{Q}

\begin{lemma} \label{lem:green-trees}
        Let $h_1, \ldots, h_m$ be positive integers and let $\Hb \coloneqq h_1 + \ldots + h_m$. Then, there exists a rooted binary tree $T$ of height $\Oh{\log \Hb}$ and leaves labeled $h_1, \ldots, h_m$ in some order such that $\sum_{v \in V(T)} \lheight(v) \leq 26\Hb$, where $\lheight$ is defined as follows:
\begin{gather*}
\lheight(v) \coloneqq
\begin{cases}
\leaflabel(v) & \text{if v is a leaf,}\\
1 + \max \{\lheight(c)\,\colon\, \text{$c$ is a child of $v$}\} & \text{otherwise.}
\end{cases}
\end{gather*}
Moreover, such a~tree can be computed in time $\Oh{m + \log \Hb}$.
\end{lemma}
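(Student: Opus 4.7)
The plan is to construct $T$ by a weight-balanced recursive partition of the sorted label list and then bound $\sum_v \lheight(v)$ by induction on the recursion. First, sort $h_1,\ldots,h_m$ in non-increasing order via bucket sort on $\lfloor \log_2 h_i \rfloor$; there are at most $\lceil \log_2 \Hb \rceil + 1$ buckets, so the sort costs $\Oh{m + \log \Hb}$. Then invoke the following procedure $\textsc{BuildTree}$ on the sorted list $L$ with total weight $W$: if $|L|=1$, return a leaf; if the largest entry $h^*$ of $L$ satisfies $h^* \ge W/3$, return a new root whose left child is a leaf labeled $h^*$ and whose right child is $\textsc{BuildTree}(L\setminus\{h^*\})$ (\emph{dominant case}); otherwise every label is $< W/3$, so use prefix sums to find a contiguous split of $L$ into parts of weights in $[W/3, 2W/3]$ and recurse on each (\emph{balanced case}). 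In either case each non-leaf child subtree has weight at most $2W/3$, so a leaf of label $h$ ends up at depth $\Oh{\log(\Hb/h)}$ and the height of $T$ is $\Oh{\log \Hb}$.

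To bound $\sum_v \lheight(v)$, I would prove by induction on the recursion the strengthened invariant
\[
  \Sigma(T) + 2\,\lheight(\mathrm{root}(T)) \;\le\; 26\, W(T),
  \qquad \text{where } \Sigma(T) \coloneqq \textstyle\sum_{v\in V(T)} \lheight(v).
\]
The base case ($|L|=1$) gives $3h \le 26h$, trivially. Before the inductive step, I would first show by a quick side induction that $\lheight(\mathrm{root}(T)) \le W(T)$ whenever all labels are positive integers (using $1 + \max(a,b) \le a+b$ for positive integers $a,b$). In the dominant case, $W \le 3 h^*$, so $\lheight(\mathrm{root}(T_2)) \le W(T_2) \le 2h^*$; substituting the inductive hypothesis on $T_2$ into the invariant and using this bound reduces the required inequality to one of the form $(\beta-2-\alpha)h^* \ge 1+\alpha$, which holds comfortably with $\alpha=2$, $\beta=26$. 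The balanced case is the main obstacle: a direct substitution demands $3\,\lheight(\mathrm{root}(T)) \le 2(\lheight(\mathrm{root}(T_1)) + \lheight(\mathrm{root}(T_2)))$, which fails when the two subtree $\lheight$'s are very unequal. I would close this case by reading off from the balanced-case assumption that neither subtree contains a label $\ge W/3$ and then tightening the invariant by a small additive depth-credit term (or, alternatively, by refining the partition rule so that the balanced case is only invoked when both subtrees' root $\lheight$ values are within a constant factor, pushing the problematic configurations into a disguised dominant case); once this tightening is in place, the $+2\,\lheight(\mathrm{root})$ term and the $W_1+W_2 \le \tfrac{4}{3}\min(W_1,W_2) + \tfrac{2}{3}\max(W_1,W_2)$ slack from weight balance are enough to absorb the new root's contribution with the constant~$26$.

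For the running time, the sort is $\Oh{m + \log \Hb}$, and $\textsc{BuildTree}$ runs in $\Oh{m}$ overall: keep the sorted list in a doubly linked array with precomputed prefix sums, so finding a $[W/3, 2W/3]$ split point in the balanced case takes time proportional to $\min(|T_1|,|T_2|)$, and this cost telescopes in a merge-sort fashion to $\Oh{m}$. The main obstacle, as indicated, is locating the right strengthened invariant that simultaneously handles both recursive cases; the naive invariant $\Sigma \le cW$ fails in the balanced case and the coupled invariant $\Sigma + \alpha\,\lheight(\mathrm{root}) \le \beta W$ fails when subtree roots are very unbalanced, so some further structural refinement of either the invariant or the partition rule is unavoidable.
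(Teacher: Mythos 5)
Your construction (sort by magnitude, then recursively split into a weight-dominant part or two weight-balanced parts) is a genuinely different route from the paper, which partitions the labels into dyadic magnitude classes $G_j = \{h_i : h_i \in (\Hb/2^{j+1}, \Hb/2^j]\}$, builds a shallow balanced tree $T_j$ over each class, and strings the $T_j$ along a path of length $\Oh{\log \Hb}$. The dyadic grouping is exactly what makes the potential bound painless: within $G_j$ all labels lie within a factor of two of one another, so the $\lheight$ of every node of $T_j$ stays close to the common label scale $\Hb/2^j$, and the per-class sums telescope; your weight-balanced split gives no such control on $\lheight$.

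The gap in your argument is the balanced case, and it is a real one, not a matter of tuning constants. With the invariant $\Sigma(T) + 2\,\lheight(\mathrm{root}(T)) \le 26\,W(T)$, the inductive step at a balanced node reduces to
\[
3\,\lheight(r) \;\le\; 2\,\lheight(r_1) + 2\,\lheight(r_2),
\qquad \text{i.e.,}\qquad
3 + \lheight(r_1) \;\le\; 2\,\lheight(r_2)
\]
when $\lheight(r_1) \ge \lheight(r_2)$, and weight balance does not force this. Concretely, take $L = \{7\} \cup \{1\}^{17}$, so $W = 24$. Since $7 < W/3$, you split into $\{7,1\}$ (weight $8$) and $\{1\}^{16}$ (weight $16$). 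The first subtree has root $\lheight$ equal to $8$, the second (a complete binary tree on $16$ unit leaves) has root $\lheight$ equal to $5$, and $3 \cdot 9 = 27 > 26 = 2\cdot 8 + 2 \cdot 5$, so the inductive step does not close even though the final inequality happens to hold numerically with slack. This failure is structural: the first part contains a single near-threshold label while the second is all small labels, so the two roots' $\lheight$ values can differ by $\Theta(W/\log W)$ while the weights stay balanced. Your proposal acknowledges this ("the balanced case is the main obstacle'') and gestures at two possible repairs — an additive depth-credit term, or refining the split rule to keep the subtree root $\lheight$ values comparable — but neither is formulated or verified, and it is not clear either can be made to work without reintroducing something like the paper's magnitude classes. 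As written, the proof is incomplete at the key step.
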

\begin{proof}
        Let $C \colon \Z_{\ge 1} \to \Z_{\ge 0}$ be the function given by the formula $C(a) \coloneqq \ceil{\log_2{a}}$. In other words, $C(a)$ is the smallest nonnegative integer such that $2^{C(a)} \ge a$. In particular, we have $2^{C(a)} < 2a$.

Partition $h_1, \ldots, h_m$ into groups $G_0, \ldots, G_{C(\Hb)}$ as follows: we put $h_i$ into $G_j$ if $h_i \in (\frac{\Hb}{2^{j+1}}, \frac{\Hb}{2^j}]$. This is a~well-defined partition as these intervals for $j=0, \ldots, C(\Hb)$ are disjoint and their union covers the interval $[1, \Hb]$.

Now, for each nonempty group $G_i$ let us create a rooted binary tree $T_i$ as follows. Let the elements of $G_i$ be $g_1, \ldots, g_a$. Then, let $T_i$ be an arbitrary rooted binary tree with $a$ leaves labeled $g_1, \ldots, g_a$, where all leaves are at distance at most $C(a)$ from the root. As $2^{C(a)} \ge a$, such a tree exists. If $G_i$ is empty, we assume that $T_i$ is empty as well.

As the next step create a path $P = v_0, \ldots, v_{C(\Hb)}$ rooted at $v_0$. For each $i=0, \ldots, C(\Hb)$, if $G_i$ is nonempty, assign the root of $T_i$ as a child of $v_i$. Remove a suffix of $P$ that has no subtrees attached to it. This completes the description of desired $T$. One can readily see that the construction can be computed in time $\Oh{m + \log \Hb}$.

What remains is to prove the required properties of $T$. As each $T_i$ has height $
\Oh{\log \Hb}$ and the path $P$ has length $\Oh{\log \Hb}$, it is clear that the height of $T$ is $\Oh{\log \Hb}$ as well.

Next, we will prove a bound on the sum of $\lheight(v)$ for $v$ in a particular $T_i$. Let the elements of $G_i$ be $g_1, \ldots, g_a \in (\frac{\Hb}{2^{i+1}}, \frac{\Hb}{2^i}]$.
Let us group the vertices $v \in V(T_i)$ by their distance $j$ to the farthest leaf in the subtree of $T_i$ rooted at $v$.
For $j = 0$, the group comprises the leaves of $T_i$.
The leaves are labeled by $G_i$, so their values of $\lheight$ do not exceed $\frac{\Hb}{2^i}$.
Next, the vertices at distance $j \geq 1$ from the farthest leaf have the value of $\lheight$ at most $\frac{\Hb}{2^i} + j$ (which follows from the inspection of the definition of $\lheight$) and there are at most $2^{C(a) - j}$ of them (which follows from the fact that each such vertex is at depth at most $C(a) - j$ in $T_i$).
Hence, we have the bound
\[
\begin{split}
  \sum_{v \in V(T_i)} \lheight(v) & \le \sum_{j = 0}^{C(a)} 2^{C(a) - j} \left( \frac{\Hb}{2^i} + j \right) = 2^{C(a)} \left(\frac{\Hb}{2^i} \sum_{j=0}^{C(a)} 2^{-j} + \sum_{j=0}^{C(a)} j \cdot 2^{-j} \right) \le \\
 & \le
2^{C(a)} \left(\frac{\Hb}{2^i} \sum_{j=0}^{\infty} 2^{-j} + \sum_{j=0}^{\infty} j \cdot 2^{-j} \right) = 2^{C(a)} \left(2\cdot\frac{\Hb}{2^i} + 2 \right).
\end{split}
\]

As additionally $2^{C(a)} \cdot 2 \le 4a$ and $2^{C(a)} \cdot 2 \cdot \frac{\Hb}{2^i} \le 8a \cdot \frac{\Hb}{2^{i+1}} < 8 (g_1 + \ldots + g_a)$, we find that
\begin{equation}
  \label{eq:height-sum-over-subtrees}
  \sum_{v \in V(T_i)} \lheight(v) \le 8(g_1 + \ldots + g_a) + 4a \le 12(g_1 + \ldots + g_a).
\end{equation}

Summing \cref{eq:height-sum-over-subtrees} over all $i = 0, 1, \dots, C(\Hb)$, we get that
\[\sum_{i = 0}^{C(\Hb)} \sum_{v \in V(T_i)} \lheight(v) \le 12\Hb. \]

What remains is to bound the sum of $\lheight$ for vertices of $P$.
Define
  \[ l_i \coloneqq \frac{\Hb}{2^i} + 2C(\Hb) - i + 4. \]
We now prove that $\lheight(v_i) \le l_i$ by induction on $i = C(\Hb), C(\Hb) - 1, \dots, 0$.
Each vertex $v_i$ has at most two children:
\begin{itemize}
  \item the root $r_i$ of $T_i$ if $G_i$ is nonempty: since $|G_i| \le 2^{i+1}$, we get that $\lheight(r_i) \le \frac{\Hb}{2^i} + (i + 1) \le l_i - 1$;
  \item $v_{i+1}$ if $v_{i+1}$ exists: we have that $\lheight(v_{i+1}) \le l_{i+1} \le l_i - 1$.
\end{itemize}
Thus indeed $\lheight(v_i) \le l_i$.
Therefore,
%
\[
\sum_{v \in P} \lheight(v) \le \sum_{i=0}^{C(\Hb)} l_i \le \sum_{i=0}^{C(\Hb)} \frac{\Hb}{2^i} + (C(\Hb) + 1)(2C(\Hb) + 4) \le 2\Hb + (C(\Hb) + 1)(2C(\Hb) + 4).
\]
Now, since $C(\Hb) < \log_2 \Hb + 1$, we find that $(C(\Hb) + 1)(2C(\Hb) + 4) < (\log_2 \Hb + 2)(2\log_2 \Hb + 6)$.
Using the standard tools of the real analysis, it can be shown that for all $\Hb \in \Z_{\ge 1}$ we have that
\[ (\log_2 \Hb + 2)(2\log_2 \Hb + 6) \le 12\Hb. \]
Hence,
\[ \sum_{v \in P} \lheight(v) \le 14\Hb. \]

We conclude that
\[\sum_{v \in V(T)} \lheight(v) = \sum_{i=0}^{C(\Hb)} \sum_{v \in V(T_i)} \lheight(v) + \sum_{v \in P} \lheight(v) \leq 12 \Hb + 14\Hb = 26\Hb. \qedhere\]
\end{proof}

Having proved \cref{lem:green-trees}, we can define the tree decompositions that will be attached to the 	already constructed decomposition $\Tdnew{X}$.
Let $\neighall{\Tpref}{X} \coloneqq \{ \Cinterface \mid \collectedsym \in \collectedcomps{\Tpref}{X} \}$ denote the set of all interfaces of collected components.
After obtaining the tuples representing the collected components from \cref{lem:step3:building_tc_decompositions}, we group these tuples by their interfaces, that is, we iterate over all sets $B \in \neighall{\Tpref}{X}$ and list all components $\collectedsym_1, \collectedsym_2, \ldots, \collectedsym_m \in \invinterface{B}$ with interface $B$.
This can be implemented in $|\collectedcomps{\Tpref}{X}| \cdot k^{\Oh{1}}$ time by standard arguments using bucket sorting.

Now, let us fix $B \in \neighall{\Tpref}{X}$, consider components $\collectedsym_1, \collectedsym_2, \ldots, \collectedsym_m \in \invinterface{B}$, and let $h_i = \height(\Tnew{\collectedsym_i})+1$, for $i \in \{1, \ldots, m\}$.
We run the algorithm from \cref{lem:green-trees} for integers $h_1, \ldots, h_m$ to obtain a tree $\Tnewpref{B}$ with leaves labeled with integers $h_i$.
Then, our construction is to attach the trees $\Tnew{\collectedsym_i}$ to $\Tnewpref{B}$, for $i = 1, \ldots, m$, by appending the root of $\Tnew{\collectedsym_i}$ as a child of the leaf of $\Tnewpref{B}$ labeled with $h_i$, so that each leaf of $\Tnewpref{B}$ has exactly one child.
Let $\Tnew{B}$ be the obtained binary tree, and observe that for each node $t_i$ of $\Tnew{B}$ that is a leaf of $\Tnewpref{B}$ to which $\Tnew{\collectedsym_i}$ was attached, $\height_{\Tnew{B}}(t_i) = h_i$.

To define the tree decomposition $\Tdnew{B} = (\Tnew{B}, \bagnew{B})$ it remains to define the function $\bagnew{B}$.
We set $\funrestriction{\bagnew{B}}{V(\Tnew{\collectedsym_i})} = \bag^{\collectedsym_i}$, and $\bagnew{B}(t) = B$ for every $t \in V(\Tnewpref{B})$.

Now, let us analyze this procedure.
First, for each appendix $t$ of $\Tpref$, we show that the number of all possible interfaces of all collected components contained in $\component{t}$ is bounded by a~number depending only on $k$.
Note that this bound is precisely the reason why we require the closure $X$ to be $c$-small.

\begin{lemma}
\label{lem:everywhere-not-many-neighborhoods}
For every appendix $t \in \App(\Tpref)$, we have that
\[ |\{\Cinterface \,\mid\, \collectedsym \in \collectedcomps{\Tpref}{X},\, \collectedsym \subseteq \component{t}\}| \,\leq\, k^{\Oh{k}}. \]
\begin{proof}
Let $\collectedsym \in \collectedcomps{\Tpref}{X}$ be such that $\collectedsym \subseteq \component{t}$.
Recall that $\Cinterface \subseteq X$, implying that $\Cinterface \subseteq X \cap (\component{t} \cup \adhesion{t})$.
Since $X$ is $c$-small, we have $|X \cap \component{t}| \leq c \in \Oh{k^4}$.
Moreover, we have that $|X \cap \adhesion{t}| \leq |\bag(t)| \leq \ell + 1$.
As $|\Cinterface| \leq 2k + 2$ ($\Cinterface$ is a~clique in $\torso[G]{X}$ and $\tw{\torso[G]{X}} \leq 2k + 1$), there are at most $\sum_{i=0}^{2k+2}\binom{c+\ell+1}{i} = (k^4)^{\Oh{k}} = k^{\Oh{k}}$ possible values for $\Cinterface$ such that $\collectedsym \subseteq \component{t}$.
\end{proof}
\end{lemma}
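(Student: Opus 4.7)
The plan is to bound the number of candidates for $\Cinterface$ by showing it is a small subset of a controllably-sized ground set. Fix an appendix $t$ of $\Tpref$; I will first argue that every relevant interface $\Cinterface$ is contained in $X \cap (\component{t} \cup \adhesion{t})$, and then count such subsets using the clique-size bound.

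First I would handle the containment $\Cinterface \subseteq X \cap (\component{t} \cup \adhesion{t})$, splitting on whether $\collectedsym$ is unblocked or blocked. If $\collectedsym$ is unblocked, then $\Cinterface = \neighopen{\collectedsym}$ by definition, and since $\collectedsym \subseteq \component{t}$ we have $\neighopen{\collectedsym} \subseteq \neighclosed[G]{\component{t}} \setminus \component{t} \subseteq \adhesion{t}$, using that $(\component{t}, \adhesion{t}, V(G) \setminus (\component{t} \cup \adhesion{t}))$ is a separation of $G$; thus $\Cinterface \subseteq \adhesion{t}$. Moreover, $\Cinterface \subseteq X$ by \cref{lem:collected-types-description}, so $\Cinterface \subseteq X \cap \adhesion{t}$. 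If $\collectedsym$ is blocked, then $\Cinterface = \bag(t') \cap X$ where $t'$ is the home bag, a blockage with $\component{t'} \supseteq \collectedsym$. Since $\collectedsym \subseteq \component{t}$ is nonempty, the component $\component{t'}$ meets $\component{t}$, forcing $t'$ to be a descendant of $t$. Hence $\bag(t') \subseteq \component{t} \cup \adhesion{t}$ by a standard property of tree decompositions, and again $\Cinterface \subseteq X \cap (\component{t} \cup \adhesion{t})$.

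Next I would bound the size of the ground set $X \cap (\component{t} \cup \adhesion{t})$. The $c$-smallness of $X$ gives $|X \cap \component{t}| \le c \in \Oh{k^4}$, while $|X \cap \adhesion{t}| \le |\bag(t)| \le \ell + 2 = 6k + 7$. Therefore $|X \cap (\component{t} \cup \adhesion{t})| \le c + \ell + 2 \in \Oh{k^4}$. Finally, since $\Cinterface$ is a clique in $\torso[G]{X}$ (for the unblocked case this is \cref{lem:collected-neigh-clique}; for the blocked case it follows from the definition of a blockage together with \cref{lem:blockage-all-internally-disjoint-paths}) and $\tw{\torso[G]{X}} \le 2k+1$, we get $|\Cinterface| \le 2k + 2$.

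Counting then finishes the argument: the number of subsets of size at most $2k+2$ of a set of size $\Oh{k^4}$ is at most $\sum_{i=0}^{2k+2} \binom{\Oh{k^4}}{i} \le (2k+3) \cdot (\Oh{k^4})^{2k+2} = k^{\Oh{k}}$. I do not expect any real obstacle here; the only point requiring care is the case split in the containment step and making sure the home-bag-is-a-descendant-of-$t$ argument is clean. The bound on $|\Cinterface|$ from $\torso[G]{X}$ having treewidth $\le 2k+1$ is precisely why having a $k$-closure (and not just any superset of $\bags(\Tpref)$) is critical for this lemma.
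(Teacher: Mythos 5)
Your proposal is correct and takes essentially the same route as the paper: reduce to counting subsets of size at most $2k+2$ (via the clique-in-$\torso[G]{X}$ argument) of the ground set $X \cap (\component{t} \cup \adhesion{t})$, whose size is $\Oh{k^4}$ by $c$-smallness. Where the paper states the containment $\Cinterface \subseteq \component{t} \cup \adhesion{t}$ tersely, you spell out the blocked/unblocked case split and the observation that the home bag is a descendant of $t$ (for which one should note that a blockage lies outside $\Tpref$ whereas any strict ancestor of the appendix $t$ lies inside $\Tpref$, ruling out the other orientation of the ancestor--descendant relation); your bound $|\bag(t)|\le\ell+2$ is a harmless over-estimate of the paper's $\ell+1$ and does not affect the $k^{\Oh{k}}$ conclusion.
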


We immediately derive a~bound on the number of all interfaces of all collected components in the graph:

\begin{corollary}
\label{cor:not-many-neighborhoods}
The following inequality holds:
\[
|\neighall{\Tpref}{X}| \leq k^{\Oh{k}} \cdot |\Tpref|.
\]
\end{corollary}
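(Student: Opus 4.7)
The plan is to derive this bound as an almost immediate consequence of \cref{lem:everywhere-not-many-neighborhoods} together with \cref{fact:number-of-appendices}. The bound in \cref{lem:everywhere-not-many-neighborhoods} is a per-appendix bound, so I would sum it over all appendices of $\Tpref$, and use that a binary tree's prefix has only $|\Tpref|+1$ appendices.

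The key observation that needs to be spelled out is that every collected component $\collectedsym \in \collectedcomps{\Tpref}{X}$ is contained in $\component{a}$ for some appendix $a \in \App(\Tpref)$. I would argue this as follows. Since $\collectedsym \subseteq V(G) \setminus X \subseteq V(G) \setminus \bags(\Tpref)$, and the sets $\component{a}$ for $a \in \App(\Tpref)$ are pairwise disjoint and cover $V(G) \setminus \bags(\Tpref)$, it suffices to argue that $\collectedsym$ is contained within a single $\component{a}$. By \cref{lem:collected-is-union-of-connected}, $\collectedsym$ is a union of connected components of $G-X$; each such connected component of $G-X$ lies inside a connected component of $G - \bags(\Tpref)$, which is inside a single $\component{a}$. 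To see that all these connected components of $G-X$ comprising $\collectedsym$ land in the same $\component{a}$, one traces through the definition of $\collectedcomps{\Tpref}{X}$ via the exploration graph $H$: the additional edges in $H - X$ beyond $G - X$ go from a blockage vertex $t$ to $\adhesion{t} \setminus X$, and since any blockage $t$ lies in the subtree rooted at some unique appendix $a$ of $\Tpref$, these edges keep connectivity confined within $\component{a}$. A short paragraph of bookkeeping like this finishes the containment claim.

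Once the containment is established, the calculation is direct:
\begin{align*}
|\neighall{\Tpref}{X}|
& = \bigl|\bigcup_{a \in \App(\Tpref)} \{\Cinterface \mid \collectedsym \in \collectedcomps{\Tpref}{X},\, \collectedsym \subseteq \component{a}\}\bigr| \\
& \leq \sum_{a \in \App(\Tpref)} k^{\Oh{k}}
 \leq (|\Tpref|+1) \cdot k^{\Oh{k}}
 \leq k^{\Oh{k}} \cdot |\Tpref|,
\end{align*}
using \cref{lem:everywhere-not-many-neighborhoods} in the first inequality and \cref{fact:number-of-appendices} (applied to the binary tree $T$) in the second. The last inequality uses that $\Tpref$ is nonempty, so $|\Tpref|+1 \leq 2|\Tpref|$.

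The main, and essentially only, obstacle is making the containment-in-one-$\component{a}$ argument careful enough; everything else is immediate bookkeeping. The rest of the proof is mechanical and fits in a few lines.
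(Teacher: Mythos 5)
Your proposal is correct and takes the same route as the paper: each collected component is contained in $\component{a}$ for a unique appendix $a$, so one sums the per-appendix bound of \cref{lem:everywhere-not-many-neighborhoods} over the at most $|\Tpref|+1$ appendices (\cref{fact:number-of-appendices}). The only difference is that you spell out the containment-in-one-$\component{a}$ claim — tracing connectivity through the exploration graph $H$ — whereas the paper simply asserts it as immediate; your justification is sound and matches the intended argument.
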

\begin{proof}
For each collected component \collectedsym there is a~unique appendix $t$ such that $\collectedsym \subseteq \component{t}$.
Since $(T, \bag)$ is a binary tree decomposition, we know that $|\App(\Tpref)| \leq |\Tpref| + 1$.
Consequently, $|\neighall{\Tpref}{X}| \leq k^{\Oh{k}} \cdot |\Tpref|$, as desired.
\end{proof}

Let us summarize the properties of the construction from this step of the refinement operation.

\begin{lemma}
Let $B \in \neighall{\Tpref}{X}$.
\begin{itemize}
    \pitem{green-tree} $\Tdnew{B} = (\Tnew{B}, \bagnew{B})$ is a binary tree decomposition of $G[B \cup \bigcup_{\collectedsym \in \invinterface{B}} \collectedsym]$ and has width at most $\ell$.
    \pitem{green-height} The height of $\Tnew{B}$ is at most \[\Oh{\log N} + \max_{\collectedsym \in \invinterface{B}} \height(\Tdnew{\collectedsym}).\]
    \pitem{green-root-bag} The root bag of $\Tdnew{B}$ contains exactly the set $B$.
    \pitem{green-root-sum-height} The sum of the heights of the nodes in $V(\Tnewpref{B})$ is bounded as follows: \[\sum_{t \in V(\Tnewpref{B})} \height_{\Tnew{B}}(t) \le 52 \cdot \sum_{\collectedsym \in \invinterface{B}} \height(\Tnew{\collectedsym}).\]
\end{itemize}
\end{lemma}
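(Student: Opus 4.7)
The plan is to verify the four properties in order, reusing the tree-decompositional guarantees for the inner pieces $\Tdnew{\collectedsym}$ already established earlier in this section together with the combinatorial guarantees of \cref{lem:green-trees}.

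For \pref{green-tree} I would argue separately the three ingredients: validity, width, and binarity. Validity follows from the standard gluing argument. The vertex condition for a vertex $v \in \collectedsym_i$ is immediate since such a $v$ only appears in bags of $\Tdnew{\collectedsym_i}$, and these form a connected subtree by \pref{component-tree}; for $v \in B$ the set of bags containing $v$ is the union of the entire (connected) tree $\Tnewpref{B}$ together with a connected piece inside each $\Tdnew{\collectedsym_i}$ containing the root of $\Tdnew{\collectedsym_i}$ (which holds by \pref{component-root} since $B \subseteq \mathsf{Interface}(\collectedsym_i)$). For the edge condition, edges with both endpoints in $B$ are covered since $B$ appears in every bag of $\Tnewpref{B}$; edges inside $\collectedsym_i \cup B$ are covered by $\Tdnew{\collectedsym_i}$; and there are no edges between $\collectedsym_i$ and $\collectedsym_j$ for $i \neq j$ because by \cref{lem:collected-is-union-of-connected} each $\collectedsym_i$ is a union of connected components of $G - X$ (note also that the disjointness with $X$ means there is no way to create such a spurious edge). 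For the width, $|B| \le 2k+2$ holds since $B = \Cinterface[\collectedsym]$ is a clique in $\torso[G]{X}$ and $\tw{\torso[G]{X}} \le 2k+1$; the bags of $\Tdnew{\collectedsym_i}$ are of size at most $\ell+1$ by \pref{collected-explored} and \pref{collected-unexplored}, combined with the fact that all bags of $\Tc$ outside $\Tpref$ have size at most $\ell+1$ (the hypothesis on $\Tpref$). Binarity follows because $\Tnewpref{B}$ is binary by \cref{lem:green-trees}, each $\Tdnew{\collectedsym_i}$ is binary by \pref{component-tree}, and each leaf of $\Tnewpref{B}$ receives exactly one child (the root of $\Tdnew{\collectedsym_i}$).

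The property \pref{green-root-bag} is immediate from the construction, since every bag of $\Tnewpref{B}$ equals $B$ and the root of $\Tnew{B}$ is the root of $\Tnewpref{B}$. For \pref{green-height} I apply \cref{lem:green-trees} to get $\height(\Tnewpref{B}) = \Oh{\log Q}$ with $Q = \sum_{i=1}^m h_i$, where $h_i = \height(\Tnew{\collectedsym_i})+1$. Bounding $m$ by the number of collected components (at most $\Oh{|V(G)|}$, hence polynomial in $N$) and each $h_i$ by $\height(T)+1 \le N+1$ gives $Q \le N^{\Oh{1}}$, so $\log Q = \Oh{\log N}$. Adding the maximum height of an attached $\Tdnew{\collectedsym_i}$ yields the required bound.

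The main computation is \pref{green-root-sum-height}. Here I would show by induction over $\Tnewpref{B}$ that $\height_{\Tnew{B}}(t) = \lheight(t)$ for every $t \in V(\Tnewpref{B})$, where $\lheight$ refers to the labeled-height function from \cref{lem:green-trees} computed with the input labels $h_i$. The base case: a leaf $t$ of $\Tnewpref{B}$ labeled $h_i$ has, in $\Tnew{B}$, exactly one child, namely the root of $\Tdnew{\collectedsym_i}$, whose height equals $\height(\Tnew{\collectedsym_i})$; hence $\height_{\Tnew{B}}(t) = 1 + \height(\Tnew{\collectedsym_i}) = h_i = \lheight(t)$. The inductive step is immediate from the recursive definition of height. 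Then by \cref{lem:green-trees},
\[
\sum_{t \in V(\Tnewpref{B})} \height_{\Tnew{B}}(t) \;=\; \sum_{t \in V(\Tnewpref{B})} \lheight(t) \;\le\; 26 Q \;=\; 26 \sum_{\collectedsym \in \invinterface{B}} (\height(\Tnew{\collectedsym})+1).
\]
Since each collected component is nonempty, $\height(\Tnew{\collectedsym}) \ge 1$, so $\height(\Tnew{\collectedsym})+1 \le 2\,\height(\Tnew{\collectedsym})$, giving the desired factor of $52$. There is no genuine obstacle here; the only thing to be careful about is the proper identification of heights across the gluing between $\Tnewpref{B}$ and the attached $\Tdnew{\collectedsym_i}$, which is exactly captured by the definition of $\lheight$ in \cref{lem:green-trees}.
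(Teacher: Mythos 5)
Your proposal is correct and follows essentially the same route as the paper: validity by the standard gluing argument using \pref{component-tree}, \pref{component-root}, and \cref{lem:collected-is-union-of-connected}; height via \cref{lem:green-trees} applied to $Q=\sum(\height(\Tnew{\collectedsym})+1)\leq N^{\Oh{1}}$; and \pref{green-root-sum-height} by identifying $\height_{\Tnew{B}}$ restricted to $V(\Tnewpref{B})$ with $\lheight$ and then using the $26Q\leq 52\sum\height(\Tnew{\collectedsym})$ slack. The only cosmetic difference is that you spell out the width bound (the $|B|\leq 2k+2$ clique argument and the $\leq\ell+1$ bag bound from \pref{collected-explored}/\pref{collected-unexplored}) and the induction establishing the $\lheight$ identification, which the paper leaves implicit.
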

\begin{proof}
First, we prove that $\Tdnew{B}$ is indeed a valid tree decomposition.
$\Tdnew{B}$ is a concatenation of valid tree decompositions $\Tdnew{\collectedsym}$ with a prefix $\Tnewpref{B}$ (which is, in fact, a valid tree decomposition of $G[B]$).
Recall that all collected components are vertex-disjoint.
Also, the edge condition is satisfied: for an~edge $uv$ with $u, v \in B$ the edge condition is trivial, and if some endpoint of an~edge $uv$, say $v$, belongs to a~collected component \collectedsym, then $u \in \neighclosed{\collectedsym} \subseteq \collectedsym \cup \Cinterface$.
Since $\Tdnew{\collectedsym}$ is a~tree decomposition of $G[\collectedsym \cup \Cinterface]$ (property \pref{component-tree}) we infer that $u$ and $v$ are together in some bag of $\Tdnew{\collectedsym}$.
Hence, it remains to argue that every vertex $v \in B$ appears in a connected subset of nodes of $\Tdnew{B}$.
This follows immediately from the fact that $\Tnewpref{B}$ contains $B$ in all of the bags, and for every collected component $\collectedsym$, the root bag of $\Tdnew{\collectedsym}$ contains $\Cinterface = B$ as a~subset (property \pref{component-root}).

By \cref{lem:green-trees}, the height of $\Tnewpref{B}$ is at most $\Oh{\log \sum_{\collectedsym \in \invinterface{B}} (\height(\Tnew{\collectedsym})+1)}$, which is at most $\Oh{\log (N^2)} = \Oh{\log N}$ since $\height(\Tnew{\collectedsym}) \le N$ and $|\invinterface{B}| \le N$.
It follows from the construction that the height of $\Tnew{B}$ is at most $\Oh{\log N} + \max_{\collectedsym \in \invinterface{B}} \height(\Tnew{\collectedsym})$.

Since $\Tnewpref{B}$ is a non-empty tree, we have that its root bag (which is the root bag of $\Tdnew{B}$ as well) contains exactly the set $B$.

For bounding the sum of the heights of the nodes in $V(\Tnewpref{B})$, we observe that the height $\height_{\Tnew{B}}(t_i)$ of a node $t_i \in V(\Tnewpref{B})$ that is a leaf in $\Tnewpref{B}$ to which $\Tdnew{\collectedsym_i}$ was attached is exactly $\height(\Tnew{\collectedsym_i})+ 1 = h_i$, and therefore the heights in $V(\Tnewpref{B})$ correspond exactly to the $\lheight$ in the statement of \cref{lem:green-trees}.
Therefore, by \cref{lem:green-trees} we get that
\[\sum_{t \in V(\Tnewpref{B})} \height_{\Tnew{B}}(t) \le 26 \cdot \sum_{\collectedsym \in \invinterface{B}} (\height(\Tnew{\collectedsym})+1) \le 52 \cdot \sum_{\collectedsym \in \invinterface{B}} \height(\Tnew{\collectedsym}).\]
\end{proof}

Finally, we bound the running time of this step.

\begin{lemma}
\label{lem:step4:building_decompositions}
There is an algorithm that given the output of \cref{lem:step3:building_tc_decompositions}, computes the set $\neighall{\Tpref}{X}$ and for each $B \in \neighall{\Tpref}{X}$ the tree $\Tnewpref{B}$ and pointers from the representations of collected components $\collectedsym \in \invinterface{B}$ to the leaves of $\Tnewpref{B}$ to which the tree decompositions $\Tdnew{\collectedsym}$ are attached in the construction.
\rtitem{green} The running time is $|F| \cdot k^{\Oh{1}} + |\Tpref| \cdot k^{\Oh{k}} \cdot \log N$.
\end{lemma}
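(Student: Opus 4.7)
The plan is to first group the tuples produced by \cref{lem:step3:building_tc_decompositions} by their interface, then invoke \cref{lem:green-trees} separately for each interface, and finally collect the running time using the structural bounds already established in this section.

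First, I would build a list of pairs $(\Cinterface, h_\collectedsym)$ where $h_\collectedsym = \height(\Tnew{\collectedsym})+1$, iterating through the output of \cref{lem:step3:building_tc_decompositions}. To compute $\neighall{\Tpref}{X}$ and group components by interface, I would sort these pairs lexicographically by $\Cinterface$ using bucket sort on the identifiers of the vertices of $X$; since each interface has size at most $2k+2$, a single comparison costs $\Oh{k}$, and the whole sort runs in time $|\collectedcomps{\Tpref}{X}| \cdot k^{\Oh{1}}$. Together with the bound $|\collectedcomps{\Tpref}{X}| \le |V(H)| \le \Oh{k\cdot |F|}$ from the construction of the exploration graph, this yields the contribution $|F|\cdot k^{\Oh{1}}$ to the final running time, and it produces the set $\neighall{\Tpref}{X}$ together with, for every $B \in \neighall{\Tpref}{X}$, the list of tuples of the components in $\invinterface{B}$ and their heights $h_i$.

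Next, for every $B \in \neighall{\Tpref}{X}$ I would invoke \cref{lem:green-trees} on the list $h_1,\ldots,h_{m_B}$ of heights of components in $\invinterface{B}$ to obtain the tree $\Tnewpref{B}$, and then set up pointers from each tuple representing $\collectedsym_i \in \invinterface{B}$ to the leaf of $\Tnewpref{B}$ labeled with $h_i$. By \cref{lem:green-trees}, this step takes time $\Oh{m_B + \log Q_B}$ for each $B$, where $Q_B = \sum_i h_i$. Summing over all $B$, the term $\sum_B m_B = |\collectedcomps{\Tpref}{X}|$ is absorbed into the $|F|\cdot k^{\Oh{1}}$ bound. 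Since every $h_i \le \height(\Tc) \le N$ and $m_B \le N$, we have $\log Q_B \le \Oh{\log N}$, so the total cost of the $\log Q_B$ terms is $\Oh{|\neighall{\Tpref}{X}| \cdot \log N}$.

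Finally, plugging in the bound $|\neighall{\Tpref}{X}| \le k^{\Oh{k}} \cdot |\Tpref|$ from \cref{cor:not-many-neighborhoods} yields the claimed overall running time of $|F|\cdot k^{\Oh{1}} + |\Tpref| \cdot k^{\Oh{k}} \cdot \log N$. No step in this plan is a real obstacle: the combinatorial heart of the construction (the skewed Huffman-like tree) is already encapsulated in \cref{lem:green-trees}, and the only care needed is in the bookkeeping that routes each $\collectedsym \in \invinterface{B}$ to the correct leaf of $\Tnewpref{B}$ and ensures that the grouping by interface is done within the stated time bound.
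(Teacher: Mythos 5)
Your proposal is correct and follows essentially the same route as the paper: bucket-sort the tuples by interface in time $|\collectedcomps{\Tpref}{X}|\cdot k^{\Oh{1}} \le |F|\cdot k^{\Oh{1}}$, then for each $B\in\neighall{\Tpref}{X}$ call \cref{lem:green-trees} at cost $\Oh{|\invinterface{B}| + \log N}$ (using $Q_B \le N^2$), and split the sum into the $\sum_B |\invinterface{B}| \le |F|\cdot k^{\Oh{1}}$ part and the $|\neighall{\Tpref}{X}|\cdot \Oh{\log N}$ part bounded via \cref{cor:not-many-neighborhoods}. Your accounting is, if anything, slightly cleaner than the paper's last sentence in separating the two contributions.
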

\begin{proof}
Note that the output of \cref{lem:step3:building_tc_decompositions} has size at most $|F| \cdot k^{\Oh{1}}$.
We first group the interfaces $B = \Cinterface$ by using bucket sorting, which takes $\collectedcomps{\Tpref}{X} \cdot k^{\Oh{1}} = k^{\Oh{1}} \cdot |F|$ time.
Then, the remaining running time is clearly dominated by the total running time of the calls to \cref{lem:green-trees}.
A single application of this lemma for a set $B \in \neighall{\Tpref}{X}$ takes time
\[
\Oh{|\invinterface{B}| + \log N}
\]
as the sum of heights of all collected components can be bounded $\Oh{N^2}$.

Hence, as the sum of $|\invinterface{B}|$ over all $B$ can be bounded by $|F| k^{\Oh{1}}$ simply by the size of the output of \cref{lem:step3:building_tc_decompositions}, and by $|\Tpref| \cdot k^{\Oh{k}}$ by \cref{cor:not-many-neighborhoods}, the total running time can be bounded by $|F| \cdot k^{\Oh{1}} + |\Tpref| \cdot k^{\Oh{k}} \cdot \log N$.
\end{proof}

\myparagraph{\stepitem{join} (Join the pieces of the decomposition together)}
It is time to define the final decomposition $\Tc' = (T', \bag')$.
In the previous step, we have defined, for every $B \in \neighall{\Tpref}{X}$, a binary tree decomposition $\Tdnew{B}$ of all collected components $\collectedsym$ such that $\Cinterface = B$.
Moreover, the root bag of $\Tdnew{B}$ contains exactly the set $B$ (property \pref{green-root-bag}) and in the tree decomposition $\Tdnew{X}$ constructed in \stepref{prefix} there exists a leaf bag $t_B \in V(\Tnew{X})$ such that $\bagnew{X}(t_B) = B$ (property \pref{prefix-subsets}).

Hence, to construct $T'$ it is enough to connect via an edge, for every $B \in \neighall{\Tpref}{X}$, the root of $\Tnew{B}$ with the corresponding vertex $t_B \in V(\Tnew{X})$.
The function $\bag'$ is just the union of the functions $\bagnew{X}$ and $\bagnew{B}$, for $B \in \neighall{\Tpref}{X}$.
Clearly, given $\Tdnew{X}$ and the outputs of \cref{lem:step3:building_tc_decompositions,lem:step4:building_decompositions}, we can in time $|F| \cdot 2^{\Oh{k}}$ construct a description of a prefix-rebuilding update of size bounded by $|F| \cdot 2^{\Oh{k}}$ that turns $\Tc$ into $\Tc'$.

\begin{itemize}
    \rtitem{join} The running time of \stepref{join} is $|F| \cdot 2^{\Oh{k}}$.
\end{itemize}

At this point, we should prove the correctness of the given procedure.
The analysis of the amortized running time together with the exact formula for the potential function $\Phi$ will be given in the next section.

\begin{lemma}
$(T', \bag')$ is a valid tree decomposition of $G$ of width at most $\ell$.
\end{lemma}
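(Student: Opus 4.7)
The plan is to verify the three tree-decomposition axioms (the tree structure, the vertex condition, and the edge condition) separately, and then bound the width by inspecting the two types of pieces that make up $(T',\bag')$. Throughout, I lean on the fact that $\Tdnew{X}$ is a tree decomposition of $\torso[G]{X}$, that each $\Tdnew{\collectedsym}$ is a tree decomposition of $G[\collectedsym \cup \Cinterface]$ (property \pref{component-tree}), that each $\Tdnew{B}$ is a tree decomposition of $G\bigl[B \cup \bigcup_{\collectedsym \in \invinterface{B}} \collectedsym\bigr]$ (property \pref{green-tree}), and that the attachment points $t_B$ are distinct leaves of $\Tnew{X}$ with $\bagnew{X}(t_B) = B$ (property \pref{prefix-subsets}). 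The fact that $T'$ is a tree is immediate: we glue the root of each $\Tnew{B}$ to a distinct leaf of $\Tnew{X}$.

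For the vertex condition, I will consider a vertex $v \in V(G)$ in two cases. If $v \notin X$, then $v$ lies in exactly one collected component $\collectedsym$ (the sets $\collectedcomps{\Tpref}{X}$ partition $V(G) \setminus X$), so $v$ occurs only within the subtree $\Tnew{\collectedsym}$ of $\Tnew{B}$ for $B = \Cinterface$, and this subtree is connected because $\Tdnew{\collectedsym}$ is a tree decomposition. If $v \in X$, then $v$ occurs in a connected subtree of $\Tnew{X}$ (since $\Tdnew{X}$ is a tree decomposition of $\torso[G]{X}$), and for every $B \in \neighall{\Tpref}{X}$ with $v \in B$, the occurrences of $v$ in $\Tnew{B}$ also form a connected subtree that contains the root of $\Tnew{B}$ (by \pref{green-root-bag}). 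Since the root of $\Tnew{B}$ is glued to $t_B \in V(\Tnew{X})$ and $v \in \bagnew{X}(t_B) = B$, the connected subtrees merge through a common node into a single connected subtree of $T'$.

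For the edge condition, take any $uv \in E(G)$. If $u,v \in X$, then $uv$ is an edge of $G[X] \subseteq \torso[G]{X}$, so it is covered by a bag of $\Tdnew{X}$. If exactly one endpoint is outside $X$, say $v \in \collectedsym$ and $u \in X$, then $u \in \neighopen{\collectedsym} \subseteq \Cinterface$, so both endpoints lie in $\collectedsym \cup \Cinterface$, and the edge is covered by $\Tdnew{\collectedsym}$ (\pref{component-tree}). If both endpoints lie outside $X$, then \cref{lem:collected-is-union-of-connected} places them in the same collected component $\collectedsym$, and again $\Tdnew{\collectedsym}$ covers the edge.

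Finally, for the width bound, observe that every bag of $(T',\bag')$ is either a bag of $\Tdnew{X}$, which has size at most $\ell+1$ by \pref{prefix-tree}, or a bag of some $\Tdnew{B}$, which has size at most $\ell+1$ by \pref{green-tree}. Hence the width of $(T',\bag')$ is at most $\ell$, as required. I do not anticipate a genuine obstacle here: all of the real work has been done in the preceding lemmas (in particular in \pref{component-tree}, \pref{green-tree}, and \pref{prefix-subsets}), and this final proof is just a bookkeeping check that the gluing respects the tree-decomposition axioms.
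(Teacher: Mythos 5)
Your proposal is correct and follows essentially the same route as the paper's own proof: the same case split on $v \in X$ versus $v \notin X$ for the vertex condition, the same three-way case split on $|\{u,v\} \cap X|$ for the edge condition, and the same observation that every bag of $(T',\bag')$ comes from either $\Tdnew{X}$ or some $\Tdnew{B}$, both of width at most $\ell$. The only cosmetic difference is that in the mixed edge case you invoke $\Tdnew{\collectedsym}$ directly via \pref{component-tree} where the paper invokes $\Tdnew{B}$ via \pref{green-tree}; these are interchangeable since $\Tdnew{\collectedsym}$ is a subtree of $\Tdnew{B}$.
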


\begin{proof}
First, recall that $(T', \bag')$ was obtained by gluing the tree decompositions $\Tdnew{X}$ and $\Tdnew{B}$, for every $B \in \neighall{\Tpref}{X}$.
All of these decompositions are of width at most $\ell$ (properties \pref{prefix-tree} and \pref{green-tree}), and thus $(T', \bag')$ is of width at most $\ell$ as well.

It remains to prove that $(T', \bag')$ is indeed a valid tree decomposition of $G$.
According to the definition of tree decomposition, we need to verify two facts.

\begin{claim}
\label{clm:valid-new-td-vertex}
For each vertex $v \in V(G)$, the subset of nodes $\{t \in V(T')\,\colon\,v \in \bag'(t)\}$ induces a~nonempty connected subtree of $T'$.
\end{claim}

\begin{claimproof}
Consider a vertex $v \in V(G)$.
\begin{itemize}
    \item If $v \not\in X$, then there is a~unique collected component $\collectedsym$ such that $v \in \collectedsym$.
    Since $\Tdnew{\collectedsym}$ is a valid tree decomposition of $G[\collectedsym \cup \Cinterface]$, the subset of nodes of $\Tnew{\collectedsym}$ containing $v$ must induce a connected subtree of $\Tnew{\collectedsym}$.
    Moreover, from the construction of $\Tc'$, we conclude that $v$ cannot appear in any bag of $\Tc'$ outside of $\Tnew{\collectedsym}$.
    \item Now, assume that $v \in X$.
    The vertex $v$ appears in a connected subset of nodes of $\Tdnew{X}$ and in connected subtrees of nodes in trees $\Tdnew{B}$ for each $B \in \neighall{\Tpref}{X}$ satisfying $x \in B$.
    Moreover, by the construction of ${\cal T'}$, the trees $\Tnew{B}$ are attached to the tree $\Tnew{X}$ by connecting two bags containing precisely the set $B$.
     Hence, $v$ must appear in a connected subset of nodes of~${\cal T'}$.\hfill\qedhere
\end{itemize}
\end{claimproof}

\begin{claim}
\label{clm:valid-new-td-edge}
For each edge $uv \in E(G)$, there exists a~node $t \in V(T')$ such that $\{u, v\} \subseteq \bag'(t)$.
\end{claim}

\begin{claimproof}
Consider an edge $uv \in E(G)$.
\begin{itemize}
    \item If $\{u, v\} \subseteq X$, then $uv \subseteq E(\torso[G]{X})$, and thus there must be a bag $t$ in the prefix $\Tnew{X}$ containing both $u$ and $v$, as $\Tc^X$ is a~tree decomposition of $\torso[G]{X}$ (property \pref{prefix-tree}).
    \item If $\{u, v\} \cap X = \emptyset$, there is a~unique component $\collectedsym \in \collectedcomps{\Tpref}{X}$ containing both $u$ and $v$ (since $uv \in E(G)$ and there are no edges between different collected components).
    Let $B = \Cinterface$.
    Then $\Tc$ contains the tree decomposition $\Tc^B$ as a~subtree.
    Since $\Tc^B$ is a~tree decomposition of $G[B \cup \invinterface{B}]$ (property \pref{green-tree}), $u$ and $v$ are together in some bag of $\Tc^B$.
    \item Finally, if exactly one of the vertices $u$ and $v$ belongs to $X$, say $u \in X$, then there is a~unique component $\collectedsym \in \collectedcomps{\Tpref}{X}$ containing $v$.
    Again let $B = \Cinterface$.
    By the definition of an~interface, we have that $u \in B$.
    Again, since $\Tc^B$ is a~tree decomposition of $G[B \cup \invinterface{B}]$ (property \pref{green-tree}), $u$ and $v$ are together in some bag of $\Tc^B$. \hfill\qedhere
\end{itemize}
\end{claimproof}
\cref{clm:valid-new-td-vertex,clm:valid-new-td-edge} conclude the proof of the lemma.
\end{proof}

\subsection{Analysis of the amortized running time}
\label{sec:refine-analysis}




In this section, we provide the proofs of properties \rtref{running-time} and \amref{potential-difference} of the defined refinement operation.
We preserve all notation from \cref{sec:refine-operation}.
Furthermore, denote by $\Fprim \coloneqq F \setminus \Tpref$ the set of all explored nodes excluding the nodes of $\Tpref$.

By summing the running times \rtref{aux} -- \rtref{join} of consecutive steps of the refinement operation, we immediately obtain the following statement.

\begin{fact}
\label{fact:refine-running-time}
The worst-case time complexity of $\mathsf{refine}(\Tpref)$ is upper-bounded by
\[
2^{\Oh{k^9}} (|F| + |\Tpref| \cdot \log N) \leq 2^{\Oh{k^9}}(|\Fprim| + |\Tpref| \cdot \log N).
\]
\end{fact}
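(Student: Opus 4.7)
The plan is simply to sum the five already-established per-step bounds and then to convert the result from a bound in terms of $|F|$ to a bound in terms of $|\Fprim|$.

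First, I would recall that the running times of the individual steps are given by \rtref{aux}, \rtref{prefix}, \rtref{comps}, \rtref{green}, and \rtref{join}, namely
\[
2^{\Oh{k^9}} \cdot |F|,\quad 2^{\Oh{k^3}} \cdot |\Tpref|,\quad k^{\Oh{1}} \cdot |F|,\quad |F| \cdot k^{\Oh{1}} + |\Tpref| \cdot k^{\Oh{k}} \cdot \log N,\quad |F| \cdot 2^{\Oh{k}},
\]
respectively. Summing and absorbing all the smaller parametric factors (all of which are bounded by $2^{\Oh{k^9}}$) yields a total worst-case time of
\[
2^{\Oh{k^9}} \bigl( |F| + |\Tpref| \cdot \log N \bigr),
\]
which is exactly the first inequality in the statement.

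For the second inequality, recall that by construction $\Tpref \subseteq F$ (since the exploration prefix always contains the input prefix), so $|F| = |\Fprim| + |\Tpref|$. Assuming without loss of generality that $\log N \geq 1$ (otherwise $N = 1$ and the statement is trivial), we thus obtain
\[
|F| + |\Tpref| \cdot \log N \;=\; |\Fprim| + |\Tpref| + |\Tpref| \cdot \log N \;\leq\; |\Fprim| + 2\,|\Tpref| \cdot \log N,
\]
so the overall bound can be rewritten as $2^{\Oh{k^9}}(|\Fprim| + |\Tpref| \cdot \log N)$ after absorbing the factor of $2$ into the hidden constant.

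There is no real obstacle here: the proof is essentially bookkeeping, the only non-trivial observation being the relation $|F| = |\Fprim| + |\Tpref|$, which follows directly from the definition $\Fprim = F \setminus \Tpref$ together with the fact that $\Tpref$ is a subset of the exploration prefix $F = \exploration{\Tpref}{X}$ obtained in \stepref{aux}.
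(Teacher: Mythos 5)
Your proposal is correct and follows exactly the same route as the paper: the paper itself simply states that the fact follows by summing the per-step running-time bounds \rtref{aux}--\rtref{join}, and you have carried out exactly that bookkeeping, with the constants correctly absorbed into $2^{\Oh{k^9}}$. Your justification of the second inequality via $\Tpref \subseteq F$ (hence $|F| = |\Fprim| + |\Tpref|$) and $\log N \geq 1$ is the right observation and is what the paper leaves implicit.
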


Observe that both $|\Tpref|$ and $\log N$ are expressions we can easily control during the run of the data structure.
Unfortunately, the size $|\Fprim|$ of the explored region can be arbitrarily large; in the worst-case the exploration $F$ may contain all the nodes of $T$.
Hence, to prove the desired bounds on the running time of our algorithm, we need to bind the drop of the potential function to the size of exploration $\Fprim$.

Before we proceed with the analysis of the amortized running time, we need to prove one more auxiliary lemma.




\begin{lemma}
  \label{lem:collected-mapping}
  There exists a~mapping \[\collectedmap\,\colon\, \collectedcomps{\Tpref}{X}\, \to\, \Fprim \cup \App(\Tpref) \] satisfying the following properties:
  \begin{enumerate}[label=(\roman*)]
    \item \label{item:collected-mapping-height} for each $\collectedsym \in \collectedcomps{\Tpref}{X}$, we have that $\height(\collectedsym) \leq \height_T(\collectedmap(\collectedsym))$;
    \item \label{item:collected-mapping-weight} for each $\collectedsym \in \collectedcomps{\Tpref}{X}$, if $\collectedmap(\collectedsym) \notin \App(\Tpref)$, then $\weight(\collectedsym) < |\bag(\collectedmap(\collectedsym))|$;
    \item \label{item:collected-mapping-revmap} for each $t \in \Fprim \cup \App(\Tpref)$, we have that $|\collectedmap^{-1}(t)| \leq \ell + 4$.
  \end{enumerate}
\end{lemma}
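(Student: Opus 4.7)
The plan is to define $\collectedmap$ directly through the home bag of each collected component, so that properties (i) and (ii) follow essentially for free from Lemmas \ref{lem:unblocked-home-bags} and \ref{lem:blocked-home-bags}, while property (iii) is controlled by the fact that $T$ is binary. Concretely, for an unblocked $\collectedsym$ with home bag $t$ I set $\collectedmap(\collectedsym) := t$, which lies in $\Fprim$ by Lemma \ref{lem:unblocked-home-bags}. For a blocked $\collectedsym$ whose home bag $t$ is a blockage, I set $\collectedmap(\collectedsym) := t$ if $t \in \App(\Tpref)$ (equivalently $\parent{t} \in \Tpref$), and $\collectedmap(\collectedsym) := \parent{t}$ otherwise; in the latter sub-case $\parent{t}$ lies in $F \setminus \Tpref = \Fprim$, since blockages are appendices of the exploration $F$. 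Hence the image always lies in $\Fprim \cup \App(\Tpref)$.

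With this definition, property (i) is immediate: by Definition \ref{def:interface-weight-height} we have $\height(\collectedsym) = \height_T(t)$ for the home bag $t$, and $\collectedmap(\collectedsym)$ is either $t$ itself or a (strict) ancestor of $t$. For property (ii), if $\collectedmap(\collectedsym) \notin \App(\Tpref)$ and $\collectedsym$ is unblocked, Lemma \ref{lem:unblocked-home-bags} gives $|\bag(\collectedmap(\collectedsym))| > |\neighopen{\collectedsym}| = \weight(\collectedsym)$. If $\collectedsym$ is blocked with $\collectedmap(\collectedsym) \notin \App(\Tpref)$, then by construction we must be in the second sub-case, so $\collectedmap(\collectedsym) = \parent{t}$ with $\parent{t} \notin \Tpref$, and Lemma \ref{lem:blocked-home-bags} yields $|\bag(\parent{t})| > |\bag(t) \cap X| = \weight(\collectedsym)$. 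The exemption of $\App(\Tpref)$ in the statement of (ii) is exactly what lets a clique blockage $t \in \App(\Tpref)$ satisfy $|\bag(t) \cap X| = |\bag(t)|$ without causing trouble.

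The main bookkeeping is in property (iii); I expect this to be the step where one could slip up if one is not careful about overlapping cases. I will fix $s \in \Fprim \cup \App(\Tpref)$ and bound preimages by type. Unblocked components mapped to $s$ each have $s$ as their home bag and therefore contain a distinct vertex of $\bag(s) \setminus X$, since collected components partition $V(G) \setminus X$; so there are at most $|\bag(s) \setminus X|$ of them. Because $\Tpref$ absorbs all bags of size $\ell + 2$, any $s \in \Fprim$ satisfies $|\bag(s)| \leq \ell + 1$, giving a bound of $\ell + 1$ here (while $s \in \App(\Tpref) \setminus \Fprim$ lies outside $F$ and receives no unblocked components). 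Blocked components mapped to $s$ either have $s$ itself as their blockage home bag (at most one such component) or have a blockage child of $s$ as their home bag (at most two, by binarity of $T$). By Definition \ref{def:blockage}, a blockage cannot have another blockage as a strict ancestor, so the two sub-cases never co-occur for one $s$. Summing, the preimage has size at most $(\ell + 1) + 2 = \ell + 3 \leq \ell + 4$, closing the argument.
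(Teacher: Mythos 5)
Your proposal is correct and follows essentially the same route as the paper: the same definition of $\collectedmap$ via home bags (with the $\App(\Tpref)$ versus $\parent{t}$ case split for blocked components), the same appeals to \cref{lem:unblocked-home-bags,lem:blocked-home-bags} for properties (i) and (ii), and the same counting for (iii). Your only deviation is the observation that the two blocked sub-cases in (iii) are mutually exclusive, which tightens the bound to $\ell+3$; the paper simply adds $1+2=3$ and settles for $\ell+4$.
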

  \begin{proof}
    We define $\collectedmap$ as follows.
    Take $\collectedsym \in \collectedcomps{\Tpref}{X}$.
    Let $t$ be the home bag of \collectedsym.
    Then:
    \begin{itemize}
      \item If \collectedsym is unblocked, then set $\collectedmap(t) \coloneqq t$.
      Recall here that $t$ is the shallowest node of $T$ such that $\bag(t)$ contains an~explored vertex of \collectedsym.
      \item If \collectedsym is blocked, then set $\collectedmap(t) \coloneqq t$ if $t \in \App(\Tpref)$; otherwise set $\collectedmap(t) \coloneqq \parent{t}$.
      Recall here that $t$ is the blockage that is the only element of $\compression(\collectedsym)$.
    \end{itemize}
    
    Recall that $\height(\collectedsym) = \height_T(t)$ and $\weight(\collectedsym) = |\Cinterface|$.

    We first prove that for each $\collectedsym \in \collectedcomps{\Tpref}{X}$ we indeed have that $\collectedmap(\collectedsym) \in \Fprim \cup \App(\Tpref)$.
    First, the case where \collectedsym is unblocked is immediately resolved by \cref{lem:unblocked-home-bags}: we always have that $\collectedmap(\collectedsym) \in \Fprim$.
    On the other hand, if \collectedsym is blocked, then $t$ is a~blockage and not the root of $T$ as $\Tpref \neq \emptyset$.
    Therefore, $\parent{t}$ exists and belongs to $F$.
    If $\parent{t} \in \Tpref$, then $\collectedmap(\collectedsym) = t \in \App(\Tpref)$.
    Otherwise, $\collectedmap(\collectedsym) = \parent{t} \in \Fprim$.
    
    That property \ref{item:collected-mapping-height} is satisfied is immediate.
    Next we show property \ref{item:collected-mapping-weight}.
    Consider $\collectedsym \in \collectedcomps{\Tpref}{X}$ with $\collectedmap(\collectedsym) \notin \App(\Tpref)$.
    Again, the case of unblocked collected components is resolved by \cref{lem:unblocked-home-bags}.
    Now assume \collectedsym is blocked.
    As $\collectedmap(\collectedsym) \notin \App(\Tpref)$, we find by examining the definition of $\collectedmap$ that $\parent{t} \notin \Tpref$ and $\collectedmap(\collectedsym) = \parent{t}$.	
    But then \cref{lem:blocked-home-bags} applies, yielding that $\weight(\collectedsym) = |\bag(t) \cap X| < |\bag(\parent{t})| = |\bag(\collectedmap(\collectedsym))|$.

    It remains to argue  property \ref{item:collected-mapping-revmap}.
    Consider a~node $t \in \Fprim \cup \App(\Tpref)$.
    Recall that since $t \notin \Tpref$, we have $|\bag(t)| \leq \ell + 1$.
    Note that each unblocked component \collectedsym is assigned by $\collectedmap$ to some explored node $t'$ such that $\bag(t')$ intersects \collectedsym.
    Since collected components form a partitioning of $V(G) \setminus X$, at most $\ell + 1$ different collected components may intersect $\bag(t)$ and thus at most $\ell + 1$ different unblocked collected components may be assigned $t$ by $\collectedmap$.
    Next, each blocked collected component \collectedsym is mapped by $\compression$ to a~different blockage $b \in \blockages{\Tpref}{X}$.
    Observe that, by the definition of $\collectedmap$, \collectedsym may be mapped by $\collectedmap$ to $t$ only if $b$ is equal to either $t$ or a~child of $t$.
    Since $T$ is binary, we conclude that $\collectedmap$ may map at most three different blocked components to~$t$.
  \end{proof}

Now, we prove a slight strengthening of property \amref{potential-difference}.
Note that since $\ell$ is fixed, for the rest of the section we omit it in the lower index of the potential function; that is, we use $\Phi \coloneqq \Phi_\ell$.

\begin{lemma}
\label{lem:phi-difference}
    The following inequality holds:
    \begin{align*}
    \Phi(\Tc) - \Phi(\Tc') \geq |\Fprim| + \sum_{t \in \Tpref} \height_T(t) & - k^{\Oh{k}} \cdot \left( |\Tpref| + \sum_{t \in \App(\Tpref)} \height_T(t) \right) \cdot \log N.
    \end{align*}
\end{lemma}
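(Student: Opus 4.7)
The plan is to bound $\Phi(\Tc')$ from above by comparing it piece-by-piece against $\Phi(\Tc)$ along the structural decomposition provided by the refinement operation. Specifically, I would partition $V(T) = \Tpref \sqcup \Fprim \sqcup (V(T) \setminus F)$ and $V(T') = V(\Tnew{X}) \sqcup \bigsqcup_B V(\Tnewpref{B}) \sqcup \bigsqcup_{\collectedsym} V(\Tnew{\collectedsym})$, and write $\Phi(\Tc') = A_1 + A_2 + A_3$ for the contributions of the three parts. Since for $t' \in V(\Tnew{\collectedsym})$ (resp.\ $t' \in V(\Tnewpref{B})$) the subtree of $t'$ in $T'$ coincides with the subtree of $t'$ in $\Tnew{\collectedsym}$ (resp.\ $\Tnew{B}$), heights can be computed locally, which simplifies the comparisons. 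The contribution $\Phi_\Tc(\Tpref)$ of the prefix is transparently $\geq \sum_{t \in \Tpref} \height_T(t)$, so the crux is matching the other two groups of nodes of $\Tc$ against $A_1, A_2, A_3$.

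Next, I would split $A_3 = A_3^F + A_3^{\bar F}$ depending on whether $\origin(t') \in F$. Property~\pref{collected-unexplored} combined with \pref{collected-height} immediately yields $A_3^{\bar F} \leq \Phi_\Tc(V(T) \setminus F)$. For $A_3^F$, property~\pref{collected-explored} gives $|\bag^\collectedsym(t')| \leq |\bag(\origin(t'))| - 1$ and $|\origininv(t)| \leq \ell+1$ for $t \in \Fprim$; combining these with the identity $g_\ell(x-1) = g_\ell(x)/(53(\ell+4))$ leads to $A_3^F \leq \tfrac{\ell+1}{53(\ell+4)} \Phi_\Tc(\Fprim)$. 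For $A_2$, property~\pref{green-root-sum-height} gives $A_2 \leq 52 \sum_\collectedsym g_\ell(|\Cinterface|) \cdot \height(\Tnew{\collectedsym})$; splitting via $\collectedmap$ from \cref{lem:collected-mapping}, for $\collectedsym$ with $\collectedmap(\collectedsym) \in \Fprim$ property~\ref{item:collected-mapping-weight} yields $|\Cinterface| < |\bag(\collectedmap(\collectedsym))|$, so each such term is at most $\Phi_\Tc(\collectedmap(\collectedsym))/(53(\ell+4))$; using $|\collectedmap^{-1}(t)| \leq \ell+4$ this produces $A_2^{\mathrm{inner}} \leq \tfrac{52}{53} \Phi_\Tc(\Fprim)$. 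For $\collectedsym$ with $\collectedmap(\collectedsym) \in \App(\Tpref)$, $|\Cinterface| \leq 2k+2$ (as $\Cinterface$ is a clique in $\torso[G]{X}$ of treewidth $2k+1$) gives $g_\ell(|\Cinterface|) \leq 2^{\Oh{k \log k}}$, so $A_2^{\mathrm{app}} \leq k^{\Oh{k}} \sum_{a \in \App(\Tpref)} \height_T(a)$. The crucial arithmetic is $A_3^F + A_2^{\mathrm{inner}} \leq \tfrac{53\ell+209}{53\ell+212} \Phi_\Tc(\Fprim)$, leaving a $\tfrac{3}{53\ell+212}$ slack; since $\Phi_\Tc(t) \geq 53(\ell+4) \height_T(t) \geq 53(\ell+4)$ for $t \in \Fprim$, this slack delivers $\Phi_\Tc(\Fprim) - A_3^F - A_2^{\mathrm{inner}} \geq 3|\Fprim|$, which is where the bonus $|\Fprim|$ in the statement comes from.

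The main obstacle is bounding $A_1 = \Phi_{\Tc'}(V(\Tnew{X}))$; the bag sizes are $\leq \ell+1$ so $g_\ell \leq k^{\Oh{k}}$, but $\height_{T'}(t')$ is inflated by the attached subtrees $\Tnew{B}$ and below them $\Tnew{\collectedsym}$, and the naive estimate $\height_{T'}(t') \leq \height_{\Tnew{X}}(t') + \Oh{\log N} + \max_\collectedsym \height(\Tnew{\collectedsym})$ combined with $|V(\Tnew{X})| \leq 2^{\Oh{k}}|\Tpref|$ from \pref{prefix-size} introduces a forbidden factor of $|\Tpref| \cdot \max_a \height_T(a)$. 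I would instead use a per-appendix charging: letting $M(t') = \max\{\height(\Tnew{\collectedsym}) : \collectedsym \text{ attached below }t'\}$, one has $\height_{T'}(t') \leq \height_{\Tnew{X}}(t') + \Oh{\log N} + M(t')$, and moreover $M(t') \leq \sum_{a \in \mathcal{A}(t')} \height_T(a)$ where $\mathcal{A}(t')$ collects the appendices $a$ with some $\collectedsym \subseteq \component{a}$ attached under $t'$. Swapping sums, $\sum_{t'} M(t') \leq \sum_a \height_T(a) \cdot |\{t' : a \in \mathcal{A}(t')\}|$; by \cref{lem:everywhere-not-many-neighborhoods} there are only $k^{\Oh{k}}$ distinct interfaces per appendix, each yielding an attachment point $t_B$ with $\leq \height(\Tnew{X}) = \Oh{k + \log N}$ ancestors in $\Tnew{X}$ (\pref{prefix-tree}), so $|\{t' : a \in \mathcal{A}(t')\}| \leq k^{\Oh{k}} \log N$. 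Therefore $A_1 \leq k^{\Oh{k}} \log N \cdot (|\Tpref| + \sum_{a \in \App(\Tpref)} \height_T(a))$. Assembling everything,
\[
\Phi(\Tc) - \Phi(\Tc') \;\geq\; \Phi_\Tc(\Tpref) + \bigl(\Phi_\Tc(\Fprim) - A_3^F - A_2^{\mathrm{inner}}\bigr) + \bigl(\Phi_\Tc(V(T)\setminus F) - A_3^{\bar F}\bigr) - A_1 - A_2^{\mathrm{app}},
\]
which is at least $\sum_{t \in \Tpref} \height_T(t) + |\Fprim| - k^{\Oh{k}} \log N \bigl(|\Tpref| + \sum_{a \in \App(\Tpref)} \height_T(a)\bigr)$ after absorbing the smaller $A_2^{\mathrm{app}}$ into the correction term (using $\log N \geq 1$), which is exactly the inequality claimed.
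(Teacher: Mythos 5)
Your proof is correct and follows essentially the same approach as the paper's: the identical partitioning $V(T) = \Tpref \sqcup \Fprim \sqcup (V(T)\setminus F)$ and $V(T') = V(\Tnew{X}) \sqcup \bigsqcup_B V(\Tnewpref{B}) \sqcup \bigsqcup_\collectedsym V(\Tnew{\collectedsym})$, the same use of \pref{collected-explored}, \pref{collected-unexplored}, \pref{collected-height}, \pref{green-root-sum-height}, \cref{lem:collected-mapping} and \cref{lem:everywhere-not-many-neighborhoods}, and the same three-term estimate. The only differences are cosmetic: you bound $\Phi_{\Tc'}(\Tnew{X})$ by a sum-swap over appendices ($\sum_{t'} M(t') \leq \sum_a \height_T(a)\cdot|\{t': a\in\mathcal A(t')\}|$) where the paper instead fixes a max-height component $\collectedsym_B$ per interface and sums the auxiliary function $\delta$; and you extract the $|\Fprim|$ slack by tracking the combined factor $\tfrac{53\ell+209}{53\ell+212}$ against $\Phi_\Tc(\Fprim)$ rather than invoking $g(x) \geq (53\ell+209)g(x-1)+1$ in a single step --- both amount to the same arithmetic on the base of $g_\ell$.
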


\begin{proof}
    Within the proof, for clarity we use $\Phi_{\Tc}(\cdot)$ to denote the potential function on nodes of $\Tc$ computed in $\Tc$, and $\Phi_{\Tc'}(\cdot)$ to denote the potential function on nodes of $\Tc'$ computed in $\Tc'$. Also, for a subtree $S$ of $T$, denote $\Phi_{\Tc}(S)\coloneqq\Phi_\Tc(V(S))=\sum_{t\in V(S)}\Phi_\Tc(t)$, and similarly for $\Phi_{\Tc'}(\cdot)$.

    For the tree decomposition $\Tc$, we consider the partitioning of its nodes given by: the prefix $\Tpref$, the set $\Fprim$ of all explored vertices excluding $\Tpref$, and the set of unexplored vertices.
    Hence, we can write $\Phi(\Tc)$ in the form:
    \[
    \label{eq:initial-phi-t}
    \Phi(\Tc) = \Phi_\Tc(\Tpref) + \Phi_\Tc(\Fprim) + \sum_{t \not\in F} \Phi_\Tc(t).
    \]
    Applying the inequality $\Phi_\Tc(t) \geq \height_T(t)$ for every $t \in \Tpref$, we get that
    \begin{equation}
    \label{eq:phi-t}
    \Phi(\Tc) \geq \sum_{t \in \Tpref} \height_T(t) + \Phi_\Tc(\Fprim) + \sum_{t \not\in F} \Phi_\Tc(t).
    \end{equation}
    
    Now, let us focus on the structure of the tree decomposition $\Tc'$.
    From \stepref{join} we know that $V(T')$ is the disjoint union of $V(\Tnew{X})$ and $V(\Tnew{B})$ for every $B \in \neighall{\Tpref}{X}$.
    Hence,
    \[
    \Phi(\Tc') = \Phi_{\Tc'}(\Tnew{X}) + \sum_{B \in \neighall{\Tpref}{X}} \Phi_{\Tc'} (\Tnew{B}).
    \]
    
    From \stepref{green}, we know that every tree $\Tnew{B}$ has a prefix $\Tnewpref{B}$ with attached decompositions $\Tdnew{\collectedsym}$ of collected components in such a way that for every component $\collectedsym \in \collectedcomps{\Tpref}{X}$ there is a unique tree $\Tnewpref{B}$ to which $\Tnew{\collectedsym}$ is attached. Therefore,
    \begin{equation}
    \label{eq:phi-tprim-form}
    \Phi(\Tc') = \Phi_{\Tc'}(\Tnew{X})
    + \sum_{B \in \neighall{\Tpref}{X}} \Phi_{\Tc'} (\Tnewpref{B})
    + \sum_{\collectedsym \in \collectedcomps{\Tpref}{X}} \Phi_{\Tc'}(\Tnew{\collectedsym}).
    \end{equation}
    
    Now, we are going to bound the three terms on the right-hand side of \cref{eq:phi-tprim-form} separately.
    
    \begin{claim}
    \label{clm:phi-bound1}
    The following inequality holds:
    \[
    \Phi_{\Tc'}(\Tnew{X}) \leq k^{\Oh{k}} \cdot \left( |\Tpref| + \sum_{t \in \App(\Tpref)} \height_T(t) \right) \cdot \log N.
    \]
    \end{claim}
 
\newcommand{\ancestor}{\preceq}
\newcommand{\maxheightmap}{\Lambda}
    \begin{claimproof}
    For brevity, we introduce the following notation: for two nodes $t, t' \in V(T')$, we write $t \ancestor t'$ if $t$ is an~ancestor of $t'$ in $T'$.
    
    Recall that $\Tc'$ is created by attaching the subtree $\Tnew{B}$ for each $B \in \neighall{\Tpref}{X}$ to a~different leaf $t_B$ of $\Tnew{X}$.
    Therefore, for each $t \in V(\Tnew{X})$, we have that
    \begin{equation}
    \label{eq:x-easy-height-bound}
    \height_{T'}(t) \leq \height(\Tnew{X}) + \max\{\height(\Tnew{B}) \,\colon\, B \in \neighall{\Tpref}{X}\text{ and } t \ancestor t_B\}.
    \end{equation}
    (We assume that the maximum value of the empty set is $0$.)
    Recall that $\height(\Tnew{X}) \leq \Oh{\log N + k}$ (property \pref{prefix-tree}) and that for each $B \in \neighall{\Tpref}{X}$, we have
    \[\height(\Tnew{B}) \leq \Oh{\log N} + \max_{\collectedsym \in \invinterface{B}} \height(\Tnew{\collectedsym})\]
    (property \pref{green-height}).
    Now, for each $B \in \neighall{\Tpref}{X}$, let us fix $\collectedsym_B$ to be an~arbitrary collected component $\collectedsym \in \invinterface{B}$ maximizing $\height(\Tnew{\collectedsym})$.
    
    Plugging the inequalities above into \cref{eq:x-easy-height-bound}, we get
    \[ \height_{T'}(t) \leq \Oh{\log N + k} + \max\{\height(\Tnew{\collectedsym_B}) \,\colon\, B \in \neighall{\Tpref}{X}\text{ and } t \ancestor t_B\}. \]
    
    Define a~function $\delta \,\colon\,V(\Tnew{X}) \to \N$ and a~partial function $\maxheightmap \,\colon\, V(\Tnew{X}) \rightharpoonup \collectedcomps{\Tpref}{X}$ as follows: for $t \in V(\Tnew{X})$, let $\maxheightmap(t)$ be a collected component $\collectedsym_B$ of maximum height such that $t \ancestor t_B$ (if any such component exists).
    Following that, set $\delta(t) \coloneqq \height(\maxheightmap(t))$; or set $\delta(t) \coloneqq 0$ if $\maxheightmap(t)$ is undefined.
	Note that we have that $\height_{T'}(t) \leq \Oh{\log N + k} + \delta(t)$; therefore,
	\begin{equation}
	\label{eq:phi-x-to-delta}
	\begin{split}
	\Phi_{\Tc'}(\Tnew{X}) & = \sum_{t \in V(\Tnew{X})} g(|\bag'(t)|) \cdot \height_{T'}(t)
	  \leq \sum_{t \in V(\Tnew{X})} g(\ell + 1) \cdot (\Oh{\log N + k} + \delta(t)) \\
	  &\leq \sum_{t \in V(\Tnew{X})} k^{\Oh{k}} \cdot (\log N + \delta(t)) \\
	  &\stackrel{(\star)}{\leq} k^{\Oh{k}} \cdot |\Tpref| \cdot \log N \, + \, k^{\Oh{k}} \cdot \sum_{t \in V(\Tnew{X})} \delta(t).
	\end{split}
	\end{equation}
	(In $(\star)$, we used the fact that $|V(\Tnew{X})| \leq 2^{\Oh{k}} \cdot |\Tpref|$ by property \pref{prefix-size}.)
	Thus, it remains to bound $\sum_{t \in V(\Tnew{X})} \delta(t)$.
	
    Let $B \in \neighall{\Tpref}{X}$ be an interface.
	Since $\Tnew{X}$ is a~tree of height $\Oh{\log N + k}$, there are at most $\Oh{\log N + k}$ ancestors of $t_B$ in $\Tnew{X}$.
	Hence, there are at most $\Oh{\log N + k}$ nodes $t \in V(\Tnew{X})$ for which $\maxheightmap(t) = \collectedsym_B$.
	Naturally, for each such $t$, we have $\delta(t) = \height(\Tnew{\collectedsym_B})$.
	As such,
	\begin{equation}
	\label{eq:sum-delta-to-interfaces}
	\sum_{t \in V(\Tnew{X})} \delta(t) \leq \sum_{B \in \neighall{\Tpref}{X}} \Oh{\log N + k} \cdot \height(\Tnew{\collectedsym_B}).
	\end{equation}
	
	For each interface $B$, let $a_B \in \App(\Tpref)$ be the unique appendix of $\Tpref$ for which $\collectedsym_B \subseteq \component{a_B}$.
	Now, by \cref{lem:everywhere-not-many-neighborhoods}, for each appendix $t \in \App(\Tpref)$, there are at most $k^{\Oh{k}}$ interfaces $B$ such that $a_B = t$.
	For each such interface $B$, we have that $\height(\Tnew{\collectedsym_B})$ is at most the height of the home bag of $\collectedsym_B$ (property \pref{component-height}) in $T$.
	Since $\collectedsym_B \subseteq \component{t}$, the home bag of $\collectedsym_B$ is a~descendant of $t$.
	Therefore, $\height(\Tnew{\collectedsym_B}) \leq \height_T(t)$.
	Combined with \cref{eq:sum-delta-to-interfaces}, this gives us that
	\[
	\sum_{t \in V(\Tnew{X})} \delta(t) \leq \Oh{\log N + k} \cdot \sum_{t \in \App(\Tpref)} k^{\Oh{k}} \cdot \height_T(t) \leq k^{\Oh{k}} \cdot \sum_{t \in \App(\Tpref)} \height_T(t) \cdot \log N.
	\]
	Together with \cref{eq:phi-x-to-delta}, this concludes the proof.
	\end{claimproof}

    \tk{(optional todo): The following formula is slightly too long :(}
    \begin{claim}
    \label{clm:phi-bound2}
    The following inequality holds:
    \[
    \sum_{B \in \neighall{\Tpref}{X}} \Phi_{\Tc'} (\Tnewpref{B}) \leq 52(\ell + 4) \cdot \left( \sum_{t \in \Fprim} g(|\bag(t)| - 1) \cdot \height_T(t) \right) + \sum_{t \in \App(\Tpref)} k^{\Oh{k}} \cdot \height_T(t).
    \]
    \end{claim}
    
    \begin{claimproof} %
    Fix a set $B \in \neighall{\Tpref}{X}$.
    Recall that
    \[\sum_{t \in V(\Tnewpref{B})} \height_{\Tnew{B}}(t) \le 52 \cdot \sum_{\collectedsym \in \invinterface{B}} \height(\Tnew{\collectedsym})\]
    (property \pref{green-root-sum-height}).
    Thus,
    \begin{align*}
    \Phi_{\Tc'} (\Tnewpref{B})
    & = \sum_{t \in \Tnewpref{B}} g(|B|) \cdot \height_{T'}(t) = \sum_{t \in \Tnewpref{B}} g(|B|) \cdot \height_{\Tnew{B}}(t) \\
    & \leq g(|B|) \cdot 52 \cdot \sum_{\substack{\collectedsym \in \invinterface{B}}} \height(\Tnew{\collectedsym}) \\
    & \stackrel{(\star)}{\le} 52 \cdot \sum_{\substack{\collectedsym \in \invinterface{B}}} g(\weight(\collectedsym)) \cdot \height(\collectedsym)
    .
    \end{align*}
    Note that $(\star)$ follows from the facts that $\weight(\collectedsym) = |\Cinterface| = |B|$ for $\collectedsym \in \invinterface{B}$ (\cref{def:interface-weight-height}); and $\height_T(\collectedsym)$, or by definition the height of the home bag of \collectedsym, is at least $\height(\Tnew{\collectedsym})$ (property \pref{component-height}).
    Hence, after summing the above over all sets $B \in \neighall{\Tpref}{X}$, we obtain that
    \[
    \sum_{B \in \neighall{\Tpref}{X}} \Phi_{\Tc'} (\Tnewpref{B}) \leq 52 \cdot \sum_{\collectedsym \in \collectedcomps{\Tpref}{X}} g(\weight(\collectedsym)) \cdot \height(\collectedsym).
    \]
    
    Now we use \cref{lem:collected-mapping}.
    Let $\collectedmap\,\colon\, \collectedcomps{\Tpref}{X}\, \to\, \Fprim \cup \App(\Tpref)$ be the mapping from this lemma.
    Let us split the sum above into two terms:
    \begin{align}
    \label{eq:collmap-use-split}
    \sum_{\collectedsym \in \collectedcomps{\Tpref}{X}} g(\weight(\collectedsym)) \cdot \height(\collectedsym)
    & \leq \sum_{\substack{\collectedsym \in \collectedcomps{\Tpref}{X} \\ \collectedmap(\collectedsym) \in \Fprim }} g(\weight(\collectedsym)) \cdot \height(\collectedsym) \nonumber \\
    & + \sum_{\substack{\collectedsym \in \collectedcomps{\Tpref}{X} \\ \collectedmap(\collectedsym) \in \App(\Tpref) }} g(\weight(\collectedsym)) \cdot \height(\collectedsym).
    \end{align}
    (The inequality comes from the fact that it is possible that $\App(\Tpref) \cap \Fprim \neq \emptyset$.)
    To bound the first sum on the right-hand side we use the fact that if $\collectedmap(\collectedsym) \in \Fprim$, then
    \begin{itemize}
        \item $\weight(\collectedsym) < |\bag(\collectedmap(\collectedsym))|$,
        \item $\height(\collectedsym) \leq \height_T(\collectedmap(\collectedsym))$, and
        \item $|\collectedmap^{-1}(t)| \leq \ell + 4$, for each $t \in \Fprim$.
    \end{itemize}
    Hence,
    \begin{align}
    \label{eq:collmap-use-1}
    \sum_{\substack{\collectedsym \in \collectedcomps{\Tpref}{X} \\ \collectedmap(\collectedsym) \in \Fprim }} g(\weight(\collectedsym)) \cdot \height(\collectedsym)
    & \leq \sum_{\substack{\collectedsym \in \collectedcomps{\Tpref}{X} \\ \collectedmap(\collectedsym) \in \Fprim }} g(|\bag(\collectedmap(\collectedsym))| - 1) \cdot \height_T(\collectedmap(\collectedsym)) \nonumber \\
    & \leq \sum_{t \in \Fprim} (\ell + 4) \cdot g(|\bag(t)| - 1) \cdot \height_T(t).
    \end{align}
    
    \tk{(optional todo): Here is some repetition with the argument for the first sum, but whatever...}
    To bound the second sum on the right-hand side of \cref{eq:collmap-use-split}, we use the fact that if $\collectedmap(\collectedsym) \in \App(\Tpref)$, then
    \begin{itemize}
        \item $\weight(\collectedsym) = |\Cinterface| \leq \ell + 1 = 6k + 6$ (this follows from the fact that $\Cinterface$ is a~subset of the home bag of $\Cinterface$ by property \pref{component-root}),
        \item $\height(\collectedsym) \leq \height_T(\collectedmap(\collectedsym))$, and
        \item $|\collectedmap^{-1}(t)| \leq \ell + 4$, for each $t \in \App(\Tpref)$.
    \end{itemize}
    Therefore,
    \begin{align}
    \label{eq:collmap-use-2}
    \sum_{\substack{\collectedsym \in \collectedcomps{\Tpref}{X} \\ \collectedmap(\collectedsym) \in \App(\Tpref) }} g(\weight(\collectedsym)) \cdot \height(\collectedsym)
    & \leq \sum_{\substack{\collectedsym \in \collectedcomps{\Tpref}{X} \\ \collectedmap(\collectedsym) \in \App(\Tpref) }} g(6k + 6) \cdot \height_T(\collectedmap(\collectedsym)) \nonumber \\
    & \leq \sum_{t \in \App(\Tpref)} (\ell + 4) \cdot g(6k + 6) \cdot \height_T(t) \nonumber \\
    & \leq \sum_{t \in \App(\Tpref)} k^{\Oh{k}} \cdot \height_T(t).
    \end{align}
    
    By plugging \cref{eq:collmap-use-1} and \cref{eq:collmap-use-2} into the inequality \cref{eq:collmap-use-split}, we obtain the desired bound.
    \end{claimproof}

    \begin{claim}
    \label{clm:phi-bound3}
    The following inequality holds:
    \[
    \sum_{\collectedsym \in \collectedcomps{\Tpref}{X}} \Phi_{\Tc'}(\Tnew{\collectedsym}) \leq (\ell + 1) \cdot \left( \sum_{t \in \Fprim} g(|\bag(t)| - 1) \cdot \height_T(t) \right) + \sum_{t \not\in F} \Phi_\Tc(t).
    \]
    \end{claim}
    
    \begin{claimproof}
    We have that
    \[
    \sum_{\collectedsym \in \collectedcomps{\Tpref}{X}} \Phi_{\Tc'}(\Tnew{\collectedsym}) = \sum_{\collectedsym \in \collectedcomps{\Tpref}{X}}\ \sum_{t \in V(\Tnew{\collectedsym})} \Phi_{\Tc'}(t)
    \]
    
    For each node $t \in V(\Tnew{\collectedsym})$, consider the original copy $\origin(t)$ of the node in $T$. Note that $\origin(t)$ is either an~explored node and then $\origin(t) \in \Fprim$, or a~descendant of a~blockage and then $\origin(t) \notin F$. Thus:
    \begin{equation}
    \label{eq:sum-phi-collected}
        \sum_{\collectedsym \in \collectedcomps{\Tpref}{X}} \Phi_{\Tc'}(\Tnew{\collectedsym}) = \sum_{\collectedsym \in \collectedcomps{\Tpref}{X}} \sum_{\substack{t \in V(\Tnew{\collectedsym}) \\ \origin(t) \in \Fprim}} \Phi_{\Tc'}(t)
        + \sum_{\collectedsym \in \collectedcomps{\Tpref}{X}} \sum_{\substack{t \in V(\Tnew{\collectedsym}) \\ \origin(t) \not\in F}} \Phi_{\Tc'}(t).
    \end{equation}

    Recall that for every $t \in V(\Tnew{\collectedsym})$, if $\origin(t)$ is unexplored, then $|\bagnew{\collectedsym}(t)| = |\bag(\origin(t))|$ and $|\origininv(\origin(t))| = 1$ (property \pref{collected-unexplored}), and moreover $\height_{\Tnew{\collectedsym}}(t) \le \height_T(\origin(t))$ (property \pref{collected-height}).
    Since $\height_{T'}(t) = \height_{\Tnew{\collectedsym}}(t)$ for every $t \in V(\Tnew{\collectedsym})$, we get that
    \begin{equation}
    \label{eq:sum-phi-notexplored}
    \begin{split}
        \sum_{\collectedsym \in \collectedcomps{\Tpref}{X}} \sum_{\substack{t \in V(\Tnew{\collectedsym}) \\ \origin(t) \not\in F}} \Phi_{\Tc'}(t) &=
        \sum_{\collectedsym \in \collectedcomps{\Tpref}{X}} \sum_{\substack{t \in V(\Tnew{\collectedsym}) \\ \origin(t) \not\in F}} g(|\bag^{\collectedsym}(t)|) \cdot \height_{T'}(t) \\
        &\le \sum_{\collectedsym \in \collectedcomps{\Tpref}{X}} \sum_{\substack{t \in V(\Tnew{\collectedsym}) \\ \origin(t) \not\in F}} g(|\bag(\origin(t))|) \cdot \height_{T}(\origin(t)) \\
        &\le \sum_{t \notin F} g(|\bag(t)|) \cdot \height_T(t) = \sum_{t \not\in F} \Phi_\Tc(t).
    \end{split}
    \end{equation}
    Note that above there might exist unexplored nodes $t \in V(T) \setminus F$ for which $\origin^{-1}(t) = \emptyset$.

    Now, consider the explored nodes.
    By the fact that for every $t \in V(\Tnew{\collectedsym})$ with explored $\origin(t)$, we have $|\bagnew{\collectedsym}(t)| < |\bag(\origin(t))|$ and $|\origininv(\origin(t))| \leq \ell + 1$ (property \pref{collected-explored}), and moreover $\height_{\Tnew{\collectedsym}}(t) \le \height_T(\origin(t))$ (again property \pref{collected-height}), we find that
%
    \begin{align*}
    \sum_{\collectedsym \in \collectedcomps{\Tpref}{X}} \sum_{\substack{t \in V(\Tnew{\collectedsym}) \\ \origin(t) \in \Fprim}} \Phi_{\Tc'}(t) &= \sum_{\collectedsym \in \collectedcomps{\Tpref}{X}} \sum_{\substack{t \in V(\Tnew{\collectedsym}) \\ \origin(t) \in \Fprim}} g(|\bagnew{\collectedsym}(t)|)\cdot \height_{T'}(t) \\
    & \leq \sum_{\collectedsym \in \collectedcomps{\Tpref}{X}} \sum_{\substack{t \in V(\Tnew{\collectedsym}) \\ \origin(t) \in \Fprim}} g(|\bag(\origin(t))| - 1)\cdot \height_{T}(\origin(t)) \\
    & \leq \sum_{t \in \Fprim} (\ell + 1) \cdot g(|\bag(t) - 1|)\cdot \height_{T}(t).
    \end{align*}
    By plugging the bound above together with \cref{eq:sum-phi-notexplored} into \cref{eq:sum-phi-collected}, we obtain the desired inequality.
    \end{claimproof}
        We then combine the above to obtain a bound for the potential of $\Tc'$.
    \begin{claim}
    \label{clm:phi-bound4}
    The following inequality holds:
    \[
    \Phi(\Tc')
    \leq \Phi_\Tc(\Fprim) - |\Fprim| + \sum_{t \not\in F} \Phi_\Tc(t)
    + k^{\Oh{k}} \cdot \left( |\Tpref| + \sum_{t \in \App(\Tpref)} \height_T(t) \right) \cdot \log N
    \]
    \end{claim}
    
    \begin{claimproof} Recall that (\cref{eq:phi-tprim-form}):
    \[
    \Phi(\Tc') = \Phi_{\Tc'}(\Tnew{X})
    + \sum_{B \in \neighall{\Tpref}{X}} \Phi_{\Tc'} (\Tnewpref{B})
    + \sum_{\collectedsym \in \collectedcomps{\Tpref}{X}} \Phi_{\Tc'}(\Tnew{\collectedsym}).
    \]
    
    By applying inequalities from \cref{clm:phi-bound1}, \cref{clm:phi-bound2} and \cref{clm:phi-bound3}, we obtain that:
    \begin{align*}
    \Phi(\Tc')
    & \leq k^{\Oh{k}} \cdot \left( |\Tpref| + \sum_{t \in \App(\Tpref)} \height_T(t) \right) \cdot \log N \\
    & + 52(\ell + 4) \cdot \left( \sum_{t \in \Fprim} g(|\bag(t)| - 1) \cdot \height_T(t) \right) + \sum_{t \in \App(\Tpref)} k^{\Oh{k}} \cdot \height_T(t) \\
    & + (\ell + 1) \cdot \left( \sum_{t \in \Fprim} g(|\bag(t)| - 1) \cdot \height_T(t) \right) + \sum_{t \not\in F} \Phi_\Tc(t).
    \end{align*}
    
    From the inspection of the definition of $g(\cdot)$ in \cref{sec:refine-potential} it follows that \[g(x) \geq (53\ell + 209) \cdot g(x - 1) + 1\qquad \textrm{for every }x \in \Z_{\ge 1}.\]
    Using that we can bound two of the terms from the inequality above:
   \begin{align*}
     & 52(\ell + 4) \cdot \left( \sum_{t \in \Fprim} g(|\bag(t)| - 1) \cdot \height_T(t) \right) + (\ell + 1) \cdot \left( \sum_{t \in \Fprim} g(|\bag(t)| - 1) \cdot \height_T(t) \right) \\
    =\quad  & \sum_{t \in \Fprim} (53\ell + 209) \cdot g(|\bag(t)| - 1) \cdot \height_T(t) \\
    \leq\quad & \sum_{t \in \Fprim} \left( g(|\bag(t)|) \cdot \height_T(t) - 1 \right) = \Phi_\Tc(\Fprim) - |\Fprim|.
    \end{align*}
    Therefore,
    \[
    \Phi(\Tc')
    \leq \Phi_\Tc(\Fprim) - |\Fprim| + \sum_{t \not\in F} \Phi_\Tc(t)
    + k^{\Oh{k}} \cdot \left( |\Tpref| + \sum_{t \in \App(\Tpref)} \height_T(t) \right) \cdot \log N
    ,\]
    as desired.
    \end{claimproof}
    
%
%
%
    
    We are ready to prove \cref{lem:phi-difference}.
    Recall that from \cref{eq:phi-t}:
    \begin{equation*}
    \Phi(\Tc) \geq \sum_{t \in \Tpref} \height_T(t) + \Phi_\Tc(\Fprim) + \sum_{t \not\in F} \Phi_\Tc(t).
    \end{equation*}
    This inequality, combined with the bound from \cref{clm:phi-bound4}, gives us:
    \[
    \Phi(\Tc) - \Phi(\Tc') \geq \sum_{t \in \Tpref} \height_T(t) + |\Fprim| - k^{\Oh{k}} \cdot \left( |\Tpref| + \sum_{t \in \App(\Tpref)} \height_T(t) \right) \cdot \log N;
    \]
    this ends the proof of the lemma.
\end{proof}

Clearly, \cref{lem:phi-difference} implies property \amref{potential-difference} of the refinement operation.
Furthermore, we can derive from it a~bound on the size of~$\Fprim$:
\begin{align*}
    |\Fprim| \leq \Phi(\Tc) - \Phi(\Tc')  - \sum_{t \in \Tpref} \height_T(t) & + k^{\Oh{k}} \cdot \left( |\Tpref|
    + \sum_{t \in \App(\Tpref)} \height_T(t) \right) \cdot \log N.
\end{align*}

Recall that, by \cref{fact:refine-running-time}, the running time of the refinement call is upper-bounded by:
\[
2^{\Oh{k^9}}(|\Fprim| + |\Tpref| \cdot \log N)
\]
By using the obtained bound on $|\Fprim|$, we obtain the desired bound on the running time as given by property \rtref{running-time}.

\section{Height improvement} \label{sec:height}
\newcommand{\heightop}{\ensuremath{\mathsf{improveHeight}}\xspace}
\newcommand{\getunbalancedop}{\ensuremath{\mathsf{getUnbalanced}}\xspace}
\newcommand{\ubapp}{\ensuremath{\mathsf{UBApp}}}

In this section we leverage the refinement operation defined in \cref{sec:refinement} to produce a~data structure that allows us to maintain a~tree decomposition of small height.
As in \cref{sec:refinement}, we assume $\ell = 6k + 5$ and take $\Phi \coloneqq \Phi_\ell$ and $g \coloneqq g_\ell$ as defined in \cref{sec:refine-potential}.
We prove that:

\begin{lemma}[Height improvement data structure]
  \label{lem:height-data-structure}
  Fix $k \in \N$ and let $\ell = 6k + 5$.
  The $(\ell + 1)$-prefix-rebuilding data structure from \cref{lem:refinement-data-structure} maintaining a~tree decomposition $\Tc = (T, \bag)$ can be extended to additionally support the following operation:
  \ms{or should we say below that it returns the (sequence of) prefix rebuilding updates?}
  
  \begin{itemize}
    \item $\heightop()$: updates $\Tc$ through a~sequence of prefix-rebuilding updates, producing a~tree decomposition $\Tc' = (T', \bag')$ such that
    \[ \height(T') \leq 2^{\Oh{k \log k \sqrt{\log n \log \log n}}} \qquad\text{and}\qquad
    |V(T')| \leq k^{\Oh{k}} \cdot n^3. \]
    Also, $\Phi(\Tc') \leq \Phi(\Tc)$ and if the width of $\Tc$ is at most $\ell$, then the width of $\Tc'$ is also at most $\ell$.

    The worst-case running time of \heightop is bounded by
    \[ 2^{\Oh{k^9}} \cdot (\Phi(\Tc) - \Phi(\Tc')) + \Oh{1}. \]
  \end{itemize}
\end{lemma}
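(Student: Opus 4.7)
The plan is to implement \heightop as a loop that repeatedly applies a subroutine $\getunbalancedop$, which identifies a prefix $\Tpref$ of the current decomposition $\Tc$ on which the refinement operation is guaranteed to strictly decrease the potential $\Phi$; we then feed this prefix to $\mathsf{refine}$ from \cref{lem:refinement-data-structure}. The loop terminates once $\height(T) \le f(n,k)$ with $f(n,k) = 2^{\Oh{k\log k\sqrt{\log n \log\log n}}}$, and since each iteration strictly drops the (integer-valued) potential, termination is guaranteed. The running time of each iteration is, by \rtref{running-time}, bounded by $2^{\Oh{k^9}}$ times the potential drop, plus a $\log n$-sized term that will be absorbed by the drop itself thanks to our choice of prefix; summed over all iterations this telescopes to $2^{\Oh{k^9}}(\Phi(\Tc) - \Phi(\Tc')) + \Oh{1}$, as required. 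Monotonicity $\Phi(\Tc') \le \Phi(\Tc)$ and preservation of width $\le \ell$ follow directly from the refinement lemma, since we will only ever refine prefixes that already contain every bag of size $\ell+2$ (in fact no such bag exists if the input satisfies width $\le \ell$).

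The heart of the argument is the construction of $\Tpref$ inside $\getunbalancedop$, following the sketch in \cref{subsec:overview_heightredu}. Let $h \coloneqq \height(T)$; we will show that if $h > f(n,k)$ then a prefix $\Tpref$ with appendix set $A$ can be built in time $\Oh{|\Tpref|}$ so that
\[
\Phi_{\ell,\Tc}(\Tpref) \;\ge\; 2 \cdot c_k \log n \cdot \Bigl(|\Tpref| + \sum_{a \in A} \height_T(a)\Bigr),
\]
for some $c_k = k^{\Oh{k}}$ absorbing the constant from \amref{potential-difference}. Plugging this into \amref{potential-difference} gives $\Phi(\Tc) - \Phi(\Tc') \ge c_k \log n \cdot (|\Tpref| + \sum_a \height_T(a)) \ge |\Tpref|\log n$, which dominates the $|\Tpref|\log n$ additive slack in the running time bound \rtref{running-time}, so the iteration costs $2^{\Oh{k^9}}$ times the potential decrease only. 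The construction of $\Tpref$ proceeds in levels $i = 1, 2, \dots, L$ with $L = \Theta(\sqrt{\log n / \log\log n})$. On level $0$ we insert the path from the root to its deepest descendant. Having built a partial $\Tpref$ and identified its current appendices, we classify each appendix $a$ according to two thresholds that depend on the level:
\begin{itemize}
    \item $a$ is \emph{big} on level $i$ if $\size(a) > n^{1 - i\varepsilon}\cdot(c_k \log n)^{i(i+1)/2}$;
    \item $a$ is \emph{shallow} on level $i$ if $\height_T(a) \le h / (c_k \log n)^i$.
\end{itemize}
Here $\varepsilon = \Theta(\sqrt{\log\log n / \log n})$ is the parameter whose choice makes the recursion bottom out after $L$ levels. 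Appendices that are big or shallow will be charged against the current $\Phi(\Tpref)$ and not expanded. Appendices that are both small and deep are expanded by inserting, for each such $a$, the path from $a$ to the deepest descendant of $a$; these new paths each contribute a fresh $\Phi$-mass comparable to their length squared, which will cover the next level's charges. After $L$ levels no appendix can be simultaneously small (size~$\le n^{1 - L\varepsilon}$) and deep (height $> h/(c_k\log n)^L$), because by our choice of $\varepsilon$ and $L$ the height bound exceeds the size bound — this is precisely where the subpolynomial $f(n,k)$ enters.

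The main obstacle, and the step needing the most care, is the accounting that the big-or-shallow appendices at every level contribute only a small constant fraction of the $\Phi$-mass introduced at that level. For a given level $i$, a counting argument bounds the number of big appendices by $n / (n^{1-i\varepsilon}(c_k\log n)^{i(i+1)/2})$; each contributes at most $\height_T(a)\log n \le h\log n$ to $\sum_a \height_T(a)\log n$, giving a total of at most $n^{i\varepsilon}h\log n / (c_k \log n)^{i(i+1)/2}$. The shallow appendices contribute at most (number of appendices on level $i$) times $h\log n / (c_k\log n)^i$. The freshly added paths on level $i-1$ give $\Phi$-mass at least $(h/(c_k\log n)^{i-1})^2$ each, and there are enough such paths (by the previous level's counting) for their total mass to exceed the level-$i$ charges by a factor of $c_k$. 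Making these inequalities tight, for all $i \in \{0, \dots, L\}$ simultaneously, forces the value $L \cdot \log(c_k \log n) \approx \sqrt{\log n \log\log n}$ and fixes the shape of $f(n,k)$. The remaining ingredients — that $\getunbalancedop$ runs in time $\Oh{|\Tpref|}$ using the $\mathsf{height}$, $\size$, and $\cmpsize$ primitives of \cref{lem:height-maintenance} (which we keep in sync via prefix-rebuilding updates), and that the sizes returned by $\mathsf{refine}$ keep $|V(T')| \le k^{\Oh{k}} n^3$ (a loose bound following from $|\Tnew{X}| \le 2^{\Oh{k}}|\Tpref|$ and induction on the number of iterations) — are routine.
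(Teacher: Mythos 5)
Your overall architecture matches the paper's: a loop that, while the tree is too deep, extracts an ``unbalanced'' prefix by a level-wise recursion (longest root-to-leaf path, then recurse into appendices that are simultaneously small and deep, charging the big-or-shallow ones against the path's $\Phi$-mass), feeds it to $\mathsf{refine}$, and argues via \amref{potential-difference} and \rtref{running-time} that each iteration is paid for by the potential drop. Your thresholds and the choice $L\cdot\log(c_k\log n)\approx\sqrt{\log n\log\log n}$ are the same as the paper's sequences $(h_i)$, $(n_i)$ from \cref{lem:wojtek-wallofmath}, and the charging of big appendices (by counting) and shallow appendices (by the height cap) is the paper's \cref{cl:bd-sumh}.

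There is, however, one genuine gap: the size bound $|V(T')|\le k^{\Oh{k}}\cdot n^3$ does \emph{not} follow ``by induction on the number of iterations'' from $|\Tnew{X}|\le 2^{\Oh{k}}|\Tpref|$. A single refinement can multiply the number of nodes by roughly $\ell$: every explored node of $F\setminus\Tpref$ may be copied into up to $\ell+1$ of the trees $\Tnew{\collectedsym}$ (property \pref{collected-explored}), and the auxiliary trees $\Tnewpref{B}$ add further nodes. Since the number of iterations of your loop is not bounded a priori, the tree size can grow unboundedly relative to $n$, which breaks both the stated size bound and, more insidiously, your accounting: you charge $c_k\log n$ per appendix, but \amref{potential-difference} charges $2^{\Oh{k\log k}}\log|V(T)|$, and $\log|V(T)|=\Oh[k]{\log n}$ only holds if the size is kept polynomial in $n$. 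The paper closes this with an explicit \emph{shrinking} step inside the loop: whenever $|V(T)|\ge B\cdot n^3$ with $B=k^{\Oh{k}}$, the decomposition is recomputed from scratch via \cref{thm:tw-opt}; the cost $2^{\Oh{k^8}}\cdot|V(T)|$ is amortized against the potential drop from $\Phi(\Tc)\ge|V(T)|\ge Bn^3$ down to $\Oh{Bn^2}$ for the fresh decomposition. You need this step (or an equivalent mechanism) for both the size invariant and the $\log n$ factors in your charging scheme to be valid.
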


\cref{lem:height-data-structure} will be crucial in ensuring the efficiency of the data structure maintaining tree decompositions of graphs dynamically: after each update to the tree decomposition, we will call \heightop so as to ensure that the height of the decomposition stays sufficiently small.
Note here that all prefix-rebuilding updates performed by \heightop in order to decrease the height of the decomposition are essentially ``free'' in terms of amortized running time: the running time of the improvement is fully amortized by the decrease in the potential value, i.e. $\Phi(\Tc)-\Phi(\Tc')$. In particular, this decrease in potential is always nonnegative. 

The rest of this section is dedicated to the proof of \cref{lem:height-data-structure}.


\subsection{Algorithm description}
We begin with the description of \heightop.
Assume we are~given a~tree decomposition $(T, \bag)$, and assume we have access to the $(\ell+1)$-prefix-rebuilding data-structure from \cref{lem:height-maintenance}; in particular, in constant time we can access the height $\height(s)$ of each node $s$ and the size $\size(s)$, i.e., the number of nodes in the subtree of $T$ rooted at $s$.
The tree decomposition will be updated by \heightop only through the refinement operation $\mathsf{refine}(\cdot)$ implemented by \cref{lem:refinement-data-structure}, or through a recomputation of $\Tc$ from scratch.

First, we define a~recursive function \getunbalancedop that, assuming that the tree $T$ is ``unbalanced'', determines an unbalanced prefix $W$ of $T$ that we will then pass to the refinement operation from \cref{lem:refinement-data-structure}. Whether a~tree is balanced or not depends on its height (i.e., whether the tree is deep or shallow) and on its size (i.e., whether the tree is large or small): a~tree is unbalanced if it is both deep and small.
Note here that the definition of the word ``deep'' and ``small'' will depend on the current depth of the recursion.


We now formalize the description of \getunbalancedop.
Let us assume that we are given two sequences of positive integers: $h_1 > h_2 > \ldots > h_a$ and $n_1 > n_2 > \ldots > n_a$, where $h_j \le n_j$ for $j < a$ and $h_a > n_a$. For a node $t$ of $T$ and $j \in \{1,\ldots,a\}$, we say that the subtree $T_t$ of $T$ rooted at $t$ is {\em{$j$-shallow}} if $\height(t) < h_j$, and {\em{$j$-deep}} otherwise. Similarly, we say that the subtree is {\em{$j$-big}} if $\size(t) > n_j$; and {\em{$j$-small}} otherwise (not to be confused with the notion of $c$-small closures).
We define a~recursive function $\getunbalancedop(T_t, j)$, where $T_t$ is the subtree of $T$ rooted at $t \in V(T)$, and $j \in [a - 1]$, as follows.
We first assert that $T_t$ is both $j$-small and $j$-deep.
Let $P$ be any longest path from $t$ to a~leaf of $T_t$.
Define
\[ \ubapp(P, j) \coloneqq \{t' \in \App(P) \,\colon\, T_{t'} \text{ is } (j+1)\text{-deep and }(j+1)\text{-small}\} \]
as the set of unbalanced trees rooted at the appendices of $P$.
On the other hand, we will say that each tree rooted at a node of $\App(P) \setminus \ubapp(P, j)$, i.e., each balanced tree rooted at an~appendix of $P$, is \emph{$P$-found}.
Then, $\getunbalancedop(T_t, j)$ returns
\[ P \,\cup\, \bigcup_{t' \in \ubapp(P, j)} \getunbalancedop(T_{t'}, j+1). \]
Clearly, \getunbalancedop is well-defined: in each recursive call $\getunbalancedop(T_{t'}, j+1)$ it is guaranteed that $T_{t'}$ is both $(j+1)$-deep and $(j+1)$-small; and thus, since $h_a > n_a$ and the height of each tree is upper-bounded by its size, $\getunbalancedop(\cdot, a)$ is never called.
Moreover, it can be easily verified that $\getunbalancedop(T, 1)$ returns a~prefix of $T$.

Having described \getunbalancedop, we explain how to use it in the implementation of procedure \heightop. We assume that the data structure has access to the refinement prefix-rebuilding data structure from \cref{lem:refinement-data-structure}.
First, we ensure that $T$ is of reasonable size: if $|V(T)|$ is at least $B \cdot n^3$, where $B \coloneqq g_\ell(\ell+1) \leq k^{\Oh{k}}$, then we recompute the entire tree decomposition from scratch using the algorithm from \cref{thm:tw-opt} on $\Tc$ and make it binary, while not increasing its size more than twice, thus getting a new tree decomposition $\Tc^\star = (T^\star, \bag^\star)$. We call this process \textit{shrinking}. If shrinking occurs, we discard the old refinement data structure and initialize a fresh one with $\Tc^\star$ (note that this can be represented as a~prefix rebuilding update of size $|V(T)| + |V(T^\star)|$). Note that $\Tc^\star$ is a tree decomposition of width at most $k$ and size at most $\Oh{n}$, hence $\Phi(\Tc^\star) = \Oh{B \cdot n^2}$.
Observe that since $\Phi(\Tc) \geq B \cdot n^3 \gg \Phi(\Tc^\star)$, the decrease in the potential value will cover the computational cost incurred from the shrinking.


After ensuring that $T$ has reasonable size, we check whether $T$ is $1$-deep. If it is not, then we are done. If it is, then we extract an unbalanced prefix $W$ through $\getunbalancedop(T, 1)$, we call $\mathsf{refine}(W)$ and apply to $(T, \bag)$ the prefix-rebuilding update it returned. The parameter $n_1$ will be chosen so that $n_1 \ge B \cdot n^3 \geq |V(T)|$.
Thus, at the moment $\getunbalancedop(T, 1)$ is called, $T$ is $1$-deep and $1$-small, so the initial invariant of \getunbalancedop is satisfied.
We will prove later that for suitably chosen sequences $(n_i)$, $(h_i)$, the potential value will cover the time complexity of the refinement.

Now, after a single pass of this procedure, $T$ might still not be sufficiently shallow.
Therefore we repeat the sequence of operations above in a loop until the tree becomes such; we stress that at each iteration of the loop we also check whether $T$ is too big and potentially shrink it. As in each iteration of the loop, the potential $\Phi(\Tc)$ will drop significantly, this process will be finite; and in what follows, we will be able to reasonably bound its running time.



%
%
%

Before arguing why \getunbalancedop satisfies the bounds in the statement of \cref{lem:height-data-structure} and why it even terminates at all, we first specify the sequences $(h_i)$ and $(n_i)$.

\begin{lemma}
\label{lem:wojtek-wallofmath}
Let $N$ and $c$ be real numbers with $1 < c < N$.
There exist sequences of real numbers $h_1 > h_2 > \ldots > h_a$ and $n_1 > n_2 > \ldots > n_a$ such that 
\begin{itemize}
\item $h_1 \le 2^{\Oh{\sqrt{\log N \log c}}}$ and $n_1 \ge N$, 
\item $h_a > n_a > 1$, and
\item $n_i = n_{i+1} \cdot h_{i+1} \ge c \cdot h_{i+1} = h_i$ for all $1 \le i < a$.
\end{itemize}
\end{lemma}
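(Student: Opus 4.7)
The plan is to construct the sequences explicitly by working backwards from index $a$. Fix $n_a := c$ and $h_a := c + 1$, which immediately satisfies $h_a > n_a > 1$ (recall $c > 1$). For $i < a$, define $h_i := c \cdot h_{i+1}$ and $n_i := n_{i+1} \cdot h_{i+1}$. By induction, this gives the closed forms
\[
h_i \;=\; (c+1)\,c^{a-i} \qquad\text{and}\qquad n_i \;=\; c\cdot (c+1)^{a-i}\cdot c^{\binom{a-i}{2}}.
\]
Strict monotonicity of both sequences follows because $c > 1$ and $h_{i+1} > c > 1$. The identity $n_i = n_{i+1} h_{i+1}$ holds by construction, and since $n_{i+1} \ge c$ for every $i < a$, we get $n_i = n_{i+1}h_{i+1} \ge c\,h_{i+1} = h_i$, so the full chain of conditions in the third bullet is satisfied.

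The remaining step is to pick $a$ (equivalently $m := a-1$) so as to simultaneously achieve $n_1 \ge N$ and the required upper bound on $h_1$. From the closed forms,
\[
\log n_1 \;\ge\; \binom{m}{2}\log c \;=\; \tfrac{m(m-1)}{2}\log c \qquad\text{and}\qquad \log h_1 \;=\; m \log c + \log(c+1).
\]
Choose $m$ to be the smallest integer with $\tfrac{m(m-1)}{2}\log c \ge \log N$; then $m = O\!\bigl(\sqrt{\log N / \log c}\bigr) + O(1)$ (using $\log c > 0$ since $c > 1$), which guarantees $n_1 \ge N$. Substituting this $m$ into the formula for $\log h_1$ yields
\[
\log h_1 \;\le\; \sqrt{2\log N \log c} + O(\log c).
\]
Finally, since $1 < c < N$ we have $\log c \le \log N$, hence $\log c \le \sqrt{\log N \log c}$, so the $O(\log c)$ remainder is absorbed into the main term, giving $\log h_1 = O(\sqrt{\log N \log c})$, i.e.\ $h_1 \le 2^{O(\sqrt{\log N \log c})}$ as required.

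Once all three conditions are verified this way, the lemma follows. There is no real obstacle here beyond keeping the arithmetic tidy; the only mild subtlety is ensuring $h_a > n_a$ while simultaneously keeping $n_a \ge c$ (so that $n_{i+1}\ge c$ propagates through the recurrence), which is why we take $n_a = c$ and $h_a = c+1$ rather than anything closer to $1$.
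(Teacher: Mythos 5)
Your construction is essentially the same as the paper's: both define the sequences by the recurrences $h_i = c\,h_{i+1}$ and $n_i = n_{i+1}h_{i+1}$ anchored at $n_a \approx c$ with $h_a$ slightly larger, choose $a$ minimal so that $n_1 \ge N$, and deduce $a = \Oh{\sqrt{\log N/\log c}}$, hence $h_1 \le 2^{\Oh{\sqrt{\log N \log c}}}$ (the paper just writes the closed forms $h_i = c^{a+2-i}$, $n_i = c^{(a-i+1)(a-i+2)/2}$ directly instead of taking $h_a = c+1$). The argument is correct.
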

\begin{proof}
Let $a$ be the smallest integer such that $c^{\frac{a(a+1)}{2}} \ge N$. We set
\[ h_i \coloneqq c^{a+2-i} \qquad \text{and} \qquad n_i \coloneqq c^{\frac{(a - i + 1) \cdot (a - i + 2)}{2}}. \]
We have $h_a = c^2 > c = n_a$ and $n_i = n_{i+1} \cdot h_{i+1} \ge c \cdot h_{i+1} = h_i$ for $1 \le i < a$. Also, note that $n_1 = c^{\frac{a(a+1)}{2}} \ge N$. Therefore, it remains to prove that $h_1 \leq 2^{\Oh{\sqrt{\log N \log c}}}$.

As $c < N$, we have that $a \ge 2$.
By the definition of $a$, we also have $c^{\frac{a(a-1)}{2}} < N \le c^{\frac{a(a+1)}{2}}$, which implies that $\frac{a(a-1)}{2} \cdot \log c < \log N \le \frac{a(a+1)}{2} \cdot \log c$. As $a^2 \ge \frac{a(a+1)}{2}$, we also have that $a^2 \log c \ge \log N$ and thus $a \ge \sqrt{\frac{\log N}{\log c}}$. On the other hand, as $a \ge 2$, we have $a^2 \le 4 \cdot \frac{a(a-1)}{2}$, so
\[ a \log c = \frac{a^2 \log c}{a} \le \frac{4\frac{a(a-1)}{2} \log c}{a}  \le \frac{4 \log N}{a} \le \frac{4 \log N}{\sqrt{\frac{\log N}{\log c}}} = 4 \sqrt{\log N \log c}. \]
Therefore, $h_1 = c^{a+1} \le c^{2a} = 2^{2a \log c} = 2^{\Oh{\sqrt{\log N \log c}}}$.
\end{proof}

\noindent We get the sequences $(h_i)$ and $(n_i)$ by applying \cref{lem:wojtek-wallofmath} with $N = |V(T)|$ and $c = 2^{\Oh{k \log k}} \log n$, to be exactly specified later.

\subsection{Bounding the potential decrease}

Let $\Tc'$ be the resulting tree decomposition after one iteration of the loop from the height improvement procedure, assuming that the shrinking procedure did not happen. We will now bound $\Phi(\Tc) - \Phi(\Tc')$ using the property \amref{potential-difference} from \cref{lem:refinement-data-structure}.

  Let $W$ be the prefix of $T$ returned by $\getunbalancedop(T, 1)$ and let $\Wc$ be the set of subtrees of $T$ rooted at the vertices of $\App(W)$.
  As $W$ is the disjoint union of the vertical paths found by \getunbalancedop, each subtree in $\Wc$ can be assigned the unique vertical path $P$ found by \getunbalancedop such that the subtree is $P$-found.
  
  Let us first focus on a~recursive call $\getunbalancedop(T_r, j)$, where $T_r$ is the subtree of $T$ rooted at $r$ and $j \in [a - 1]$.
  Let $P$ be the longest path rooted at $r$ found in the algorithm.
  Let $T_1, T_2, \dots, T_b$ be the set of $P$-found subtrees.
  As $T_r$ is binary, we have that $b \leq |P| + 1 \leq 2|P|$.
  From the invariant of \getunbalancedop we know that $T_r$ is $j$-deep, that is $|P| = \height(T_r) \ge h_j$. As each $P$-found subtree is either $(j+1)$-shallow or $(j+1)$-big, we have that
  \[ \sum_{i=1}^{b} \height(T_i) \le \sum_{i \text{ : } T_i \text{ is }(j+1)\text{-shallow}} \height(T_i) + \sum_{i \text{ : } T_i \text{ is }(j+1)\text{-big}} \height(T_i). \]
  (Note that a subtree can be both $(j+1)$-shallow and $(j+1)$-big.)
  We will now bound each sum on the right hand side separately.
  \mp{Don't we have $b\leq |P|-1$, because $P$ ends in a leaf?}
  \ms{Yes, but at this point of time I didn't want to change the numbers in the formulas and risk breaking stuff :)}
  
  First, each $(j+1)$-shallow subtree has height smaller than $h_{j + 1}$.
  As there are at most $b \leq 2|P|$ of them among the $P$-found subtrees, we get that
  \[ \sum_{i \text{ : } T_i \text{ is }(j+1)\text{-shallow}} \height(T_i) \le 2|P| \cdot h_{j + 1}. \]

  Second, $P$ is a~longest root-leaf path in $T_r$, so each $T_1, T_2, \dots, T_b$ has height at most $|P|$.
  Moreover, as $T_r$ is $j$-small, we have $|V(T_r)| \le n_j$; and for each $(j+1)$-big tree $T_i$ we have that $|V(T_i)| > n_{j+1}$.
  As $T_1, \dots, T_b$ are pairwise disjoint and are subtrees of $T_r$, we conclude that at most $\frac{n_j}{n_{j+1}}$ of $P$-found trees are $(j+1)$-big.
  Hence,
  \[ \sum_{i \text{ : } T_i \text{ is }(j+1)\text{-big}} \height(T_i) \le 2|P| \cdot \frac{n_{j}}{n_{j + 1}}. \]
  
  We conclude that
  \begin{equation}
  \label{eq:sum-of-appendix-heights}
  \sum_{i=1}^{b} \height(T_i) \le 2|P| \cdot \left(h_{j+1} + \frac{n_j}{n_{j+1}}\right).
  \end{equation}
  
  Also we observe that
  \begin{equation}
  \label{eq:sum-of-path-heights}
  \sum_{t \in P} \height(t) = 1 + 2 + \dots + |P| \geq \frac{|P|^2}{2} \geq \frac{|P| \cdot h_j}{2}.
  \end{equation}

Let us now recall property \amref{potential-difference} that, adjusted to our current notation, states that

$$\Phi(\Tc) - \Phi(\Tc') \geq \sum_{t \in W} \height_T(t) - 2^{\Oh{k \log k}} \cdot \left( |W| + \sum_{t \in \App(W)} \height_T(t) \right) \cdot \log |V(T)|. $$

At that point of computation we have that $|V(T)| \le B \cdot n^3$, where $B = k^{\Oh{k}}$, hence $\log |V(T)| \leq \Oh{\log n + k \log k}$.
Therefore, we can simplify this bound to the following:

\begin{equation}
\label{eq:refinement-potential-rephrased}
\Phi(\Tc) - \Phi(\Tc') \geq \sum_{t \in W} \height_T(t) - 2^{\Oh{k \log k}} \cdot \left( |W| + \sum_{t \in \App(W)} \height_T(t) \right) \cdot \log n.
\end{equation}

Let us now expand the $\cal O(\cdot)$ notation in \cref{eq:refinement-potential-rephrased} and fix a constant $1< C \in 2^{\Oh{k \log k}}$, depending only on $k$, such that the $2^{\Oh{k \log k}}$ term in \cref{eq:refinement-potential-rephrased} is upper-bounded by $C$. With that, we can rewrite the inequality again as:

\begin{equation} \label{eq:pot-diff}
\Phi(\Tc) - \Phi(\Tc') \geq \sum_{t \in W} \height_T(t) - C \cdot \left( |W| + \sum_{t \in \App(W)} \height_T(t) \right) \cdot \log n.
\end{equation}

Let us now inspect the contribution to the right hand side of \cref{eq:pot-diff} coming from the terms introduced by a~single recursive call $\getunbalancedop(T_r, j)$: the rooted path $P$ added to $W$ and the $P$-found subtrees. It is equal to 
\[\sum_{t \in P} \height(t) - C \cdot \left (|P| + \sum_{i=1}^{b}  \height(T_i)\right) \cdot \log n.\]

\tk{Would it be difficult to know explicitly what is ``sufficiently big $n$''}
\ms{I think it's $n \in 2^{\Omega(k \log k)}$. But do we want a~separate case for small $n$?}
 Let us also set now the value $c$ that we used in \cref{lem:wojtek-wallofmath} to define sequences $h_1, \ldots, h_a$ and $n_1, \ldots, n_a$ as $c \coloneqq 20 C \log n$ (let us note that for sufficiently large $n$, it is smaller than $N$, as required); the reason for such choice will become clear in a moment. Having that specified, we can state the following claim: 

\begin{claim}\label{cl:bd-sumh}
The following inequality holds:
	\[C \cdot \left(|P| + \sum_{i=1}^{b}  \height(T_i)\right) \cdot \log n \le \frac12 \sum_{t \in P} \height(t)\]
\end{claim}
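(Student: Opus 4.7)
The plan is to chain together the three ingredients already available just before the claim statement: the lower bound on the path-height sum, the upper bound on the sum of heights of $P$-found subtrees, and the specific arithmetic relations between the $(h_j)$ and $(n_j)$ coming from \cref{lem:wojtek-wallofmath}. The choice $c = 20 C \log n$ has been made precisely so that the constants line up, so the proof should be essentially a mechanical substitution.

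First, I would simplify the sum $\sum_{i=1}^b \height(T_i)$ appearing in \cref{eq:sum-of-appendix-heights}. Using the identities from \cref{lem:wojtek-wallofmath}, we have $n_j / n_{j+1} = h_{j+1}$, hence
\[
\sum_{i=1}^b \height(T_i) \;\le\; 2|P| \cdot \bigl(h_{j+1} + h_{j+1}\bigr) \;=\; 4|P|\, h_{j+1}.
\]
Since also $h_j \ge c \cdot h_{j+1}$ from the same lemma, this gives $\sum_i \height(T_i) \le 4|P|\, h_j / c$. Plugging in on the left-hand side of the claim:
\[
C \cdot \Bigl(|P| + \sum_{i=1}^{b}\height(T_i)\Bigr) \cdot \log n \;\le\; C \log n \cdot |P| \;+\; \tfrac{4 C \log n}{c} \cdot |P|\, h_j \;=\; C \log n \cdot |P| + \tfrac{1}{5}\, |P|\, h_j,
\]
where in the last equality I used $c = 20 C \log n$, so $4 C \log n / c = 1/5$.

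Next, I would appeal to \cref{eq:sum-of-path-heights}, which yields $\tfrac12 \sum_{t \in P}\height(t) \ge \tfrac14 |P|\, h_j$. So it suffices to prove $C \log n \cdot |P| + \tfrac{1}{5}\,|P|\, h_j \le \tfrac{1}{4}\, |P|\, h_j$, i.e. $C \log n \le h_j / 20$. Here the key observation is that the recursion of $\getunbalancedop$ never reaches depth $a$, so the value $j$ occurring in the claim satisfies $j \le a - 1$. Combined with the explicit values from the proof of \cref{lem:wojtek-wallofmath} (specifically $h_a = c^2$ and $h_j = c \cdot h_{j+1}$ for $j < a$), this gives $h_j \ge c^3 \ge c = 20 C \log n$, exactly the inequality we need.

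I do not anticipate any real obstacle: the entire proof is arithmetic bookkeeping, with the one conceptual point being the use of the fact that $j < a$ throughout the recursion to guarantee $h_j \ge c$. No new lemmas or constructions are required.
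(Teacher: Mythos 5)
Your proof is correct and follows essentially the same chain of ingredients as the paper's: it uses \cref{eq:sum-of-appendix-heights}, \cref{eq:sum-of-path-heights}, the identities $h_j = c\cdot h_{j+1}$ and $n_j/n_{j+1}=h_{j+1}$, and the definition $c = 20C\log n$. The only cosmetic difference is that the paper absorbs the isolated $C\log n\cdot|P|$ term by multiplying with $h_{j+1}\ge 1$ at the start, obtaining $5C|P|h_{j+1}\log n \le \frac12\sum_{t\in P}\height(t)$, whereas you carry that term along and observe at the end that it is dominated because $h_j \ge c$; both ways boil down to the same fact.
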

\begin{claimproof}
 Let us recall that $h_j = c \cdot h_{j+1}$, $h_j \ge 1$, and $\frac{n_j}{n_{j+1}} = h_{j+1}$. Because of that, we can derive the following bounds:
\[
\begin{split}
C \cdot \left(|P| + \sum_{i=1}^{b}  \height(T_i)\right) \cdot \log n &\stackrel{(\ref{eq:sum-of-appendix-heights})}{\le} C\left(|P| + 2|P| \left(h_{j+1} + \frac{n_j}{n_{j+1}}\right)\right) \cdot \log n \\
&\le C(|P| \cdot h_{j+1} + 2|P|(h_{j + 1} + h_{j+1})) \cdot \log n \\
&= 5C \cdot |P| \cdot h_{j+1} \cdot \log n.
\end{split}
\]

On the other hand, we have that
\[ \sum_{t \in P} \height(t) \stackrel{(\ref{eq:sum-of-path-heights})}{\ge} \frac{|P| \cdot h_j}{2} = \frac{|P| \cdot c \cdot h_{j+1}}{2} = 10C \cdot |P| \cdot h_{j+1} \cdot \log n \] and the claim follows.
\end{claimproof}

Summing the inequality from \cref{cl:bd-sumh} over all paths $P$ corresponding to the calls of \getunbalancedop, we immediately get the following claim:

\begin{claim} \label{cl:bd-sumh2}
The following inequality holds:
	\[C  \cdot \left( |W| + \sum_{t \in \App(W)} \height_T(t) \right) \cdot \log n \le \frac12 \sum_{t \in W} \height(t).\]
\end{claim}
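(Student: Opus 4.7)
The plan is to obtain Claim~\ref{cl:bd-sumh2} by simply summing Claim~\ref{cl:bd-sumh} over all recursive calls of $\getunbalancedop$. The key identity to establish is that the partition of $W$ into the vertical paths extracted by the recursion induces a compatible decomposition of both $|W|$ and $\sum_{t\in\App(W)}\height_T(t)$ into per-call contributions.

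First, I would observe that, by construction, $W$ is the disjoint union of the longest root-to-leaf paths $P$ returned by the individual recursive calls $\getunbalancedop(T_r,j)$. Consequently,
\[|W|=\sum_{P}|P|\qquad\text{and}\qquad \sum_{t\in W}\height_T(t)=\sum_{P}\sum_{t\in P}\height_T(t).\]
Second, and this is the only step requiring a brief argument, I would describe $\App(W)$ in terms of the recursion. A node $t$ is an appendix of $W$ iff its parent is in $W$ but $t$ itself is not. Fix a path $P$ extracted in some call $\getunbalancedop(T_r,j)$. By the definition of \getunbalancedop, the appendices of $P$ split into $\ubapp(P,j)$ and the $P$-found subtrees' roots; the former are fed into further recursive calls, and hence their corresponding paths (and therefore those roots themselves) are added to $W$, so they do not contribute to $\App(W)$. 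The $P$-found roots, on the other hand, are not touched by any further recursive call, so they lie in $\App(W)$. Since the paths $P$ are pairwise disjoint, this gives
\[\App(W)\;=\;\bigsqcup_{P}\{\text{roots of $P$-found subtrees}\},\qquad \sum_{t\in\App(W)}\height_T(t)\;=\;\sum_{P}\sum_{i=1}^{b_P}\height(T_i),\]
where $T_1,\dots,T_{b_P}$ enumerate the $P$-found subtrees for the corresponding call.

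With these two identities, summing the inequality of Claim~\ref{cl:bd-sumh} over all paths $P$ yields exactly
\[C\cdot\Bigl(|W|+\sum_{t\in\App(W)}\height_T(t)\Bigr)\cdot\log n\;\le\;\tfrac12\sum_{t\in W}\height_T(t),\]
which is Claim~\ref{cl:bd-sumh2}. The only potential obstacle is the bookkeeping of $\App(W)$: one must be careful to verify that a root $t'\in\ubapp(P,j)$ is genuinely absorbed into $W$ by the recursive call (so it is not double-counted as an appendix of $W$), and that a $P$-found root is not reached by any other recursive call (so it appears only once in the summation). Both are immediate from the fact that the recursive calls are made on pairwise disjoint subtrees rooted at distinct unbalanced appendices.
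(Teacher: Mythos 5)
Your proposal is correct and is essentially the paper's own proof: the paper likewise obtains Claim~\ref{cl:bd-sumh2} by "summing the inequality from Claim~\ref{cl:bd-sumh} over all paths $P$ corresponding to the calls of \getunbalancedop," leaving implicit the bookkeeping that you spell out (in particular, that $\App(W)$ is exactly the disjoint union over all extracted paths $P$ of the roots of $P$-found subtrees, since the $\ubapp$-roots are absorbed into $W$ by the deeper recursive calls). Your additional verification of that decomposition is welcome but not a different route.
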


Applying that to \cref{eq:pot-diff} we get that $\Phi(\Tc) - \Phi(\Tc') \ge \frac12 \sum_{t \in W} \height(t)$.

The running time of $\textsf{refine}(W)$, based on property \rtref{running-time}, can be bounded as:
\[ 2^{\Oh{k^9}} \cdot \left(|W| + \sum_{t \in \App(W) }\height(t)\right) \cdot \log n + \Oh{\max(\Phi(\Tc) - \Phi(\Tc'), 1)}. \]
(As before, we used the fact that $\log |V(T)| \leq \Oh{\log n + k \log k}$.)
 However, from \cref{cl:bd-sumh2} we have: $$C \cdot \left( |W| + \sum_{t \in \App(W)} \height_T(t) \right) \cdot \log n \le \frac12 \sum_{t \in W} \height(t) \le \Phi(\Tc) - \Phi(\Tc').$$
 It follows that the running time can be bounded by $2^{\Oh{k^9}}(\Phi(\Tc) - \Phi(\Tc'))$. The total running time of $\getunbalancedop(T, 1)$ can be easily bounded as $\Oh{|W|}$ by implementing it by using \cref{lem:height-maintenance}, which again can, too, be bounded by $\Oh{\Phi(\Tc) - \Phi(\Tc')}$. As such, if the shrinking procedure did not happen, the running time of one pass of the loop from the height improvement operation, can be bounded as $2^{\Oh{k^9}}(\Phi(\Tc) - \Phi(\Tc'))$ too.
 

Let us now bound the running time of the shrinking procedure. If shrinking occurs, we have that $|V(T)| \ge B \cdot n^3$. We know that $\Phi(\Tc) \ge |V(T)|$. Running the algorithm from \cref{thm:tw-opt} on $T$ with parameter $k$ and post-processing the returned tree decomposition to make it binary takes $2^{\Oh{k^3}} \cdot |V(T)|$ time and produces $\Tc^\star = (T^\star, \bag^\star)$ such that $\Phi(\Tc^\star) \leq \Oh{B n^2}$. Then, resetting the refinement data structure with $\Tc^\star$ takes time $2^{\Oh{k^8}} \cdot |V(T)| = 2^{\Oh{k^8}} \cdot \Oh{B n^2}$.

As $\Phi(\Tc) \ge Bn^3$ and $\Phi(\Tc^\star) \le \Oh{Bn^2}$, we have that $\Oh{\Phi(\Tc)} = \Oh{\Phi(\Tc) - \Phi(\Tc')}$ for $n \in \Omega(1)$ large enough. Therefore, the running time of the shrinking procedure can be bounded as $2^{\Oh{k^8}} \cdot |V(T)| \le 2^{\Oh{k^8}} \Phi(\Tc) =  2^{\Oh{k^8}} (\Phi(\Tc) - \Phi(\Tc'))$. 

%
As $\Phi(T)$ strictly decreases after each iteration of the main loop, it has to terminate at some point.
As the running times of both the shrinking procedure and height improvement can be bounded by $2^{\Oh{k^9}}$ times the potential decrease, the same is true for the full single iteration of the main loop of \heightop. Consequently, the worst-case bound on the running time of \heightop from the statement of \cref{lem:height-data-structure} is proved.


We recall that $h_1 = 2^{\Oh{\sqrt{\log N \log c}}}$. We set $c = 2^{\Oh{k \log k}}  \log n$ and $N \le B \cdot n^3$, where $B = 2^{\Oh{k \log k}}$, so $\log c = \Oh{k \log k + \log \log n}$ and $\log N = \Oh{k \log k + \log n}$.
Therefore, $$h_1 =  2^{\Oh{\sqrt{\log N \log c}}} =  2^{\Oh{k \log k \sqrt{\log n \log \log n}}}.$$
Since \heightop only terminates when the height of the tree decomposition drops below $h_1$, the height condition from the statement of \cref{lem:height-data-structure} also holds.
Also, since \heightop only updates the tree decomposition through $\mathsf{refine}$ or replaces the tree decomposition with a~fresh decomposition of width at most $k < \ell$, it is guaranteed that whenever \heightop is invoked with a~tree decomposition $\Tc$ of width at most $\ell$, the resulting decomposition $\Tc'$ also has width at most $\ell$.
Finally, it is straightforward that $V(T') \leq k^{\Oh{k}} \cdot n^3$ and $\Phi(\Tc') \leq \Phi(\Tc)$.
This concludes the proof of \cref{lem:height-data-structure}.



\section{Proof of \cref{lem:weak-treewidth-ds}} \label{sec:wrap-up}

After having done all the necessary preparations, we are ready to prove \cref{lem:weak-treewidth-ds}.
For convenience, we recall its statement.

\torestateWeakTreewidthDs*

Again, as in \cref{sec:refinement}, we fix $\ell = 6k + 5$ and take $\Phi \coloneqq \Phi_\ell$ and $g \coloneqq g_\ell$ as defined in \cref{sec:refine-potential}.


First, we are going to describe how the data structure is implemented; then, we will bound the running time of a series of edge updates.

\subsection{Data structure}
In order to initialize a~dynamically changing annotated tree decomposition $\Tc = (T, \bag, \edges)$, we instantiate the following $(6k+6)$-prefix-rebuilding data structures:
\begin{itemize}
  \item $\D$: the refinement data structure with overhead $2^{\Oh{k^8}}$ from \cref{lem:refinement-data-structure}, additionally supporting \heightop (\cref{lem:height-data-structure});
  \item $\mathbb{H}$: the data structure with overhead $\Oh{1}$ from \cref{lem:height-maintenance}, allowing us to query, for each vertex $v \in V(G)$, the top bag $\Top(v) \in V(T)$ containing $v$; and for each $t \in V(T)$, the height of $t$ in $V(T)$.
\end{itemize}
Both data structures store the same decomposition $(T, \bag, \edges)$: all prefix-rebuilding updates performed by $\D$ are forwarded to $\mathbb{H}$.
After the initialization and after each query, we maintain the following invariant: $(T, \bag, \edges)$ contains an~annotated tree decomposition of $G$ of width at most $\ell = 6k + 5$, height at most $2^{\Oh{k \log k \sqrt{\log n \log \log n}}}$ and size at most $k^{\Oh{k}} \cdot n^3$.
After each query, we return the sequence of prefix-rebuilding updates performed by $\D$.

As for the initialization of our structure for an empty graph, we can initialize $\Tc = (T, \bag, \edges)$ with $T$ being a~complete binary tree (of height $\Oh{\log n}$) with $n$ nodes and each bag containing a different vertex of $V(G)$. Obviously, for each $t \in V(T)$, we have $\edges(t) = \emptyset$.
The initialization of $\D$ and $\mathbb{H}$ with it takes $2^{\Oh{k^8}} \cdot n$ time.
We remark that such a~decomposition $\Tc$ satisfies $\Phi(\Tc) \leq \Oh{kn}$ since the average height of a~node in $T$ is $\Oh{1}$, and $g(1) \in \Oh{k}$.


First assume that we are to add an~edge $uv$ to $G$.
Let $G'$ be equal to the graph $G$ with the edge $uv$ added.
Towards that goal, we must first ensure that the edge condition is satisfied for the new edge, i.e., both $u$ and $v$ belong to the same bag of $\Tc$.
Let $t_u = \Top(u)$ and $t_v = \Top(v)$ be the top bags of $u$ and $v$, respectively, in $T$.
If it is the case that $v \in \bag(t_u)$ or $u \in \bag(t_v)$, then the edge condition is already satisfied; however, we still need to update the function $\edges$ with the newly added edge $uv$.
Let us recall from the definition of $\edges$ that we should include the edge $uv$ in exactly one set: the set $\edges(t)$ for the topmost bag $t$ containing both $u$ and $v$. It is easy to verify that this bag is actually one of $t_u$ or $t_v$, whichever is at the smaller height in $T$.
Without loss of generality, assume it is $t_v$.
We include $uv$ in $\edges(t_v)$ by performing a~simple prefix-rebuilding update on the path from the root of $T$ to $t_v$.
The prefix rebuilt has size at most $\height(T)$; hence, the update takes time $2^{\Oh{k^8}} \cdot \height(T)$ and produces a~tree decomposition $\Tc'$ of $G'$.
The structure of the tree decomposition remains unchanged here, so the invariant is preserved by the update.

%

Let us assume now that $v \not\in \bag(t_u)$ and $u \not\in \bag(t_v)$; in this case, we must expand some bags in the decomposition so as to satisfy the edge condition.
Let $P_u$ and $P_v$ be the paths from the root of $T$ to $t_u$ and $t_v$, respectively.
Then, the update of the decomposition proceeds in two steps: first, we add $v$ to all the bags $\bag(t)$ for $t \in P_u \cup P_v$, and only then we add the edge $uv$ to an~appropriate set $\edges$.
The former is done by obtaining a~weak description $\widehat{u}$ of a~prefix-rebuilding update adding $v$ to the required bags, invoking $\bar{u} \coloneqq \mathsf{strengthen}(\widehat{u})$ (\cref{lem:prds-strengthen}) and then applying the resulting prefix-rebuilding operation $\bar{u}$.
Then, the insertion of the edge $uv$ to the appropriate set $\edges(t)$ is conducted as in the previous case.
Note that all of the above can be performed in time $2^{\Oh{k^8}} \cdot \height(T)$.

Now, $\Tc' = (T, \bag', \edges')$ is a~correct annotated tree decomposition of $G'$, but adding $u$ to the bags in $P_u \cup P_v$ might have increased the width of the decomposition to $\ell + 1 = 6k + 6$, thus invalidating the width invariant.
In order to counteract this, we call $\mathsf{refine}(P_u \cup P_v)$.
Note that $P_u \cup P_v$ covers all bags of size $\ell + 2 = 6k + 7$, so the call satisfies the precondition of \cref{lem:refinement-data-structure} and the annotated tree decomposition $\Tc''$ of $G'$ produced by $\mathsf{refine}$ has width at most $\ell$ (condition \wref).
However, $\Tc''$ might now have too large height or size; we resolve this issue by invoking the height improvement (\heightop), resulting in a~tree decomposition $\Tc'''$ of $G'$ of width at most $\ell$, height at most $2^{\Oh{k \log k \sqrt{\log n \log \log n}}}$ and size at most $k^{\Oh{k}} \cdot n^3$.
The final tree decomposition $\Tc'''$ satisfies all the prescribed invariants.

Deleting an~edge $uv$ from $G$ is much simpler: we do not change the tree structure $T$ or the contents of the bags of the tree decomposition.
However, we still need to remove $uv$ from the appropriate set $\edges(t)$ for $t \in V(T)$ using a~prefix-rebuilding update.
Locating the appropriate $t \in V(T)$ is done as before: let $t_u = \Top(u)$, $t_v = \Top(v)$ and $uv$ belongs to either $\edges(t_u)$ or $\edges(t_v)$, depending on whether $t_u$ or $t_v$ has smaller height.
Assuming without loss of generality that $uv \in \edges(t_v)$, we construct a~prefix-rebuilding update removing $uv$ from $\edges(t_v)$ by rebuilding the path from the root of $T$ to $t_v$.


%
%

\subsection{Complexity analysis} \label{subsec:complexity}
Now, we are going to bound the amortized running time of the data structure. As already mentioned, we start with an edgeless graph and a decomposition $\Tc = (T, \bag, \edges)$ of it with potential $\Phi(\Tc) = \Oh{kn}$.

We are only going to focus on the analysis of edge insertions; edge removals are more straightforward.
Computing $P_u, P_v$ and performing the introductory prefix-rebuilding updates takes time
\begin{equation}
\label{eq:wrapup-time-initial}
 2^{\Oh{k^8}} \cdot \height(T) \leq 2^{\Oh{k^8}} \cdot 2^{\Oh{k \log k \sqrt{\log n \log \log n}}}.
\end{equation}
Let $G'$ be the graph $G$ with an~edge $uv$ added and $\Tc' = (T, \bag', \edges')$ be the tree decomposition of $G'$ after the initial prefix-rebuilding updates. (We stress that the decompositions $\Tc$ and $\Tc'$ have the same shape $T$.)
If $u$ was not added to any bags, the potential of $\Tc'$ has not increased and the entire query took time $2^{\Oh{k \log k \sqrt{\log n \log \log n}}}$.
Assume now that $u$ has been added to some new bags.
Recall then that $u$ was added to at most $\Oh{\height(T)}$ new bags, increasing the size of each of them to at most $\ell + 2 = 6k+7$; the sizes of other bags remained unchanged.
Therefore,
\begin{equation}
\label{eq:wrapup-potential-initial}
\begin{split}
\Phi(\Tc') - \Phi(\Tc) & \leq g(6k + 7) \cdot \Oh{\height(T)^2} \leq 2^{\Oh{k \log k}} \cdot \height(T)^2 \\
& \leq 2^{\Oh{k \log k \sqrt{\log n \log \log n}}}.
\end{split}
\end{equation}
Let $\Tc''$ be the decomposition produced by applying $\mathsf{refine}(\Tpref)$ on $\Tc'$ with a~prefix $\Tpref$ with $|\Tpref| \leq \Oh{\height(T)}$.
Recall the property \amref{potential-difference} bounding the potential change:

\[\Phi(\Tc') - \Phi(\Tc'') \geq \sum_{t \in \Tpref} \height_T(t)  - 2^{\Oh{k \log k}} \cdot \left( |\Tpref| + \sum_{t \in \App(\Tpref)} \height_T(t) \right) \cdot \log |V(T)|.\]

%
%

Therefore,
\begin{equation}
\label{eq:wrapup-potential-refine}
\begin{split}
\Phi(\Tc'') - \Phi(\Tc') &\le
  -\sum_{t \in \Tpref} \height_T(t) + 2^{\Oh{k \log k}} \cdot \left( |\Tpref| + \sum_{t \in \App(\Tpref)} \height_T(t) \right) \cdot \log |V(T)| \\
  &\le 2^{\Oh{k \log k}} \cdot |\Tpref| \cdot \height(T) \cdot \log |V(T)| \\
  &\le 2^{\Oh{k \log k}} \cdot \height(T)^2 \cdot \log n \\
  &\le 2^{\Oh{k \log k \sqrt{\log n \log \log n}}}.
\end{split} 
\end{equation}
Here, we used the facts that $|\Tpref| \leq \Oh{\height(T)}$, $|\App(\Tpref)| \le |\Tpref| + 1$ and $\log|V(T)| \leq \Oh{k \log k + \log n}$ (following from the invariants and the fact that $T$ is binary).

Finally, let $\Tc'''$ be the final tree decomposition produced by running the height improvement operation (\heightop) on $\Tc''$.
Since \heightop never increases the potential value, we have $\Phi(\Tc''') \leq \Phi(\Tc'')$.
We conclude that

\[
\Phi(\Tc''') - \Phi(\Tc) \leq 2^{\Oh{k \log k}} \cdot \height(T)^2 \cdot \log n \leq 2^{\Oh{k \log k \sqrt{\log n \log \log n}}}.
\]

For the running time of the edge insertion, recall from property \rtref{running-time} that the application of the refinement operation on $\Tc'$ takes worst-case time
\begin{equation} \label{eq:wrapup-time-refine} \begin{split}
&\quad 2^{\Oh{k^9}}\left(|\Tpref| + \sum_{t \in \App(\Tpref)} \height(t) + \max(\Phi(\Tc') - \Phi(\Tc''), 0)\right) \\
\leq&\quad 2^{\Oh{k^9}} \left(\height(T)^2 + \max(\Phi(\Tc') - \Phi(\Tc''), 0)\right) \\
\stackrel{(\ref{eq:wrapup-potential-refine})}{\leq}&\quad 2^{\Oh{k^9}} \left(\height(T)^2 + (\Phi(\Tc') - \Phi(\Tc'')) + 2^{\Oh{k \log k}} \cdot \height(T)^2 \cdot \log n \right) \\
\leq&\quad 2^{\Oh{k^9}} \left(2^{\Oh{k \log k \sqrt{\log n \log \log n}}} + \Phi(\Tc') - \Phi(\Tc'')\right);
\end{split} \end{equation}
immediately followed by the height improvement, which by \cref{lem:height-data-structure} takes worst-case time
\begin{equation}
\label{eq:wrapup-time-height}
2^{\Oh{k^9}} \cdot (\Phi(\Tc'') - \Phi(\Tc''')) + \Oh{1}.
\end{equation}
Thus, the total running time is bounded by the sum of \cref{eq:wrapup-time-initial,eq:wrapup-time-refine,eq:wrapup-time-height}:
\[
\begin{split}
& \quad 2^{\Oh{k^9 + k\log k \sqrt{\log n \log \log n}}} + 2^{\Oh{k^9}}(\Phi(\Tc') - \Phi(\Tc''')) \\
\stackrel{(\ref{eq:wrapup-potential-initial})}{\leq} & \quad 2^{\Oh{k^9 + k \log k \sqrt{\log n \log \log n}}} + 2^{\Oh{k^9}}(\Phi(\Tc) - \Phi(\Tc''')).
\end{split}
\]

Since each update increases the potential value by at most $2^{\Oh{k \log k \sqrt{\log n \log \log n}}}$, it follows that the amortized time complexity of each update is $2^{\Oh{k^9 + k \log k \sqrt{\log n \log \log n}}}$, as claimed.

\section{Conclusions}\label{sec:conclusions}

\wn{Should we get rid of these $\Oh[k]$ everywhere and be more specific?}
We presented a data structure for the dynamic treewidth problem that achieves amortized update time $2^{\Oh[k]{\sqrt{\log n\log \log n}}}$. The obvious open question is to improve this complexity. It is plausible that some optimization of the current approach could result in shaving off the $\sqrt{\log \log n}$ factor in the exponent, but complexity of the form $2^{\Oh[k]{\sqrt{\log n}}}$ seems inherent to the recursive approach presented in \cref{sec:height}. Nevertheless, we conjecture that it should be possible to achieve update time that is polylogarithmic in $n$ for every fixed $k$, that is, of the form $\log^{\Oh[k]{1}} n$, or maybe even $\Oh[k]{\log^c n}$ for some universal constant $c$. The ultimate goal would be to get closer to the $\Oh{\log n}$ bound achieved by Bodlaender for the case $k=2$~\cite{Bodlaender93a}.

Apart from the above, we hope that our result may open new directions in the design of parameterized dynamic data structures. More precisely, dynamic programming on tree decompositions is used as a building block in multiple different techniques in algorithm design, so \cref{thm:main} may be useful for designing dynamic counterparts of those techniques. Here is list some possible directions.
\begin{itemize}
 \item \cref{thm:main} provides the dynamic variant of the most basic formulation of Courcelle's Theorem. It seems that dynamic variants of the optimization formulation and the counting formulation, due to Arnborg et al.~\cite{ArnborgLS91}, just follow from combining the data structure of \cref{thm:main} with suitably constructed automata. It would be interesting to see whether \cref{thm:main} can be used also in the context of more general problems concerning $\CMSO_2$ queries on graphs of bounded treewidth, for instance the problem of query enumeration; see e.g.~\cite{Bagan06,KazanaS13}.
 \item {\em{Bidimensionality}} and {\em{shifting}} are two basic techniques used for designing parameterized algorithms in planar graphs, or more generally, graphs excluding a fixed minor; see~\cite[Section~7.7]{platypus}. As both are based on solving the problem efficiently on graphs of bounded treewidth, one could investigate whether dynamic counterparts could be developed using \cref{thm:main}.
 \item Related to the point above, it would be interesting to see whether \cref{thm:main} could be used to design a dynamic counterpart of Baker's technique~\cite{Baker94}. This would result in dynamic data structures for maintaining approximation schemes for problems like {\sc{Independent Set}} or {\sc{Dominating Set}} on planar graphs.
 \item The {\em{irrelevant vertex technique}} is a classic principle in parameterized algorithms, based on iteratively deleting vertices from a graph while not changing the answer to the problem, until the graph in question has bounded treewidth and can be tackled directly using dynamic programming. While certainly challenging, it does not seem impossible to derive dynamic variants of some of the algorithms obtained using the irrelevant vertex technique. A concrete candidate here would be the {\sc{Disjoint Paths}} problem on planar graphs, which admits a relatively simple and well-understood irrelevant vertex rule~\cite{AdlerKKLST17}
 \item {\em{Meta-kernelization}}, due to Bodlaender et al.~\cite{BodlaenderFLPST16}, is a powerful meta-technique for obtaining small kernels for parameterized problems on topologically-constrained graphs. The fundamental concept in meta-kernelization is {\em{protrusion}}: a portion of the graph in question that induces a subgraph of bounded treewidth and communicates with the rest of the graph through a bounded-size interface. Kernelization algorithms obtained through meta-kernelization use reduction rules based on {\em{protrusion replacement}}: finding a large protrusion, understanding it using tree-decomposition-based dynamic programming, and replacing it with a small gadget of the same functionality. It would be interesting to see if in some basic settings, \cref{thm:main} could be applied in combination with meta-kernelization to obtain data structures for maintaining small kernels in dynamic graphs. {\sc{Dominating Set}} on planar graphs would probably be the first problem to look into, as it admits a simple kernelization algorithms that predates (and inspired) meta-kernelization~\cite{AlberFN04}. We remark that dynamic kernelization has already been studied as a way to obtain dynamic parameterized data structures~\cite{AlmanMW20,IwataO14}.
\end{itemize}

\paragraph*{Acknowledgements.} The authors thank Anna Zych-Pawlewicz for many insightful discussions throughout the development of this project.

\bibliographystyle{plain}
\bibliography{references}

\begin{thebibliography}{10}

\bibitem{AdlerKKLST17}
Isolde Adler, Stavros~G. Kolliopoulos, Philipp~Klaus Krause, Daniel Lokshtanov,
  Saket Saurabh, and Dimitrios~M. Thilikos.
\newblock Irrelevant vertices for the planar {D}isjoint {P}aths {P}roblem.
\newblock {\em J. Comb. Theory, Ser. {B}}, 122:815--843, 2017.

\bibitem{AlberFN04}
Jochen Alber, Michael~R. Fellows, and Rolf Niedermeier.
\newblock Polynomial-time data reduction for dominating set.
\newblock {\em J. {ACM}}, 51(3):363--384, 2004.

\bibitem{AlmanMW20}
Josh Alman, Matthias Mnich, and Virginia~Vassilevska Williams.
\newblock Dynamic parameterized problems and algorithms.
\newblock {\em {ACM} Trans. Algorithms}, 16(4):45:1--45:46, 2020.

\bibitem{AlstrupHLT05}
Stephen Alstrup, Jacob Holm, Kristian de~Lichtenberg, and Mikkel Thorup.
\newblock Maintaining information in fully dynamic trees with top trees.
\newblock {\em {ACM} Trans. Algorithms}, 1(2):243--264, 2005.

\bibitem{abs-1812-09519}
Antoine Amarilli, Pierre Bourhis, Stefan Mengel, and Matthias Niewerth.
\newblock Enumeration on trees with tractable combined complexity and efficient
  updates.
\newblock {\em CoRR}, abs/1812.09519, 2018.

\bibitem{AmarilliBMN19}
Antoine Amarilli, Pierre Bourhis, Stefan Mengel, and Matthias Niewerth.
\newblock Enumeration on trees with tractable combined complexity and efficient
  updates.
\newblock In {\em 38th {ACM} {SIGMOD-SIGACT-SIGAI} Symposium on Principles of
  Database Systems, {PODS} 2019}, pages 89--103. {ACM}, 2019.

\bibitem{ArnborgP89}
Stefan Arnborg, Derek~C. Corneil, and Andrzej Proskurowski.
\newblock Complexity of finding embeddings in a $k$-tree.
\newblock {\em SIAM J. Alg. Disc. Meth.}, 8:277--284, 1987.

\bibitem{ArnborgLS91}
Stefan Arnborg, Jens Lagergren, and Detlef Seese.
\newblock Easy problems for tree-decomposable graphs.
\newblock {\em J. Algorithms}, 12(2):308--340, 1991.

\bibitem{Bagan06}
Guillaume Bagan.
\newblock {MSO} queries on tree decomposable structures are computable with
  linear delay.
\newblock In {\em 20th International Workshop on Computer Science Logic, {CSL}
  2006}, volume 4207 of {\em Lecture Notes in Computer Science}, pages
  167--181. Springer, 2006.

\bibitem{Baker94}
Brenda~S. Baker.
\newblock Approximation algorithms for {NP}-complete problems on planar graphs.
\newblock {\em J. {ACM}}, 41(1):153--180, 1994.

\bibitem{BellenbaumD02}
Patrick Bellenbaum and Reinhard Diestel.
\newblock Two short proofs concerning tree-decompositions.
\newblock {\em Comb. Probab. Comput.}, 11(6):541--547, 2002.

\bibitem{Bodlaender93a}
Hans~L. Bodlaender.
\newblock Dynamic algorithms for graphs with treewidth 2.
\newblock In {\em 19th International Workshop on Graph-Theoretic Concepts in
  Computer Science, , {WG} 1993}, volume 790 of {\em Lecture Notes in Computer
  Science}, pages 112--124. Springer, 1993.

\bibitem{bodlaender-tw-opt}
Hans~L. Bodlaender.
\newblock A linear-time algorithm for finding tree-decompositions of small
  treewidth.
\newblock {\em {SIAM} J. Comput.}, 25(6):1305--1317, 1996.

\bibitem{BodlaenderFLPST16}
Hans~L. Bodlaender, Fedor~V. Fomin, Daniel Lokshtanov, Eelko Penninkx, Saket
  Saurabh, and Dimitrios~M. Thilikos.
\newblock ({M}eta) {K}ernelization.
\newblock {\em J. {ACM}}, 63(5):44:1--44:69, 2016.

\bibitem{bodlaender-hagerup}
Hans~L. Bodlaender and Torben Hagerup.
\newblock Parallel algorithms with optimal speedup for bounded treewidth.
\newblock {\em {SIAM} J. Comput.}, 27(6):1725--1746, 1998.

\bibitem{DBLP:journals/jal/BodlaenderK96}
Hans~L. Bodlaender and Ton Kloks.
\newblock Efficient and constructive algorithms for the pathwidth and treewidth
  of graphs.
\newblock {\em J. Algorithms}, 21(2):358--402, 1996.

\bibitem{DBLP:journals/lmcs/BojanczykP22}
Miko{ł}aj Boja{ń}czyk and Micha{ł} Pilipczuk.
\newblock Optimizing tree decompositions in {MSO}.
\newblock {\em Log. Methods Comput. Sci.}, 18(1), 2022.

\bibitem{BrodalF99}
Gerth~St{\o}lting Brodal and Rolf Fagerberg.
\newblock Dynamic representation of sparse graphs.
\newblock In {\em Algorithms and Data Structures, 6th International Workshop,
  {WADS}}, volume 1663 of {\em Lecture Notes in Computer Science}, pages
  342--351. Springer, 1999.

\bibitem{ChenCDFHNPPSWZ21}
Jiehua Chen, Wojciech Czerwi\'nski, Yann Disser, Andreas~Emil Feldmann, Danny
  Hermelin, Wojciech Nadara, Marcin Pilipczuk, Micha\l{} Pilipczuk, Manuel
  Sorge, Bartlomiej Wr{\'{o}}blewski, and Anna Zych{-}Pawlewicz.
\newblock Efficient fully dynamic elimination forests with applications to
  detecting long paths and cycles.
\newblock In {\em 2021 {ACM-SIAM} Symposium on Discrete Algorithms, {SODA}
  2021}, pages 796--809. {SIAM}, 2021.

\bibitem{abs-2006-00571}
Jiehua Chen, Wojciech Czerwi\'nski, Yann Disser, Andreas~Emil Feldmann, Danny
  Hermelin, Wojciech Nadara, Micha\l Pilipczuk, Marcin Pilipczuk, Manuel Sorge,
  Bartlomiej Wr{\'{o}}blewski, and Anna Zych{-}Pawlewicz.
\newblock Efficient fully dynamic elimination forests with applications to
  detecting long paths and cycles.
\newblock {\em CoRR}, abs/2006.00571, 2020.

\bibitem{CohenSTV93}
Robert~F. Cohen, Sairam Sairam, Roberto Tamassia, and Jeffrey~Scott Vitter.
\newblock Dynamic algorithms for optimization problems in bounded tree-width
  graphs.
\newblock In {\em 3rd Conference on Integer Programming and Combinatorial
  Optimization Conference, IPCO 1993}, pages 99--112. {CIACO}, 1993.

\bibitem{Courcelle90}
Bruno Courcelle.
\newblock The {M}onadic {S}econd-{O}rder {L}ogic of graphs. {I}. {R}ecognizable
  sets of finite graphs.
\newblock {\em Inf. Comput.}, 85(1):12--75, 1990.

\bibitem{CEbook}
Bruno Courcelle and Joost Engelfriet.
\newblock {\em Graph Structure and {M}onadic {S}econd-{O}rder {L}ogic --- {A}
  Language-Theoretic Approach}, volume 138 of {\em Encyclopedia of mathematics
  and its applications}.
\newblock Cambridge University Press, 2012.

\bibitem{platypus}
Marek Cygan, Fedor~V. Fomin, \L{}ukasz Kowalik, Daniel Lokshtanov, D{\'{a}}niel
  Marx, Marcin Pilipczuk, Micha\l{} Pilipczuk, and Saket Saurabh.
\newblock {\em Parameterized Algorithms}.
\newblock Springer, 2015.

\bibitem{DowneyF13}
Rodney~G. Downey and Michael~R. Fellows.
\newblock {\em Fundamentals of Parameterized Complexity}.
\newblock Texts in Computer Science. Springer, 2013.

\bibitem{DvorakKT14}
Zdenek Dvo\v{r}{\'{a}}k, Martin Kupec, and Vojtech T\r{u}ma.
\newblock A dynamic data structure for {MSO} properties in graphs with bounded
  tree-depth.
\newblock In {\em 22th Annual European Symposium on Algorithms, {ESA} 2014},
  volume 8737 of {\em Lecture Notes in Computer Science}, pages 334--345.
  Springer, 2014.

\bibitem{DvorakT13}
Zdenek Dvo\v{r}{\'{a}}k and Vojtech T\r{u}ma.
\newblock A dynamic data structure for counting subgraphs in sparse graphs.
\newblock In {\em 13th International Symposium on Algorithms and Data
  Structures, {WADS} 2013}, volume 8037 of {\em Lecture Notes in Computer
  Science}, pages 304--315. Springer, 2013.

\bibitem{ElberfeldJT10}
Michael Elberfeld, Andreas Jakoby, and Till Tantau.
\newblock Logspace versions of the theorems of {B}odlaender and {C}ourcelle.
\newblock In {\em 51th Annual {IEEE} Symposium on Foundations of Computer
  Science, {FOCS} 2010}, pages 143--152. {IEEE} Computer Society, 2010.

\bibitem{EppsteinGIS96}
David Eppstein, Zvi Galil, Giuseppe~F. Italiano, and Thomas~H. Spencer.
\newblock Separator based sparsification. {I}. {P}lanary testing and minimum
  spanning trees.
\newblock {\em J. Comput. Syst. Sci.}, 52(1):3--27, 1996.

\bibitem{FeigeHL08}
Uriel Feige, MohammadTaghi Hajiaghayi, and James~R. Lee.
\newblock Improved approximation algorithms for minimum weight vertex
  separators.
\newblock {\em {SIAM} J. Comput.}, 38(2):629--657, 2008.

\bibitem{FlumG06}
J{\"{o}}rg Flum and Martin Grohe.
\newblock {\em Parameterized Complexity Theory}.
\newblock Texts in Theoretical Computer Science. An {EATCS} Series. Springer,
  2006.

\bibitem{Frederickson98}
Greg~N. Frederickson.
\newblock Maintaining regular properties dynamically in $k$-terminal graphs.
\newblock {\em Algorithmica}, 22(3):330--350, 1998.

\bibitem{GoranciRST21}
Gramoz Goranci, Harald R{\"{a}}cke, Thatchaphol Saranurak, and Zihan Tan.
\newblock The expander hierarchy and its applications to dynamic graph
  algorithms.
\newblock In {\em 2021 {ACM-SIAM} Symposium on Discrete Algorithms, {SODA}
  2021}, pages 2212--2228. {SIAM}, 2021.

\bibitem{GrezMPPR22}
Alejandro Grez, Filip Mazowiecki, Micha\l{} Pilipczuk, Gabriele Puppis, and
  Cristian Riveros.
\newblock Dynamic data structures for timed automata acceptance.
\newblock {\em Algorithmica}, 84(11):3223--3245, 2022.

\bibitem{GroheK09}
Martin Grohe and Stephan Kreutzer.
\newblock Methods for algorithmic meta theorems.
\newblock In Martin Grohe and Johann~A. Makowsky, editors, {\em Model Theoretic
  Methods in Finite Combinatorics --- {AMS-ASL} Joint Special Session}, volume
  558 of {\em Contemporary Mathematics}, pages 181--206. American Mathematical
  Society, 2009.

\bibitem{Immerman99}
Neil Immerman.
\newblock {\em Descriptive complexity}.
\newblock Graduate texts in computer science. Springer, 1999.

\bibitem{IwataO14}
Yoichi Iwata and Keigo Oka.
\newblock Fast dynamic graph algorithms for parameterized problems.
\newblock In {\em 14th Scandinavian Symposium and Workshops on Algorithm
  Theory, {SWAT} 2014}, volume 8503 of {\em Lecture Notes in Computer Science},
  pages 241--252. Springer, 2014.

\bibitem{KazanaS13}
Wojciech Kazana and Luc Segoufin.
\newblock Enumeration of monadic second-order queries on trees.
\newblock {\em {ACM} Trans. Comput. Log.}, 14(4):25:1--25:12, 2013.

\bibitem{KleinbergTardos}
Jon~M. Kleinberg and {\'{E}}va Tardos.
\newblock {\em Algorithm design}.
\newblock Addison-Wesley, 2006.

\bibitem{Korhonen21}
Tuukka Korhonen.
\newblock A single-exponential time 2-approximation algorithm for treewidth.
\newblock In {\em 62nd {IEEE} Annual Symposium on Foundations of Computer
  Science, {FOCS} 2021}, pages 184--192. {IEEE}, 2021.

\bibitem{DBLP:journals/corr/abs-2211-07154}
Tuukka Korhonen and Daniel Lokshtanov.
\newblock An improved parameterized algorithm for treewidth.
\newblock {\em CoRR}, abs/2211.07154, 2022.
\newblock Accepted to STOC 2023.

\bibitem{Lagergren98}
Jens Lagergren.
\newblock Upper bounds on the size of obstructions and intertwines.
\newblock {\em J. Comb. Theory, Ser. {B}}, 73(1):7--40, 1998.

\bibitem{MajewskiPS23}
Konrad Majewski, Micha\l{} Pilipczuk, and Marek Soko\l{}owski.
\newblock Maintaining {CMSO}$_2$ properties on dynamic structures with bounded
  feedback vertex number.
\newblock In {\em 40th International Symposium on Theoretical Aspects of
  Computer Science, {STACS} 2023}, volume 254 of {\em LIPIcs}, pages
  46:1--46:13. Schloss Dagstuhl --- Leibniz-Zentrum f{\"{u}}r Informatik, 2023.

\bibitem{Makowsky04}
Johann~A. Makowsky.
\newblock Algorithmic uses of the {F}eferman-{V}aught theorem.
\newblock {\em Ann. Pure Appl. Log.}, 126(1-3):159--213, 2004.

\bibitem{Niewerth18}
Matthias Niewerth.
\newblock {MSO} queries on trees: {E}numerating answers under updates using
  forest algebras.
\newblock In {\em 33rd Annual {ACM/IEEE} Symposium on Logic in Computer
  Science, {LICS} 2018}, pages 769--778. {ACM}, 2018.

\bibitem{OlkowskiPRWZ23}
J\k{e}drz\k{e}j Olkowski, Micha\l{} Pilipczuk, Mateusz Rychlicki, Karol
  W\k{e}grzycki, and Anna Zych{-}Pawlewicz.
\newblock Dynamic data structures for parameterized string problems.
\newblock In {\em 40th International Symposium on Theoretical Aspects of
  Computer Science, {STACS} 2023}, volume 254 of {\em LIPIcs}, pages
  50:1--50:22. Schloss Dagstuhl --- Leibniz-Zentrum f{\"{u}}r Informatik, 2023.

\bibitem{Oum08}
Sang{-}il Oum.
\newblock Approximating rank-width and clique-width quickly.
\newblock {\em {ACM} Trans. Algorithms}, 5(1):10:1--10:20, 2008.

\bibitem{OumS06}
Sang{-}il Oum and Paul~D. Seymour.
\newblock Approximating clique-width and branch-width.
\newblock {\em J. Comb. Theory, Ser. {B}}, 96(4):514--528, 2006.

\bibitem{Pilipczuk20a}
Micha\l{} Pilipczuk.
\newblock Computing tree decompositions.
\newblock In Fedor~V. Fomin, Stefan Kratsch, and Erik~Jan van Leeuwen, editors,
  {\em Treewidth, Kernels, and Algorithms --- Essays Dedicated to {H}ans {L}.
  {B}odlaender on the Occasion of His 60th Birthday}, volume 12160 of {\em
  Lecture Notes in Computer Science}, pages 189--213. Springer, 2020.

\bibitem{RobertsonS86}
Neil Robertson and Paul~D. Seymour.
\newblock Graph {M}inors. {II.} {A}lgorithmic aspects of tree-width.
\newblock {\em J. Algorithms}, 7(3):309--322, 1986.

\bibitem{RobertsonS86a}
Neil Robertson and Paul~D. Seymour.
\newblock Graph {M}inors. {V}. {E}xcluding a planar graph.
\newblock {\em J. Comb. Theory, Ser. {B}}, 41(1):92--114, 1986.

\bibitem{SleatorT83}
Daniel~D. Sleator and Robert~E. Tarjan.
\newblock A data structure for dynamic trees.
\newblock {\em J. Comput. Syst. Sci.}, 26(3):362--391, 1983.

\bibitem{Thomas90}
Robin Thomas.
\newblock A menger-like property of tree-width: The finite case.
\newblock {\em J. Comb. Theory, Ser. {B}}, 48(1):67--76, 1990.

\bibitem{WuAPL14}
Yu~Wu, Per Austrin, Toniann Pitassi, and David Liu.
\newblock Inapproximability of treewidth and related problems.
\newblock {\em J. Artif. Intell. Res.}, 49:569--600, 2014.

\end{thebibliography}

\appendix

\section{Dynamic automata}
\label{sec:dynamic-dynamic-programming}
In this section we introduce a framework for dynamic maintenance of dynamic programming tables on tree decompositions under prefix-rebuilding updates. Concrete outcomes of this are proofs of \cref{lem:height-maintenance}, \cref{cor:dynamic-cmso}, and \cref{lem:closure-maintenance}, but the introduced framework is general enough to also capture maintenance of any reasonable dynamic programming scheme. 

We remark that maintenance of runs of automata on dynamic forests has already been investigated in the literature, and even for the much more general problem of dynamic enumeration; see for instance the works of Niewerth~\cite{Niewerth18} and of Amarilli et al.~\cite{AmarilliBMN19} and the bibliographic discussion within. In particular, many (though not all) results contained in this section could be in principle derived from~\cite[Lemma~7.3]{abs-1812-09519}, but not in a black-box manner and without concrete bounds on update time. Therefore, for the sake of completeness, in this section we provide a self-contained presentation.

\subsection{Tree decomposition automata}

Our framework is based on a notion of automata processing tree decompositions. While this notion is tailored here to our specific purposes, the idea of processing tree decompositions using various kinds of automata or dynamic programming procedures dates back to the work of Courcelle~\cite{Courcelle90} and is a thoroughly researched topic; see appropriate chapters of textbooks~\cite{DowneyF13,platypus,FlumG06} and bibliographic notes within. Hence, the entirety of this section can be considered a formalization of folklore.

\newcommand{\Uni}{\Omega}

Throughout this section we assume that all vertices of considered graphs come from a fixed, totally ordered, countable set of vertices $\Uni$. Further, we assume that elements of $\Uni$ can be manipulated upon in constant time in the RAM model. The reader may assume that $\Uni=\N$.

\paragraph*{Boundaried graphs.} We will work with graphs with specified boundaries, as formalized next.

\begin{definition}
 A {\em{boundaried graph}} is an undirected graph $G$ together with a vertex subset $\bnd G\subseteq V(G)$, called the {\em{boundary}}, such that $G$ has no edge with both endpoints in $\bnd G$.
 A {\em{boundaried tree decomposition}} of a boundaried graph $G$ is a triple $(T,\bag,\edges)$ that is an annotated tree decomposition of $G$ (treated as a normal graph) where in addition we require that $\bnd G$ is contained in the root~bag.
\end{definition}

When speaking about a boundaried tree decomposition $(T,\bag,\edges)$ of a boundaried graph~$G$, we redefine the adhesion of the root of $T$ to be $\bnd G$, rather than the empty set.

Suppose $(T,\bag,\edges)$ is a boundaried tree decomposition of a boundaried graph $G$, and $x$ is a node of $T$. Then we say that $x$ {\em{induces}} a boundaried graph $G_x$ and its boundaried tree decomposition $(T_x,\bag_x,\edges_x)$, defined as follows: if $X$ is the set of descendants of $x$ in $T$, then
\begin{eqnarray*}
 G_x & = & \left(\bigcup_{y\in X} \bag(y),\bigcup_{y\in X} \edges(y)\right);\\
 \bnd G_x & = & \adhesion{x};\\
 (T_x,\bag_x,\edges_x) & = & \funrestriction{(T,\bag,\edges)}{X}.
\end{eqnarray*}
It is clear that $(T_x,\bag_x,\edges_x)$ defined as above is a boundaried tree decomposition of $G_x$. We use the above notation only when the boundaried tree decomposition $(T,\bag,\edges)$ is clear from the context.

\paragraph*{Automata.} We now introduce our automaton model.

\newcommand{\Bb}{\mathcal{BK}}
\newcommand{\Rr}{\mathcal{R}}
\newcommand{\init}{\iota}
\newcommand{\trans}{\delta}
\newcommand{\run}{\rho}

\begin{definition}\label{def:tda}
 A {\em{(deterministic) tree decomposition automaton}} of {\em{width}} $\ell$
 consists of 
 \begin{itemize}
  \item a {\em{state set}} $Q$;
  \item a set of {\em{accepting states}} $F\subseteq Q$;
  \item an {\em{initial mapping}} $\init$ that maps every boundaried graph $G$ on at most $\ell+1$ vertices to a state $\init(G)\in Q$; and
  \item a {\em{transition mapping}} $\trans$ that maps every $7$-tuple of form $(B,X,Y,Z,J,q',q'')$, where $B\subseteq \Omega$ is a set of size at most $\ell+1$, $X,Y,Z\subseteq B$, $J\in \binom{B}{2}\setminus \binom{X}{2}$, $q'\in Q$, and $q'' \in Q \cup \{\bot\}$ to a state $\trans(B,X,Y,Z,J,q',q'')\in Q$.
 \end{itemize}
 The {\em{run}} of a tree decomposition automaton $\Aa$ on a binary boundaried tree decomposition $(T,\bag,\edges)$ of a boundaried graph $G$ is the unique labelling $\run_\Aa\colon V(T)\to Q$ satisfying the following properties:
 \begin{itemize}
  \item For every leaf $l$ of $T$, we have $$\run_\Aa(l)=\init(G_l).$$
  \item For every non-leaf node $x$ of $T$ with one child $y$, we have
  $$\run_\Aa(x)=\trans(\bag(x),\adhesion{x},\adhesion{y},\emptyset,\edges(x),\run_\Aa(y),\bot).$$
  \item For every non-leaf node $x$ of $T$ with two children $y$ and $z$, we have
  $$\run_\Aa(x)=\trans(\bag(x),\adhesion{x},\adhesion{y},\adhesion{z},\edges(x),\run_\Aa(y),\run_\Aa(z)).$$
 \end{itemize}
 A tree decomposition automaton $\Aa$ {\em{accepts}} $(T,\bag,\edges)$ if $\run_\Aa(r)\in F$, where $r$ is the root of $T$.
\end{definition}

Note that in the transitions described above, nodes with one child are treated by passing a ``dummy state'' $\bot \notin Q$ to the transition function instead of a state. Note that this allows $\trans$ also to recognize when there is only one child.
Also, the automata model presented above could in principle distinguish the left child $y$ from the right child $z$ and treat states passed from them differently. However, this will never be the case in our applications: in all constructed automata, the transition mapping will be symmetric with respect to swapping the role of the children $y$ and $z$.

\medskip

We say that a tree decomposition automaton $\Aa$ has {\em{evaluation time}} $\tau$ if functions $\init$ and $\trans$ can be evaluated on any tuple of their arguments in time $\tau$, and moreover for a given $q\in Q$ it can be decided whether $q\in F$ in time $\tau$. Note that we do {\em{not}} require the state space $Q$ to be finite. In fact, in most of our applications it will be infinite, but we will be able to efficiently represent and manipulate the states.

We will often run a tree decomposition automaton on a non-boundaried annotated tree decomposition of a non-boundaried graph $G$. In such cases, we simply apply all the above definitions while treating $G$ as a boundaried graph with an empty boundary.

We will also use {\em{nondeterministic tree decomposition automata}}, which are defined just like in \cref{def:tda}, except that $\init$ and $\trans$ are the {\em{initial relation}} and the {\em{transition relation}}, instead of mappings. That is, $\init$ is a relation consisting of pairs of the form $(G,q)$, where $G$ is a boundaried graph on at most $\ell+1$ vertices, and $q\in Q$. Similarly, $\trans$ is a relation consisting of pairs of the form $((B,X,Y,Z,J,q',q''),q)$, where $(B,X,Y,Z,J,q',q'')$ is a $7$-tuple like in the domain of the transition mapping, and $q\in Q$. Then a run of a nondeterministic automaton $\Aa$ on a boundaried binary tree decomposition $(T,\bag,\edges)$ is a labelling $\run$ of the nodes of $T$ with states such that $(G_l,\run(l))\in \init$ for every leaf $l$, $((\bag(x),\adhesion{x},\adhesion{y},\emptyset,\edges(x),\run(y),\bot),\run(x))\in \trans$ for every node $x$ with one child $y$, and $((\bag(x),\adhesion{x},\adhesion{y},\adhesion{z},\edges(x),\run(y),\run(z)),\run(x))\in \trans$ for every node $x$ with two children $y$ and $z$. Note that a nondeterministic tree decomposition automaton may have multiple runs on a single tree decomposition. We say that $\Aa$ accepts $(T,\bag,\edges)$ if there is a run of $\Aa$ on $(T,\bag,\edges)$ that is {\em{accepting}}: the state associated with the root node is accepting.

In the context of nondeterministic automata, by evaluation time we mean the time needed to decide whether a given pair belongs to any of the relations $\init$ or $\trans$, or to decide whether a given state is accepting.
Note that if $\Aa$ is a nondeterministic tree decomposition automaton with a finite state space $Q$, then we can determinize it --- find a deterministic automaton $\Aa'$ that accepts the same tree decompositions --- using the standard powerset construction. Then the state space of $\Aa'$ is $2^Q$.
In the following, all automata are deterministic unless explicitly stated.

\subsection{Automata constructions}

In subsequent sections we will use several automata. 
We now present four automata constructions that we will use.

\paragraph*{Tree decomposition properties automata.} We first construct three very simple automata that are used in \cref{lem:height-maintenance} for maintaining properties of the tree decomposition itself.

\begin{lemma}\label{lem:automaton-height}
For every $\ell\in \N$ there exists tree decomposition automata $\Hh_{\ell}$, $\mathcal{S}_{\ell}$, $\Cc_{\ell}$, each of width $\ell$, with the following properties: for any graph $G$, annotated binary tree decomposition $(T,\bag,\edges)$ of $G$ of width at most $\ell$, and any node $x$ of $T$:
\begin{itemize}
\item $\run_{\Hh_\ell}(x)$ is equal to $\height(T_x)$,
\item $\run_{\mathcal{S}_\ell}(x)$ is equal to $|V(T_x)|$, and
\item $\run_{\Cc_\ell}(x)$ is equal to $|\component{x}|$.
\end{itemize}
 The evaluation times of $\Hh_\ell$ and $\mathcal{S}_\ell$ are~$\Oh{1}$, and the evaluation time of $\Cc_\ell$ is~$\Oh{\ell}$.
\end{lemma}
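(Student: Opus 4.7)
The plan is to build each of the three automata explicitly as very simple dynamic programming procedures whose states are single natural numbers. In all three cases the accepting set is irrelevant (we will never invoke acceptance), and correctness will follow by a direct induction on the height of the subtree rooted at~$x$, matching the recursive formula hard-coded into the transition.

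For $\Hh_\ell$, I will take $Q=\N$ with $\init(G_l)=1$ for every boundaried $G_l$; for a node $x$ with one child with state $q'$, the transition returns $1+q'$; for two children with states $q',q''$, it returns $1+\max(q',q'')$. The identity $\height(T_x)=1+\max_{y\ \mathrm{child}\ \mathrm{of}\ x}\height(T_y)$ (with the base case $\height(T_l)=1$) gives the induction. Evaluation only uses an increment and a $\max$, so the evaluation time is $\Oh{1}$.

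For $\mathcal{S}_\ell$, I proceed identically, except that for a node with two children the transition returns $1+q'+q''$, matching $|V(T_x)|=1+\sum_{y\ \mathrm{child}\ \mathrm{of}\ x}|V(T_y)|$. Again evaluation is $\Oh{1}$.

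For $\Cc_\ell$, the key identity is
\[
\component{x}\,=\,\bigl(\bag(x)\setminus\adhesion{x}\bigr)\,\cup\,\bigcup_{y\ \mathrm{child}\ \mathrm{of}\ x}\component{y},
\]
and the three kinds of sets appearing on the right-hand side are pairwise disjoint: each $\component{y}$ is disjoint from $\bag(x)$ by the vertex condition of the tree decomposition (any vertex of $\component{y}$ does not appear in $\bag(x)$, for otherwise it would belong to $\adhesion{y}$ and hence not to $\component{y}$), and $\component{y_i}\cap\component{y_j}=\emptyset$ for distinct children $y_i,y_j$ for the same reason applied to the $\mathrm{lca}$ of witnesses. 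Consequently $|\component{x}|=|\bag(x)\setminus\adhesion{x}|+\sum_i |\component{y_i}|$, which I encode by $\trans(B,X,Y,Z,J,q',q'')=|B\setminus X|+q'+q''$ (using $q''=0$ in the one-child case by reading $\bot$ as $0$, or equivalently treating the two arities separately), and $\init(G_l)=|V(G_l)\setminus\bnd G_l|$ for the leaf base case $\component{l}=\bag(l)\setminus\adhesion{l}$. The only non-constant work is the single set-difference $|B\setminus X|$, which costs $\Oh{\ell}$, giving the announced $\Oh{\ell}$ evaluation time.

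The only genuine (and very minor) content is the disjointness used in the $\Cc_\ell$ recursion; everything else is purely syntactic translation of elementary recurrences into the automaton interface from \cref{def:tda}. I expect no real obstacle.
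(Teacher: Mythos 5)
Your proposal is correct and essentially identical to the paper's construction: the paper also takes $Q=\N$, sets $\init(\_)=1$ and $\trans(\ldots,q',q'')=1+\max(q',q'')$ for $\Hh_\ell$, $\trans(\ldots,q',q'')=1+q'+q''$ for $\mathcal{S}_\ell$, and $\init(G)=|V(G)\setminus\bnd G|$, $\trans(B,X,\ldots,q',q'')=q'+q''+|B\setminus X|$ for $\Cc_\ell$, treating $\bot$ as an identity element for both $\max$ and $+$. The only difference is that you spell out the disjointness argument justifying the $\Cc_\ell$ recurrence, which the paper simply asserts as straightforward.
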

\begin{proof}
The state sets of each of the automata are $\N$.
Let us define for all $n \in \N$ that $\max(n, \bot) = n$ and $n + \bot = n$.
For the height automaton $\Hh_\ell$, the initial mapping and the transition mapping are defined as follows (here $\_$ denotes any input value):
 \begin{eqnarray*}
  \init(\_) & = & 1 \\
  \trans(\_,\_,\_,\_,\_,q',q'') & = & 1+\max(q',q'').
 \end{eqnarray*}
 For the size automaton $\mathcal{S}_\ell$, the initial mapping and the transition mapping are defined as follows:
 \begin{eqnarray*}
 \init(\_) & = & 1 \\
 \trans(\_,\_,\_,\_,\_,q',q'') & = & 1+q'+q''.
 \end{eqnarray*}
 For the $|\component{x}|$ automaton $\Cc_\ell$, the initial mapping and the transition mapping are defined as follows:
 \begin{eqnarray*}
 \init(G) & = & |V(G) \setminus \bnd G| \\
 \trans(B, X, \_, \_, \_, q', q'') & = & q' + q'' + |B \setminus X|.
 \end{eqnarray*}
 It is straightforward to see that these automata satisfy the required properties.
\end{proof}

\paragraph*{$\CMSO_2$-types automaton.} The classic Courcelle's theorem~\cite{Courcelle90} states that there is an algorithm that given a $\CMSO_2$ sentence $\varphi$ and an $n$-vertex graph $G$ together with a tree decomposition of width at most $\ell$, decides whether $G\models \varphi$ in time $f(\ell,\varphi)\cdot n$, where $f$ is a computable function. One way of proving Courcelle's theorem is to construct a dynamic programming procedure that processes the provided tree decomposition in a bottom-up fashion. This dynamic programming procedure can be understood as a tree decomposition automaton in the sense of \cref{def:tda}, yielding the following result. The proof is a completely standard application of the concept of $\CMSO_2$-types, hence we only sketch it.

\newcommand{\tp}{\mathrm{tp}}
\newcommand{\Sent}{\mathrm{Sentences}}

\begin{lemma}\label{lem:automaton-CMSO}
 For every integer $\ell$ and $\CMSO_2$ sentence $\varphi$ there exists a tree decomposition automaton $\Aa_{\ell,\varphi}$ of width $\ell$ with the following property: for any graph $G$ and its annotated binary tree decomposition $(T,\bag,\edges)$ of width at most $\ell$, $\Aa_{\ell,\varphi}$ accepts $(T,\bag,\edges)$ if and only if $G\models \varphi$. The evaluation time is bounded by $f(\ell,\varphi)$ for some computable function $f$.   
\end{lemma}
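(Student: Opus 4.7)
The plan is to implement the standard automaton-theoretic proof of Courcelle's theorem, adapted to our notion of tree decomposition automaton. Let $q$ be the quantifier rank of $\varphi$. For a boundaried graph $G$ whose boundary is equipped with a linear order (so that one can refer to individual boundary vertices in logical formulas), define the \emph{$\CMSO_2$-type of rank $q$} of $G$, denoted $\tp_q(G)$, as the set of all $\CMSO_2$ formulas of quantifier rank at most $q$, with free variables among a fixed set of names for boundary vertices, that hold in $G$. Standard arguments (induction on $q$, closure under boolean combinations, and the fact that modular counting predicates on finite sets yield finitely many distinct formulas up to equivalence) show that for every $q$ and every bound $\ell+1$ on the boundary size, the number of distinct types is finite and computable, and moreover $\tp_q(G)$ can be computed from $G$ when $|V(G)| \leq \ell+1$ in time $f(\ell,q)$, just by exhaustively evaluating a finite representative set of formulas.

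The key ingredient is a composition lemma: if a boundaried graph $G$ is obtained from boundaried graphs $G_1$, $G_2$ by identifying selected boundary vertices with vertices of a fresh ``glueing'' bag $B$ of size at most $\ell+1$ (and adding the edges $J$ of $B$), then $\tp_q(G)$ depends only on $\tp_q(G_1)$, $\tp_q(G_2)$, and the combinatorial description $(B,X,Y,Z,J)$ of how $G_1$ and $G_2$ are attached. This is proved by a routine Ehrenfeucht--Fra\"iss\'e game argument: Duplicator's strategies on $G_1$ and $G_2$ separately can be combined into a strategy on the glued graph, because the only interaction between the two sides is mediated by the common boundary, which is entirely captured by the names of the free variables of the formulas tested. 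The composition function $\Bb_{\ell,q}\colon (B,X,Y,Z,J,\tp_q(G_1),\tp_q(G_2))\mapsto \tp_q(G)$ is computable from the finite descriptions involved, in time $f(\ell,q)$.

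Given this, the automaton $\Aa_{\ell,\varphi}$ is defined as follows. Its state set $Q$ is the (finite) set of all $\CMSO_2$-types of rank $q$ for boundaried graphs on at most $\ell+1$ vertices of boundary. The initial mapping $\init(G)$ simply computes $\tp_q(G)$ directly, which is possible since $|V(G)|\le \ell+1$. The transition mapping $\trans(B,X,Y,Z,J,q',q'')$ returns $\Bb_{\ell,q}(B,X,Y,Z,J,q',q'')$ when $q''\neq \bot$, and a similar unary variant when $q''=\bot$ (i.e., glueing only with $G_1$). The accepting states $F$ are those types $\tau$ (for boundary of size $0$) that contain the sentence $\varphi$. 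A straightforward induction on the tree decomposition, using the composition lemma at each node, shows that for every node $x$ the state $\run_{\Aa}(x)$ equals $\tp_q(G_x)$; in particular at the root $r$ we obtain $\tp_q(G)$, so $\Aa_{\ell,\varphi}$ accepts iff $\varphi\in \tp_q(G)$, i.e., $G\models\varphi$.

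The only real obstacle is the composition lemma; everything else is bookkeeping. Since all involved sets of formulas and types are finite and computable from $\ell$ and $\varphi$, the evaluation time is bounded by a computable function $f(\ell,\varphi)$, as required.
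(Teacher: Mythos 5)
Your proposal takes essentially the same route as the paper: define $\CMSO_2$-types of boundaried graphs, show via an Ehrenfeucht--Fra\"iss\'e composition argument that the type of a glued graph is determined by the types of the parts together with the glueing data, and let the automaton's states be types with accepting states those containing $\varphi$. One small but genuine imprecision: you parametrize types by the quantifier rank $q$ alone, but for $\CMSO_2$ this is not enough to make the number of types finite, since there are infinitely many inequivalent modular counting predicates of quantifier rank zero. You need the rank parameter to simultaneously bound the quantifier depth and the moduli of the counting predicates (as the paper does by taking $p$ to be the maximum of the quantifier rank of $\varphi$ and the moduli appearing in it); once that is fixed, your argument goes through and matches the paper's.
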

\begin{proof}[Proof sketch]
 Let $p$ be the {\em{rank}} of $\varphi$: the maximum among the quantifier rank of $\varphi$ and the moduli of modular predicates appearing in $\varphi$. Consider any finite $X\subseteq \Omega$ and let $\CMSO_2(X)$ consists of all $\CMSO_2$ sentences that can additionally use elements of $X$ as constants (formally, these are $\CMSO_2$ sentences over the signature of graphs enriched by adding every $x\in X$ as a constant).
 It is well-known (see e.g.~\cite[Exercise~6.11]{Immerman99}) that for a given $X$ as above, one can compute a set $\Sent^p(X)$ consisting of at most $g(\ell,|X|)$ sentences of $\CMSO_2(X)$, for some computable $g$, such that for every $\psi\in \CMSO_2(X)$ of rank at most $p$, there is $\psi'\in \Sent^p(X)$ that is equivalent to $\psi$ in the sense of being satisfied in exactly the same graphs containing $X$. Also, the mapping $\psi\mapsto \psi'$ is computable.
 
 For a given boundaried graph $G$ with $X=\bnd G$, we define the {\em{$p$-type}} of $G$ as follows:
 $$\tp^p(G)=\{\psi\in \Sent^p(X)~|~G\models \psi\}.$$
 Now, we construct automaton $\Aa=\Aa_{\ell,\varphi}$ so that for every boundaried binary tree decomposition $(T,\bags,\edges)$ of a graph $G$, the run of $\Aa$ on $(T,\bags,\edges)$ is as follows:
 $$\run_{\Aa}(x)=\tp^p(G_x)\qquad\textrm{for all }x\in V(T).$$
 When constructing $\Aa$, the only non-trivial check is that one can define a suitable transition mapping $\trans$. For this, it suffices to show that for every node $x$ of $T$ with children $y$ and $z$, given types $\tp^p(G_y)$ and $\tp^p(G_z)$ together with the information about the bag of $x$ (consisting of $\bag(x)$, $\adhesion{x}$, $\adhesion{y}$, $\adhesion{z}$, and $\edges(x)$), one can compute the type $\tp^p(G_x)$; and same for nodes with one child. This follows from a standard argument involving Ehrenfeucht-Fra\"isse games, cf.~\cite{GroheK09,Makowsky04}.
\end{proof}

\paragraph*{Bodlaender-Kloks automaton.} In~\cite{DBLP:journals/jal/BodlaenderK96}, Bodlaender and Kloks gave an algorithm that given a graph~$G$, a binary tree decomposition of $G$ of width at most $\ell$, and a number $k\leq \ell$, decides whether the treewidth of $G$ is at most $k$ in time $2^{\Oh{k\ell^2}}\cdot n$, where $n$ is the vertex count of $G$. This algorithm proceeds by bottom-up dynamic programming on the provided tree decomposition of $G$, computing, for every node $x$, a table consisting of $2^{\Oh{k\ell^2}}$ boolean entries. Intuitively, each entry encodes the possibility of constructing a partial tree decomposition of the subgraph induced by the subtree at $x$ with a certain ``signature'' on the adhesion of $x$; the number of possible signatures is  $2^{\Oh{k\ell^2}}$.

Inspecting the proof provided in~\cite{DBLP:journals/jal/BodlaenderK96} it is not hard to see that this dynamic programming can be understood as a nondeterministic tree decomposition automaton. Thus, from the work of Bodlaender and Kloks we can immediately deduce the following statement.

\begin{lemma}\label{lem:automaton-BK}
 For every pair of integers $k\leq \ell$ there is a nondeterministic tree decomposition automaton $\Bb_{k,\ell}$ of width $\ell$ with the following property: for any graph $G$ and its binary annotated tree decomposition $(T,\bag,\edges)$ of width at most $\ell$, $\Bb_{k,\ell}$ accepts $(T,\bag,\edges)$ if and only if the treewidth of $G$ is at most $k$. The state space of $\Bb_{k,\ell}$ is of size $2^{\Oh{k\ell^2}}$ and can be computed in time $2^{\Oh{k\ell^2}}$. The evaluation time of $\Bb_{k,\ell}$ is $2^{\Oh{k\ell^2}}$ as well.
\end{lemma}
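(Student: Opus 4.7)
The plan is to repackage the Bodlaender–Kloks dynamic programming for treewidth~\cite{DBLP:journals/jal/BodlaenderK96} as a nondeterministic tree decomposition automaton in the sense of \cref{def:tda}. Bodlaender and Kloks show that deciding $\tw{G}\leq k$, given a width-$\ell$ tree decomposition of $G$, can be done by bottom-up dynamic programming where each node carries a set of so-called \emph{characteristics} of partial tree decompositions of width at most $k$, with the total number of possible characteristics bounded by $2^{\Oh{k\ell^2}}$. Since we want a nondeterministic automaton, the state at a node will be one characteristic rather than the whole set of characteristics, and nondeterministic guesses will correspond to choosing a particular characteristic.

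First, I would fix the notion of characteristic. For a boundaried graph $H$ with $|\bnd H|\leq \ell+1$, a characteristic encodes, modulo the equivalence of inducing the same combinatorial behavior on $\bnd H$, a single partial tree decomposition of $H$ of width at most $k$ whose root bag contains $\bnd H$. Concretely, following the Bodlaender–Kloks presentation, a characteristic is described by a collection of ``trace'' records indexed by subsets $B\subseteq \bnd H$ with $|B|\leq k+1$, each recording a bounded amount of information about how the vertices of $\bnd H$ are arranged along a root-to-leaf path of the partial decomposition. Since $|\bnd H|\leq \ell+1$ and each record has $\Oh{k\ell}$ bits of information, the total number of characteristics is $2^{\Oh{k\ell^2}}$. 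Listing them (and testing equality) can be done in time $2^{\Oh{k\ell^2}}$, which gives the bound on $|Q|$ and on the evaluation time.

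Next, I would translate the Bodlaender–Kloks computations into $\init$ and $\trans$. For the initial relation $\init$: given a boundaried graph $G_l$ at a leaf, we enumerate all rooted tree decompositions of $G_l$ of width at most $k$ whose root bag contains $\bnd G_l$ (this is possible because $|V(G_l)|\leq \ell+1$), and put into $\init$ all pairs $(G_l,q)$ where $q$ is the characteristic of one such decomposition. For the transition relation $\trans$: given a tuple $(B,X,Y,Z,J,q',q'')$, we put $q$ into the relation whenever the ``attach-root'' operation of Bodlaender–Kloks succeeds, namely: there exist partial tree decompositions $\Tc'$ and $\Tc''$ (with boundary $Y$ and $Z$ respectively) having characteristics $q'$ and $q''$, such that joining them under a new root with bag $B$ carrying the edges $J$ yields a decomposition of width at most $k$ whose characteristic on the boundary $X$ is $q$. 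Importantly, this check depends only on $q'$, $q''$, and the local data $(B,X,Y,Z,J)$, because characteristics already capture everything about the partial decomposition that is relevant on the boundary; this is the core content of the Bodlaender–Kloks analysis. The case of a single child (with $q''=\bot$) is handled by the same mechanism without invoking $q''$. Finally, the accepting set $F$ consists of those characteristics that correspond to a complete tree decomposition of $H$ with $\bnd H=\emptyset$ and width $\leq k$; equivalently, an accepting run at the root witnesses a tree decomposition of $G$ of width $\leq k$.

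The main obstacle is largely notational rather than combinatorial: one has to verify that the Bodlaender–Kloks characteristics, which were originally designed for nice tree decompositions with introduce/forget/join nodes, are still composable under the generic ``binary tree decomposition'' nodes that our automaton model uses, where $\bag(x)$, $\adhesion{x}$, $\adhesion{y}$, $\adhesion{z}$ and $\edges(x)$ may be arbitrary compatible sets. This is routine: any such generic node can be locally simulated by a constant-depth sequence of introduce, forget, and join operations on at most $\ell+1$ vertices, and the Bodlaender–Kloks transition tables for these basic operations can be composed in time $2^{\Oh{k\ell^2}}$ to yield the transition relation of $\trans$. The correctness (equivalence of accepting a decomposition with $\tw{G}\leq k$) and the size/time bounds then follow immediately from the corresponding statements in~\cite{DBLP:journals/jal/BodlaenderK96}.
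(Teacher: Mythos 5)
Your proposal is correct and follows essentially the same route as the paper, which itself only asserts that the Bodlaender--Kloks dynamic programming "can be understood as" a nondeterministic tree decomposition automaton, with characteristics as states and the table-update rules as the initial and transition relations. Your additional observation that generic binary nodes must be simulated by short sequences of introduce/forget/join operations is exactly the routine verification the paper leaves implicit.
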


We remark that since the property of having treewidth at most $k$ can be expressed in $\CMSO_2$\footnote{This can be done, for instance, by stating that the given graph does not contain any of the forbidden minor obstructions for having treewidth at most $k$. It is known that the sizes of such obstructions are bounded by a doubly-exponential function in $k^5$, hence their number is at most triply-exponential in $k^5$~\cite{Lagergren98}.}, \cref{lem:automaton-BK} with an unspecified bound on the evaluation time also follows from \cref{lem:automaton-CMSO}. The reason behind formulating \cref{lem:automaton-BK} explicitly is to keep track of the evaluation time more precisely in further arguments.

\newcommand{\reps}{\mathsf{reps}}

\paragraph*{Closure automaton.} Finally, we introduce automata for computing small closures within subtrees of a tree decomposition. These will be used in the proof of \cref{lem:closure-maintenance}. We first need a few definitions.

Let $G$ be a boundaried graph. We say that two sets of non-boundary vertices $Y,Z\subseteq V(G)\setminus \bnd G$ are {\em{torso-equivalent}} if there is an isomorphism between $\torso[G]{Y\cup \bnd G}$ and $\torso[G]{Z\cup \bnd G}$ that fixes every vertex of $\bnd G$. Note that being torso-equivalent is an equivalence relation and for every integer $c$, let $\sim_{c,G}$ be the restriction of this equivalence relation to subsets of $V(G)\setminus \bnd G$ of cardinality at most $c$. Note $\sim_{c,G}$ has at most $2^{\Oh{(c+|\bnd G|)^2}}$ equivalence classes.

Suppose further that $\Tc = (T,\bag,\edges)$ is a boundaried tree decomposition of $G$.
We generalize the depth function $d_\Tc$ defined in \cref{sec:preliminaries} to boundaried tree decompositions as follows: let $d_\Tc \colon V(G)\setminus \bnd G \,\to\, \Z_{\geq 0}$ be such that $d_\Tc(u)$ is the depth of the top-most node of $T$ whose bag contains $u$.
In particular, $d_\Tc(u)$ depends on the vertex $u$ and the tree decomposition $(T,\bag,\edges)$.
Further, for a set of vertices $Y\subseteq V(G)\setminus \bnd G$, we define $d_\Tc(Y)=\sum_{u\in Y} d_\Tc(u)$.
Recalling that there is a total order $\preceq$ on the vertices of $G$ inherited from $\Uni$, subsets of $V(G)\setminus \bnd G$ can be compared as follows: for $Y,Y'\subseteq V(G)\setminus \bnd G$, we set $Y\preceq_\Tc Y'$ if
\begin{itemize}
 \item $d_\Tc(Y)<d_\Tc(Y')$, or
 \item $d_\Tc(Y)=d_\Tc(Y')$ and $Y$ is lexicographically not larger than $Y'$ with respect to $\preceq$.
\end{itemize}
For a nonempty set of subsets $\mathcal{S}$ of $V(G)\setminus \bnd G$, we let $\min_\Tc \mathcal{S}$ be the $\preceq_\Tc$-smallest element of $\mathcal{S}$. This allows us to define the {\em{$c$-small torso representatives}} as follows:
\begin{align*}
&\reps^c(G,(T,\bag,\edges)) \coloneqq\\
&\left\{(d_\Tc({\textstyle \min_\Tc} \mathcal{K}),\torso[G]{{\textstyle \min_\Tc} \mathcal{K} \cup \bnd G})\colon \mathcal{K}\textrm{ is an equivalence class of }\sim_{c,G}\right\}.
\end{align*}
Note that each member of an equivalence class of $\sim_{c,G}$ has the same size, so at this point we do not optimize for the size even though it will later be needed for the proof of \cref{lem:closure-maintenance}.
The set $\reps^c(G,(T,\bag,\edges))$ depends on both the boundaried graph $G$ and its boundaried tree decomposition $(T,\bag,\edges)$. Also, the cardinality of $\reps^c(G,(T,\bag,\edges))$ is equal to the number of equivalence classes of $\sim_{c,G}$, which, as noted, is at most $2^{\Oh{(c+|\bnd G|)^2}}$.

The closure automata we are going to use are provided by the following statement.

\begin{lemma}\label{lem:closure-automaton}
 For every pair of integers $c,\ell$ there is a tree decomposition automaton $\Rr=\Rr_{c,\ell}$ with the following property: for any graph $G$ and its annotated binary tree decomposition $(T,\bag,\edges)$ of width at most $\ell$, the run of $\Rr$ on $(T,\bag,\edges)$ satisfies
 $$\run_{\Rr}(x)=\reps^c(G_x,(T_x,\bag_x,\edges_x))\qquad\textrm{for all }x\in V(T).$$
 The evaluation time of $\Rr$ is $2^{\Oh{(c+\ell)^2}}$.
\end{lemma}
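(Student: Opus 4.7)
The plan is to design $\Rr = \Rr_{c,\ell}$ so that its state at each node $x$ is literally the set $\reps^c(G_x, (T_x, \bag_x, \edges_x))$, represented as a finite collection of pairs $(d, \tau)$ in which $\tau$ is a graph canonicalized up to torso-isomorphism fixing the boundary (so equality of states corresponds to equality of the underlying mathematical sets). The initial mapping at a leaf $l$ is direct: $G_l = (\bag(l), \edges(l))$ with $\bnd G_l = \adhesion{l}$, so I would enumerate all $Y \subseteq \bag(l) \setminus \adhesion{l}$ of size at most $c$, compute $\torso[G_l]{Y \cup \adhesion{l}}$ from $\edges(l)$, and bucket these graphs by torso-isomorphism fixing $\adhesion{l}$; since every vertex of $\bag(l)$ lies in the root bag of $T_l$, the emitted pair for each class is $(0, \tau)$.

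The crux is the transition at an internal binary node $x$ with children $y, z$. Any $Y \subseteq V(G_x) \setminus \adhesion{x}$ decomposes uniquely as a disjoint union $Y = Y_x \sqcup Y_y \sqcup Y_z$ with $Y_x \subseteq \bag(x) \setminus \adhesion{x}$, $Y_y \subseteq V(G_y) \setminus \adhesion{y}$, and $Y_z \subseteq V(G_z) \setminus \adhesion{z}$, because these three sets are pairwise disjoint by the usual tree-decomposition separation properties. The key compositional observation I need to establish is: forming an auxiliary graph $H$ on vertex set $\bag(x) \cup Y_y \cup Y_z$ (with $Y_y, Y_z$ treated as abstract non-boundary vertices) whose edges are the union of $\edges(x)$, the edges of $\torso[G_y]{Y_y \cup \adhesion{y}}$, and the edges of $\torso[G_z]{Y_z \cup \adhesion{z}}$, we have $\torso[G_x]{Y \cup \adhesion{x}} = \torso[H]{Y \cup \adhesion{x}}$. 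I expect this equality to be the main obstacle: its proof is standard but requires careful bookkeeping, decomposing any $G_x$-path internally avoiding $Y \cup \adhesion{x}$ into maximal segments in $V(G_y) \setminus (\adhesion{y} \cup Y_y)$ or $V(G_z) \setminus (\adhesion{z} \cup Y_z)$, separated by vertices of $\bag(x)$, and contracting each such segment to a torso edge in the corresponding child torso, and conversely expanding a torso edge back into a path through the child.

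Assuming the composition lemma, the transition iterates over all triples $(\mathcal{K}_y, \mathcal{K}_z, Y_x)$ with $\mathcal{K}_y \in \run_\Rr(y)$, $\mathcal{K}_z \in \run_\Rr(z)$, and $Y_x \subseteq \bag(x) \setminus \adhesion{x}$ satisfying $|Y_x| + s_y + s_z \leq c$, where $s_y, s_z$ are the common sizes of elements in $\mathcal{K}_y, \mathcal{K}_z$ (recoverable from the stored torsos). For each triple I construct $H$ from $\edges(x)$ and the stored torsos, compute $\tau \coloneqq \torso[H]{Y \cup \adhesion{x}}$, and assign it depth $d \coloneqq d_y + s_y + d_z + s_z$, where $d_y, d_z$ are the depths stored in $\mathcal{K}_y, \mathcal{K}_z$; the corrections $+s_y, +s_z$ reflect the fact that each vertex of $V(G_y) \setminus \adhesion{y}$ sits one level deeper in $T_x$ than in $T_y$, while vertices of $Y_x$ contribute depth $0$. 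I then bucket the candidate pairs $(d, \tau)$ by torso-isomorphism fixing $\adhesion{x}$ and output the minimum-$d$ pair from each bucket. The composition lemma, together with the observation that minimizing $d_{\Tc_x}$ on a class of $\sim_{c,G_x}$ decouples into independent minimizations on $Y_y, Y_z$ within their classes (because the torso in $G_x$ depends on $Y_y, Y_z$ only through their torso-equivalence classes), shows that the output agrees with $\reps^c(G_x, (T_x, \bag_x, \edges_x))$. A node with a single child is handled analogously, with the $\mathcal{K}_z$-slot replaced by a dummy empty contribution.

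For the evaluation time, the number of torso-isomorphism classes of subsets of size at most $c$ over a boundary of size at most $\ell + 1$ is $2^{\Oh{(c+\ell)^2}}$, which bounds both the state size and the count $|\run_\Rr(y)| \cdot |\run_\Rr(z)| \cdot 2^{\ell+1}$ of processed triples. For each triple, computing $\torso[H]{Y \cup \adhesion{x}}$ reduces to all-pairs reachability on a graph of order $\Oh{c + \ell}$ and runs in time polynomial in $c + \ell$, and canonicalizing a torso (so that bucketing is a hash lookup) takes $2^{\Oh{c \log c}}$ time by brute-force enumeration of permutations of the non-boundary vertices. Multiplying through gives the announced evaluation time of $2^{\Oh{(c+\ell)^2}}$.
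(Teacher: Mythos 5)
Your proposal matches the paper's proof essentially step for step: the same choice of state (the set $\reps^c(G_x,(T_x,\bag_x,\edges_x))$ itself), the same key composition claim identifying $\torso[G_x]{X\cup Y\cup Z\cup \adhesion{x}}$ with the torso taken in the auxiliary graph assembled from $\edges(x)$ and the two child torsos, the same depth correction $d_y+|Y|$ for the extra level of $T_x$, and the same enumerate-candidates-then-filter transition with the same $2^{\Oh{(c+\ell)^2}}$ accounting. The only point to watch is representational: the lemma as stated (and its downstream use, where the actual closure $X=V(H^\lambda)$ is reconstructed from the stored representatives) requires the states to carry concrete vertex sets from $\Uni$, namely the $\preceq_{\Tc}$-minimal member of each class with lexicographic tie-breaking when depths coincide, so canonical forms should be used only for bucketing and equality tests rather than as the stored states themselves.
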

\begin{proof}
 For the state space $Q$ of $\Rr$ we take the set of all sets of pairs of the form $(p,H)$, where $p\in \N$ and $H$ is a  graph on at most $c+\ell+1$ vertices contained in $\Uni$. The final states of $\Rr$ are immaterial for the lemma statement, hence we can set $F=\emptyset$. As for the initial mapping $\init$, for a boundaried graph $H$ on at most $\ell+1$ vertices we can set $\init(H)=\reps^c(H,(T_0,\bags_0,\edges_0))$, where $(T_0,\bags_0,\edges_0)$ is the trivial one-node tree decomposition of $H$ in which all vertices and edges are put in the root bag. It is straightforward to see that $\init(H)$ can be computed in time $2^{\Oh{(c+\ell)^2}}$ directly from the definition.
 
 It remains to define the transition mapping $\delta$. For this, it suffices to prove the following. Suppose $(T,\bags,\edges)$ is a boundaried tree decomposition of a boundaried graph $G$ and $x$ is a node of $G$ with children $y$ and $z$. Then knowing $$R_y\coloneqq \reps^c(G_y,(T_y,\bags_y,\edges_y)),\qquad R_z\coloneqq \reps^c(G_z,(T_z,\bags_z,\edges_z)),$$
 as well as $\bag(x)$, $\edges(x)$, and adhesions of $x,y,z$, one can compute $$R_x\coloneqq \reps^c(G_x,(T_x,\bags_x,\edges_x))$$ in time $2^{\Oh{(c+\ell)^2}}$. Formally, we would also need such an argument for the case when $x$ has only one child $y$, but this follows from the argument for the case of two children by considering a dummy second child $z$ with an empty bag. So we focus only on the two-children case.
 
 For any pair of sets $Y\subseteq \component{y}$ and $Z\subseteq \component{z}$,
 define $G_x(Y,Z)$ to be the graph with vertex set $Y\cup Z\cup \bag(x)$ and edge set consisting of the union of the edge sets of the following graphs: 
 $$\torso[G_y]{Y\cup \adhesion{y}},\qquad \torso[G_z]{Z\cup \adhesion{z}},\qquad \textrm{and} \qquad (\bag(x),\edges(x)).$$
 The following claim is straightforward.
 
 \begin{claim}\label{cl:prt-torso}
  For any triple of sets $X\subseteq \bag(x)\setminus \adhesion{x}$, $Y\subseteq \component{y}$ and $Z\subseteq \component{z}$, we have
  $$\torso[G_x]{X\cup Y\cup Z\cup \adhesion{x}}=\torso[G_x(Y,Z)]{X\cup Y\cup Z\cup \adhesion{x}}.$$
 \end{claim}

 To compute $R_x$,
 we first construct a family of candidates $C$ as follows. Consider every pair of pairs $(p_y,H_y)\in R_y$ and $(p_z,H_z)\in R_z$, and every $X\subseteq \bag(x)\setminus \adhesion{x}$. Let $Y=V(H_y)\setminus \adhesion{y}$ and $Z=V(H_z)\setminus \adhesion{z}$, and note that the graph $G_x(Y,Z)$ is the union of graphs $H_y$, $H_z$, and $(\bag(x),\edges(x))$. If $|X\cup Y\cup Z|\leq c$, then we add to $C$ the pair
 $$(p_y+p_z+|Y|+|Z|,\torso[G_x(Y,Z)]{X\cup Y\cup Z\cup \adhesion{x}}).$$
 Otherwise, if $|X\cup Y\cup Z|>c$, no pair is added to $C$. 
The first coordinate of the pair added to $C$ is equal to $d_{T_x}(X \cup Y \cup Z)$ because $X \subseteq \bag(x)$ and $Y,Z \subseteq V(G_x) \setminus \bag(x)$.
By \cref{cl:prt-torso}, the second coordinate of the pair added to $C$ is equal to $\torso[G_x]{X\cup Y\cup Z\cup \adhesion{x}})$.

 Further, we have 
 $$|C|\leq |R_y|\cdot |R_z|\cdot 2^{\ell+1}\leq 2^{\Oh{(c+\ell)^2}},$$
 and $C$ can be computed in time $2^{\Oh{(c+\ell)^2}}$.
 
 Next, the candidates are filtered as follows. As long as in $C$ there is are distinct pairs $(p,H)$ and $(p',H')$ such that $H$ and $H'$ are isomorphic by an isomorphism that fixes $\adhesion{x}$, we remove the pair that has the larger first coordinate; if both pairs have the same first coordinate, remove the one where the vertex set of the second coordinate is larger in $\preceq$. Clearly, this filtering procedure can be performed exhaustively in time $2^{\Oh{(c+\ell)^2}}$.

 Thus, after filtering, all second coordinates of the pairs in $C$ are pairwise non-equivalent in $\sim_{c,G_x}$. It is now straightforward to see using a simple exchange argument that $R_x$ is equal to $C$ after the filtering. This constitutes the definition and the algorithm computing the transition function~$\trans$. 
\end{proof}

\subsection{Dynamic maintenance of automata runs}

Having defined the automata we are going to use, we now show how to maintain their runs effectively under prefix-rebuilding updates.

\begin{lemma}\label{lem:automaton-maintenance}
  Fix $\ell\in \N$ and a tree decomposition automaton $\Aa=(Q,F,\init,\trans)$ of width $\ell$ and evaluation time $\tau$. 
  Then there exists an~$\ell$-prefix-rebuilding data structure with overhead $\tau$ that additionally implements the following operation:
  \begin{itemize}
    \item $\mathsf{query}(x)$: given a~node $x$ of $T$, returns $\run_\Aa(x)$. Runs in worst-case time $\Oh{1}$.
  \end{itemize}
\end{lemma}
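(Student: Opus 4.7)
The plan is to maintain, at every node $x$ of $T$, the value $\run_\Aa(x)$ as defined in \cref{def:tda}. Since the run of $\Aa$ is uniquely determined by $(T,\bag,\edges)$, and the value at $x$ depends only on $\bag(x)$, $\edges(x)$, the adhesions of $x$ and of its children, and the states $\run_\Aa$ at the children of $x$, the state at $x$ can be computed in time $\Oh{\tau}$ given this local information. Consequently, the data structure is conceptually straightforward: alongside the annotated tree decomposition we store one field of type $Q$ at each node. Answering $\mathsf{query}(x)$ is then just a lookup in $\Oh{1}$ time.

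On initialization with $(T,\bag,\edges)$, I would process the nodes of $T$ in post-order (children before parents), compute the state of each leaf $l$ by invoking $\init(G_l)$, and compute the state of each internal node by invoking the transition mapping $\trans$ on $(\bag(x),\adhesion{x},\adhesion{y},\adhesion{z},\edges(x),\run_\Aa(y),\run_\Aa(z))$ (using $\bot$ in place of the missing second child when $x$ has only one child). This takes worst-case time $\Oh{\tau \cdot |V(T)| \cdot \ell^{\Oh{1}}}$, matching the required bound.

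For an $\mathsf{update}(\overline{u})$ with $\overline{u}=(\Tpref,\Tpref',T^\star,\bag^\star,\edges^\star,\pi)$, the key observation is that the subtrees of $T$ rooted at the appendices of $\Tpref$ are transferred verbatim to the new decomposition (they become the subtrees rooted at the appendices of $\Tpref'$ in $T'$). Since $\run_\Aa$ at a node depends only on the subtree rooted at that node, the states at all nodes outside $\Tpref'$ remain unchanged. Therefore it suffices to first carry out the structural update of $(T,\bag,\edges)$ as in the generic prefix-rebuilding framework, and then recompute $\run_\Aa(x)$ for every $x\in \Tpref'$ in post-order within $\Tpref'$. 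When processing such an $x$, for each child $y$ of $x$ in $T'$, the value $\run_\Aa(y)$ is already available: either $y\in \Tpref'$ and was handled earlier in the post-order traversal, or $y$ is an appendix of $\Tpref'$ (i.e.\ an appendix of $\Tpref$ attached via $\pi$), in which case its state was preserved from before the update. Each recomputation takes $\Oh{\tau}$ time, so the total worst-case cost of the update is $\Oh{\tau \cdot |\Tpref'| \cdot \ell^{\Oh{1}}} \subseteq \Oh{\tau \cdot |\overline{u}| \cdot \ell^{\Oh{1}}}$, as required.

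There is no real obstacle here beyond careful bookkeeping to ensure that post-order within $\Tpref'$ respects the invariant that children already have up-to-date states, and that the states of the appendices (which are now attached to nodes of $\Tpref'$ via $\pi$) are read from the persistent storage rather than recomputed. The correctness follows immediately from the uniqueness of $\run_\Aa$ and the locality of its defining recurrence.
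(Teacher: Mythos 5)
Your approach is essentially the paper's, but there is a genuine gap in the handling of appendices. You claim that "the states at all nodes outside $\Tpref'$ remain unchanged" because "$\run_\Aa$ at a node depends only on the subtree rooted at that node." This is not quite true: the state at a node $a$ also depends on $\adhesion{a}$, which appears explicitly as an argument of $\trans$, and (via $\bnd G_a = \adhesion{a}$) as the boundary passed to $\init$ when $a$ is a leaf. The adhesion $\adhesion{a} = \bag(a) \cap \bag(\parent{a})$ depends on the bag of $a$'s \emph{parent}, which for an appendix $a$ of $\Tpref$ lies inside $\Tpref$. After reattachment via $\pi$, the new parent $\pi(a) \in \Tpref'$ may have a different bag, so $\adhesion{a}$ may change and hence $\run_\Aa(a)$ may change, even though the entire subtree rooted at $a$ is transferred verbatim. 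Your proof reads the old state of $y$ from storage when $y$ is an appendix of $\Tpref'$, which would yield an incorrect run.

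The fix is small and is what the paper does: recompute $\run_\Aa$ not only on $\Tpref'$ but on $\Tpref' \cup \App(\Tpref')$, processing bottom-up. The appendices' \emph{children} still have valid stored states (their adhesions are determined by their parents, which are appendices whose bags are unchanged), so the recomputation is still local and the cost remains $\Oh{\tau \cdot |\Tpref' \cup \App(\Tpref')|}$, which is $\Oh{\tau \cdot |\overline{u}|}$ since the tree is binary. Note also that the formal definition of a prefix-rebuilding update constrains only the restriction $\funrestriction{(T,\bag,\edges)}{V(T)\setminus\Tpref}$, so nothing prevents the appendix adhesions from changing.
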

\begin{proof}
 At every point in time, the data structure stores the decomposition $(T,\bag,\edges)$, where every node $x$ is supplied with a pointer to its parent, a pair of pointers to its children, and the state $\rho_{\Aa}(x)$ in the run of $\Aa$ on $(T,\bag,\edges)$. This allows for answering queries in constant time, as requested.
 
 For initialization, we just compute the run of $\Aa$ on $(T,\bag,\edges)$ in a bottom-up manner: the states for leaves are computed according to the initialization mapping, while the states for internal nodes are computed according to the transition mapping bottom-up. This requires time $\tau$ per node, so $\Oh{\tau\cdot |V(T)|}$ in total.
 
 For applying a prefix-rebuilding update $\tup u=(\Tpref, \Tpref', T^\star, \bag^\star, \edges^\star, \pi)$, the pointer structure representing the decomposition can be easily rebuilt in time $\ell^{\Oh{1}}\cdot |\tup u|$ by building the tree $\Tpref'$ and reattaching all appendices of $\Tpref$ in $T$ according to $\pi$, using a single pointer change per appendix. Observe here that the information about the run of $\Aa$ on the reattached subtrees does not need to be altered, except for the appendices of $\Tpref'$, for which the run could have to be altered because their adhesions could change.
 Hence, it remains to compute the states associated with the nodes of $\Tpref' \cup \App(\Tpref')$ in the run of $\Aa$ on the new decomposition $(T',\bag',\edges')$. This can be done by processing $\Tpref' \cup \App(\Tpref')$ in a bottom-up manner, and computing each consecutive state using either the initialization mapping $\init$ (for nodes in $\Tpref'$ that are leaves of $T'$) or the transition mapping $\trans$ (for the other nodes in $\Tpref'$), in total time $\Oh{\tau\cdot |\Tpref' \cup \App(\Tpref')|}\leq \Oh{\tau\cdot |\tup u|}$.
\end{proof}

Now, \cref{lem:automaton-maintenance} combined with \cref{lem:automaton-CMSO} immediately implies \cref{cor:dynamic-cmso}.
Similarly, an $\ell$-prefix-rebuilding data structure implementing the three first operations of \cref{lem:height-maintenance} follows immediately by applying \cref{lem:automaton-maintenance} to the automaton provided by \cref{lem:automaton-height}.
Let us here complete the proof of \cref{lem:height-maintenance}.

\begin{proof}[Proof of \cref{lem:height-maintenance}]
By above discussion, it suffices to implement an $\ell$-prefix-rebuilding data structure with overhead $\Oh{1}$ that implements the operation $\Top(v)$, that given a vertex $v \in V(G)$ returns the unique highest node $t$ of $T$ such that $v \in \bag(t)$.
Consider a prefix-rebuilding update changing $(T,\bag,\edges)$ to $(T',\bag',\edges')$.
Observe that a prefix-rebuilding update can change the highest node where $v$ occurs only if $v \in \bags_{T}(\Tpref \cup \App(\Tpref))$, and in particular, in that case the highest node of $(T',\bag',\edges')$ where $v$ occurs will be in $\Tpref' \cup \App(\Tpref')$.
Both $|\Tpref \cup \App(\Tpref)|$ and $|\Tpref' \cup \App(\Tpref')|$ are linear in $|\overline{u}|$, so we simply maintain the mapping $\Top(v)$ explicitly by recomputing it for all vertices $v \in \bags_{T}(\Tpref \cup \App(\Tpref))$.
\end{proof}

\subsection{Closures and blockages}
\label{ssec:closures-and-blockages}

We can now give a proof of \cref{lem:closure-maintenance}. This will require more work, as apart from dynamically maintaining relevant information we also need to implement methods for extracting a closure and blockages.

\begin{proof}[Proof of \cref{lem:closure-maintenance}]
 For the proof, we fix the following two automata:
 \begin{itemize}
  \item $\Rr=\Rr_{c,\ell}$ is the closure automaton for parameters $c$ and $\ell$, provided by \cref{lem:closure-automaton}.
  \item $\Bb=\Bb_{2k+1,c+\ell}$ is the Bodlaender-Kloks automaton for parameters $2k+1$ and $c+\ell$, provided by \cref{lem:automaton-BK}.
 \end{itemize}
 Let $\Bb=(Q,F,\init,\trans)$. Recall that $\Bb$ is nondeterministic, $|Q|\leq 2^{\Oh{k(c+\ell)^2}}$, $Q$ can be computed in time $2^{\Oh{k(c+\ell)^2}}$, and membership in $F$, $\init$, or $\trans$ for relevant objects can be decided in time $2^{\Oh{k(c+\ell)^2}}$.
 
 Our data structure just consists of the data structure provided by \cref{lem:automaton-maintenance} for the automaton~$\Rr$. Thus, the initialization time is $2^{\Oh{(c+\ell)^2}}\cdot |V(T)|$ and the update time is $2^{\Oh{(c+\ell)^2}}\cdot |\tup u|$ as requested. It remains to implement method $\mathsf{query}(\Tpref)$. For this, we may assume that the stored annotated tree decomposition $(T,\bag,\edges)$ is labelled with the run $\rho_\Rr$ of $\Rr$ on $(T,\bag,\edges)$. This means that for every $x\in V(T)$, we have access to $\reps^c(T_x,\bag_x,\edges_x)$.
 
 Consider a query $\mathsf{query}(\Tpref)$. We break answering this query into two steps. In the first step, we compute a $d_\Tc$-minimal $c$-small $k$-closure $X$ of $\bags(\Tpref)$, together with $\torso[G]{X}$. This will take time $2^{\Oh{k(c+\ell)^2}}\cdot |\Tpref|$. In the second step, we find $\blockages{\Tpref}{X}$. This will take time $2^{\Oh{(c+\ell)^2}}\cdot |\exploration{\Tpref}{X}|$.
 
 \paragraph*{Step 1: finding the closure and its torso.}
 Let $A$ be the set of appendices of $\Tpref$. For $a\in A$, let
 $$R(a)\coloneqq \reps^c(T_a,\bag_a,\edges_a);$$
 recall that $R(a)$ is stored along with $a$ in the data structure.
 Let $\Lambda$ be the set of all mappings $\lambda$ with domain $A$ such that $\lambda(a)\in R(a)$ for all $a$. For $\lambda\in \Lambda$ and $a\in A$, let $d^{\lambda}(a)$ and $H^{\lambda}(a)$ be the first, respectively the second coordinate of $\lambda(a)$, and let $s^\lambda(a) = |V(H^{\lambda}(a)) \setminus \adhesion{a}|$.
 For $x \in V(T)$ we denote by $d_T(x)$ the depth of $x$ in $(T, \bag, \edges)$.
For $\lambda\in \Lambda$ we define
 \begin{eqnarray*}
  H^\lambda & \coloneqq & \left(\bags(\Tpref)\cup \bigcup_{a\in A} V(H^{\lambda}(a)),\edges(\Tpref)\cup \bigcup_{a\in A} E(H^{\lambda}(a))\right),\\
  s^\lambda & \coloneqq & \sum_{a \in A} s^\lambda(a),\\
  d^\lambda & \coloneqq & \sum_{a\in A} \left(d^\lambda(a) + d_T(a) \cdot s^\lambda(a)\right).
 \end{eqnarray*}
 Note that by the definition of $\reps^c$, we have that
 \begin{itemize}
 \item $H^\lambda=\torso[G]{V(H^\lambda)}$,
 \item $s^\lambda=|V(H^\lambda)|-|\bags(\Tpref)|$, and
 \item $d^\lambda=d_\Tc(V(H^\lambda))-d_\Tc(\bags(\Tpref))$
\end{itemize}
 for all $\lambda\in \Lambda$.

 Further, for each $a\in A$, the set $R(a)$ comprises all possible non-isomorphic torsos that can be obtained by picking at most $c$ vertices within $\component{a}$, and with each possible torso $R(a)$ stores a realization with the least possible total depth. This immediately implies the following statement.
 
 \begin{claim}\label{cl:lambda}
  Let $\lambda\in \Lambda$ be such that the treewidth of $H^\lambda$ is at most $2k+1$ and, among such mappings $\lambda$, $s^\lambda$ is minimum, and among those $d^\lambda$ is minimum. Then $V(H^\lambda)$ is a $d_\Tc$-minimal $c$-small $k$-closure of $\bags(\Tpref)$. Further, if no $\lambda$ as above exists, then there is no $c$-small $k$-closure of $\bags(\Tpref)$.
 \end{claim}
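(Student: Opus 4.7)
The plan is to establish two reductions, one each way, between $c$-small $k$-closures $X$ of $\bags(\Tpref)$ and \emph{feasible} $\lambda \in \Lambda$ (meaning $\lambda$ with treewidth of $H^\lambda$ at most $2k+1$), compatible with the pair $(|X|, d_\Tc(X))$ under lexicographic comparison. The pivotal structural identity underpinning both reductions is: for every $W \supseteq \bags(\Tpref)$, $\torso[G]{W} = G[\bags(\Tpref)] \cup \bigcup_{a \in A} \torso[G_a]{(W \cap \component{a}) \cup \adhesion{a}}$. This holds because $\bags(\Tpref)$ separates the different $\component{a}$'s, so every connected component of $G - W$ lies inside a single $\component{a}$ and contributes torso edges only within $(W \cap \component{a}) \cup \adhesion{a}$; edges of $G$ internal to $\adhesion{a}$ are missing from $G_a$ by the boundaried-graph convention, but they are exactly the edges of $\edges(\Tpref)$ restricted to $\adhesion{a}$, since for any $u,v \in \bags(\Tpref)$ adjacent in $G$ the shallowest bag containing both necessarily lies in $\Tpref$. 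Plugging $W = V(H^\lambda)$ into the identity and unfolding the definition of $\reps^c$ immediately gives $H^\lambda = \torso[G]{V(H^\lambda)}$, automatically certifying $V(H^\lambda)$ as a $c$-small superset of $\bags(\Tpref)$.

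The backward reduction is now direct: any feasible $\lambda$ produces the $c$-small $k$-closure $V(H^\lambda)$, and the decomposition $d_\Tc(v) = d_T(a) + d_{T_a}(v)$ for $v \in \component{a}$ implies $|V(H^\lambda)| = |\bags(\Tpref)| + s^\lambda$ and $d_\Tc(V(H^\lambda)) = d_\Tc(\bags(\Tpref)) + d^\lambda$. For the forward reduction, given a $c$-small $k$-closure $X$, I set $Y_a \coloneqq X \cap \component{a}$ and let $\lambda_X(a)$ be the unique element of $R(a)$ coming from the $\sim_{c,G_a}$-class $[Y_a]$, whose canonical realizer is $Y'_a = \min_{T_a}[Y_a]$. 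The per-appendix torso-isomorphisms $\torso[G_a]{Y_a \cup \adhesion{a}} \cong \torso[G_a]{Y'_a \cup \adhesion{a}}$ (each fixing $\adhesion{a}$) splice, together with the identity on $\bags(\Tpref)$, into a global isomorphism $\torso[G]{X} \cong H^{\lambda_X}$, so $\lambda_X$ is feasible. Since $|Y'_a| = |Y_a|$ and $d_{T_a}(Y'_a) \le d_{T_a}(Y_a)$ by the $\preceq_{T_a}$-minimality of $Y'_a$ inside its class, it follows that $s^{\lambda_X} = |X| - |\bags(\Tpref)|$ and $d^{\lambda_X} \le d_\Tc(X) - d_\Tc(\bags(\Tpref))$.

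Both sentences of the claim then fall out in a line. If $\lambda^\star$ is the lex-minimizer of $(s^\lambda, d^\lambda)$ among feasible $\lambda$, then $V(H^{\lambda^\star})$ is a $c$-small $k$-closure by the backward reduction. To verify $d_\Tc$-minimality in the sense of \cref{def:d-minimal-closure}, any competing closure $X$ produces via the forward reduction a feasible $\lambda_X$ for which the lex-minimality of $(s^{\lambda^\star}, d^{\lambda^\star})$ translates into the corresponding inequality between the pairs $(|V(H^{\lambda^\star})|, d_\Tc(V(H^{\lambda^\star})))$ and $(|X|, d_\Tc(X))$. The second sentence of the claim is the contrapositive of the forward reduction: existence of any $c$-small $k$-closure yields a feasible $\lambda$.

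The delicate point I expect to need care is that an arbitrary closure $X$ and its canonical twin $V(H^{\lambda_X})$ generally disagree as subsets of $V(G)$: the treewidth bound must be transported via the global torso-isomorphism, whereas the depth bound must be transported via the $\preceq_{T_a}$-minimality of the stored representatives. These are two separate mechanisms, and I will need to check that they coexist inside a single application of the definition of $\reps^c$, which is engineered to store both the canonical torso (as the second coordinate) and the numerical depth value (as the first coordinate) so that a single element of $R(a)$ simultaneously witnesses both facts for the original $Y_a$ and its canonical replacement $Y'_a$.
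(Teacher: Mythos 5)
Your argument is correct and is exactly the route the paper intends: the paper dismisses this claim as an immediate consequence of the three identities $H^\lambda=\torso[G]{V(H^\lambda)}$, $s^\lambda=|V(H^\lambda)|-|\bags(\Tpref)|$, $d^\lambda=d_\Tc(V(H^\lambda))-d_\Tc(\bags(\Tpref))$ together with the fact that each $R(a)$ stores one minimum-depth realizer per torso-equivalence class, and your two reductions plus the exchange argument are precisely the details being elided. Your "delicate point" (transporting the treewidth bound via the spliced isomorphism while transporting the depth bound via $\preceq_{T_a}$-minimality of the stored representatives) is the right thing to be careful about, and your handling of it is sound.
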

 So, by \cref{cl:lambda}, it suffices to find $\lambda\in \Lambda$ that primarily minimizes $s^\lambda$ and secondarily $d^\lambda$ such that $H^\lambda$ has treewidth at most $k$, or conclude that no such $\lambda$ exists. Indeed, then we can output $X\coloneqq V(H^\lambda)$ and $\torso[G]{X}=H^\lambda$ as the output to the query. Note here that once a mapping $\lambda$ as above is found, one can easily construct $H^\lambda$ in time $(\ell+c)^{\Oh{1}}\cdot |\Tpref|$ right from the definition. Intuitively, to find a suitable $\lambda$ we analyze possible runs of the Bodlaender-Kloks automaton $\Bb$ on the natural tree decomposition of $H^\lambda$ inherited from $\Tpref$, for different choices of $\lambda$.
 
 \newcommand{\Tprefx}{T_{\mathrm{pref},x}}
 
 Let $S=\Tpref \cup A$.
 For $x \in S$ let $S_x$ be the subset of $S$ consisting of descendants of $x$ and $A_x = S_x \cap A$, and $\Lambda_x$ be defined just like $\Lambda$, but for domain $A_x$ instead of $A$.
 For $x \in V(T)$, let $\Tprefx \subseteq \Tpref$ consist of the nodes in $\Tpref$ that are descendants of $x$.
 For $x,y \in V(T)$, let $d_T(x,y)$ denote their distance in $T$. 
 Further, for $\lambda\in \Lambda_x$, define
 \begin{eqnarray*}
  H_x^\lambda & \coloneqq & \left(\bags(\Tprefx)\cup \bigcup_{a\in A_x} V(H^{\lambda}(a)),\edges(\Tprefx)\cup \bigcup_{a\in A_x} E(H^{\lambda}(a))\setminus \binom{\adhesion{x}}{2}\right),\\
  s_x^\lambda & \coloneqq & \sum_{a \in A_x} s^\lambda(a),\\
  d_x^\lambda & \coloneqq & \sum_{a\in A_x} \left(d^\lambda(a) + d_T(x,a) \cdot s^\lambda(a)\right).
 \end{eqnarray*}
 We treat $H_x^\lambda$ as a boundaried graph with boundary $\adhesion{x}$. Then, $H_x^\lambda$ has a boundaried tree decomposition $(T^\lambda_x,\bag^\lambda_x,\edges^\lambda_x)$ naturally inherited from $(T,\bag,\edges)$ as follows:
 \begin{itemize}
  \item $T^\lambda_x=T[S_x]$;
  \item $\bag^\lambda_x(y)=\bag(y)$ for all $y\in \Tprefx$, and $\bag^\lambda_x(a)=V(H^\lambda(a))$ for all $a\in A_x$;
  \item $\edges^\lambda_x(y)=\edges(y)\cup \bigcup_{a\in A_y} E(H^\lambda(a))\cap (\binom{\bag(y)}{2} \setminus \binom{\adhesion{y}}{2})$ for all $y\in \Tprefx$, and $\edges^\lambda_x(a)=E(H^\lambda(a))\setminus \binom{\adhesion{a}}{2}$ for all $a\in A_x$. 
 \end{itemize}
 Note that the width of this tree decomposition is at most $c + \ell$.

\newcommand{\Zcmp}{\overline{\Z}}

Let $\Zcmp$ denote the set of non-negative integers together with $+\infty$.
  Let us use a total order on pairs in $\Zcmp \times \Zcmp$ where we first compare the first elements and if they are equal the second elements.
 In a bottom-up fashion, for every $x\in S$ we compute the mapping $\zeta_x\colon Q\times 2^{\binom{\adhesion{x}}{2}}\to \Zcmp \times \Zcmp$ defined as follows: for $q\in Q$ and $W\subseteq \binom{\adhesion{x}}{2}$, $\zeta_x(q,W)$ is the minimum value of $(s^\lambda_x, d^\lambda_x)$ among $\lambda\in \Lambda_x$ such that
 \begin{itemize}
  \item $\Bb$ has a run on $(T^\lambda_x,\bag^\lambda_x,\edges^\lambda_x)$ in which $x$ is labelled with $q$; and
  \item $\bigcup_{a \in A_x} E(H^\lambda(a)) \cap \binom{\adhesion{x}}{2}=W$.
 \end{itemize}
 In case there is no $\lambda$ as above, we set $\zeta_x(q,W)=(+\infty, +\infty)$.
 
 We now argue that the mappings $\zeta_x$ can be computed in a bottom-up manner. This follows from the following rules, whose correctness is straightforward.
 \begin{itemize}
  \item For every $a\in A$, $\zeta_a(q,W)$ is the minimum pair $(s,d)$ such that there is $(d,H)\in R(a)$ with the following properties: $\left(H- \binom{\adhesion{a}}{2},q\right)\in \init$, $E(H)\cap \binom{\adhesion{a}}{2}=W$, and $|V(H) \setminus \adhesion{a}|=s$.
  \item For every $x\in \Tpref$ with no children, $\zeta_x(q,W)=(0,0)$ if $(G_x,q)\in \init$ and $W=\emptyset$, and $\zeta_x(q,W)=(+\infty,+\infty)$ otherwise.
  \item For every $x\in \Tpref$ with one child $y$, $\zeta_x(q,W)$ is the minimum $(s,d)$ such that the following holds: There exist $q'\in Q$ and $W'\subseteq \binom{\adhesion{y}}{2}$ with $(s',d') = \zeta_y(q',W')$  such that 
  \begin{gather*}
   \left(\left(\bag(x),\adhesion{x},\adhesion{y},\emptyset,\edges(x)\cup W'\setminus \binom{\adhesion{x}}{2},q',\bot\right),q\right)\in \trans,\\
   W = W'\cap \binom{\adhesion{x}}{2},\\
   s = s', \text{ and}\\
   d = d'+s'.
  \end{gather*}
  If there are no $q',q'',W'$ as above, then $\zeta_x(q,W)=+\infty$. 
  \item For every $x\in \Tpref$ with two children $y$ and $z$, $\zeta_x(q,W)$ is the minimum $(s,d)$ such that the following holds: There exist $q',q''\in Q$, $W'\subseteq \binom{\adhesion{y}}{2}$, and $W''\subseteq \binom{\adhesion{z}}{2}$ with $(s',d') = \zeta_y(q',W')$ and $(s'',d'') = \zeta_z(q'',W'')$ such that 
  \begin{gather*}
   \left(\left(\bag(x),\adhesion{x},\adhesion{y},\adhesion{z},\edges(x)\cup W'\cup W''\setminus \binom{\adhesion{x}}{2},q',q''\right),q\right)\in \trans,\\
   W = (W'\cup W'')\cap \binom{\adhesion{x}}{2},\\
   s = s'+s'', \text{ and}\\
   d = d'+d''+s'+s''.
  \end{gather*}
  If there are no $q',q'',W',W''$ as above, then $\zeta_x(q,W)=+\infty$.
 \end{itemize}
 Using the rules above, all mappings $\zeta_x$ for $x\in S$ can be computed in total time $2^{\Oh{k(\ell+c)^2}}\cdot |\Tpref|$, because $|Q|\leq  2^{\Oh{k(\ell+c)^2}}$ and the evaluation time of $\Bb$ is $2^{\Oh{k(\ell+c)^2}}$.
 
 By the properties of $\Bb$ asserted in \cref{lem:automaton-BK}, the minimum $(s^\lambda, d^\lambda)$ among those $\lambda\in \Lambda$ for which $H^\lambda$ has treewidth at most $k$ is equal to $\min_{q\in F} \zeta_r(q,\emptyset)$, where $r$ is the root of $T$. The latter minimum can be computed in time $2^{\Oh{k(\ell+c)^2}}$ knowing $\zeta_r$. Finally, to find $\lambda\in \Lambda$ witnessing the minimum, it suffices to retrace the dynamic programming in the standard way. That is, when computing mappings $\zeta_x$, for every computed value of $\zeta_x(q,W)$ we memorize how this value was obtained. After finding $q\in F$ that minimizes $\zeta_r(q,\emptyset)$ we recursively retrace how the value of $\zeta_r(q,\emptyset)$ was obtained along $S$ in a top-down manner, up to values computed in the nodes of $A$; the ways in which these values were obtained give us the mapping $\lambda$. This concludes the construction of the closure $X=V(H^\lambda)$ and its torso $H^\lambda$.
 
 \newcommand{\profile}{\mathsf{profile}}
 
 \paragraph*{Step 2: finding blockages.} Having constructed $X=V(H^\lambda)$ together with $\torso[G]{X}=H^\lambda$, we proceed to finding the set $\blockages{\Tpref}{X}$. We may assume that every vertex of $X$ has been marked as belonging to $X$ (which can be done after computing $X$ in time $\Oh{|X|}\leq (c+\ell)^{\Oh{1}}\cdot |\Tpref|$), hence checking whether a given vertex belongs to $X$ can be done in constant time.
 
 First, we observe that for every node $x\in V(T)\setminus \Tpref$, we can efficiently find out the information about the behavior of $X$ in the subtree rooted at $x$. Denote $X_x\coloneqq X\cap \component{x}$. 
 
 \begin{claim}\label{cl:conn-down}
  Given a node $x\in V(T)\setminus \Tpref$, one can compute $X_x$ and $\torso[G_x]{X_x\cup \adhesion{x}}$ in time $2^{\Oh{(c+\ell)^2}}$.
 \end{claim}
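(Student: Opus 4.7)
The plan is to use the run of the closure automaton $\Rr = \Rr_{c,\ell}$ maintained at every node of $T$, so that the state $R(x) = \reps^c(G_x, (T_x, \bag_x, \edges_x))$ is available at $x$, along with the element $\lambda(a) = (d^\lambda(a), H^\lambda(a)) \in R(a)$ chosen during Step~1 at the closest appendix ancestor $a$ of $x$, and the top-bag function $\Top(\cdot)$ of \cref{lem:height-maintenance}. I will also assume two auxiliary automata built via \cref{lem:automaton-maintenance}: one providing a pointer from each node to its closest appendix ancestor, and one supplying Euler-tour labels that enable constant-time ancestor queries in~$T$.

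In the base case $x = a$, both $X_x = V(H^\lambda(a)) \setminus \adhesion{a}$ and the torso $\torso[G_x]{X_x \cup \adhesion{x}} = H^\lambda(a)$ are available directly in $\Oh{(c+\ell)^2}$ time. Otherwise, I would compute the vertex set $X_x = Y_a \cap \component{x}$, where $Y_a \coloneqq V(H^\lambda(a)) \setminus \adhesion{a}$, by iterating over the at most $c$ vertices $v \in Y_a$ and retaining those satisfying $v \in \component{x}$; the latter is verified in $\Oh{1}$ time by checking that $\Top(v)$ is a descendant of $x$ in $T$ and that $v \notin \bag(\parent{x})$.

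The main obstacle is producing the torso $\torso[G_x]{X_x \cup \adhesion{x}}$ as a concretely labeled graph within the $2^{\Oh{(c+\ell)^2}}$ budget. The torso is uniquely determined as a graph on $X_x \cup \adhesion{x}$, and by definition of $\sim_{c,G_x}$ it is isomorphic, via some boundary-fixing bijection, to the second coordinate of a unique entry $(p_0, H_0) \in R(x)$. My plan is to iterate over the at most $2^{\Oh{(c+\ell)^2}}$ entries of $R(x)$ whose representative has matching cardinality and, for each, enumerate the at most $(c+\ell+1)! \leq 2^{\Oh{(c+\ell)^2}}$ boundary-fixing bijections $\pi\colon V(H_0) \to X_x \cup \adhesion{x}$. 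Each candidate yields a relabeled graph $\pi(H_0)$ on vertex set $X_x \cup \adhesion{x}$; the correct candidate is singled out by consistency checks against (i)~the set of $G$-edges internal to $X_x \cup \adhesion{x}$, readable in constant time from $\edges(\cdot)$ near $x$, and (ii)~the restriction of $H^\lambda(a)$ to $X_x$, which must contain $\pi(H_0) \cap \binom{X_x}{2}$ as a subset, because any path in $G_x$ certifying a torso edge between two vertices of $X_x$ has internal vertices in $\component{x}\setminus X \subseteq \component{a}\setminus Y_a$ and hence also certifies a torso edge in $H^\lambda(a)$.

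The hard part will be proving that these consistency checks uniquely identify the correct candidate up to benign symmetries of $H_0$ that yield identical relabelings. Intuitively, the combination of the equivalence class fixed by $R(x)$ and the ``global'' embedding of $X_x$ provided by $H^\lambda(a)$ should break all remaining symmetries, but formalizing this requires a combinatorial argument showing that no two distinct candidate torsos can simultaneously be consistent with both constraints. If this direct argument fails, I plan to strengthen the closure automaton so that each stored entry additionally records, for every representative vertex, a canonical signature (for instance, its $G$-adjacency pattern to $\adhesion{y}$ together with the tree-depth of $\Top$ of that vertex), recomputable by the transition rules with only polynomial state blow-up; signatures derived from $X_x$ are then matched against the stored ones, pinning down the correct bijection in $2^{\Oh{(c+\ell)^2}}$ time.
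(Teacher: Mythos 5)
There is a genuine gap, and it concerns exactly the step you flag as ``the hard part.'' You treat the torso $\torso[G_x]{X_x \cup \adhesion{x}}$ as something that is only known up to a boundary-fixing isomorphism to some stored representative in $R(x) = \reps^c(G_x, (T_x,\bag_x,\edges_x))$, and then try to reconstruct the correct relabelling by enumerating bijections and filtering by consistency checks. But your consistency check (ii) is only a necessary condition: a torso edge of $H^\lambda(a)$ between two vertices of $X_x$ may be witnessed by a path that leaves $\component{x}$, so the restriction of $H^\lambda(a)$ to $X_x$ can strictly contain the edge set of $\torso[G_x]{X_x \cup \adhesion{x}}$ on $\binom{X_x}{2}$, and two inequivalent relabellings could both pass the check. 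You acknowledge this might fail and propose strengthening the automaton with per-vertex signatures, but that fallback is not worked out, and it is not needed.

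The missing idea is that $X_x$ is not merely torso-equivalent to a stored representative --- it \emph{is} the stored representative of its $\sim_{c,G_x}$-class, as a concrete vertex set. This follows from the $d_\Tc$-minimality of $X$ (equivalently, from the way $\lambda(a)$ is chosen in $R(a)$ and the fact that the transition function of $\Rr$ builds representatives at a node only out of representatives at its children): if $X_x$ were not the $\preceq_{\Tc}$-minimal member of its class, replacing $X_x$ with that minimal member inside $X$ would yield a torso-equivalent $c$-small $k$-closure of smaller $d_\Tc$-weight (or lexicographically smaller at equal weight), contradicting minimality. Once this is in place, the entry $(p_0, H_0) \in R(x)$ with $V(H_0) = X_x \cup \adhesion{x}$ is literally $\torso[G_x]{X_x \cup \adhesion{x}}$, already correctly labelled, so there is no isomorphism to recover. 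The paper exploits this by a two-line algorithm: scan all $(d,H) \in R(x)$, call $H$ a candidate if $V(H) \setminus \adhesion{x} \subseteq X$ (all candidates then have $V(H) \setminus \adhesion{x} \subseteq X_x$), and output the largest candidate, whose vertex set must equal $X_x \cup \adhesion{x}$. This runs in $2^{\Oh{(c+\ell)^2}}$ time, needs no $\Top$ pointers, appendix-ancestor automaton, Euler tour, or $(c+\ell+1)!$-fold bijection enumeration, and sidesteps the uniqueness question entirely.
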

 \begin{proof}
  For every $(d,H)\in \reps^c(T_x,\bag_x,\edges_x)$, call $H$ a {\em{candidate}} if $V(H)\setminus \adhesion{x}\subseteq X$. Note that we can find all candidates in time $(c+\ell)^{\Oh{1}}\cdot |\reps^c(T_x,\bag_x,\edges_x)|\leq 2^{\Oh{(c+\ell)^2}}$ by inspecting all elements of $\reps^c(T_x,\bag_x,\edges_x)$ one by one. Now, a simple exchange argument shows that $X_x\cup \adhesion{x}$ is equal to the largest (in terms of the number of vertices) candidate, and $\torso[G_x]{X_x\cup \adhesion{x}}$ is equal to this candidate.
 \end{proof}
 
 For technical purposes, we need to set up a simple data structure for checking adjacencies in $H^\lambda$. The idea is based on the notion of degeneracy and can be considered folklore.


\begin{claim}\label{cl:degeneracy}
In time $(c+\ell)^{\Oh{1}}\cdot |\Tpref|$ one can set up a data structure that for given vertices $u,v\in X$, can decide whether $u$ and $v$ are adjacent in $H^\lambda$ time $\Oh{\ell}$. 
 \end{claim}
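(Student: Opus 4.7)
The plan is to exploit the fact that $H^\lambda = \torso[G]{X}$ has treewidth at most $2k+1$, and therefore is $(2k+1)$-degenerate: every subgraph contains a vertex of degree at most $2k+1 \le \ell$. In particular $|E(H^\lambda)| \le \ell \cdot |V(H^\lambda)|$. Since $X$ is $c$-small and $(T,\bag)$ has width at most $\ell$, we have $|V(H^\lambda)| = |X| \le c \cdot (|\Tpref|+1) + (\ell+1)|\Tpref|$, so the graph $H^\lambda$, already produced explicitly in Step 1, has total size $(c+\ell)^{\Oh{1}} \cdot |\Tpref|$. This is the key observation enabling the classical ``degeneracy trick'' for fast adjacency queries.

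Setup. First compute a degeneracy ordering $v_1,\ldots,v_m$ of $H^\lambda$ by iteratively picking and removing a vertex of minimum degree in the remaining subgraph, appending it to the ordering. Using the standard bucket-based implementation (buckets indexed by the current degrees of the remaining vertices, connected via doubly-linked lists), this runs in time $\Oh{|V(H^\lambda)|+|E(H^\lambda)|}$, which is within the required $(c+\ell)^{\Oh{1}} \cdot |\Tpref|$ budget. In the resulting ordering, for each vertex $v_i$, the set $L(v_i)$ of its neighbors in $H^\lambda$ that come later than $v_i$ has cardinality at most $2k+1 \le \ell$. Store $L(v_i)$ as an array of length at most $\ell$, together with the position $\pi(v_i) := i$; gather these into a direct-access dictionary $v \mapsto (\pi(v), L(v))$. (Since $\Uni$ is a countable set of integers, we can realize this dictionary either as a hash table or, by first sorting the vertices of $X$, as a mapping that supports $\Oh{1}$-time lookups after $\Oh{|X|}$ preprocessing.)

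Query. To decide whether $u,v \in X$ are adjacent in $H^\lambda$, look up $\pi(u)$ and $\pi(v)$ in $\Oh{1}$ time. If $\pi(u) < \pi(v)$, linearly scan $L(u)$ for $v$; otherwise, scan $L(v)$ for $u$. Since the scanned list has length at most $\ell$, each query runs in $\Oh{\ell}$ time, and the edge $uv$ is present in $H^\lambda$ if and only if the scan finds its other endpoint.

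The main obstacle is merely the setup of the degeneracy ordering within the claimed budget; once the bounded-degeneracy observation is in place, the rest is routine. Note in particular that no sophisticated membership structure (balanced BST, hashing with worst-case $\Oh{1}$ lookup, etc.) is needed on the lists $L(v)$: a plain array suffices because their length is already within the target query time.
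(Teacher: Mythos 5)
Your proposal is correct and follows essentially the same route as the paper: exploit that $\tw(H^\lambda)\le 2k+1$ gives bounded degeneracy, store for each vertex a bounded-size list of neighbors on one side of a suitable ordering, and answer adjacency by a linear scan. One small inaccuracy: you write $2k+1\le\ell$, but the lemma only assumes $\ell\ge k$, so the correct bound is $2k+1\le 2\ell+1=\Oh{\ell}$; this does not affect the conclusion, as the query time remains $\Oh{\ell}$.
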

 \begin{proof}
 
The treewidth of $H^\lambda$ is at most $2k+1 = \Oh{\ell}$, implying that we can in time $\Oh{k \cdot |X|}$ compute a total order $\preceq$ on $X$ so that every vertex $u\in X$ has only at most $2k+1$ neighbors that are earlier than $u$ in $\preceq$.
 Therefore, for every vertex $u$ of $X$ we construct a list $L(u)$ consisting of all neighbors of $u$ that are earlier than $u$ in $\preceq$. These lists can be constructed by examining the edges of $H^\lambda$ one by one, and for every next edge $uv$, either appending $u$ to $L(v)$ or $v$ to $L(u)$, depending whether $u$ or $v$ is earlier in $\preceq$. Then whether given $u$ and $v$ are adjacent can be decided in time $\Oh{k}$ by checking whether $u$ appears on $L(v)$ or $v$ appears on $L(u)$.
 \end{proof}

 Next, for a node $x\in V(T)$, we define $\profile(x)\subseteq \binom{\bag(x)}{2}$ to be the set comprising of all pairs $\{u,v\}\subseteq \bag(x)$ such that in $G$ there is path connecting $u$ and $v$ that is internally disjoint with~$X$. (Note that this definition concerns $u$ and $v$ both belonging and not belonging to $X$.) 
 Note that if $x\in \Tpref$, we have $\bag(x)\subseteq X$ and $\profile(x)$ consists of all edges of $\torso[G]{X}=H^\lambda$ with both endpoints in $\bag(x)$. Consequently, for such $x$ we can compute $\profile(x)$ in time $\ell^{\Oh{1}}$ using the data structure of \cref{cl:degeneracy}.

 Next, we show that knowing the profile of a parent we can compute the profile of a child.
 
 \begin{claim}\label{cl:conn-up}
  Suppose $x$ is the parent of $y$ in $T$ and $y\notin \Tpref$. Then given $\profile(x)$, one can compute $\profile(y)$ in time $2^{\Oh{(c+\ell)^2}}$.
 \end{claim}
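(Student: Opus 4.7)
The plan is to propagate connectivity information from $x$ to $y$ by decomposing each witnessing $X$-disjoint $G$-path according to its visits to $\adhesion{y}$. Such a path between $u, v \in \bag(y)$ splits at the vertices it shares with $\adhesion{y}$ into maximal segments, each of which is either entirely inside $G_y$ or --- for pairs with both endpoints in $\adhesion{y}$ --- entirely outside $\component{y}$ with both endpoints in $\adhesion{y}$. The stitching vertices at $\adhesion{y}$ are internal to the whole path and so must lie in $\adhesion{y} \setminus X$; conversely, any such decomposition can be reassembled into a witnessing $G$-path. This leads to two natural ingredients: the ``outside'' edge set $E_{\mathrm{cross}} := \profile(x) \cap \binom{\adhesion{y}}{2}$ captures outside-segment connectivity and is directly readable from $\profile(x)$ in $\Oh{\ell^2}$ time, while the ``inside'' edge set $E_{\mathrm{in}} \subseteq \binom{\bag(y)}{2}$ consists of pairs connected by an $X$-disjoint path inside $G_y$.

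With these in hand, $\profile(y)$ is precisely the set of pairs $\{p,q\} \subseteq \bag(y)$ connected in the auxiliary graph $(\bag(y), E_{\mathrm{in}} \cup E_{\mathrm{cross}})$ by a walk whose internal vertices lie in $\adhesion{y} \setminus X$; this transitive-closure-like computation on a graph of size $\Oh{\ell}$ runs in time polynomial in $\ell$. To compute $E_{\mathrm{in}}$, I would first apply \cref{cl:conn-down} at $y$ to obtain $X_y := X \cap \component{y}$ and the torso $M_y := \torso[G_y]{X_y \cup \adhesion{y}}$ in time $2^{\Oh{(c+\ell)^2}}$; this settles pairs with both endpoints in $X_y \cup \adhesion{y}$, since such a pair lies in $E_{\mathrm{in}}$ iff its endpoints are connected in $M_y$ by a walk with internal vertices in $\adhesion{y} \setminus X$. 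Pairs involving the ``new'' non-$X$ vertices $R := \bag(y) \setminus (\adhesion{y} \cup X)$ additionally require the enlarged torso $M_y^+ := \torso[G_y]{X_y \cup \bag(y)}$; once $M_y^+$ is known, $\{p, q\} \in E_{\mathrm{in}}$ iff $p, q$ are connected in $M_y^+$ by a walk with internal vertices in $\bag(y) \setminus X$, via the same correspondence between torso walks and internally-$X$-free $G_y$-paths.

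The main obstacle is obtaining $M_y^+$: the stored closure automaton $\Rr_{c, \ell}$ maintains representatives only for subsets of size at most $c$, while $M_y^+$ concerns a set of size up to $c + \ell + 1$ whose non-$\adhesion{y}$ part is not contained in $X$. I would resolve this by augmenting the data structure with an additional closure-style automaton that, at each node $y$, tracks torsos of the form $\torso[G_y]{S \cup \bag(y)}$ for $S \subseteq \component{y} \setminus \bag(y)$ with $|S| \leq c$, effectively treating $\bag(y)$ (rather than $\adhesion{y}$) as the boundary; by an analogue of \cref{lem:closure-automaton}, such an automaton has state space and evaluation time $2^{\Oh{(c+\ell)^2}}$, so the data structure's asymptotic overhead is unchanged. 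The representative-matching argument of \cref{cl:conn-down} then adapts to extract $M_y^+$ (up to a relabeling of the anonymous non-$\bag(y)$ part, which does not affect the $E_{\mathrm{in}}$ computation because the internal walk vertices we care about are required to lie in $\bag(y)$) by taking the unique largest candidate whose non-$\bag(y)$ part is contained in $X$; all remaining steps operate on graphs of size $\Oh{c + \ell}$, so the total running time fits within the required $2^{\Oh{(c+\ell)^2}}$ budget.
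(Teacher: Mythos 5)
Your high-level decomposition matches the paper's: split a witnessing internally-$X$-disjoint path into segments outside $\component{y}$ (captured by $\profile(x)$ restricted to $\adhesion{y}$) and segments inside $G_y$, then run a small reachability computation on $\bag(y)$ forbidding internal vertices in $X$. The divergence --- and the gap --- is in how you compute the inside connectivity for pairs involving $R=\bag(y)\setminus(\adhesion{y}\cup X)$. For these you reach for the enlarged torso $\torso[G_y]{X_y\cup\bag(y)}$, which the stored automaton $\Rr_{c,\ell}$ does not provide, and you propose a new automaton with boundary $\bag(y)$ together with a ``largest candidate whose non-$\bag(y)$ part is contained in $X$'' extraction. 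That extraction is exactly where the argument breaks. In \cref{cl:conn-down} the analogous step is justified by an exchange argument that hinges on $X$ being $d_\Tc$-minimal with respect to the equivalence whose boundary is $\adhesion{y}$, which guarantees that $X\cap\component{y}$ is literally the stored representative of its class. For your finer equivalence (boundary $\bag(y)$) there is no such guarantee: the stored representative of the class of $X\cap(\component{y}\setminus\bag(y))$ may be a different, torso-equivalent set $S'\not\subseteq X$, in which case your filter returns a strictly smaller candidate and hence a wrong torso. The ``relabeling of the anonymous part'' remark does not rescue this: representatives are concrete vertex sets, and a torso edge certified by a path avoiding $S'\cup\bag(y)$ need not avoid $X$, so the resulting $E_{\mathrm{in}}$ can contain pairs that are not connected by any internally-$X$-disjoint path.

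The paper sidesteps all of this by looking one level \emph{down} rather than enlarging the boundary at $y$: it builds an auxiliary graph $J$ on $\bag(y)$ from $\edges(y)$, from $\profile(x)$ restricted to the adhesion, and from the torsos $\torso[G_z]{X_z\cup\adhesion{z}}$ of the children $z$ of $y$, each obtained by applying \cref{cl:conn-down} to $z$ (already available from the maintained run of $\Rr_{c,\ell}$). An excursion of an inside segment into $\component{z}$ has both endpoints in $\adhesion{z}$ and is internally disjoint from $\bag(y)\cup X\supseteq \adhesion{z}\cup X_z$, so it is witnessed by a child-torso edge, while direct adjacencies inside $\bag(y)$ are covered by $\edges(y)$ and $\profile(x)$; this handles the vertices of $R$ with no new automaton and no minimality argument. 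I would redo your inside-connectivity step along these lines rather than trying to justify the enlarged-torso extraction.
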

 \begin{proof}
  Let $J$ be the graph on vertex set $\bag(y)$ whose edge set is the union of
  \begin{itemize}[nosep]
   \item $\edges(y)$;
   \item all edges present in $\profile(x)$ with both endpoints in $\adhesion{x}$; and
   \item all edges present in $\torso[G_z]{X_z\cup \adhesion{z}}$ that have both endpoints in $\adhesion{z}$, for every child $z$ of $y$.
  \end{itemize}
 Note that $J$ can be constructed in time $2^{\Oh{(c+\ell)^2}}$ using \cref{cl:conn-down}. Now, it is straightforward to see that $\profile(y)$ consists of all pairs $\{u,v\}\subseteq \bag(y)$ such that in $J$ there is a path connecting $u$ and $v$ that is internally disjoint with $X\cap \bag(y)$. Using this observation, $\profile(y)$ can be now constructed in time $\ell^{\Oh{1}}$, because $J$ has at most $\ell+1$ vertices.
 \end{proof}

 Having established \cref{cl:conn-up}, we can finally describe the procedure that finds blockages. The procedure inspects the appendices $a\in A$ one by one, and upon inspecting $a$ it finds all blockages that are descendants of $a$. To this end, we start a depth-first search in $T_a$ from $a$. At all times, together with the node $x\in V(T_a)$ which is currently processed in the search, we also store $\profile(x)$. Initially, $\profile(a)$ can be computed using \cref{cl:conn-up}, where the profile of the parent of $a$ --- which belongs to $\Tpref$ --- can be computed in time $\ell^{\Oh{1}}$, as argued. When the search enters a node $y$ from its parent $x$, $\profile(y)$ can be computed from $\profile(x)$ using \cref{cl:conn-up} again. Observe that knowing $\profile(x)$, it can be determined in time $(c+\ell)^{\Oh{1}}$ whether $x$ is a blockage:
 \begin{itemize}
  \item If $\bag(x)\subseteq X$ and $\profile(x)=\binom{\bag(x)}{2}$, then $x$ is a clique blockage.
  \item If $\bag(x)\cap X\neq \emptyset$ and there exists $u\in \bag(x)\setminus X$ such that $\{u,v\}\in \profile(x)$ for all $v\in \bag(x)\setminus \{u\}$, then $x$ is a component blockage.
 \end{itemize}
 Consequently, if $x$ is a blockage, we output $x$ and do not pursue the search further. Otherwise, if $x$ is not a blockage, the search recurses to the children of $x$.
 
 Clearly, the total number of nodes visited by the search is bounded by $|\exploration{\Tpref}{X}|$, and for each node we use time $2^{\Oh{(c+\ell)^2}}$. Hence, the algorithm outputs all blockages in total time $2^{\Oh{(c+\ell)^2}}|\exploration{\Tpref}{X}|$.
\end{proof}

\section{Proof of \cref{thm:main}}
\label{sec:ultimate-proof}

In this section we complete the proof of \cref{thm:main}.

Let us roughly sketch the proof.
Recall that by \cref{lem:weak-treewidth-ds} there exists a~data structure that, for an~initially empty dynamic graph $G$, updated by edge insertions and deletions, maintains an~annotated tree decomposition $(T, \bag, \edges)$ of $G$ of width at most $6k + 5$ under the promise that the treewidth of $G$ never grows above $k$.
Now we progressively improve the data structure to become more resilient to queries causing $\tw{G}$ to exceed $k$, and then add support for testing arbitrary $\CMSO_2$ properties:
\begin{itemize}
  \item The first improvement (\cref{lem:semiweak-treewidth-ds}) allows the data structure to detect the updates increasing the treewidth of $G$ above $k$, which enables us to reject them (i.e., refuse to perform such updates and leave the state of the data structure intact).
  This is achieved by: (a) maintaining a~dynamic graph $G'$, equal to $G$ after each update, and an~additional instance of the data structure of \cref{lem:weak-treewidth-ds} maintaining a~tree decomposition of $G'$ of width at most $6k + 11$ under the promise that $\tw{G'} \leq k + 1$; (b) also maintaining the Bodlaender--Kloks automaton (\cref{lem:automaton-BK}) that, given the augmented tree decomposition of $G'$ of width at most $6k + 11$, verifies whether the treewidth of $G$ is at most $k$; (c) applying each update to $G'$ first, using the Bodlaender--Kloks automaton in order to verify whether $\tw{G'} \leq k$ after the update, and, depending on whether this is the case, accepting or rejecting the update.
  
  \item The next improvement (\cref{lem:strong-treewidth-ds}) allows the data structure to actually process the updates that would increase $\tw{G}$ above $k$.
  At each point of time, the data structure will maintain the information on whether $\tw{G} \leq k$ and an~annotated tree decomposition of the most recent snapshot of $G$ of treewidth at most $k$.
  This is a~fairly standard application of the technique of \emph{delaying invariant-breaking insertions} by Eppstein et al.~\cite{EppsteinGIS96}.
  This improvement resolves the former assertion of \cref{thm:main}.
  
  \item In order to resolve the latter assertion of the theorem (i.e., the maintenance of arbitrary $\CMSO_2$ properties), we use the data structure of \cref{lem:strong-treewidth-ds} along with the prefix-rebuilding data structure of \cref{cor:dynamic-cmso}.
\end{itemize}

All three steps should be considered standard and fairly straightforward; for instance, the first two steps are used to perform an~analogous improvement to the data structure of Majewski et al.~\cite{MajewskiPS23} maintaining $\CMSO_2$ properties of graphs with bounded feedback vertex number.

\begin{lemma}
  \label{lem:semiweak-treewidth-ds}
  There is a~data structure that for an~integer $k \in \N$, fixed upon initialization, and a~dynamic graph $G$, updated by edge insertions and deletions, maintains an~annotated tree decomposition $(T, \bag, \edges)$ of $G$ of width at most $6k + 5$ using prefix-rebuilding updates, only accepting the updates if the treewidth of the updated graph is at most $k$.
  More precisely, at every point in time the graph has treewidth at most $k$ and the data structure contains an~annotated tree decomposition of $G$ of width at most $6k + 5$.
  The data structure can be initialized on $k$ and an~edgeless $n$-vertex graph $G$ in time $2^{k^{\Oh{1}}} \cdot n$, and then every update:
  \begin{itemize}
    \item takes amortized time $2^{k^{\Oh{1}} \cdot \sqrt{\log n \log \log n}}$;
    \item if the treewidth of the graph after the update would be larger than $k$, then the update is rejected and \ttl{} is returned;
    \item otherwise, the update is accepted and the data structure returns a~sequence of prefix-rebuilding updates used to obtain the annotated tree decomposition of the new graph from $(T, \bag, \edges)$.
  \end{itemize}
  
  \begin{proof}
    We implement the required data structure by setting up three instances of already existing data structures:
    \begin{itemize}
      \item $\D_k$: the data structure from \cref{lem:weak-treewidth-ds} maintaining an~annotated tree decomposition of a~dynamic graph $G_k$ of width at most $6k + 5$ using prefix-rebuilding updates under the promise that $\tw{G_k} \leq k$;
      \item $\D_{k+1}$: an~analogous data structure, maintaining an~annotated tree decomposition of a~dynamic graph $G_{k+1}$ of width at most $6k + 11$ using prefix-rebuilding updates under the promise that $\tw{G_{k+1}} \leq k + 1$;
      \item $\mathbb{BK}$: a~$(6k + 11)$-prefix-rebuilding data structure with overhead $2^{\Oh{k^3}}$ maintaining whether the dynamic graph maintained by an~annotated tree decomposition has treewidth at most $k$; an~existence of such a~data structure follows from the combination of \cref{lem:automaton-BK,lem:automaton-maintenance}.
    \end{itemize}
    
    All the data structures are initialized with an~edgeless graph on $n$ vertices.
    Between the updates to the data structure, we maintain the following invariant: all three data structures store the description of the same dynamic graph of treewidth at most $k$; and moreover, the annotated tree decompositions stored by $\D_{k+1}$ and $\mathbb{BK}$ are identical: each prefix-rebuilding update performed by $\D_{k+1}$ is also applied to $\mathbb{BK}$.
    On each successful update, the data structure returns all prefix-rebuilding updates applied on the annotated tree decomposition stored in $\D_k$.

    It remains to show how the updates are handled.
    \begin{itemize}
      \item Assume an~edge $uv$ is removed from the graph.
        Note that edge removals cannot increase the treewidth of the maintained graph, so the removal can be safely relayed to all data structures.

      \item When an~edge $uv$ is to be added to the graph, we first update $\D_{k+1}$ by adding the edge.
      As the addition of an~edge to a~graph may increase its treewidth by at most one, $\D_{k+1}$ will handle this update.
       If, after the update, $\mathbb{BK}$ confirms that the treewidth of the stored graph is still at most $k$, we accept the update by adding $uv$ also to the graph stored by $\D_k$.
       
       On the other hand, if $\mathbb{BK}$ returns that the treewidth of the graph after the update is larger then $k$, then we reject the update: we update $\D_{k+1}$ by removing $uv$ from the stored graph and return \ttl.\footnote{Note that even though the update is rejected in this case and the graphs stored by $\D_k$ and $\D_{k+1}$ do not change, the annotated tree decomposition stored by $\D_{k+1}$ may ultimately be modified by the update.}
    \end{itemize}
    
    It is easy to show that the updates maintain the prescribed invariants and that on each query to the data structure, $\D_k$ and $\D_{k+1}$ are updated a~constant number of times, so the required time complexity bounds follow for these data structures.
    It also follows from \cref{lem:weak-treewidth-ds} that the total size of prefix-rebuilding updates performed by $\D_{k+1}$ over the first $q$ queries is $2^{k^{\Oh{1}}} \cdot n\, +\, 2^{k^{\Oh{1}} \cdot \sqrt{\log n \log \log n}} \cdot q$.
    As each update of $\D_{k+1}$ is also applied to $\mathbb{BK}$ (a~prefix-rebuilding data structure with overhead $2^{\Oh{k^3}}$), the first $q$ updates to the data structure are processed by $\mathbb{BK}$ in time $2^{k^{\Oh{1}}} \cdot n\, +\, 2^{k^{\Oh{1}} \cdot \sqrt{\log n \log \log n}} \cdot q$.
    Thus, the amortized time complexity of the data structure is proved.
  \end{proof}
\end{lemma}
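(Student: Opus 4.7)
The plan is to leverage \cref{lem:weak-treewidth-ds} as a black box, layering on top of it a mechanism that recognizes when an update would violate the $\tw{G}\leq k$ promise, and rejects it in that case. The key observation is that inserting a single edge can raise treewidth by at most $1$, so a data structure operating under the weaker promise $\tw{G}\leq k+1$ can safely ``absorb'' a tentative insertion and let us inspect whether the promise $\tw{G}\leq k$ is still satisfied afterwards.

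Concretely, I would maintain in parallel three objects: (i) an instance $\D_k$ of the data structure of \cref{lem:weak-treewidth-ds} for the threshold $k$, storing a tree decomposition of width at most $6k+5$ of the current ``committed'' graph; (ii) an instance $\D_{k+1}$ of the same data structure for threshold $k+1$, operating on a second copy of the graph that will occasionally contain one more edge than the committed graph; and (iii) a prefix-rebuilding data structure $\mathbb{BK}$ over the annotated tree decomposition of width at most $6(k+1)+5=6k+11$ stored inside $\D_{k+1}$, obtained by combining the Bodlaender--Kloks tree-decomposition automaton of \cref{lem:automaton-BK} (for parameters $k,6k+11$) with the general automaton-maintenance \cref{lem:automaton-maintenance}. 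The automaton $\mathbb{BK}$ will let us test in $\Oh{1}$ time after each update whether the graph currently represented in $\D_{k+1}$ has treewidth at most $k$.

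The update handling is then routine. An edge deletion can never raise the treewidth, so it is forwarded to both $\D_k$ and $\D_{k+1}$ (and the latter's prefix-rebuilding updates are forwarded to $\mathbb{BK}$). For an edge insertion $uv$, I first insert $uv$ into $\D_{k+1}$, whose promise $\tw{G}\leq k+1$ is guaranteed to hold because beforehand it held $\tw{G}\leq k$. After the insertion I query $\mathbb{BK}$: if it reports treewidth $\leq k$, I accept the update, propagate the insertion to $\D_k$, and return the prefix-rebuilding updates produced by $\D_k$; if it reports treewidth $>k$, I reject the update by deleting $uv$ again from $\D_{k+1}$ (an admissible operation, since the resulting graph certainly has treewidth $\leq k$) and return \ttl.

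Correctness of the invariants and complexity analysis are straightforward. Each user update triggers $\Oh{1}$ updates on each of $\D_k$, $\D_{k+1}$, and $\mathbb{BK}$; by \cref{lem:weak-treewidth-ds} the amortized cost on $\D_k$ and $\D_{k+1}$ is $2^{k^{\Oh{1}}\sqrt{\log n\log\log n}}$. The only subtle point of the analysis is the cost of $\mathbb{BK}$: its running time is proportional to the total size of prefix-rebuilding updates that $\D_{k+1}$ produces, times the overhead $2^{\Oh{k^3}}$ of the BK automaton. As noted in the remark after \cref{lem:weak-treewidth-ds}, the total size of those prefix-rebuilding updates over $q$ operations is at most $2^{k^{\Oh{1}}}\cdot n+2^{k^{\Oh{1}}\sqrt{\log n\log\log n}}\cdot q$, so amortizing over updates gives the desired bound. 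The mild obstacle is simply to be careful that the rejection branch leaves all three components in a mutually consistent state (same underlying graph, $\D_{k+1}$ and $\mathbb{BK}$ storing the same tree decomposition), which is immediate from the construction since rejection is implemented as another deletion rather than a rollback.
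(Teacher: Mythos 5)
Your proposal matches the paper's proof essentially verbatim: the same three components ($\D_k$, $\D_{k+1}$, and the Bodlaender--Kloks prefix-rebuilding data structure on $\D_{k+1}$'s decomposition), the same insertion/rejection logic via tentative insertion into $\D_{k+1}$ followed by a query and possible rollback-as-deletion, and the same amortization argument charging $\mathbb{BK}$'s work against the total size of prefix-rebuilding updates emitted by $\D_{k+1}$. No gaps.
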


We now strengthen the data structure from \cref{lem:semiweak-treewidth-ds} so as to accept the edge insertions increasing the treewidth of the graph above $k$.
To this end, we use the aforementioned technique of postponing invariant-breaking insertions.
Here, we use a~formulation of the technique by Chen et al.~\cite{abs-2006-00571}.

Suppose $U$ is a~universe.
A~family of subsets $\Fc \subseteq 2^U$ is said to be \emph{downward closed} if $F \in \Fc$ implies $E \in \Fc$ for all $E \subseteq F$.
Assume that $\F$ is a~data structure maintaining a~dynamic subset $X \subseteq U$ under insertions and removals of elements.
We say that $\F$:
\begin{itemize}
  \item \emph{strongly supports $\Fc$ membership} if $\F$ supports a~boolean query $\mathsf{member}()$ which returns whether $X \in \Fc$;
  \item \emph{weakly supports $\Fc$ membership} if $\F$ maintains an~invariant that $X \in \Fc$, rejecting the insertions that would break the invariant.
\end{itemize}
Then, the following statement provides a~way of turning weak support for $\Fc$ membership into the strong support:

\begin{lemma}[\cite{abs-2006-00571}, Lemma 11.1]
  \label{lem:chen-delaying-invariant-breaking}
  Suppose $U$ is a~universe and we have access to a~dictionary $\Lb$ on $U$.
  Let $\Fc \subseteq 2^U$ be downward closed and suppose that there is a~data structure $\D$ weakly supporting $\Fc$ membership.
  
  Then there is a~data structure $\D'$ strongly supporting $\Fc$ membership, where each $\mathsf{member}()$ query takes $\Oh{1}$ time and each update uses amortized $\Oh{1}$ time and amortized $\Oh{1}$ calls to operations on $\Lb$ and $\D$.
  Moreover, $\F'$ maintains an~instance of $\F$ and whenever $\mathsf{member}() = \mathsf{true}$, then $\F$ stores the same set $X \in \Fc$ as $\F'$.
\end{lemma}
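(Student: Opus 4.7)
The plan is to sketch a proof of \cref{lem:chen-delaying-invariant-breaking}, the technique of postponing invariant-breaking insertions. Even though the paper cites this lemma from Chen et al., I can outline how such a result is typically established. The data structure $\D'$ maintains three pieces of state: the dictionary $\Lb$ storing the true current set $X$; the weak data structure $\D$ storing some set $Y$; and an auxiliary set $P$ of \emph{pending} elements. We maintain the invariants $Y \subseteq X$, $Y \in \Fc$, and $X \setminus Y = P$. The $\mathsf{member}()$ query simply returns whether $P = \emptyset$: by downward closure of $\Fc$, if $P$ is non-empty then $X \supseteq Y \cup \{p\}$ for some pending $p$ with $Y \cup \{p\} \notin \Fc$, so $X \notin \Fc$; conversely, $P = \emptyset$ means $X = Y \in \Fc$.

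The insert and delete procedures are then the natural ones. To insert $e$, first insert into $\Lb$, then attempt $\D.\mathsf{insert}(e)$; if $\D$ rejects, add $e$ to $P$. To delete $e$, first update $\Lb$; if $e \in P$, simply remove from $P$; if $e \in Y$, call $\D.\mathsf{delete}(e)$, and then attempt to re-add pending elements into $\D$ (since shrinking $Y$ may have unblocked some of them). The correctness of the invariants then boils down to: after a deletion from $Y$, for every $p$ that remains in $P$, we must have $Y \cup \{p\} \notin \Fc$, which forces us to retry every pending element after every shrink of $Y$.

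The main obstacle will be obtaining the $\Oh{1}$ amortized bound on updates, because a naive implementation that re-tries the entire pending set after every deletion costs $\Oh{|P|}$ per removal. The standard resolution uses the potential $\Phi = c \cdot |P|$ for a suitably large constant $c$: insertions pay $\Oh{1}$ amortized (either $\Phi$ stays or grows by $c$, charging $c+\Oh{1}$); removals of pending elements pay $\Oh{1}$ amortized against $\Phi$ dropping by $c$. For removals from $Y$, the trick is to attempt re-adds in sequence and use the downward-closure of $\Fc$ to ensure each successful re-add amortizes to $\Oh{1}$ (the drop in $\Phi$ cancels the cost), while failed re-adds require a more delicate argument: since $Y \cup \{p\} \notin \Fc$ remains stable under subsequent insertions (by downward closure), a pending element $p$ that fails a re-add attempt need not be retried until $Y$ shrinks again, so we can pay for each failed attempt by a credit deposited at the prior $Y$-shrink that caused the retry.

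Putting these pieces together yields the claimed $\Oh{1}$ amortized bound per operation with $\Oh{1}$ amortized calls to $\Lb$ and $\D$, and the moreover-clause (that $\D$ stores the same set $X$ whenever $\mathsf{member}()$ returns true) follows directly from the invariant $Y = X \setminus P$. Since this construction is already available off-the-shelf from~\cite{EppsteinGIS96,abs-2006-00571}, the most efficient presentation in \cref{sec:ultimate-proof} is to invoke the lemma as a black box, combining it with \cref{lem:semiweak-treewidth-ds} (taking $U$ to be the set of potential edges of $G$ and $\Fc$ to be the downward-closed family of edge sets yielding treewidth at most $k$), and then adding $\CMSO_2$ maintenance by feeding every prefix-rebuilding update into the data structure of \cref{cor:dynamic-cmso}.
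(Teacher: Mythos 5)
You should note first that the paper does not prove this statement at all: it is imported verbatim as Lemma~11.1 of Chen et al.\ and used as a black box, exactly as you suggest in your closing sentence. So the only thing to assess is your proof sketch itself, and there the amortized analysis has a genuine gap. Your invariant is that \emph{every} pending element $p\in P$ satisfies $Y\cup\{p\}\notin\Fc$, which forces you to retry the whole pending set after every shrink of $Y$. This cannot be amortized to $\Oh{1}$: insert $n$ elements that are all rejected (so $|P|=n$), then alternately delete and re-insert one element of $Y$. Each such deletion shrinks $Y$ and, under your scheme, triggers $n$ re-add attempts that all fail, at cost $\Omega(n)$, while the potential $c\cdot|P|$ does not move. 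Your proposed accounting --- ``pay for each failed attempt by a credit deposited at the prior $Y$-shrink that caused the retry'' --- would require that single deletion to deposit $\Omega(|P|)$ credits, contradicting its own $\Oh{1}$ amortized budget. The problem is the invariant, not the bookkeeping.

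The actual technique weakens the invariant to a \emph{single witness}. Keep the postponed insertions in a queue $L$ and maintain only that $Y\cup\{l_1\}\notin\Fc$ for the head $l_1$ of $L$; since $Y\cup\{l_1\}\subseteq X$, downward closure already certifies $X\notin\Fc$, so $\mathsf{member}()$ can still return whether $L=\emptyset$. Insertions arriving while $L\neq\emptyset$ are appended to the back of $L$ without even querying $\D$ (this preserves the witness, as $Y$ and $l_1$ are unchanged). Only after a deletion that removes an element of $Y$ or removes the witness $l_1$ itself do you re-insert elements from the front of $L$ one by one, stopping at the \emph{first} rejection, which becomes the new witness. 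Each retry phase then makes at most one failed call to $\D$, and every successful retry permanently moves an element from $L$ into $Y$, so its cost is charged to that element's insertion. This yields $\Oh{1}$ amortized calls to $\D$ and $\Lb$ per update, and the ``moreover'' clause is immediate since $\mathsf{member}()=\mathsf{true}$ iff $L=\emptyset$ iff $\D$ stores exactly $X$. The rest of your proposal --- the interface, the correctness of the membership test, and the application in \cref{sec:ultimate-proof} with $U=\binom{V(G)}{2}$ and $\Fc$ the edge sets of treewidth at most $k$ --- is fine.
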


The last statement of \cref{lem:chen-delaying-invariant-breaking} was not stated formally; however, it is immediate from the proof in~\cite{abs-2006-00571}.

We proceed to show how \cref{lem:chen-delaying-invariant-breaking} can be applied to the data structure of \cref{lem:semiweak-treewidth-ds}.

\begin{lemma}
  \label{lem:strong-treewidth-ds}
  There is a~data structure that for an~integer $k \in \N$, fixed upon initialization, and a~dynamic graph $G$, updated by edge insertions and deletions, maintains:
  \begin{itemize}
    \item an~annotated tree decomposition $(T, \bag, \edges)$ of width at most $6k + 5$ of the most recent snapshot of $G$ of treewidth at most $k$;
    \item a~boolean information on whether $\tw{G} \leq k$.
  \end{itemize}
  The data structure can be initialized on $k$ and an~edgeless $n$-vertex graph $G$ in time $2^{k^{\Oh{1}}} \cdot n$, and then every update:
  \begin{itemize}
    \item takes amortized time $2^{k^{\Oh{1}} \cdot \sqrt{\log n \log \log n}}$;
    \item returns \ttl{} if the treewidth of the graph after the update is larger than $k$; otherwise, returns the sequence of prefix-rebuilding updates used to modify $(T, \bag, \edges)$.
  \end{itemize}
  
  \begin{proof}
    Following the notation of Chen et al., define the universe $U \coloneqq \binom{V(G)}{2}$.
    Moreover, define a~family $\Fc \subseteq 2^U$ as follows: $X \in \Fc$ if and only if the graph $H$ with $V(H) = V(G)$ and $E(H) = X$ has treewidth at most $k$.
    It is immediate that $\Fc$ is downward closed.
    Let also $\F$ be the data structure of \cref{lem:semiweak-treewidth-ds}.
    Observe that $\F$ weakly supports $\Fc$: it handles edge additions and removals, accepting them exactly when the treewidth of the graph after the update is at most $k$.
    Therefore, by \cref{lem:chen-delaying-invariant-breaking}, there exists a~data structure $\F'$ strongly supporting $\Fc$.
    Moreover, $\F'$ maintains an~instance of $\F$ and whenever the treewidth of the maintained graph is at most $k$, then $\F$ stores the same graph as $\F'$.
    
    Now, our data structure maintains the instances of $\F$ and $\F'$ as above and the sequence $s$ of prefix-rebuilding updates applied to $\F$ since the last time the treewidth of $G$ was $k$ or less.
    Each update is relayed verbatim to $\F'$; in turn, $\F'$ updates $\F$ through edge insertions and deletions, causing $\F$ to rebuild its annotated tree decomposition using a~sequence of prefix-rebuilding updates.
    Such updates are appended to $s$.
    If, after the update to our data structure, $\F'$ returns $\mathsf{member}() = \mathsf{false}$, then we return \ttl{}. Otherwise, we return the sequence $s$ of prefix-rebuilding updates and clear $s$ before the next query.

    It is straightforward to show that the described data structure satisfies all the requirements of the lemma.
  \end{proof}
\end{lemma}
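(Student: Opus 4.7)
The plan is to apply the delayed invariant-breaking framework of \cref{lem:chen-delaying-invariant-breaking} to the data structure produced by \cref{lem:semiweak-treewidth-ds}. I would take the universe to be $U = \binom{V(G)}{2}$ and let $\Fc \subseteq 2^U$ consist of the edge sets $X$ for which the graph $(V(G), X)$ has treewidth at most $k$. Since deleting edges cannot increase treewidth, $\Fc$ is downward closed, so the framework is applicable. The semi-weak data structure $\F$ of \cref{lem:semiweak-treewidth-ds} by construction weakly supports $\Fc$-membership: it accepts an update exactly when the resulting graph has treewidth at most $k$, maintains the invariant, and otherwise rejects and returns \ttl{}.

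Invoking \cref{lem:chen-delaying-invariant-breaking} then yields a data structure $\F'$ that strongly supports $\Fc$-membership in $O(1)$ query time, using amortized $O(1)$ operations on $\F$ per update. By the last sentence of that lemma, the internal instance $\F$ stored inside $\F'$ always contains some edge set lying in $\Fc$, and whenever $\mathsf{member}()$ returns \textsf{true} the edge set in $\F$ equals the current edge set of $G$. This is precisely the assertion that the annotated tree decomposition maintained by $\F$ always represents ``the most recent snapshot of $G$ of treewidth at most $k$'', fulfilling the first output requirement of the lemma.

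To produce the sequence of prefix-rebuilding updates requested by the interface, I would maintain, in addition to $\F$ and $\F'$, a buffer $s$ that accumulates every prefix-rebuilding update emitted by $\F$. On each incoming edge insertion or deletion I would relay the update to $\F'$, which in turn performs its amortized $O(1)$ operations on $\F$; each such operation on $\F$ possibly returns a sequence of prefix-rebuilding updates, which I append to $s$. Afterwards, I query $\mathsf{member}()$ on $\F'$: if it returns \textsf{false}, I output \ttl{} and leave $s$ alone (no updates to $\F$ can have been made in this step anyway, since doing so would violate the invariant of $\F$); if it returns \textsf{true}, I flush $s$ as the output and clear it. For the running time, the amortized cost per update is $O(1)$ plus the amortized cost of $O(1)$ operations on $\F$, each of which costs $2^{k^{\Oh{1}} \cdot \sqrt{\log n \log \log n}}$ by \cref{lem:semiweak-treewidth-ds}, giving the claimed overall amortized bound. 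Initialization inherits the $2^{k^{\Oh{1}}} \cdot n$ cost from initializing $\F$ on an edgeless graph.

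The only point requiring care is the bookkeeping of $s$: one must check that the prefix-rebuilding updates emitted by $\F$ between two successful queries of our data structure indeed transform the previously reported tree decomposition into the current one, which follows because our output after a successful update is always exactly the concatenation of the prefix-rebuilding updates actually executed on $\F$ since the last successful output. No new combinatorics is required here; the heavy lifting was done in proving \cref{lem:weak-treewidth-ds} and \cref{lem:semiweak-treewidth-ds}, and what remains is a standard black-box composition with the Eppstein--Galil--Italiano--Spencer technique as packaged by Chen et al.
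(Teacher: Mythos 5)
Your proposal is correct and follows essentially the same route as the paper: the same universe $U=\binom{V(G)}{2}$ and downward-closed family $\Fc$, a black-box application of the Chen et al.\ lemma to the semi-weak data structure, and a buffer $s$ of prefix-rebuilding updates flushed only on successful queries. The only quibble is your parenthetical claim that no updates to $\F$ occur when $\mathsf{member}()$ returns \textsf{false} — the postponing technique may still modify $\F$ in such steps (e.g., via deletions) — but this is harmless since you append to $s$ unconditionally and clear it only after a successful output, exactly as the paper does.
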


The proof of \cref{thm:main} is now direct:

\begin{proof}[Proof of \cref{thm:main}]
  The first part of the statement of the theorem is immediate from \cref{lem:strong-treewidth-ds}.
  Now, assume we are given a $\CMSO_2$ formula $\varphi$ and we are to verify whether $G \models \varphi$ whenever $\tw{G} \leq k$.
  To this end, we invoke \cref{cor:dynamic-cmso} and instantiate a~$(6k + 5)$-prefix-rebuilding data structure $\M$ with overhead $\Oh[k,\varphi]{1}$ that can be queried whether $G \models \varphi$ in worst-case time $\Oh[k, \varphi]{1}$.
  Now, whenever the data structure of \cref{lem:strong-treewidth-ds} returns that $\tw{G} \leq k$, it returns a~sequence of prefix-rebuilding updates, which we immediately forward to $\M$.
  Then, we query $\M$ to verify whether $G \models \varphi$.
  
  From \cref{lem:strong-treewidth-ds} it immediately follows that the total size of all prefix-rebuilding updates over the first $q$ queries is at most $2^{k^{\Oh{1}}} \cdot n\, +\, 2^{k^{\Oh{1}} \cdot \sqrt{\log n \log \log n}} \cdot q$.
  Hence, $\M$ processes the first $q$ queries in time $\Oh[k, \varphi]{2^{k^{\Oh{1}}} \cdot n\, +\, 2^{k^{\Oh{1}} \cdot \sqrt{\log n \log \log n}} \cdot q}$, which satisfies the postulated amortized time complexity bounds.
\end{proof}

\end{document}